\newcommand{\pushright}[1]{\ifmeasuring@#1\else\omit\hfill\(\displaystyle#1\)\fi\ignorespaces}
\newcommand{\pushleft}[1]{\ifmeasuring@#1\else\omit\(\displaystyle#1\)\hfill\fi\ignorespaces}
\renewcommand{\norm}[1]{\|#1\|}
\newcommand{\normsup}[1]{\left\|#1\right\|_{\scriptscriptstyle\infty}}
\renewcommand{\emptyset}{\varnothing}
\newcommand{\setof}[2]{\{#1\,:\,#2\}}
\newcommand{\bsetof}[2]{\bigl\{#1\,:\,#2\bigr\}}
\newcommand{\Bsetof}[2]{\Bigl\{#1\,:\,#2\Bigr\}}
\newcommand{\lrangle}[1]{\langle #1 \rangle}
\newcommand{\Z}{\mathbb{Z}}
\newcommand{\R}{\mathbb{R}}
\newcommand{\Rd}{\mathbb{R}^d}
\newcommand{\Zd}{\mathbb{Z}^d}
\newcommand{\pathSet}{\mathrm{Path}}
\newcommand{\sfe}{\mathsf{e}}
\newcommand{\sfo}{\mathsf{o}}
\newcommand{\sfO}{\mathsf{O}}
\newcommand{\bbB}{\mathbb{B}}
\newcommand{\bbC}{\mathbb{C}}
\newcommand{\bbE}{\mathbb{E}}
\newcommand{\bbG}{\mathbb{G}}
\newcommand{\bbN}{\mathbb{N}}
\newcommand{\bbQ}{\mathbb{Q}}
\newcommand{\bbR}{\mathbb{R}}
\newcommand{\bbS}{\mathbb{S}}
\newcommand{\bbZ}{\mathbb{Z}}
\newcommand{\calE}{\mathcal{E}}
\newcommand{\calM}{\mathcal{M}}
\newcommand{\calQ}{\mathcal{Q}}
\newcommand{\calT}{\mathcal{T}}
\newcommand{\calU}{\mathcal{U}}
\newcommand{\calW}{\mathcal{W}}
\newcommand{\rmc}{\mathrm{c}}
\newcommand{\rme}{\mathrm{e}}
\newcommand{\rmm}{\mathrm{m}}
\newcommand{\rmp}{\mathrm{p}}
\newcommand{\rmr}{\mathrm{r}}
\newcommand{\given}{\,|\,}
\newcommand{\fcone}{\mathcal{Y}^\blacktriangleleft}
\newcommand{\bcone}{\mathcal{Y}^\blacktriangleright}
\newcommand{\bend}{\mathbf{b}}
\newcommand{\fend}{\mathbf{f}}
\newcommand{\diam}{D}
\newcommand{\CPts}{\textnormal{CPts}}
\newcommand{\slab}{\mathrm{slab}}
\newcommand{\concatenate}{\circ}
\newcommand{\displace}{X}
\newcommand{\SetRootMarkBackCont}{\mathfrak{B}_L}
\newcommand{\SetRootMarkForwCont}{\mathfrak{B}_R}
\newcommand{\SetRootDiaCont}{\mathfrak{A}}
\newcommand{\irreducible}{\mathrm{irr}}
\newcommand{\surcharge}{\mathfrak{s}}
\newcommand{\Skeleton}{\mathrm{Sk}}
\newcommand{\SkeletonSet}{\mathcal{S}}
\newcommand{\longEdge}{\mathrm{long}}
\newcommand{\shortEdge}{\mathrm{short}}
\newcommand{\Bad}{\mathrm{Bad}}
\newcommand{\qco}{q^{\mathrm{c}}}
\newcommand{\betac}{\beta_{\mathrm c}}
\newcommand{\betasat}{\beta_{\rm sat}}
\newcommand{\mix}{\mathrm{mix}}
\newcommand{\RWP}[1]{\operatorname{\mathrm{P}^{\mathrm{RW}}_{\mathit{#1}}}}
\newcommand{\RWG}{G^{\mathrm{RW}}}
\theoremstyle{plain}
\newtheorem{theorem}{Theorem}[section]
\newtheorem{lemma}[theorem]{Lemma}
\newtheorem{corollary}[theorem]{Corollary}
\newtheorem{conjecture}[theorem]{Conjecture}
\newtheorem{remark}{Remark}[section]
\newtheorem{claim}{Claim}
\newtheorem{definition}{Definition}[section]
\newtheorem{obs}{Observation}
\begin{document}

\begin{frontmatter}

\title{Ornstein--Zernike behavior for Ising models\\with infinite-range interactions}
\runtitle{OZ behavior for Ising models with infinite-range interactions}

\begin{aug}
	\author[A]{\inits{Y.}\fnms{Yacine}~\snm{Aoun}\ead[label=e1]{Yacine.Aoun@unige.ch}},
	\author[B]{\inits{S.}\fnms{Sébastien}~\snm{Ott}\ead[label=e2]{ott.sebast@gmail.com}}
	\and
	\author[A]{\inits{Y.}\fnms{Yvan}~\snm{Velenik}\ead[label=e3]{Yvan.Velenik@unige.ch}}
	\address[A]{Section de Mathématiques,
		Université de Genève, 1205 Genève, Switzerland\printead[presep={,\ }]{e1,e3}}
	
	\address[B]{Département de Mathématiques,
		Université de Fribourg, 1700 Fribourg, Switzerland\printead[presep={,\ }]{e2}}
		
\end{aug}

\begin{abstract}
	We prove Ornstein--Zernike behavior for the large-distance asymptotics of the two-point function of the Ising model above the critical temperature under essentially optimal assumptions on the interaction. The main contribution of this work is that the interactions are not assumed to be of finite range. To the best of our knowledge, this is the first proof of OZ asymptotics for a nontrivial model with infinite-range interactions.
	
	Our results actually apply to the Green function of a large class of ``self-repulsive in average'' models, including a natural family of self-repulsive polymer models that contains, in particular, the self-avoiding walk, the Domb--Joyce model and the killed random walk.
	
	We aimed at a pedagogical and self-contained presentation.
\end{abstract}

\begin{abstract}[language=french]
	Nous prouvons, sous des hypothèses essentiellement optimales sur l'interaction, que le comportement asymptotique de la fonction à 2-point du modèle d'Ising au-dessus de sa température critique prend la forme prédite par Ornstein et Zernike. La contribution principale de ce travail est que nous ne supposons pas l'interaction de portée finie. À notre connaissance, il s'agit de la première preuve du comportement Ornstein-Zernike pour un modèle non trivial avec des interactions de portée infinie.
	
	Nos résultats s'appliquent plus généralement à la fonction de Green d'une grande classe de modèle «auto-répulsifs en moyenne», incluant une famille naturelle de modèles de polymère auto-répulsif à laquelle appartiennent, en particulier, la marche aléatoire auto-évitante, le modèle de Domb-Joyce et la marche aléatoire tuée.
	
	Nous nous sommes efforcés de rendre notre présentation aussi pédagogique et complète que possible.
\end{abstract}

\begin{keyword}[class=MSC]
	\kwd[Primary ]{60K35}
	\kwd{82B20}
	\kwd[; secondary ]{82D60}
\end{keyword}

\begin{keyword}
	\kwd{Ising model}
	\kwd{polymers}
	\kwd{long-range interactions}
	\kwd{Green function}
	\kwd{Ornstein--Zernike asymptotics}
\end{keyword}

\end{frontmatter}

\vspace*{5mm}
\hfill\pbox{7cm}{\textit{In memory of Dima Ioffe}\\\textit{a brilliant mathematician }\\\textit{and, above all, a great friend}}


\section{Introduction and results}

\vspace*{5mm}

\subsection{Introduction}

The central objects in this work are Green functions of the form
\[
G(x) = \sum_{\gamma:0\to x} q(\gamma),
\]
where the sum is over paths in \(\Zd\) starting at \(0\) and ending at \(x\), that is, sequences of vertices \(\gamma = (\gamma_0=0, \gamma_1, \dots, \gamma_{n-1}, \gamma_n=x)\) of arbitrary finite length. The weight \(q(\cdot)\) is assumed to satisfy a number of conditions,
which will be stated precisely in Section~\ref{ssec:ConditionsOnPaths}. For this introduction, it suffices to say that they are general enough to enable the analysis of a variety of important quantities in statistical mechanics, among which the two-point function of the ferromagnetic Ising model on \(\Zd\) above the critical temperature,
or the Green functions of the large class of self-repulsive polymer models considered in~\cite{Ioffe+Velenik-2008}, which includes the self-avoiding walk, the Domb--Joyce model, as well as the killed random walk (or equivalently, the covariances of the massive Gaussian Free Field).

\medskip
Our main result on the Green function \(G\) is the proof that, under suitable assumptions, the latter displays classical Ornstein--Zernike (OZ) behavior at large distances, that is,
\[
	G(x) = \frac{\Psi(\hat x)}{\norm{x}^{(d-1)/2}}\, \sfe^{-\nu(\hat x) \norm{x}}\, (1+\sfo(1)),
\]
where \(\norm{x}\) denotes the Euclidean norm of \(x\), \(\hat x = x/\norm{x}\), and \(\Psi\) and the inverse correlation length \(\nu\) are positive functions on the unit sphere.
In fact, our result expresses \(G\) in terms of the Green function of a directed random walk on \(\Zd\), thus also providing higher-order corrections to this leading asymptotics once the latter are established for this (much simpler) directed random walk.

\medskip
Results of this type have a long history, which will be briefly recalled below. The main contribution of the present study is that, for the first time to the best of our knowledge, the steps in \(\gamma\) are not assumed to be of bounded size. This allows us in particular to prove the above OZ behavior for Ising models on \(\Zd\) with interactions of infinite range. More precisely, it becomes possible to consider Ising models with (formal) Hamiltonian of the form
\[
-\sum_{i,j\in\Zd} J_{i,j} \sigma_i\sigma_j,
\]
where the coupling constants are assumed to be nonnegative, translation invariant and centrally symmetric (\(J_{i,j} = J_{i-j} = J_{j-i}\)) and to satisfy some very mild irreducibility condition (see Section~\ref{sec:definitions}).
Of course, in order for \(\nu\) to be (strictly) positive, it is also necessary that the coupling constants decay at least exponentially fast with the distance. 
As we discovered recently, the latter condition is however not always sufficient to ensure OZ behavior. For simplicity, let us consider coupling constants of the form \(J_{i} = \psi(i)\sfe^{-\rho(i)}\), where \(\rho\) is a norm on \(\Rd\) and \(\psi\) a subexponential function. Then, the following \emph{saturation} phenomenon was established (for a large class of models) in~\cite{Aoun+Ioffe+Ott+Velenik-CMP-2021, Aoun+Ioffe+Ott+Velenik-PRE-2021}. To each direction \(s\in\bbS^{d-1}\), we can associate an inverse temperature \(\betasat(s) < \beta_{\rm c}\) (where \(\betasat>0\) if and only if the coupling constants satisfy some explicit condition) such that \(\nu_\beta(s) = \rho(s)\) for all \(\beta<\betasat(s)\), but \(\nu_\beta(s) < \rho(s)\) for all
\(\beta\in(\betasat(s),\betac)\); see Figure~\ref{fig:saturation}.
We show in a companion paper~\cite{Aoun+Ott+Velenik-2021-inPreparation} that OZ behavior is violated when \(\beta < \betasat(s)\) (for general Potts models).
\begin{figure}
	\includegraphics{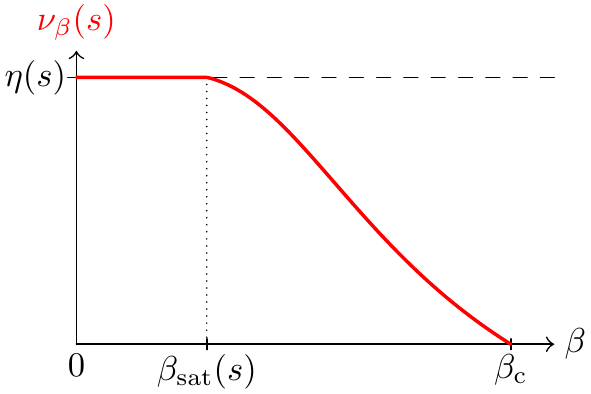}
	\caption{
			The results in this paper apply to the regime \(\beta\in(\betasat(s),\betac)\). The asymptotic behavior of the Green function \(G\) is actually not of Ornstein--Zernike type in the regime \(\beta\in (0,\betasat(s))\).}
	\label{fig:saturation}
\end{figure}

More precisely, the properties of typical paths contributing to the Green function \(G(x)\) change dramatically when \(\beta\) crosses \(\betasat\). Namely, when \(\beta<\betasat\), typical paths are composed of \(\sfo(\norm{x})\) edges (in many cases, a single giant edge connecting a vertex in the vicinity of \(0\) to a vertex in the vicinity of \(x\), with a finite random number of additional edges connecting it to \(0\) and \(x\); this is the so-called ``one-jump'' regime, which is observed in other areas of probability theory and statistical physics as well~\cite{Godreche-2019}). In contrast, when
\(\beta\in(\betasat(s),\betac)\), typical paths contain a number of edges growing linearly with \(\norm{x}\), all of them being microscopic (their size having an exponential upper tail, the longest edge has a length of order \(\log\norm{x}\)); this is required for OZ behavior to occur.

In the present paper, we prove that OZ holds whenever \(\beta\in(\betasat(s),\betac)\) provided some additional assumption (described later in Section~\ref{sec:Saturation}) is satisfied. In particular, this implies OZ behavior whenever the coupling constants decay
superexponentially with the distance (that is, \(\lim_{\norm{x}\to\infty}J_x\sfe^{c\norm{x}} < \infty\) for all \(c\in\bbR\)).%
We also explain how the usual by-products can be extracted, including analyticity and uniform strict convexity properties of the inverse correlations length. 

Following the approach developed in previous works, our basic strategy is to construct a coupling between the paths contributing to the Green function \(G\) and the trajectories of a directed random walk on \(\Zd\). This is rather remarkable, as the path weight \(q\) does not factorize in general, so that this provides a coupling between a process with infinite memory and a Markov process. Although, as mentioned above, our goal in this paper is the derivation of the OZ asymptotics, the existence of this coupling allows considerably more sophisticated applications; see the end of the historical discussion just below for a partial list of applications of such coupling results to a variety of problems in equilibrium statistical physics.%

\subsection*{A brief history.}
As mentioned above, the derivation of OZ behavior has a long history. In this section, we briefly sketch the latter, without even trying to be comprehensive, emphasizing in particular the major contributions Dima Ioffe made to this topic. 

The analysis of the large-distance asymptotics of pair correlation functions goes back more than a century to the seminal works of Ornstein and Zernike~\cite{Ornstein+Zernike-1914, Zernike-1916}. While not mathematically rigorous, their work provided the first derivation of OZ asymptotics; moreover, the approach they introduced has had a major influence on most of the rigorous works that followed.

The first rigorous derivations of OZ behavior were obtained for the planar Ising model~\cite{Wu-1966, Wu+McCoy+Tracy+Barouch-1976} and relied on the explicit computation of the two-point function. These works also showed that OZ behavior does not hold in this setting when \(\beta\geq\beta_{\rm c}\).

The first robust approaches, applicable to more general classes of models and in any dimension, were introduced in~\cite{Abraham+Kunz-1977, Paes-Leme-1978}. They are by nature restricted to perturbative regimes (very high temperatures, very low densities, etc.). Other approaches of this type, relying on a variety of different techniques, were introduced later
(see, for instance, \cite{Bricmont+Frohlich-1985a, Minlos+Zhizhina-1996}) and new works along these lines still regularly appear.

In parallel, an alternative \textit{nonperturbative} approach was developed. It originated with the work~\cite{Chayes+Chayes-1986} on the Green function of the self-avoiding walk, valid in the whole subcritical regime.
This approach was later extended to the two-point connectivity function in Bernoulli percolation~\cite{Campanino+Chayes+Chayes-1991}. The latter two works were breakthroughs, but suffered from several crippling limitations: they were essentially restricted to on-axis directions; the way they established the crucial separation of masses estimate (namely, that the rate of exponential decay of the ``direct'' correlation function is strictly larger than the rate of exponential decay of the Green function) was extremely model-dependent and unlikely to be extendable to more complex situations; finally, factorization properties of the weights played a crucial role.

Dima Ioffe entered this line research in 1998~\cite{Ioffe-1998}. Combining the approach in~\cite{Chayes+Chayes-1986} with techniques from large deviation theory and convex analysis, he was able to treat the Green function of the subcritical self-avoiding walk in arbitrary directions,
as well as obtain new information on the directional dependence of various quantities, including the analyticity of the correlation length as a function of the direction.
A few years later, in a joint work with Massimo Campanino~\cite{Campanino+Ioffe-2002}, they devised a robust coarse-graining method to establish separation of masses, which allowed them to extend the results of~\cite{Campanino+Chayes+Chayes-1991} to arbitrary directions.
Finally, in joint work with the third author~\cite{Campanino+Ioffe+Velenik-2003}, it was shown how exponential mixing properties could be leveraged in conjunction with an adaptation of the coarse-graining of~\cite{Campanino+Ioffe-2002} to prove OZ behavior for finite-range Ising model on \(\Zd\) for any temperature above critical (this was later extended to general Potts models~\cite{Campanino+Ioffe+Velenik-2008}).
An inconvenient feature of~\cite{Campanino+Ioffe+Velenik-2003} was the fact that it did not relate the two-point function of the Ising model to the Green function of a random walk, but rather to the corresponding quantity for a more complicated process with an infinite memory (formulated in terms of a suitable Ruelle operator).
This flaw was corrected more recently by the second and third authors~\cite{Ott+Velenik-2018}, who showed how one could recover independence using techniques originating in the field of perfect simulations~\cite{Comets+Fernandez+Ferrari-2002}; an alternative derivation, more combinatorial in nature, can be found in~\cite{Ioffe+Ott+Shlosman+Velenik-2021} (a version of which can also be found in Section~\ref{sec:facto_meas} of the present paper).
Finally, these methods were adapted to the random-current representation of the Ising model, allowing the first proof of OZ behavior for the Ising model in the presence of a magnetic field~\cite{Ott-2020}.

Let us mention that the approach developed by Dima and his collaborators led to a great variety of applications to important problems in equilibrium statistical mechanics, related to the properties of interfaces in planar lattice systems~\cite{Greenberg+Ioffe-2005, Campanino+Ioffe+Velenik-2008, Hammond-2012, Coquille+Velenik-2012, Coquille+Duminil-Copin+Ioffe+Velenik-2014, Ioffe+Shlosman+Toninelli-2015, Ott+Velenik-2018, Ioffe+Ott+Velenik+Wachtel-2020, Ioffe+Ott+Shlosman+Velenik-2021}, the effect of a stretching force on self-interacting polymers with or without disorder~\cite{Ioffe+Velenik-2008, Ioffe+Velenik-2010, Ioffe+Velenik-2012, Ioffe+Velenik-2012b, Ioffe+Velenik-2012c, Ioffe+Velenik-2013, Ioffe-2015}, the asymptotic behavior of more general correlation functions~\cite{Campanino+Ioffe+Velenik-2004, Campanino+Gianfelice-2009, Campanino+Ioffe+Louidor-2010, Campanino+Gianfelice-2015, Ott+Velenik-2018(2)}, etc.

\subsection{Results}
In this section, we describe the main results of the present work in an informal manner. The precise statements require many concepts and notations that will only be introduced in Section~\ref{sec:definitions} and can be found in Section~\ref{sec:StatementResults}.

\medskip
All the results presented below rely on an assumption that we call NSA and which is discussed in detail in Section~\ref{sec:Saturation}.
This assumption has really two parts. The first one is necessary: the validity of our construction (and of the resulting OZ asymptotics) requires that \(\beta \in (\betasat(s),\betac)\). Indeed, we prove in~\cite{Aoun+Ott+Velenik-2021-inPreparation} that the 2-point function of the Potts model on \(\Zd\) \textit{does not} display OZ behavior when \(\beta \in (0,\betasat(s))\) (under which conditions OZ behavior occurs \textit{at} \(\betasat(s)\) is not yet fully elucidated). However, we also require a stronger assumption, that we believe is always satisfied. We refer the interested reader to Section~\ref{sec:Saturation} for more details. It suffices here to say that this stronger assumption is not needed when any one of the following conditions is fulfilled:
\begin{enumerate*}
	\item the coupling constants decay superexponentially fast with distance, or
	\item the direction \(s\) lies along one of the coordinate axis and the coupling constants are invariant under lattice reflections, or
	\item \(\beta\in (\sup_{s'\in\bbS^{d-1}}\betasat(s'),\betac)\).
\end{enumerate*}

\medskip
Let \(s\in\bbS^{d-1}\), let \(n\in\bbN\) be a large integer. To lighten notation, we assume that \(ns\in\Zd\) (otherwise, once can choose arbitrarily a closest point to \(ns\) on the lattice).
The weight \(q\) induces a probability measure \(q(\gamma)/G(0,ns)\) on paths from \(0\) to \(ns\).
The weight \(q\) does not, in general, factorize: if \(\gamma\) is the concatenation of two paths \(\gamma_1\) and \(\gamma_2\), then \(q(\gamma) \neq q(\gamma_1) q(\gamma_2)\), even when \(\gamma_1\) and \(\gamma_2\) share a single vertex. This is in particular the case in the Ising model, where the weight \(q\) results from the resummation over the loop soup with which the path interacts: this resummation induces an effective self-interaction for the path that is of unbounded range.

It is thus rather remarkable that one can construct a coupling between these paths \(\gamma\) and a directed random walk. This is the content of Theorem~\ref{thm:main_coupling_with_RW}, which we explain informally now.

Let \(f\) be a function from the set of all paths from \(0\) to \(ns\) to the real numbers. We are interested in evaluating expressions of the type
\[
	\sum_{\gamma:\, 0\to ns} q(\gamma) f(\gamma) .
\]
Of course, the Green function corresponds simply to choosing \(f(\gamma)=1\) for all \(\gamma\).

The goal of the construction is to show that, up to a negligible error, one can replace the sum over \(\gamma\) by a sum over strings of subpaths \(\gamma_L, \gamma_1, \dots, \gamma_M, \gamma_R\) (where \(M\geq 0\) is not fixed), the concatenation of which yields a path from \(0\) to \(ns\). Although we will not define precisely the relevant properties of these subpaths here (this will be done in Section~\ref{sec:definitions}), it suffices to say that \(\gamma_L\) and \(\gamma_R\) are constrained to lie in cones, while \(\gamma_1,\dots\gamma_M\) lie in \textit{diamonds} given by the intersection of two cones, as in the following pictures:
\begin{center}
	\includegraphics{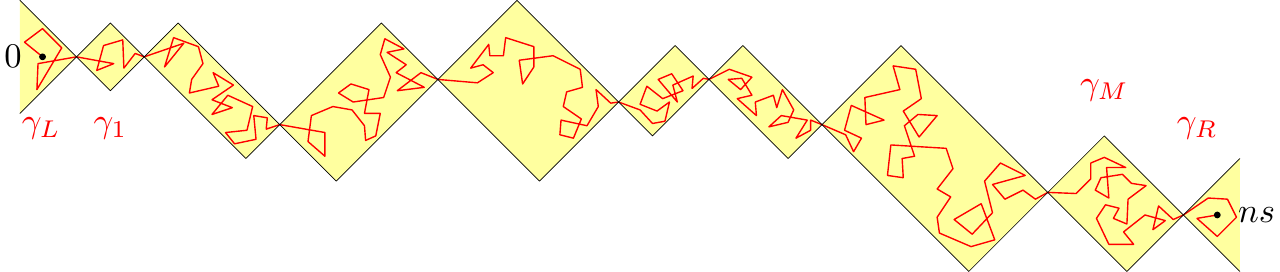}
\end{center} 
Theorem~\ref{thm:main_coupling_with_RW} then states the existence of three probability measures \(\rmp, \rmp_L, \rmp_R\) and two constants \(C_{LR}\in\bbR_{>0}\) and \(c>0\) such that
\[
	\sfe^{n\nu(s)} \sum_{\gamma:\, 0\to ns} q(\gamma) f(\gamma)
	=
	C_{LR} \sum_{\gamma_L,\,\gamma_R} \sum_{M\geq 0} \sum_{\gamma_1, \dots, \gamma_M} f(\bar{\gamma}) \mathds{1}_{\{\bar\gamma:\,0\to ns\}} \rmp_L(\gamma_L) \rmp_R(\gamma_R) \prod_{i=1}^n \rmp(\gamma_i)
	+
	\sfO(\sfe^{-cn}),
\]
where we have written \(\bar\gamma = \gamma_L\circ\gamma_1\circ\dots\circ\gamma_M\circ\gamma_R\), with \(\circ\) denoting concatenation, and the indicator function guarantees that the resulting path \(\bar\gamma\) indeed connects \(0\) to \(ns\). Observe that the subpaths are sampled \textit{independently} (apart from the overall constraint that \(\bar\gamma\) must join \(0\) to \(ns\)). Moreover, notice that the left-hand side has been multiplied by \(\sfe^{n\nu(s)}\), so that the error term resulting from this factorization procedure is really of order \(\sfe^{-n\nu(s)-cn}\). 

Theorem~\ref{thm:main_coupling_with_RW} also guarantees that \(\rmp, \rmp_L, \rmp_R\) have exponential tails, in the sense that they associate a probability exponentially decaying in \(\ell\) to paths of diameter at least \(\ell\).

Let \((S_k)_{k\geq 0}\) be the (directed) random walk on \(\Zd\) with transition probabilities given by the push-forward \(\rmp(y) = \sum_{\gamma:\,0\to y} \rmp(\gamma)\). Denote by \(\RWP{u}\) the distribution of \((S_k)\) conditionally on \(\{S_0=u\}\) and by \(\RWG(u,v) = \sum_{k\geq 0} \RWP{u}(S_k=v)\) the corresponding Green function.

Applying Theorem~\ref{thm:main_coupling_with_RW} with \(f\equiv 1\) allows one to express the asymptotic behavior of the Green function in terms of an average of the Green function \(\RWG(u,v)\): for any \(n\geq 0\),
\[
	\sfe^{n\nu(s)}\, G(0,ns) = (1+\sfO(\sfe^{-cn}))\, C_{LR} \sum_{u,v\in\Zd} \rmp_L(u) \rmp_R(v) \RWG(u,ns-v),
\]
where the probability measures \(\rmp_L, \rmp_R\) on \(\Zd\) are the push-forwards of the corresponding measures on paths and have exponential tails.
\begin{center}
	\includegraphics{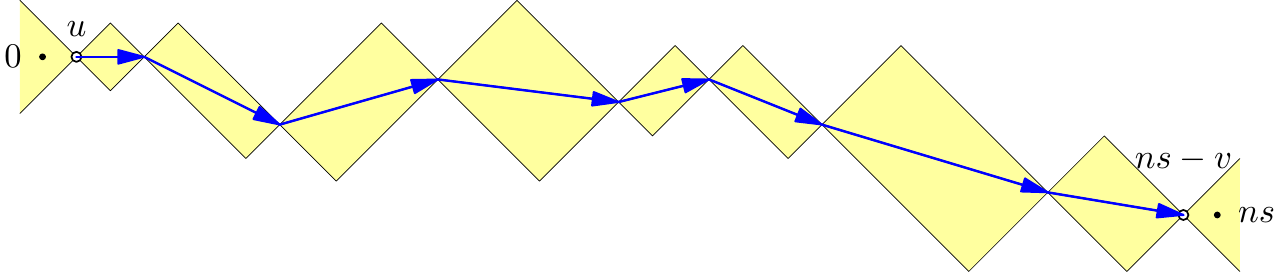}
\end{center} 

At this stage, the Ornstein--Zernike asymptotics for \(G(0,ns)\) follow from the asymptotic behavior of the Green function \(\RWG(u,ns-v)\). Rather straightforward computations yield the existence of \(c_s>0\) such that, for every \(\varepsilon>0\),
\[
	G(0,ns) = \frac{c_s}{n^{(d-1)/2}}\, \sfe^{-n\nu(s)}(1+\sfo(n^{\varepsilon-1/2})).
\]
The precise statements of these results are gathered as Theorem~\ref{thm:main_OZ_asymp}.
\begin{remark}
In the zero-mean case, the asymptotic expansion of \(\RWG(0,ns)\) in powers of \(n^{-1/2}\) have been established in~\cite{Uchiyama_1998}. In our case, the associated random walk has non-zero mean, and we were not able find a similar expansion of \(\RWG(0,ns)\), except for a first-order expansion of \(\RWG(0,ns)\) obtained in~\cite{Doney1966,Stam1969}. In Section~\ref{section:OZ_asymptotics}, we use standard large deviation bounds and a refinement of the local limit theorem to study \(\RWG(0,ns)\) to obtain the claim stated above. With additional work, one could obtain an expansion in powers of \(n^{-1/2}\) of \(\RWG(0,ns)\), and therefore of \(G(0,ns)\), of arbitrarily high order. In particular, in order to obtain arbitrarily high power-law corrections to the exponential decay for the two-point function of the Ising model, it is sufficient to derive the (much simpler) expansion for the Green function of the associated random walk. 
\end{remark}

Another interesting consequence of Theorem~\ref{thm:main_coupling_with_RW} is that typical paths \(\gamma\) contributing to \(G(0,ns)\) contain approximately \(A n\) steps (for some \(A=A(s)>0\)).
This behavior is in sharp contrast with what happens in the saturation regime \(\beta\in (0,\betasat(s))\) for the Potts models, for which it is proved in~\cite{Aoun+Ott+Velenik-2021-inPreparation} that typical paths have a length of order \(1\).
We complement this analysis with a local limit theorem establishing Gaussian fluctuations.
All these results are collected in Corollary~\ref{cor:NumberSteps}.

\medskip
Finally, as a last consequence of Theorem~\ref{thm:main_coupling_with_RW}, we prove that the inverse correlation length \(\nu\) depends analytically on the direction \(s\). We emphasize that such analyticity results are notoriously hard to prove nonperturbatively. The precise statement can be found in Theorem~\ref{thm:main_Wulff_analytic}.

\subsection{Illustration}
In this section, we present a brief illustration of the picture provided by this and our previous works, in the specific case of the two-point function \(\langle \sigma_0\sigma_{ne_1} \rangle_\beta\) for the Ising model with no external field at \(\beta<\betac\). Let us thus consider coupling constants of the form
\(J_i = \rho(i)^{-\alpha}\sfe^{-\rho(i)}\), with \(\rho\) a norm on \(\Rd\) and \(\alpha > 2d\). In this case, there exists \(\betasat(e_1)\in (0,\betac)\) such that
\begin{itemize}
	\item For all \(\beta \in (0,\betasat(e_1))\), \(\nu_\beta(e_1)=\rho(e_1)\) and there exists \(C_\beta>0\) such that, as \(n\to\infty\),
	\[
		\langle \sigma_0\sigma_{ne_1} \rangle_\beta = C_\beta n^{-\alpha}\sfe^{-n\rho(e_1)} (1+\sfo(1)).
	\]
	\item For all \(\beta \in (\betasat(e_1),\betac)\), \(0<\nu_\beta(e_1)<\rho(e_1)\) and there exists \(c_\beta>0\) such that, as \(n\to\infty\),
	\[
		\langle \sigma_0\sigma_{ne_1} \rangle_\beta = c_\beta n^{-(d-1)/2}\sfe^{-n\nu_\beta(e_1)} (1+\sfo(1)).
	\]
\end{itemize}
The first claim is proved in~\cite{Aoun+Ioffe+Ott+Velenik-CMP-2021,Aoun+Ott+Velenik-2021-inPreparation}, while the second is one of the results of the present paper.
In particular, the only case in which the sharp asymptotic is not yet known is \(\beta=\betasat(e_1)\). The same holds for coupling constants of the form \(J_i = \sfe^{-c'\rho(i)^{\eta}}\sfe^{-\rho(i)}\) for arbitrary \(c'>0\) and \(\eta \in (0,1)\).

\section{Definitions and conventions}
\label{sec:definitions}

In this section, we introduce the definitions and notations that will be used throughout the paper. We encourage the reader to refer to figures whenever possible. We also give our assumptions on the weights \(q(\gamma)\) and explain how the two-point function of the Ising model can be rewritten as sum over paths \(\gamma\) with some weights \(q\).

\subsection{General}

We will work on \(\Zd\) canonically embedded in \(\Rd\). We will consider the following graph structure: vertices are sites of \(\Zd\) while the edge set, denoted \(\bbE_d\), will be induced by a symmetric weight function \(J:\Zd\times\Zd\to\R_{\geq 0}\): the edge \(\{i, j\}\) is in \(\bbE_d\) if \(J_{ij}>0\).

For \(A\subset\Zd\), we denote
\begin{gather*}
	E_A = \bsetof{\{i,j\}\in \bbE_d}{i, j\in A},
\end{gather*}

A \emph{path} \(\gamma = (\gamma_0, \dots, \gamma_n)\) is a sequence of vertices such that \(\{\gamma_{k}, \gamma_{k+1}\}\in\bbE_d\) for \(k\in\{0, \dots, n-1\}\). \(n\) is the \emph{length} of \(\gamma\), \(|\gamma| =n\). Equivalently, a path is given by a starting point and a sequence of edges, \(\gamma=(\gamma_0; e_1, \dots, e_n)\), satisfying \(\gamma_0\in e_1\) and \(e_k\cap e_{k+1}\neq \emptyset\) for \(1\leq k < n\). We write \(\gamma: x\to y\) if \(\gamma_0=x\) and \(\gamma_{|\gamma|} = y\). For \(0\leq a\leq b\leq |\gamma|\), let \(\gamma_a^b = (\gamma_a, \gamma_{a+1}, \dots, \gamma_b)\). The path \(\gamma_a^b\) is a path in its own right, that is, the ``time'' labels are not preserved: \((\gamma_a^b)_k = \gamma_{a+k}\) for \(0\leq k\leq|\gamma|\). The set of paths from \(x\) to \(y\) that use only edges in \(E\subset \bbE_d\) is denoted \(\pathSet_{E}(x,y)\). We set \(\pathSet(x,y)\equiv \pathSet_{\bbE_d}(x,y)\), \(\pathSet(x) = \bigcup_{y}\pathSet(x,y)\) and \(\pathSet= \bigcup_{x}\pathSet(x)\). Summation over \(\gamma\in\pathSet(x,y)\) will be denoted \(\sum_{\gamma:\, x\to y}\). We shall also use the notations
\[
	\fend(\gamma) = \gamma_0,\quad \bend(\gamma) = \gamma_{|\gamma|}.
\]
We say that a path is \emph{edge-self-avoiding} if \(e_k=e_l \implies k=l\).

For two paths \(\gamma: x\to y\), and \(\gamma': y\to z\), we denote by \(\gamma\concatenate\gamma'\) their concatenation: \(\gamma\concatenate \gamma' = (\gamma_0, \dots, \gamma_{|\gamma|}, \gamma'_1, \dots, \gamma'_{|\gamma'|} )\).

To an edge \(e=\{i,j\}\in\bbE_d\), we associate the closed line segment \([e] = \setof{t i + {(1-t) j}}{t\in[0, 1]}\). To a subgraph \((V,E)\subset (\Zd,\bbE_d)\), we associate the subset of \(\Rd\) \([(V,E)] = [E] = \bigcup_{e\in E} [e]\).

We use \(\norm{\cdot}\) to denote the Euclidean norm on \(\Rd\). \(\bbS^{d-1}\) denotes the unit sphere in \(\Rd\): \(\bbS^{d-1} = \setof{x\in\Rd}{\norm{x} = 1}\). We also denote by \(\bbB_r(x)\) the closed ball of radius \(r\) centered at \(x\). For \(x\in\Rd\setminus\{0\}\), we will write \(\hat{x} = x/\norm{x}\). We also define the slabs
\[
	\slab_{a,b}(y) = \setof{x\in\Rd}{a\leq x\cdot y < b},
\]
with \(a\leq b\in\R, y\in\Rd\). We write \(\slab_{b}(y)\equiv \slab_{0,b}(y)\) for \(b\geq 0\) and \(\slab_{a}(y)\equiv \slab_{a,0}(y)\) for \(a\leq 0\).

\subsection{Discrete and continuous}

We will often define sets as subsets of \(\Rd\) as it is sometimes convenient. To lighten notations, when \(A\subset\Rd\), we write \(\sum_{x\in A}\) to mean \(\sum_{x\in A\cap\Zd}\). In the same spirit, integer parts are systematically omitted. We hope that these few liberties will not confuse the reader.

\subsection{Constants}

\(c, C, c', C', \dots\) will denote constants whose value can change from line to line. Unless explicitly stated otherwise, they depend only on \(d, \beta, J\).

\subsection{Little \(\sfo\) notation}

We will write \(f(a) = \sfo_a(1)\) when \(\lim_{a\to\infty} f(a) = 0\). We will also use the more standard \(g=\sfo(f(a))\) if \(\lim_{a\to\infty} \frac{g(a)}{f(a)}=0\). Note that \(\sfo(\cdot)\) can be positive or negative. Finally, we say that \(g=\sfO(f(a))\) if there exists \(C\geq 0, a_0\geq 0\) such that, for any \(a\geq a_0\), \(\abs{g(a)} \leq C \abs{f(a)}\).

\subsection{Interaction}
\label{subsec:interaction}

The interactions (edge weights) \(J_{ij}\) are non-negative coupling constants. We shall assume
\begin{itemize}
	\item Ferromagnetism: \(J_{ij}\geq 0\).
	\item Central symmetry and translation invariance: \(J_{ij} = J_{i-j} = J_{j-i}\).
	\item Normalization: \(J_0=0\), \(\sum_{x\in\Zd} J_x = 1\).
	\item Exponential decay: \(J_{x}\leq \sfe^{-c\norm{x}}\) for some \(c>0\).
	\item Irreducibility: for any \(x\in\Zd\), there exists \(\gamma: 0\to x\) with \(\prod_{k=1}^{|\gamma|} J_{\gamma_{k-1}, \gamma_k} >0\).
\end{itemize}

\subsection{General path models to be considered}
\label{ssec:ConditionsOnPaths}

While our main goal is to prove OZ-type asymptotics for the Ising model, the proof also applies to a class of self-repulsive path models including the self-avoiding walk, the killed random walk and, more generally, the large class of self-repulsive polymer models studied in~\cite{Ioffe+Velenik-2008}. We define here the relevant family of path models.

\begin{definition}[Path models]
	\label{def:path_models}
	A translation-invariant \emph{path model} is a weight function \(q: \pathSet \to \R_{\geq 0}\) such that \(q(\gamma) = q(x+\gamma)\) for any \(\gamma\in\pathSet\) and \(x\in\Zd\).
\end{definition}
In the sequel, we will always assume translation invariance and not mention it systematically.

We say that the paths \(\gamma_1, \dots, \gamma_n\) are \emph{compatible} if the path \(\gamma_1\concatenate\cdots\concatenate\gamma_n\) has positive weight: \(q(\gamma_1\concatenate\cdots\concatenate\gamma_n) > 0\). Moreover, we say that path \(\gamma\) is \emph{admissible} if \(q(\gamma)>0\).
\begin{definition}[Conditional weights]
	\label{def:conditional_weight}
	Given compatible paths \(\gamma\) and \(\gamma'\), we define the \emph{conditional weight} of \(\gamma'\) given \(\gamma\) by
	\[
		q(\gamma'\given \gamma) = \frac{q(\gamma\concatenate \gamma')}{q(\gamma)}.
	\]
	It will sometimes be notationally convenient to allow \(\gamma\) to be empty; in this case we define \(q(\gamma'\given\gamma) = q(\gamma')\).
\end{definition}
The central quantity in our study is the two-point function associated to some of these models.

\begin{definition}[Two-point function]
	The \emph{two-point function} of the path model is
	\[
		G(x,y) = \sum_{\gamma:\, x\to y} q(\gamma).
	\]
\end{definition}

\begin{definition}[Inverse correlation length]
	\label{def:ICL}
	When the limit exists, the \emph{inverse correlation length} of the path model in direction \(s\in\bbS^{d-1}\) is defined by
	\begin{equation}
		\nu(s) = -\lim_{n\to\infty} \frac{1}{n}\log G(0,ns).
	\end{equation}
\end{definition}

When discussing the Ising model, we shall write \(\nu_\beta\) for the inverse correlation length, in order to make its temperature dependence explicit.

\medskip
We will work with path models satisfying the following conditions:
\begin{enumerate}[label=\textbf{P\arabic*}]
	\item\label{weight_property:ICL} Existence and positivity of \(\nu\): the limit in Definition~\ref{def:ICL} exists and is positive for all \(s\in\bbS^{d-1}\); moreover, \(\nu\) extends to a norm on \(\Rd\) by positive homogeneity of order one.
	\item\label{weight_property:repulsion} Self-repulsion: for any admissible \(\gamma': z\to x\),
	\[
		\sum_{\gamma:\, x\to y} q(\gamma\given \gamma')\leq \sum_{\gamma:\, x\to y} q(\gamma).
	\]
	\item\label{weight_property:lower_bound} Lower bound: there exists a constant \(C=C(\beta)>0\) such that, for every compatible paths \(\gamma'\) and \(\gamma\) (\(\gamma'\) can be empty),
	\[
		q(\gamma\given\gamma') \geq \prod_{k=1}^{|\gamma|}CJ_{\gamma_{k}-\gamma_{k-1}}.
	\]
	\item\label{weight_property:weight_growth} Controlled growth: there exists \(C=C(\beta)\) such that, for any compatible paths \(\gamma', \gamma\),
	\[
		q(\gamma\given \gamma')\leq \prod_{k=1}^{|\gamma|}CJ_{\gamma_{k}-\gamma_{k-1}}.
	\]
	\item\label{weight_property:finite_energy} Finite energy: Let \(R>0\), \(a>0\) and \(\hat t\in\bbS^{d-1}\). There exists \(C=C(\beta,R,a)\) such that, for any compatible paths \(\gamma\) and \(\gamma'\) satisfying \(\fend(\gamma) = \bend(\gamma') = 0\), \(\gamma' \subset \setof{z\in\Zd}{z\cdot \hat{t} \leq 0}\) and \(\gamma \subset \setof{z\in\Zd}{\norm{z}\leq R \text{ or } z\cdot\hat{t} > a\norm{z}-R}\),
	\[
		q(\gamma\given\gamma') \leq C q(\gamma)
		\quad\text{ and }\quad
		q(\gamma'\given\gamma) \leq C q(\gamma').		
	\]
	\item\label{weight_property:monotonicity} Monotonicity of conditional weights: for any compatible paths \(\gamma'\), \(\gamma''\) and \(\gamma\) such that \(\gamma\cap(\gamma'\concatenate\gamma'')=\bend(\gamma)\) (we allow \(\gamma''\) to be empty),
	\[
		q(\gamma\given \gamma'\concatenate\gamma'') \geq q(\gamma\given \gamma'').
	\]
	\item\label{weight_property:mixing} Exponential ratio mixing: there exist \(C, c_{\mix}>0\) such that, for any compatible paths \(\gamma', \gamma'', \gamma\),
	\[
		1 - C\sum_{i\in \gamma, j\in\gamma''} \sfe^{-c_{\mix}\norm{i-j}}
		\leq
		\frac{q(\gamma\given \gamma'\concatenate \gamma'')}{q(\gamma\given \gamma'')}
		\leq
		1 + C\sum_{i\in \gamma, j\in\gamma''} \sfe^{-c_{\mix}\norm{i-j}}.
	\]
\end{enumerate}

We briefly discuss these assumptions. \ref{weight_property:ICL} is classical. It holds in the whole high-temperature regime of various models. By an adaptation of the arguments in~\cite{Ott-2020b}, it follows from the exponential decay of \(G(x,y)\) and the other properties. Note that it implies that
\[
	G(0,x) = \sfe^{-\nu(x)(1+\sfo_{\norm{x}}(1))}.
\]
Properties~\ref{weight_property:repulsion} and~\ref{weight_property:monotonicity} are the least robust. They hold, for instance, for the self-repulsive polymer models of~\cite{Ioffe+Velenik-2008} (which include the self-avoiding walk, the Domb--Joyce model and the killed random walk) and for the high-temperature representation of the Ising model.
Note that the hypotheses can be weakened by asking \(\sum_{\gamma:\, x\to y} q(\gamma\given \gamma')\leq C\sum_{\gamma:\, x\to y} q(\gamma)\) for some \(C\geq 1\), since one can then work with the weight \(\tilde{q} = Cq\) which satisfies the inequality with \(C=1\).
Properties~\ref{weight_property:lower_bound}, \ref{weight_property:weight_growth} and~\ref{weight_property:finite_energy} enable our local surgery arguments and prevent some pathological behaviors. Property~\ref{weight_property:mixing} is a high-temperature assumption. It is often a consequence of the exponential decay of \(G\).

We can now define the class of path measures that are considered in the present work.
\begin{definition}
	\label{def:path_model_set}
	We define \(\calQ\) to be the set of translation-invariant path models such that Properties~\ref{weight_property:ICL}, \ref{weight_property:repulsion}, \ref{weight_property:lower_bound}, \ref{weight_property:weight_growth}, \ref{weight_property:finite_energy}, \ref{weight_property:monotonicity} and \ref{weight_property:mixing} are satisfied.
\end{definition}

\subsection{Ising model}

We consider the ferromagnetic Ising model on \(\Zd\) with two-body interaction. Namely, we consider the finite-volume measures on \(\{-1, 1\}^{\Lambda}\)
\[
	\mu_{\Lambda;\beta}^{\eta}(\sigma) \propto \sfe^{\beta \sum_{\{i,j\}\subset \Lambda} J_{ij} \sigma_i\sigma_j + \beta \sum_{i\in\Lambda,j\notin\Lambda} J_{ij}\sigma_i \eta_j}
\]
where \(\beta\geq 0\) is the inverse temperature and \(\eta\) is a boundary condition. Expectation under \(\mu_{\Lambda;\beta}^{\eta}\) is denoted \(\lrangle{\cdot}_{\Lambda;\beta}^{\eta}\).


It is well known that this model undergoes a phase transition. Namely, let us denote by \(\mu_{\beta}^{0}\) the weak limit of \(\mu_{\Lambda;\beta}^{0}\) as \(\Lambda\to\Zd\) and let us consider the 2-point function \(G_\beta(x) = \lrangle{\sigma_0\sigma_x}^0_{\beta}\). Then, there exists \(\betac=\betac(J,d) \in (0, \infty]\) such that
\begin{itemize}
	\item there exists \(c=c(\beta)>0\) such that \(G_\beta(x) \leq \sfe^{-c\norm{x}}\) for all \(\beta < \betac\),
	\item \(\inf_{x\in\Zd} G_\beta(x) > 0\) for all \(\beta > \betac\).
\end{itemize}
In the first case, \(\mu_{\beta}^{0}\) is the only infinite-volume Gibbs measure.

\subsection{Path representation of correlations}

We briefly discuss how the Ising model falls into the framework of path models defined above. Everything in this section is standard~\cite{Pfister+Velenik-1999}. Let \(\Lambda\subset\Zd\) be finite. The \emph{high-temperature representation} of the correlation function \(\langle\sigma_A\rangle_{\Lambda;\beta}^0\), \(A\Subset\Lambda\), of the Ising model is based on the identity
\[
	\sum_{\sigma\in\{-1,1\}^{\Lambda}}\sigma_A\, \sfe^{\beta \sum_{\{i,j\}\subset \Lambda} J_{ij} \sigma_i\sigma_j } = \prod_{\{i,j\}\in E_{\Lambda}} \cosh(\beta J_{ij})\sum_{\substack{\omega\subset E_{\Lambda}\\ \partial \omega = A}} \prod_{\{i,j\}\in\omega} \tanh(\beta J_{ij}),
\]
where \(\partial\omega\) is the set of vertices having odd degree in \((\Zd,\omega)\).

Fix some total order on \(\Zd\).
Given \(\omega\subset \bbE_d\) with \(\partial\omega = \{x,y\}\), one can extract an edge-self-avoiding path \(\Gamma(\omega): x\to y\) using the following algorithm:
\begin{algorithm}
	\caption{Path extraction}
	\KwInput{\(\omega\)}
	Set \(F = \omega\), \(\gamma_0 = x\), \(\bar{\gamma} = \emptyset\), \(n = 0\);\\
	\While{\(\gamma_n \neq y\)}{
		Let \(i\) be the smallest vertex such that \(\{\gamma_n, i\}\in F\);\\
		Let \(A = \bsetof{\{\gamma_n,j\}\in\bbE_d}{j\leq i}\);\\
		Set \(\gamma_{n+1} = i\);\\
		Update \(F = F\setminus A\), \(\bar{\gamma} = \bar{\gamma}\cup A\), \(n = n+1\);\\
	}
	\KwOutput{\(\gamma\), \(n\), \(\bar\gamma\)}
	\label{algo:path_extract}
\end{algorithm}

\noindent
The algorithm yields \(\Gamma(\omega) = \gamma\in \pathSet(x,y)\), \(n = |\gamma|\) and \(\bar{\gamma}\) is the set of edges one has to check whether \(\Gamma(\omega) = \gamma\) or not (note that \(\bar{\gamma}\) depends on \(\gamma\) only, not on \(\omega\)).

The \emph{path representation} of the two-point function is given by
\begin{equation*}
	\lrangle{\sigma_x\sigma_y}_{\Lambda}^0
	=
	\sum_{\gamma\in\pathSet(x,y)} q_{E_{\Lambda}}(\gamma),
\end{equation*}
where, for any finite \(E\subset\bbE_d\),
\begin{equation*}
		q_{E}(\gamma) = \frac{\sum_{\substack{\omega \subset E:\, \partial\omega= \{x,y\}}} \mathds{1}_{\{\Gamma(\omega) = \gamma\}} \prod_{\{i, j\}\in\omega} \tanh(\beta J_{ij})}{\sum_{\substack{\omega \subset E:\, \partial\omega=\emptyset}} \prod_{\{i, j\}\in\omega} \tanh(\beta J_{ij})}.
\end{equation*}
For all \(\beta<\betac\), we can then define the infinite-volume weights by \(q(\gamma) = \lim_{E\to\bbE_d} q_E(\gamma)\). They satisfy
\[
	\lrangle{\sigma_x\sigma_y} = \sum_{\gamma\in\pathSet(x,y)} q(\gamma).
\]

\begin{lemma}\label{lem:IsingQ}
	Suppose \(\beta<\betac\). Then, \(q\in \calQ\).
\end{lemma}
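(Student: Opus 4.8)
The plan is to verify each of the seven properties \ref{weight_property:ICL}--\ref{weight_property:mixing} in turn for the infinite-volume high-temperature weights $q(\gamma) = \lim_{E\to\bbE_d} q_E(\gamma)$ of the Ising model at $\beta<\betac$. Several of these are classical facts about the high-temperature (random-current-flavored) representation, so the work is mostly a matter of collecting the right statements and checking that the definitions match; the genuinely model-specific input is the switching-lemma/GKS machinery and the exponential mixing at high temperature. First I would recall the diagrammatic interpretation: $q_E(\gamma)$ is, up to the partition-function ratio in the denominator, the weighted count of edge-configurations $\omega$ with $\partial\omega=\{x,y\}$ whose extracted path $\Gamma(\omega)$ equals $\gamma$. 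Summing over the ``decoration'' edges outside the deterministic check-set $\bar\gamma$ (which depends only on $\gamma$), one gets the key factorization-type identity
\[
	q(\gamma) = \Bigl(\prod_{k=1}^{|\gamma|}\tanh(\beta J_{\gamma_k-\gamma_{k-1}})\Bigr)\,
	\frac{\blrangle{\IF{\text{edges of }\bar\gamma\setminus\gamma\text{ closed in current}}}_{\text{loop soup}}}{\text{(normalization)}},
\]
i.e.\ $q(\gamma)$ equals the product of single-edge $\tanh$-weights times a conditional expectation in the background even subgraph. This representation is the engine behind all the estimates below.

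\ref{weight_property:ICL} follows from the stated exponential decay of $G_\beta$ for $\beta<\betac$ together with subadditivity: $-\log G(0,(m+n)s)\le -\log G(0,ms)-\log G(0,ns)+O(1)$ using the lower bound \ref{weight_property:lower_bound} (proved next) to glue paths, so Fekete's lemma gives existence of $\nu(s)\in(0,\infty)$, and the standard convexity argument (as in~\cite{Ott-2020b}) upgrades this to a norm on $\Rd$. For \ref{weight_property:lower_bound} and \ref{weight_property:weight_growth}, I would use the identity above: the conditional weight $q(\gamma\given\gamma')$ carries a factor $\prod_k \tanh(\beta J_{\gamma_k-\gamma_{k-1}})$, and the residual loop-soup conditional expectation is bounded above and below by constants depending only on $\beta$ — the upper bound by monotonicity (adding edge-constraints can only help) and the lower bound by a finite-energy/GKS argument showing the conditional probability that a fixed finite set of edges is closed is bounded below. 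Since $\tanh(\beta J)\le \beta J\le CJ$ and $\tanh(\beta J)\ge cJ$ for $J$ in the relevant (exponentially small) range, both bounds take the required form $\prod_k CJ_{\gamma_k-\gamma_{k-1}}$. Property \ref{weight_property:finite_energy} is a localized version of the same reasoning, using that $\gamma$ and $\gamma'$ live in spatially separated-at-scale-$R$ regions (one in a half-space, the other in a cone shifted by $R$) so the cross-interaction is uniformly bounded; GKS-type inequalities give the two-sided comparison with constant $C(\beta,R,a)$.

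Properties \ref{weight_property:repulsion} and \ref{weight_property:monotonicity} are the ones the text flags as ``least robust,'' and I expect the switching lemma to be the main tool and the main obstacle to stating cleanly. For \ref{weight_property:repulsion}, conditioning on a prior admissible path $\gamma':z\to x$ amounts to inserting sources/forcing certain edges, and one must show the total weight $\sum_{\gamma:x\to y} q(\gamma\given\gamma')$ of paths from $x$ to $y$ does not increase; this is a self-repulsion (sub-multiplicativity) statement that for the high-temperature Ising representation follows from the switching lemma combined with the fact that forcing a current between $z$ and $x$ depletes the ``free'' current available to connect $x$ to $y$ — concretely, $\sum_\gamma q(\gamma\concatenate\gamma')\le q(\gamma')\sum_\gamma q(\gamma)$ is a consequence of $\langle\sigma_z\sigma_y\rangle\le\langle\sigma_z\sigma_x\rangle\langle\sigma_x\sigma_y\rangle$-type tree-graph / Simon--Lieb inequalities. \ref{weight_property:monotonicity} similarly says appending an extra past segment $\gamma'$ (sharing only its endpoint with $\gamma$) can only raise the conditional weight of $\gamma$; again a switching-lemma monotonicity. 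Finally, \ref{weight_property:mixing} is the high-temperature exponential ratio-mixing estimate: using the loop-soup representation, the ratio $q(\gamma\given\gamma'\concatenate\gamma'')/q(\gamma\given\gamma'')$ differs from $1$ by the influence of the far component $\gamma'$ on the current near $\gamma$, which is controlled by the exponential decay of truncated correlations at $\beta<\betac$ — this gives the $1\pm C\sum_{i\in\gamma,j\in\gamma''}\sfe^{-c_{\mix}\norm{i-j}}$ bound. I would cite~\cite{Ott-2020b,Ioffe+Ott+Shlosman+Velenik-2021} (and the random-current literature) for the precise forms of the switching lemma and the mixing estimate rather than reproving them, and remark that all constants depend only on $d,\beta,J$ as required, completing the verification that $q\in\calQ$.
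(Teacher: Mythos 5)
Your overall architecture---check the seven properties one by one via the high-temperature representation, the check-set \(\bar\gamma\), correlation inequalities and exponential decay, citing the literature for the heavy lifting---is the same as the paper's (Appendix~\ref{sec:IsingHTProperties}, which leans on \cite{Pfister+Velenik-1999} and \cite{Aizenman+Barsky+Fernandez-1987}), and your sketches for \ref{weight_property:ICL}, \ref{weight_property:weight_growth}, \ref{weight_property:lower_bound}, \ref{weight_property:finite_energy} and \ref{weight_property:mixing} are essentially the paper's arguments. But there is a genuine gap at exactly the two properties you yourself flag as the crux, \ref{weight_property:repulsion} and \ref{weight_property:monotonicity}. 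Conditioning on a past path \(\gamma'\) does \emph{not} amount to ``inserting sources/forcing edges''; it amounts to \emph{deleting} the checked edges: the key identity (from \cite{Pfister+Velenik-1999}) is \(q(\gamma'\concatenate\gamma)=q(\gamma')\,q_{\calE(\gamma')}(\gamma)\) with \(\calE(\gamma')=\bbE_d\setminus\bar\gamma'\), i.e.\ \(q(\gamma\given\gamma')=q_{\calE(\gamma')}(\gamma)\). With it, \ref{weight_property:repulsion} is immediate: \(\sum_{\gamma:\,x\to y}q(\gamma\given\gamma')=\lrangle{\sigma_x\sigma_y}^0_{\calE(\gamma');\beta}\le\lrangle{\sigma_x\sigma_y}^0_{\beta}\) by GKS2 (monotonicity in the couplings). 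Your substitute mechanism would fail: the pointwise bound \(\lrangle{\sigma_z\sigma_y}\le\lrangle{\sigma_z\sigma_x}\lrangle{\sigma_x\sigma_y}\) is false for the ferromagnetic Ising model (GKS2 gives exactly the reverse inequality), and Simon--Lieb only holds summed over a separating surface; moreover the switching lemma belongs to the random-current representation and is neither needed nor used anywhere in this proof. Similarly, ``switching-lemma monotonicity'' is not what gives \ref{weight_property:monotonicity}: the actual tool is the volume-monotonicity \(q_E(\gamma)\ge q_{E'}(\gamma)\) for \(E\subset E'\) (\cite[Lemma~6.3]{Pfister+Velenik-1999}) applied with \(\calE(\gamma'\concatenate\gamma'')\subset\calE(\gamma'')\); note that deleting edges \emph{raises} the weight of a fixed path while it \emph{lowers} two-point functions, and your heuristic cannot distinguish these two directions.

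The same identity also underlies the bounds you do sketch: \ref{weight_property:weight_growth} is volume-monotonicity down to the graph consisting of the edges of \(\gamma\) alone, while \ref{weight_property:lower_bound}, \ref{weight_property:finite_energy} and \ref{weight_property:mixing} come from the explicit product formula for \(q_E(\gamma)\) over the edges of \(\bar\gamma\) (\cite[Lemma~6.35]{Pfister+Velenik-1999}), whose factors are exponentials of two-point functions; exponential decay plus the geometric separation of \(\gamma\) and \(\gamma'\) then yields uniform constants. One further caveat on your lower bound: \(\bar\gamma\) is neither fixed nor finite (it contains, for each visited vertex, all incident edges preceding the next step in the chosen order), so ``a fixed finite set of edges is closed with probability bounded below'' is not enough; the estimate must be made per step of \(\gamma\), with the normalization \(\sum_x J_x=1\) providing the uniform per-step constant \(C_\beta\), which is precisely what the explicit formula delivers.
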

The proof of this lemma, which is an adaptation of previous results, is sketched in Appendix~\ref{sec:IsingHTProperties}.

\subsection{Convexity, cones and cone-points}

\begin{definition}[Unit ball, Wulff shape]
	For a norm \(\rho\) on \(\Rd\), we denote by
	\begin{gather*}
		\calU_{\rho} = \setof{x\in\Rd}{\rho(x)\leq 1},\quad \calW_{\rho} = \bigcap_{s\in\bbS^{d-1}}\setof{x\in\Rd}{x\cdot s\leq \rho(s)},
	\end{gather*}
	the corresponding \emph{unit ball} and \emph{Wulff shape} (or \emph{polar set}). When \(\rho\) is omitted, it is set to be \(\nu\): \(\calU \equiv \calU_{\nu}\) and \(\calW \equiv \calW_{\nu}\).
\end{definition}

\begin{definition}[Extremal radius]
	\label{def:extremal_radius}
	For \(\rho\) a norm, denote by
	\[
		r_*(\rho) = \max_{u\in\calU_{\rho}} \norm{u}.
	\]
	When \(\rho\) is omitted, it is set to \(\nu\): \(r_* = r_*(\nu)\).
\end{definition}

\begin{definition}[Dual vectors; see Fig.~\ref{fig:UnitBallWulff}]
	Let \(\rho\) be a norm. Let \(s\in\bbS^{d-1}\). We say that \(t\) and \(s\) are \emph{\(\rho\)-dual} if \(t\in\partial\calW_{\rho}\) and \(t\cdot s = \rho(s)\). We denote by \(T_{s}^{\rho}\) the set of all vectors \(\rho\)-dual to \(s\). In the same line of ideas, we denote by \(S_{t}^{\rho}\subset \bbS^{d-1}\) the set of directions \(s\) such that \(t\) and \(s\) are \(\rho\)-dual. When \(\rho\) is omitted, it is set to be \(\nu\).
\end{definition}
\begin{figure}[ht]
	\includegraphics{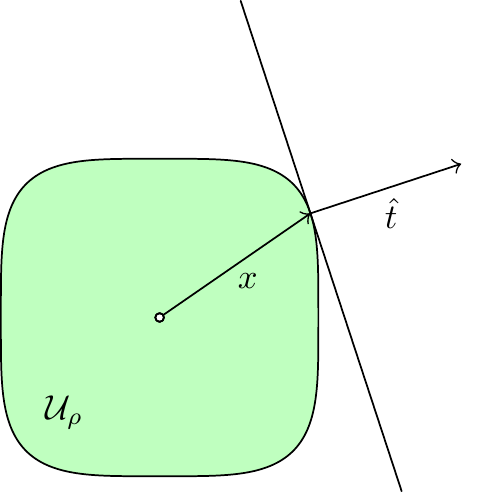}
	\hspace*{1cm}
	\includegraphics{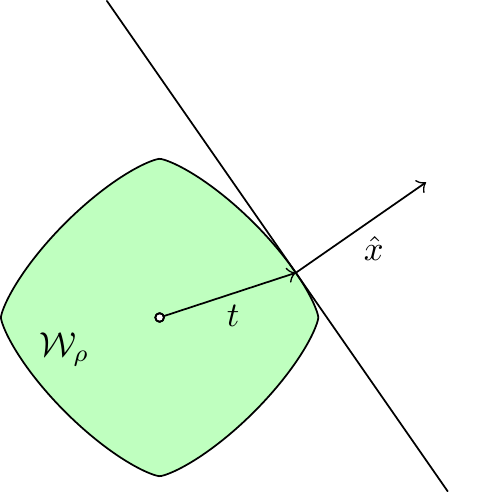}
	\caption{The unit ball and the Wulff shape associated to the \(L^4\)-norm \(\rho(x) = \norm{x}_4\). A pair of \(\rho\)-dual vectors \(x\) and \(t\) are also represented.
	We have used the notation \(\hat x = x/\norm{x}\) and \(\hat t = t/\norm{t}\).}
	\label{fig:UnitBallWulff}
\end{figure}
\begin{definition}[Cones; see Fig.~\ref{fig:ForwardCone}]
	\label{def:cones}
	Let \(\rho\) be a norm. Let \(t\in\partial\calW_{\rho}\) and \(1>\delta>0\). We define the cones
	\[
		\fcone_{\rho,t,\delta} = \setof{x\in\Rd}{t\cdot x \geq (1-\delta)\rho(x)}.
	\]
	We also define
	\[
		\fcone_{\rho,t,\delta}(x) = x+\fcone_{\rho,t,\delta}, \quad \bcone_{\rho,t,\delta}(x) = x-\fcone_{\rho,t,\delta}.
	\]
	When \(\rho\) is omitted, it is set to be \(\nu\):
	\[
		\fcone_{t,\delta} \equiv \fcone_{\nu,t,\delta}.
	\]
\end{definition}
\begin{figure}[ht]
	\includegraphics{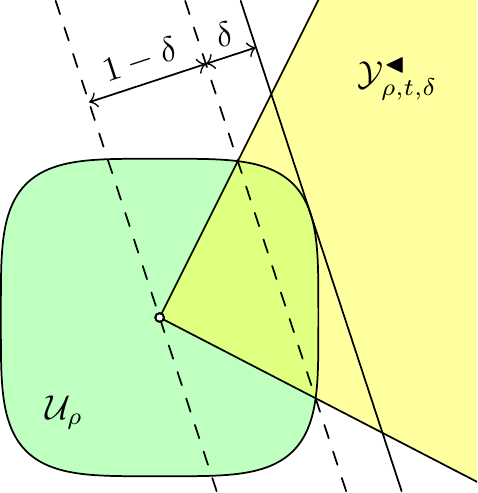}
	\caption{The forward cone \(\fcone_{\rho,t,\delta}\) for the same \(t\) as in Fig.~\ref{fig:UnitBallWulff}.}
	\label{fig:ForwardCone}
\end{figure}
\begin{definition}[Cone-points of paths; see Fig.~\ref{fig:ConePoint}]
	\label{def:cone_points}
	Let \(\gamma\in\pathSet\). \(v\in\gamma\) is a \((t,\delta)\)\emph{-cone-point} of \(\gamma\) if there exists \(0\leq k\leq |\gamma|\) such that \(\gamma_k=v\), \(k'<k \Rightarrow \gamma_{k'}\in\bcone_{t,\delta}(v)\) and \(k'>k \Rightarrow \gamma_{k'}\in\fcone_{t,\delta}(v)\). Denote \(\CPts_{t,\delta}(\gamma)\) the set of \((t,\delta)\)-cone-points of \(\gamma\).
\end{definition}
\begin{figure}[ht]
	\includegraphics{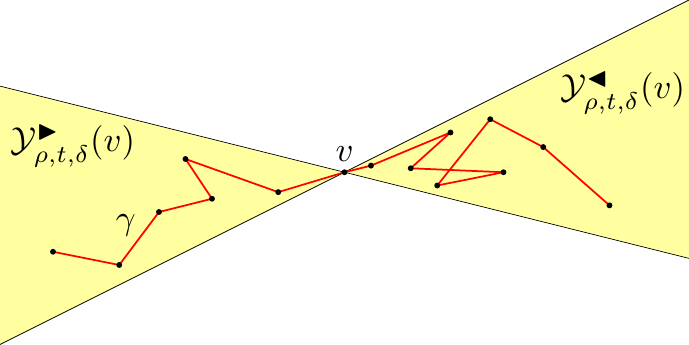}
	\caption{\(v\) is a \((t,\delta)\)-cone-point of the path \(\gamma\).}
	\label{fig:ConePoint}
\end{figure}

\begin{definition}[Diamonds; see Fig.~\ref{fig:Diamond}]
	\label{def:diamonds}
	The \emph{diamond} associated to \(x\in\Rd\) and \(y\in\fcone_{t,\delta}(x)\) is defined by
	\[
		\diam_{t,\delta}(x,y) = \fcone_{t,\delta}(x)\cap \bcone_{t,\delta}(y).
	\]
\end{definition}
\begin{figure}[ht]
	\includegraphics{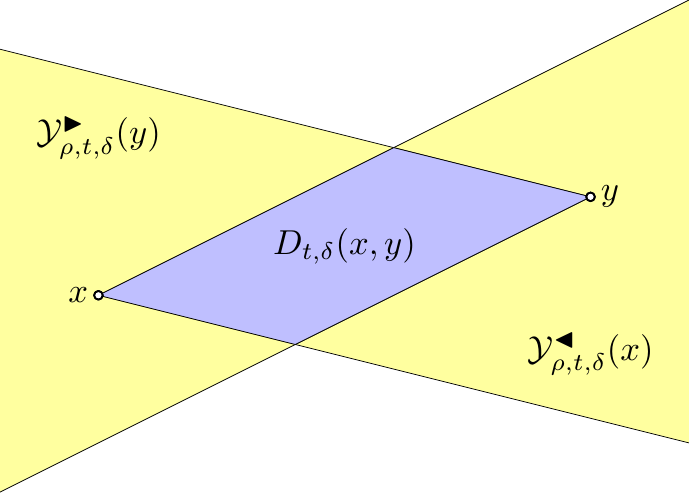}
	\caption{The Diamond \(D_{t,\delta}(x,y)\).}
	\label{fig:Diamond}	
\end{figure}
\begin{definition}[Diamond-contained paths; see Fig.~\ref{fig:DiamondContained}]
	\label{def:contained_paths}
	A path \(\gamma\in\pathSet\) is
	\begin{itemize}
		\item \((t,\delta)\)-\emph{forward-contained} if \(\gamma\subset \fcone_{t,\delta}(\fend(\gamma))\);
		\item \((t,\delta)\)-\emph{backward-contained} if \(\gamma\subset \bcone_{t,\delta}(\bend(\gamma))\);
		\item \((t,\delta)\)-\emph{diamond-contained} if it is forward- and backward-contained.
	\end{itemize}
\end{definition}
\begin{figure}[ht]
	\includegraphics{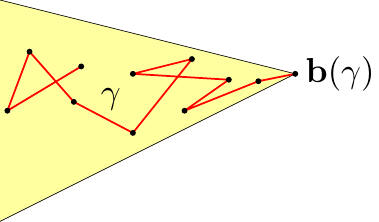}
	\hspace{-6mm}
	\includegraphics{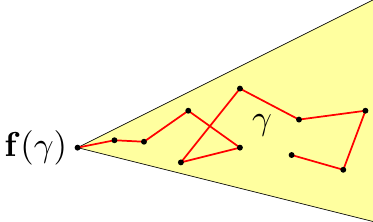}
	\hspace{2mm}
	\includegraphics{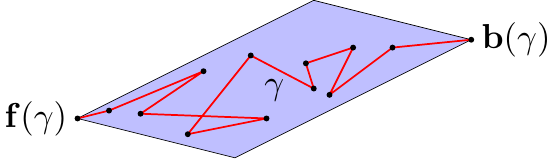}
	\caption{Backward-contained, forward-contained and diamond-contained paths.}
	\label{fig:DiamondContained}
\end{figure}

We will omit the dependency on \((t,\delta)\) when it is clear from the context.

\pagebreak 

\begin{definition}[Irreducible paths]
	\label{def:irreducible_paths}
	A path \(\gamma\in\pathSet\) is 
	\begin{itemize}
		\item \((t,\delta)\)-\emph{irreducible} if it is diamond-contained and it does not contain \((t,\delta)\)-cone-points other than \(\fend(\gamma),\bend(\gamma)\);
		\item \((t,\delta)\)-\emph{forward irreducible} if it is forward-contained and \(\gamma\cap \diam(\fend(\gamma), \bend(\gamma))\) does not contain \((t,\delta)\)-cone-points except for \(\fend(\gamma)\) and, possibly, \(\bend(\gamma)\);
		\item \((t,\delta)\)-\emph{backward irreducible} if it is backward-contained and \(\gamma \cap \diam(\fend(\gamma),\bend(\gamma))\) does not contain \((t,\delta)\)-cone-points except for \(\bend(\gamma)\) and, possibly, \(\fend(\gamma)\).
	\end{itemize}
\end{definition}

We will use the notations (see Fig.~\ref{fig:SetsOfPathsDisplacement})
\begin{gather*}
	\SetRootDiaCont(t,\delta) = \setof{\gamma}{\gamma\text{ is }(t,\delta)\text{-diamond-contained},\ \gamma_0 = 0},\\
	\SetRootMarkBackCont(t,\delta) = \setof{\gamma}{\gamma\text{ is }(t,\delta)\text{-backward-contained},\  \gamma_0 = 0},\\
	\SetRootMarkForwCont(t,\delta) = \setof{\gamma}{\gamma\text{ is }(t,\delta)\text{-forward-contained},\ \gamma_0 = 0}.
\end{gather*}
\begin{figure}[ht]
	\includegraphics{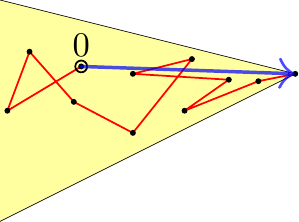}
	\hspace{0mm}
	\includegraphics{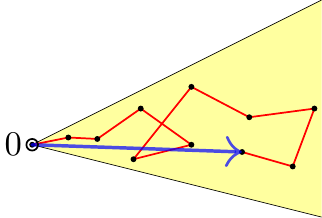}
	\hspace{10mm}
	\includegraphics{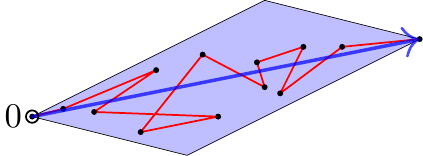}	
	\caption{The paths in \(\SetRootMarkBackCont(t,\delta)\), \(\SetRootMarkForwCont(t,\delta)\) and \(\SetRootDiaCont(t,\delta)\) corresponding to the paths in Fig.~\ref{fig:DiamondContained}. The displacement \(\displace(\gamma)\) (in blue) and the origin (circled dot) are also indicated in each case.}
	\label{fig:SetsOfPathsDisplacement}
\end{figure}

The same sets with the superscript \(\irreducible\) are the respective restrictions of the set to irreducible paths. The set \(\SetRootDiaCont\) can be seen as a subset of \(\SetRootMarkForwCont\). We will use the following concatenation rule: for \(\gamma\in\SetRootMarkBackCont\) and \(\gamma'\in\SetRootMarkForwCont\),
\[
	\gamma\concatenate\gamma' \equiv \gamma\concatenate (\fend(\gamma)+\gamma').
\]
Higher number of concatenations are treated from right to left.


\pagebreak 

Finally, we define the displacement of a path.
\begin{definition}[Displacement; see Fig.~\ref{fig:Displacement} and~\ref{fig:SetsOfPathsDisplacement}]
	\label{def:displacement}
	The \emph{displacement} of a path \(\gamma\) is defined by \(\displace(\gamma) = \gamma_{|\gamma|} - \gamma_0\). For diamond-contained paths, the displacement is a function of the diamond.
\end{definition}
\begin{figure}[ht]
	\includegraphics{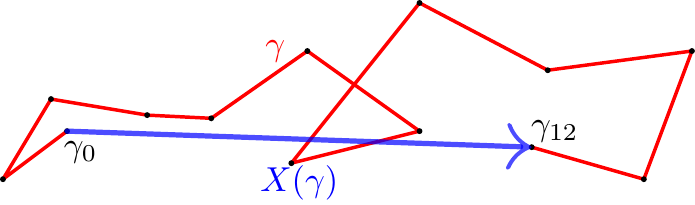}
	\caption{The displacement \(X(\gamma)\) of the path \(\gamma\).}
	\label{fig:Displacement}
\end{figure}

\subsection{Skeleton}

We finish this section by defining a family of objects that will be the output of our coarse-graining procedure.

\begin{definition}[Cells]
	\label{def:cells}
	Let \(K\geq 0\). Let \(\rho\) be a norm on \(\Rd\). We define the cells
	\[
		\Delta_{\rho,K} = K\calU_{\rho}\cap \Zd,\quad \Delta_{\rho,K}(x) = x+ \Delta_{\rho,K}.
	\]
	When \(\rho\) is omitted, it is set to be \(\nu\): \(\Delta_{K}\equiv \Delta_{\nu,K}\).
\end{definition}

\begin{definition}[Skeleton]
	\label{def:skeleton}
	Let \(K,\rho\) be as in Definition~\ref{def:cells}. Let \(R\geq 0\). A \emph{\((\rho,K)\)-skeleton rooted at \(v_0\)} is a sequence \((v_0, \dots, v_n) \subset \Zd\), satisfying \(v_{k+1}\notin \Delta_{\rho,K}(v_k)\) for \(k=0, \dots, n-1\). The \emph{length} of a skeleton \(V=(v_0,\dots, v_n)\) is \(|V| = n\).
	
	A pair (step) \((v_k,v_{k+1})\) is called \((\rho,K,R)\)-\emph{short} if \(v_{k+1}\in \Delta_{\rho,K+R}(v_k)\). Otherwise, it is \((\rho,K,R)\)-\emph{long}.
	We write
	\begin{gather*}
		\shortEdge(v_0, \dots, v_n) = \bsetof{ (v_{k},v_{k+1})}{0\leq k < n, v_{k+1}\in \Delta_{\rho,K+R}(v_k)},\\
		\longEdge(v_0, \dots, v_n) = \bsetof{ (v_{k},v_{k+1})}{0\leq k < n, v_{k+1}\notin \Delta_{\rho,K+R}(v_k)}.
	\end{gather*}
	When \(\rho\) is omitted, it is set to \(\nu\).
\end{definition}
In the course of the proof, we will work at fixed \(K,R,\rho\) and leave them implicit in the notation.

\begin{definition}[Skeleton Sets]
	We denote by \(\SkeletonSet_{\rho,M,I,(r_i)_{i\in I}}(x)\), \(I\subset \{1, \dots, M\}\), \(r_i\in\bbZ_{>0}\), the set of skeletons with \(v_0=x\), \(M+1\) vertices (length \(M\)), whose set of long steps is exactly \(I\) and such that \(v_{i}\in\Delta_{\rho,K+R+r_i}(v_{i-1})\setminus \Delta_{\rho,K+R+r_i-1}(v_{i-1})\) for each \(i\in I\).
	
	We write \(\SkeletonSet_{\rho}(x) = \bigcup_{M\geq 1} \bigcup_{I\subset \{1,\dots, M\}} \bigcup_{(r_i)_{i\in I}} \SkeletonSet_{M,I,(r_i)_{i\in I}}(x)\).
	When \(\rho\) is omitted, it is set to \(\nu\).
\end{definition}

\begin{definition}[Cone-points of skeletons]
	Let \(V=(v_0, \dots, v_M)\) be a skeleton. \(v\) is a \emph{\((t,\delta)\)-cone-point} of \(V\) if there is \(0\leq k\leq M\) such that \(v=v_k\in V\), and \(k'>k \Rightarrow v_{k'}\in\fcone_{t,\delta}(v)\) and \(k'<k \Rightarrow v_{k'}\in\bcone_{t,\delta}(v)\). Denote \(\CPts_{t,\delta}(V)\) the set of \((t,\delta)\)-cone-points of \(V\).
\end{definition}

\section{Main theorems and assumptions}
\label{sec:Saturation}

\subsection{Saturation phenomenon, conditions on the interaction}\label{sec:SaturationConditions}

\subsubsection{The saturation phenomenon}
The assumption that \(J\) decays exponentially is not sufficient to guarantee OZ decay. Indeed, it was observed (and proved) in~\cite{Aoun+Ioffe+Ott+Velenik-CMP-2021, Aoun+Ioffe+Ott+Velenik-PRE-2021} that a key assumption (the existence of a \emph{mass gap}) to have such asymptotics was not fulfilled in certain situations, and that OZ asymptotics were not valid in that case.
In this section, we present a condition (NSA) ensuring that this problem does not occur. Let us however emphasize that there are several situations in which the assumption is trivially satisfied (see Lemma~\ref{lem:NSA}) and we conjecture that it always holds. The reader who is not interested in the technical details or willing to restrict to these cases can skip Section~\ref{sec:SaturationConditions} (and replace, in our statements, the requirement that NSA holds by the simpler condition that the direction \(s\) is not saturated).

To state our additional assumption, we need to introduce
\[
	\eta(s) = - \liminf_{n\to\infty} \frac{1}{n} \log J_{ns},
\]
which we extend to \(\Rd\) by positive homogeneity of order \(1\).
From Property~\ref{weight_property:finite_energy}, one always has \(\nu(s) \leq \eta(s)\) (see~\cite{Aoun+Ioffe+Ott+Velenik-CMP-2021}). We say that there is \emph{saturation in direction \(s\)} if \(\nu(s) = \eta(s)\). Note that saturation never occurs when \(J\) decays superexponentially fast (in which case \(\eta(s) = \infty\)).

In the case of the Ising model, using monotonicity (see again~\cite{Aoun+Ioffe+Ott+Velenik-CMP-2021}), we can then define
\[
	\betasat(s) = \inf\setof{\beta\in\R}{\nu_{\beta}(s)<\eta(s)}.
\]

\subsubsection{The main assumption (NSA)}\label{ssec:NSA}
It is convenient to work using convex duality. Define the (convex, centrally symmetric, possibly unbounded) set
\[
\calW_{\eta} = \bigcap_{s\in\bbS^{d-1}} \setof{x\in\Rd}{s\cdot x \leq \eta(s)}.
\]
We use the same notation as for the polar set associated to a norm, since one has the inequality (which is an equality when \(\eta\) is a norm)
\[
\eta(x) \geq \sup_{t\in\calW_{\eta}} t\cdot x = \sup_{t\in\partial\calW_{\eta}} t\cdot x.
\]
\begin{definition}
	The set of \emph{non-saturated dual vectors} is defined as the open set
	\[
	\calT = \partial\calW_{\nu}\setminus \partial\calW_{\eta}.
	\]
\end{definition}

\medskip
Our analysis below will apply to directions satisfying the following assumption.
\begin{definition}[No saturation assumption]
	Fix \(s\in\bbS^{d-1}\). Recall that \(T_s\) is the set of all vectors \(\nu\)-dual to \(s\). We say that the \emph{no saturation assumption} (NSA) is fulfilled if \(T_s\cap \partial\calW_{\eta}\neq T_{s}\).%
\end{definition}
\begin{remark}
	A typical situation in which NSA is violated is illustrated in Fig.~\ref{fig:noNSA}, assuming, for simplicity, that \(\eta\) is a norm. The direction \(s\) points towards an affine piece of \(\partial\calU_\nu\) with unique normal \(t\in T_s\). While saturation does not occur in the direction \(s\) itself (\(\nu(s) > \eta(s)\)), there is another direction \(s'\), pointing toward the same facet \(s\) belongs to, at which saturation occurs: \(\nu(s') = t\cdot s' = \eta(s')\). By the inclusion \(\calU_\eta \subset \calU_\nu\), this means that \(t\in \partial\calW_\eta\), which shows that NSA is indeed violated.
	The reason failure of NSA is problematic for our analysis is explained in Section~\ref{subsec:not_many_edges}.
\end{remark}
\begin{figure}[ht]\label{figure:noNSA}
	\includegraphics{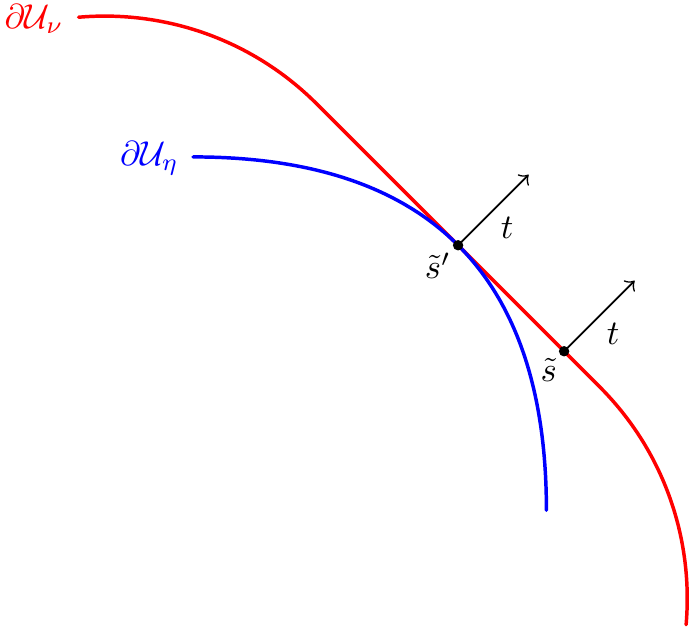}
	\caption{An example in which NSA does not hold: the direction \(s\in\bbS^{d-1}\) is such that \(\tilde{s}=s/\nu(s)\) belongs to a facet of \(\partial\calU_\nu\), while there is a direction \(s'\in\bbS^{d-1}\) in which saturation occurs and \(\tilde{s}' = s'/\nu(s')\) belongs to the same facet of \(\partial\calU_\nu\).}
	\label{fig:noNSA}
\end{figure}

The next lemma lists several cases in which it is known that NSA is fulfilled.
\begin{lemma}\label{lem:NSA}
NSA is satisfied whenever at least one of the following assumptions is verified:
\begin{itemize}
	\item \(J\) decays superexponentially;
	\item \(J_x = \psi(x)\sfe^{-\rho(x)}\), with \(\rho\) a norm and \(\sum_{n\geq 0}\psi(ns) = \infty\) for all \(s\in\bbS^{d-1}\);
	\item the interaction is exponentially-decaying and
	there are no saturated directions (that is, for the Ising model, \(\sup_{s\in\bbS^{d-1}} \betasat(s)< \beta<\betac\));%
	\item \(J\) has all symmetries of \(\Zd\), is exponentially decaying, \(s= \pm \rme_i\) with \(\rme_i\) the \(i\)-th canonical coordinate vector, and the direction \(s\) is not saturated (for Ising, \(\betasat(s)<\beta<\betac\)).%
\end{itemize}
\end{lemma}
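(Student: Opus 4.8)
The plan is to check each of the four listed conditions separately, reducing NSA to an elementary statement about the sets $\partial\calW_\nu$ and $\partial\calW_\eta$. Recall that NSA in direction $s$ asks that $T_s\cap\partial\calW_\eta\neq T_s$, i.e.\ there exists a $\nu$-dual vector $t$ to $s$ with $t\notin\partial\calW_\eta$. Since $\calU_\eta\subset\calU_\nu$ always holds (because $\nu\le\eta$), dually one has $\calW_\nu\subset\calW_\eta$; a vector $t\in\partial\calW_\nu$ fails to lie in $\partial\calW_\eta$ precisely when it lies in the interior of $\calW_\eta$, equivalently when $t\cdot s'<\eta(s')$ for \emph{every} $s'\in\bbS^{d-1}$. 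So in each case I must produce one dual vector $t\in T_s$ that is strictly inside $\calW_\eta$ (or observe that $\calW_\eta=\Rd$, in which case $\partial\calW_\eta=\emptyset$ and NSA is automatic).

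First, if $J$ decays superexponentially then $\eta(s)=+\infty$ for all $s$, so $\calW_\eta=\Rd$, $\partial\calW_\eta=\emptyset$, and NSA holds trivially since $T_s\neq\emptyset$ (every norm has at least one dual vector in each direction). Second, suppose $J_x=\psi(x)\sfe^{-\rho(x)}$ with $\rho$ a norm and $\sum_{n\ge0}\psi(ns)=\infty$ in every direction. Here $\eta=\rho$ is a norm, so $\calW_\eta=\calW_\rho$ is the compact polar body of $\rho$, and $\partial\calW_\rho$ is its topological boundary. The hypothesis on $\psi$ is exactly the condition (from~\cite{Aoun+Ioffe+Ott+Velenik-CMP-2021}) under which $\betasat(s)=0$, i.e.\ there is no saturation in \emph{any} direction: $\nu(s)<\rho(s)=\eta(s)$ for all $s$. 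Then $\calU_\eta=\calU_\rho$ is strictly contained in $\calU_\nu$, hence $\calW_\nu\subsetneq\calW_\eta$ with $\calW_\nu$ contained in the \emph{open} body $\mathrm{int}\,\calW_\eta$; since $T_s\subset\partial\calW_\nu\subset\mathrm{int}\,\calW_\eta$, every dual vector $t\in T_s$ satisfies $t\notin\partial\calW_\eta$, giving NSA. The third bullet (exponential decay, no saturated directions) is the same argument: ``no saturated direction'' means $\nu(s')<\eta(s')$ for all $s'$, so $\calU_\eta\subset\mathrm{int}\,\calU_\nu$ (using compactness of $\bbS^{d-1}$ and continuity to upgrade the pointwise strict inequality), hence $\partial\calW_\nu\subset\mathrm{int}\,\calW_\eta$ and $T_s\cap\partial\calW_\eta=\emptyset\neq T_s$.

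The fourth bullet is the only one requiring a genuine geometric observation rather than a direct application of ``no saturation''. Here $J$ has all the symmetries of $\Zd$ (in particular invariance under reflections in the coordinate hyperplanes and permutations of coordinates), $J$ decays exponentially, $s=\pm\rme_i$, and $s$ itself is not saturated, i.e.\ $\nu(\rme_i)<\eta(\rme_i)$. By the symmetry of $J$, both $\nu$ and $\eta$ are invariant under the hyperoctahedral group, so the unit balls $\calU_\nu,\calU_\eta$ and the Wulff shapes $\calW_\nu,\calW_\eta$ are invariant under coordinate reflections and permutations. Consequently the point $\nu(\rme_i)\,\rme_i\in\partial\calU_\nu$ (the ``vertex'' of $\calU_\nu$ on the $i$-th axis) has $\rme_i$ itself in its normal cone, so $\rme_i/\nu(\rme_i)\cdot\nu(\rme_i)=1$ wait — more carefully: $t:=\rme_i/\mu$ for suitable $\mu$ is $\nu$-dual to $\rme_i$. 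I claim $\rme_i$ (suitably scaled) is \emph{a} dual vector to $s=\rme_i$, and by symmetry it is the ``most axial'' one. The key point is that a dual vector $t\in T_{\rme_i}$ that is invariant under all symmetries fixing $\rme_i$ must be a multiple of $\rme_i$, namely $t=\eta_\nu\,\rme_i$ where $\nu(\rme_i)=t\cdot\rme_i$ forces $t=\rme_i\,\nu(\rme_i)/\|\rme_i\|^2$... let me just say $t=c\,\rme_i$ with $c\,\rme_i\cdot\rme_i=c=\nu(\rme_i)$, so $t=\nu(\rme_i)\rme_i$. Then $t\cdot s'=\nu(\rme_i)\,s'_i\le\nu(\rme_i)<\eta(\rme_i)\le\eta(s')$ for $s'=\pm\rme_i$, and for other directions one needs $t\cdot s'=\nu(\rme_i)|s'_i|<\eta(s')$; since $\eta(s')\ge\eta(|s'_i|\rme_i\pm\ldots)$ and $\eta$ dominates its value on axis projections — more simply, $t\cdot s'\le\nu(\rme_i)<\eta(\rme_i)$, and one checks $\eta(s')\ge\eta(\rme_i)\,|s'_i|$ is too weak near the equator, so instead one uses $t\cdot s'=\nu(\rme_i)|s'_i|$ and the fact that $\calW_\eta\supset\calW_\nu$ together with $t\in\partial\calW_\nu$ already gives $t\in\calW_\eta$; to get \emph{strict} interiority one invokes that the supporting hyperplane of $\calW_\nu$ at $t$ touches $\calW_\eta$ only if $s=\rme_i$ were saturated. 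I expect the main obstacle to be precisely this last step: cleanly arguing that the specific dual vector $t=\nu(\rme_i)\rme_i$ lies in the \emph{open} set $\mathrm{int}\,\calW_\eta$, i.e.\ ruling out that $\partial\calW_\nu$ and $\partial\calW_\eta$ touch at $t$ in directions other than $\rme_i$. I would handle this by noting that if $t\in\partial\calW_\eta$ then there is some $s'$ with $t\cdot s'=\eta(s')$; combined with $t\cdot s'\le\nu(s')\le\eta(s')$ this forces $\nu(s')=\eta(s')$, i.e.\ saturation in direction $s'$; but then by the symmetry argument applied to $s'$ and the convexity/monotonicity relating $\nu$ and $\eta$ (together with $t=\nu(\rme_i)\rme_i$ forcing $s'$ to share a flat face of $\partial\calU_\nu$ with $\rme_i$, on which $\nu$ and $\eta$ would have to agree, contradicting $\nu(\rme_i)<\eta(\rme_i)$ by central symmetry of the face through $\rme_i$) one reaches a contradiction. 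This ``shared facet forces equality'' dichotomy is the same phenomenon illustrated in Figure~\ref{fig:noNSA}, run in reverse.
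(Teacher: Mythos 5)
Your proposal is correct and follows essentially the same route as the paper: bullets 1--2 are reduced to the ``no saturated direction'' criterion from the cited companion works, bullet 3 is the immediate observation that a dual vector lying on \(\partial\calW_\eta\) would produce a saturated direction, and for bullet 4 you take the axial dual vector \(t=\nu(\rme_i)\rme_i\), assume \(t\in\partial\calW_\eta\), extract a saturated \(s'\) with \(t\cdot s'=\nu(s')=\eta(s')\), reflect it across the \(\rme_i\)-axis and use convexity of \(\calU_\eta\subset\calU_\nu\) on the segment through \(\rme_i/\nu(\rme_i)\) to force \(\nu(\rme_i)=\eta(\rme_i)\), a contradiction --- which is exactly the paper's argument (the paper additionally splits on whether the dual vector is unique, a case distinction your single argument shows to be inessential). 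Only cosmetic repairs are needed: \(\nu(\rme_i)\rme_i\) lies on \(\partial\calW_\nu\), not \(\partial\calU_\nu\); drop the abandoned direct estimate \(t\cdot s'<\eta(s')\); and carry out, rather than parenthesize, the final step that both endpoints \(s'/\nu(s')\), \(s''/\nu(s'')\) lie in \(\partial\calU_\eta\) so the segment (hence \(\rme_i/\nu(\rme_i)\)) cannot meet the interior of \(\calU_\eta\).
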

\begin{proof}
The first two cases are particular instances of the third one: indeed, in both cases, \(\sup_{s\in\bbS^{d-1}} \betasat(s) =0\). This is a consequence of the condition derived in~\cite{Aoun+Ioffe+Ott+Velenik-CMP-2021,Aoun+Ioffe+Ott+Velenik-PRE-2021,Aoun+Ott+Velenik-2021-inPreparation}: \(\sup_{s\in\bbS^{d-1}} \betasat(s) =0\) if and only if \(\sum_{x\in\Zd} J_x\sfe^{t\cdot x} = \infty\) for any \(t\in \partial\calW_{\eta}\).

Let us briefly sketch the proof of the last case. Without loss of generality, fix $s=e_1$. Let us show that one can always find a vector dual to $s$ which is not saturated. 
There are only two cases where there could exist saturated dual vectors: either there is a unique dual vector to $s$, or there are infinitely many of them.
	
In the first case, let \(t=\nu(e_1)e_1\) be the unique dual vector to \(s\). Assume by contradiction that \(t\) is saturated: this means that there exists a direction \(s'\in\bbS^{d-1}\) such that \(\nu(s') = t\cdot s' = \eta(s')\). By symmetry, the same is true of \(s'' = (s'_1,-s'_\perp)\), where we have written \(s'_1 = s'\cdot e_1\) and \(s'_\perp = s' - s'_1\). By construction, \(\nu(s'') = t\cdot s'' = \eta(s'')\).
Since \(s'_1/\nu(s') = s''_1/\nu(s'') = 1/\nu(s)\), it follows that the line segment \([s'/\nu(s'), s''/\nu(s'')]\) contains \(s/\nu(s)\) and is thus included in \(\partial\calU_\nu\). Since \(\calU_\eta\subset\calU_\nu\) and \(s'_1/\nu(s'), s''_1/\nu(s'')\in\partial\calU_\eta\), it follows from the convexity of \(\calU_\eta\) that \([s'/\nu(s'), s''/\nu(s'')]\) is also a subset of \(\partial\calU_\eta\). This implies that \(s/\nu(s)\in\partial\calU_\eta\) and thus that \(\nu(s)=\eta(s)\), which is a contradiction.

In the second case, the dual vector \(t=\nu(e_1)e_1\) is never saturated, so there is nothing to do for this particular choice.
\end{proof}

We expect that NSA always holds, but we lack a proof at moment of writing.
\begin{conjecture}
	NSA holds for any \(\beta\in (\betasat(s),\beta_{c})\).
\end{conjecture}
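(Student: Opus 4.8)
The plan is to recast NSA as a relation between the Wulff shapes $\calW_\nu$ and $\calW_\eta$ and then to establish the required ``detachment'' of the inverse correlation length from $\eta$ via the renewal structure behind Theorem~\ref{thm:main_coupling_with_RW}. For $\beta\in(\betasat(s),\betac)$ one has $\nu_\beta(s)<\eta(s)$, by definition of $\betasat$ and monotonicity of $\beta\mapsto\nu_\beta(s)$, so it suffices to show that $\nu(s)<\eta(s)$ implies NSA. Assume for simplicity that $\eta$ is a norm, so that $\nu\le\eta$ gives $\calW_\nu\subseteq\calW_\eta$ and $\eta(v)=\max_{w\in\calW_\eta}w\cdot v$ for all $v$. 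Fix such an $s$ and let $t\in T_s$; by definition $t\in\partial\calW_\nu$ and $t\cdot s=\nu(s)$, while $t\in\calW_\nu\subseteq\calW_\eta$ yields $t\cdot v\le\nu(v)\le\eta(v)$ for all $v$. If $t\in\partial\calW_\eta$, a supporting hyperplane of $\calW_\eta$ at $t$ has an outer normal $s''$ with $t\cdot s''=\max_{w\in\calW_\eta}w\cdot s''=\eta(s'')$; combined with $t\cdot s''\le\nu(s'')\le\eta(s'')$, this forces $\nu(s'')=\eta(s'')=t\cdot s''$, so $t$ is $\nu$-dual to the \emph{saturated} direction $s''$, and $s''\neq s$ because $t\cdot s=\nu(s)<\eta(s)$. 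Conversely, if $t$ is $\nu$-dual to a saturated direction then $t\in\partial\calW_\eta$. Hence \emph{NSA fails for $s$ precisely when every $t\in T_s$ is $\nu$-dual to some saturated direction}; and since $T_s$ is compact convex and a compact convex subset of the boundary of a convex body lies in a single supporting hyperplane, this is equivalent to the whole face $T_s$ of $\calW_\nu$ being contained in an exposed face of $\calW_\eta$ cut out by a saturated direction $s''\neq s$. The target is therefore: \emph{if $\nu(s)<\eta(s)$, then $\operatorname{relint}(T_s)$ meets $\operatorname{int}\calW_\eta$}.

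Equivalently, picking any $t\in\operatorname{relint}(T_s)$, it suffices to prove $t\in\operatorname{int}\calW_\eta$. Along the convex cone $\setof{v}{\nu(v)=t\cdot v}$ of directions $\nu$-dual to $t$, the norm $\nu$ agrees with the linear functional $v\mapsto t\cdot v$, so the saturated directions inside that cone form the contact set of the convex function $\eta$ with an affine function; it is convex, and it is empty precisely when $t\in\operatorname{int}\calW_\eta$. In particular the obstruction is essentially two-dimensional (work in the plane spanned by $s$ and a hypothetical saturated companion $s''$). The point of this reformulation is to make transparent that convexity of $\nu$ and $\eta$ alone cannot force this contact set to be empty from the sole fact that it misses the non-saturated direction $s$: genuine model input is needed.

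That input should come from the renewal representation underlying Theorem~\ref{thm:main_coupling_with_RW}. Write $\Psi_\beta$ for the generating function of the $(t,\delta)$-irreducible pieces: it is increasing in $\beta$, its natural domain is $\mathcal D_J=\setof{t}{\textstyle\sum_{x}J_x\,\sfe^{t\cdot x}<\infty}\subseteq\calW_\eta$, and (in the regime where the construction applies) $\calW_{\nu_\beta}$ is the closure of $\setof{t\in\mathcal D_J}{\Psi_\beta(t)\le1}$, with $\nu_\beta(s)=\max_{t\in\calW_{\nu_\beta}}t\cdot s$. Non-saturation of $s$ at $\beta$ says exactly that the maximiser $t^\ast$ of this last problem is \emph{not} a maximiser of $w\mapsto w\cdot s$ over $\calW_\eta$. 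The plan is to prove that then $t^\ast\in\operatorname{int}\mathcal D_J\subseteq\operatorname{int}\calW_\eta$ with $\Psi_\beta(t^\ast)=1$ — a genuine ``pole'', located strictly inside $\calW_\eta$ — which gives $t^\ast\in T_s\cap\operatorname{int}\calW_\eta$, hence NSA. Since at $\beta=\betasat(s)$ the constraints ``$t\in\overline{\mathcal D_J}$'' and ``$\Psi_\beta(t)\le1$'' become active simultaneously in the direction $s$, this reduces to a \emph{strict detachment} statement: for every $\beta>\betasat(s)$, the constraint $\Psi_\beta(t)\le1$ already binds strictly inside $\calW_\eta$ in the direction $s$.

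The strict detachment is the main obstacle. When the subexponential prefactor of $J$ decays fast — for example $J_x=\rho(x)^{-\alpha}\sfe^{-\rho(x)}$ with $\alpha$ large — the series $\sum_{x}J_x\,\sfe^{t\cdot x}$ converges on a positive-dimensional part of $\partial\calW_\eta$, so $\mathcal D_J$ reaches the boundary of $\calW_\eta$; a priori the maximiser $t^\ast$ defining $\nu_\beta(s)$ could then stay pinned to $\partial\calW_\eta\cap\mathcal D_J$ with $\Psi_\beta(t^\ast)=1$ throughout an interval of $\beta$ above $\betasat(s)$ while still failing to maximise $w\mapsto w\cdot s$ over $\calW_\eta$ — in which case NSA would fail. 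Excluding this appears to require quantitative control of $\Psi_\beta$ near $\partial\mathcal D_J$: for instance a steepness or strict-monotonicity estimate forcing the level set $\setof{t}{\Psi_\beta(t)=1}$ to separate from $\partial\calW_\eta$ at a definite rate as soon as $\beta>\betasat(s)$, or a structural fact precluding that $\betasat(\cdot)$ be ``constant along a facet'' in the way that would be needed. Such estimates are available in the special cases of Lemma~\ref{lem:NSA}, but we do not see how to establish them in general.
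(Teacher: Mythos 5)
This statement is not proved in the paper at all: it is stated as a \emph{conjecture}, and the authors say explicitly that they lack a proof (Lemma~\ref{lem:NSA} only covers special cases: superexponential decay, no saturated directions, or axis directions with full lattice symmetry). So there is no paper proof to match your argument against, and your proposal does not close the gap either — you say so yourself in the last paragraph. The part of your argument that is sound is the convex-geometric reformulation: assuming \(\eta\) is a norm, failure of NSA at \(s\) means \(T_s\subset\partial\calW_\eta\), and any \(t\in T_s\cap\partial\calW_\eta\) is \(\nu\)-dual to a saturated direction \(s''\neq s\) lying in the same face, which is exactly the scenario of Fig.~\ref{fig:noNSA}. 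That correctly identifies what must be excluded, but it is a restatement, not progress: as you note, convexity alone cannot rule it out, and the ``strict detachment'' of the level set \(\{\Psi_\beta=1\}\) from \(\partial\calW_\eta\) for all \(\beta>\betasat(s)\) is precisely the open problem.

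There is also a structural flaw in the proposed route, beyond the missing estimate. The renewal ingredients you want to invoke — the irreducible decomposition, the generating function \(\Psi_\beta\) of irreducible pieces, and the characterization of \(\partial\calW_{\nu}\) near \(t\) by \(\Psi_\beta(t)=1\) (Lemmas~\ref{lem:normalization_renorm_connected_weights} and~\ref{lem:analyticity_W}) — are established in the paper \emph{only under NSA}: the mass-gap Lemma~\ref{lem:mass_gap_long_edges}, the long-edge control of Lemma~\ref{lem:long_edges_have_small_contribution}, and hence the cone-point density Theorem~\ref{thm:path_cone_points} all require a nice set \(U\subset\calT\) of non-saturated dual vectors, i.e.\ exactly what you are trying to produce. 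Using this machinery to prove NSA is therefore circular unless you can first construct the renewal structure for some dual vector without assuming non-saturation, which is where the real difficulty lies. As it stands, your text is a useful reduction of the conjecture plus an honest admission that the key step is unproven; it is not a proof, and the statement remains open in the paper as well.
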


\subsubsection{Consequences of NSA}

A useful fact~\cite{Rockafellar-1970} is that any nonempty, compact, convex and centrally symmetric set \(W\subset (\calW_{\eta}\setminus\partial\calW_{\eta})\) is the polar set of a norm \(\tilde{\eta}\), \(W=\calW_{\tilde{\eta}}\), and this norm satisfies: there exists \(C_{\tilde{\eta}}<\infty\) such that
\[
	\forall x\in\Zd,\quad J_x \leq C_{\tilde{\eta}}\, \sfe^{-\tilde{\eta}(x)}.
\]

\begin{definition}
	We say that \(U\subset\calT\) is \emph{nice} if it is the closure of an open subset of \(\calT\).
\end{definition}

The main reason for working with nice subsets of \(\calT\) is the following mass-gap Lemma.
\begin{lemma}
	\label{lem:mass_gap_long_edges}
	Let \(U\subset\calT\) be nice. Then, there exist \(m>0\) and \(C\) such that, for any \(t\in U\) and \(x\in\Zd\),
	\[
		\sfe^{t\cdot x}J_x \leq C\, \sfe^{-m \nu(x)}.
	\]
\end{lemma}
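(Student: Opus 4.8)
The plan is to reduce the statement to the Rockafellar fact recalled just above the lemma: it suffices to exhibit a nonempty, compact, convex, centrally symmetric set $W$ with nonempty interior such that $W\subset\calW_\eta\setminus\partial\calW_\eta$ and such that $U$ sits a \emph{uniform} distance inside $W$; the slack by which $W$ exceeds $U$ will then \emph{be} the mass gap.

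First I would observe that $U\subset\mathrm{int}(\calW_\eta)$. Indeed, $\nu\le\eta$ gives $\calW_\nu\subset\calW_\eta$, so $U\subset\partial\calW_\nu\subset\calW_\eta$, while $U\cap\partial\calW_\eta=\emptyset$ by the definition of $\calT$; hence $U\subset\calW_\eta\setminus\partial\calW_\eta=\mathrm{int}(\calW_\eta)$. (If $U=\emptyset$ the statement is vacuous, so assume $U\neq\emptyset$.) Being the closure of a set, $U$ is closed; being a subset of the compact Wulff shape $\partial\calW_\nu$, it is bounded; hence $U$ is compact — this is the only way the hypothesis that $U$ is \emph{nice} enters (only its closedness is used). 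Next I would set $K=\mathrm{conv}(U\cup(-U))$. Since $\mathrm{int}(\calW_\eta)$ is open, convex and centrally symmetric (recall $\eta(s)=\eta(-s)$) and contains $U$, it contains $-U$ and hence $K$; moreover $K$ is compact (convex hull of a compact set in $\Rd$), centrally symmetric, and contains $0$. Since $K$ is compact and contained in the open set $\mathrm{int}(\calW_\eta)$, there is $\delta>0$ with
\[
	W:=K+\bbB_\delta(0)\subset\mathrm{int}(\calW_\eta)
\]
(one may take $\delta=1$ in the degenerate case $\eta\equiv\infty$, i.e. $\calW_\eta=\Rd$). Then $W$ is nonempty, compact, convex, centrally symmetric, and has nonempty interior (it contains $\bbB_\delta(0)$), so all hypotheses of the Rockafellar fact are met.

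Applying that fact to $W$ yields a norm $\tilde\eta$ with $W=\calW_{\tilde\eta}$ and a constant $C_{\tilde\eta}<\infty$ such that $J_x\le C_{\tilde\eta}\,\sfe^{-\tilde\eta(x)}$ for all $x\in\Zd$. Now fix $t\in U\subset K$: then $t+\bbB_\delta(0)\subset W=\calW_{\tilde\eta}$, so $(t+\delta u)\cdot x\le\tilde\eta(x)$ for every unit vector $u$ and every $x\in\Rd$ (using positive homogeneity of $\tilde\eta$ and the definition of the polar set); taking $u=x/\norm{x}$ for $x\neq 0$ gives $\tilde\eta(x)\ge t\cdot x+\delta\norm{x}$. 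Combining the two bounds,
\[
	\sfe^{t\cdot x}J_x\le C_{\tilde\eta}\,\sfe^{t\cdot x-\tilde\eta(x)}\le C_{\tilde\eta}\,\sfe^{-\delta\norm{x}}.
\]
Finally, $\nu$ being a norm on $\Rd$ by Property~\ref{weight_property:ICL}, there is $C_0>0$ with $\nu(x)\le C_0\norm{x}$, hence $\sfe^{-\delta\norm{x}}\le\sfe^{-(\delta/C_0)\nu(x)}$, and the lemma follows with $C=C_{\tilde\eta}$ and $m=\delta/C_0$. Both depend only on $U$ (through $\delta$ and $\tilde\eta$), so the bound is uniform in $t\in U$, as required.

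The only genuinely non-formal step is the construction of $W$: one must use compactness of $U$ to extract a uniform slack $\delta$ between $U$ and $\partial\calW_\eta$, and then recognize that a ball-fattening by $\delta$ converts this slack into the mass gap $\delta\norm{x}$ (equivalently $(\delta/C_0)\nu(x)$). Were $U$ merely an arbitrary subset of the open set $\calT$, no such uniform $\delta$ would exist and the constant $m$ would degenerate near $\partial\calT$ — which is precisely why the niceness (closedness) of $U$ is imposed. Everything after the construction of $W$ is bookkeeping with convex duality and norm equivalence on $\Rd$.
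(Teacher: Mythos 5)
Your proof is correct and follows essentially the same route as the paper: use compactness of the nice set \(U\) to fatten its symmetrized convex hull by a uniform \(\delta>0\) while staying strictly inside \(\calW_\eta\), invoke the Rockafellar fact to get an auxiliary norm \(\tilde\eta\) with \(J_x\leq C\,\sfe^{-\tilde\eta(x)}\), and extract the gap by convex duality. The only difference is cosmetic: by testing the polar-set inequality against \(t+\delta\hat{x}\) you obtain \(\tilde\eta(x)\geq t\cdot x+\delta\norm{x}\) directly and thereby avoid the paper's two-case split (\(x\cdot t\leq\nu(x)/2\) versus \(x\cdot t>\nu(x)/2\)), which is a mild streamlining of the same argument.
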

\begin{proof}
	As \(\calT\) is open and \(U\subset\calT\) is closed (in \(\partial\calW_{\nu}\), so it is compact), there exists \(\epsilon>0\) such that
	\[
		U_{\epsilon} = \bigcup_{t\in U} \bbB_{\epsilon}(t) \subset \calW_{\eta}\setminus \partial\calW_{\eta}.
	\]
	Let \(\bar{U}_{\epsilon}\) be the convex hull of \(U_{\epsilon} \cup -U_{\epsilon}\) (both are sets in \(\R^d\)). It is a compact, convex, and centrally symmetric set. Therefore, there is a norm \(\tilde{\eta}\) such that \(\bar{U}_{\epsilon}=\calW_{\tilde{\eta}}\) and, by construction, \(J_x\leq C_{\tilde{\eta}}\sfe^{-\tilde{\eta}(x)}\).
	\begin{figure}[ht]
		\includegraphics{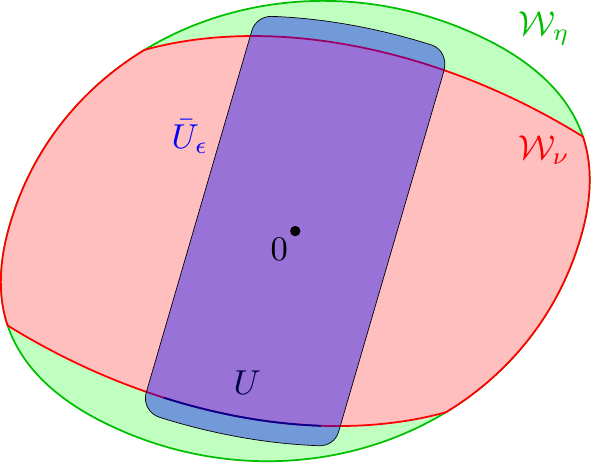}
		\caption{The construction in Lemma~\ref{lem:mass_gap_long_edges}. Notice that the introduction of the set \(\bar{U}_{\epsilon}\) is not necessary when \(\eta\) is a norm.}
		\label{fig:Lemma3_1}
	\end{figure}

	We then consider two cases. First, if \(x\cdot t \leq \nu(x)/2\), we have
	\[
		\sfe^{t\cdot x }J_x\leq C\sfe^{t\cdot x-3\nu(x)/4}\leq C\, \sfe^{-\nu(x)/4},
	\]
	since \(J_{x}\leq C\,\sfe^{-3\nu(x)/4}\) (by definition of \(\eta\) and since \(\eta(x)\geq \nu(x)\)). The case \(x\cdot t > \nu(x)/2\) follows from compactness of \(U\), the fact that \(\bar{U}_{\epsilon}=\calW_{\tilde{\eta}}\) and the fact that, for any \(t\in U\),
	\[
		t\cdot x - \tilde{\eta}(x)
		=
		t\cdot x - \sup_{t'\in \calW_{\tilde{\eta}}} t'\cdot x
		\leq
		t\cdot x - (t+\epsilon\hat{t})\cdot x
		=
		- \epsilon \hat{t}\cdot x
		<
		-\frac{\epsilon}{2\norm{t}} \nu(x).\qedhere
	\]
\end{proof}
The heuristic reason behind our use of the word ``massgap'' above is that Lemma~\ref{lem:mass_gap_long_edges} shows that, for any \(t\in U\) and any \(s\in\bbS^{d-1}\) \(\nu\)-dual to \(t\), one has \(J_{ns} \leq C \sfe^{-(m+1)\nu(s) n}\), which implies that \(\eta(s) > \nu(s)\). There is thus indeed a gap between the rates of exponential decay in direction \(s\) of the interaction and of the 2-point function.

In particular, (NSA) implies the existence of a non-saturated vector \(t\) dual to \(s\). The set of non-saturated vectors being open, it follows that there exists a nice set \(U\subset\calT\) containing \(t\) in its interior.%

Note that by compactness of \(\bbS^{d-1}\) and of \(T_s\), NSA at \(s\) implies NSA in the closure of an open neighborhood of \(s\). In particular, for any of those directions \(s'\), \(T_{s'}\subset \calT\). The union of these sets of dual vectors will provide a suitable nice subset of \(\calT\). This whole procedure is necessary, since we need uniformity of the mass gap (and of other statements) over suitable neighborhoods of a given direction and of the associated dual vectors. NSA might not hold if \(\calU_{\nu}\) is not strictly convex, see Figure~\ref{fig:noNSA}. Although our analysis implies that \(\calU_{\nu}\) is strictly convex in directions in which NSA holds, we don't know how to establish strict convexity a priori. Of course, everything discussed above becomes trivial if there is no saturated direction.

\subsection{Precise statements of the results}\label{sec:StatementResults}
In this section, we provide precise statement of the main results of the present work.

\begin{theorem}[Random Walk Coupling]
	\label{thm:main_coupling_with_RW}
	Let \(q\in \calQ\). Let \(s\in\bbS^{d-1}\) be such that NSA holds. Then, there exist probability measures \(\rmp, \rmp_L, \rmp_R\) on \(\SetRootDiaCont, \SetRootMarkBackCont,\SetRootMarkForwCont\) resp., a normalization constant \(C_{LR}\) and \(C, c > 0\) such that, for any \(n\geq 0\) and any \(f:\pathSet\to \bbC\),
	\[
		\Bigl|\sfe^{n\nu(s)} \sum_{\gamma:\, 0\to ns} q(\gamma) f(\gamma) -
		C_{LR} \sum_{\substack{\gamma_L\in\SetRootMarkBackCont\\\gamma_R\in\SetRootMarkForwCont}} \sum_{n\geq 0} \sum_{\gamma_1, \dots, \gamma_n\in\SetRootDiaCont} f(\bar{\gamma}) \mathds{1}_{\{\displace(\bar{\gamma}) = ns\}} \rmp_L(\gamma_L) \rmp_R(\gamma_R) \prod_{i=1}^n \rmp(\gamma_i) \Bigr| 
		\leq C \normsup{f} \sfe^{-cn},
	\]
	where \(\bar{\gamma} = \gamma_L\concatenate\gamma_1\concatenate\cdots\concatenate\gamma_n\concatenate \gamma_R\). Moreover, the following two properties hold:
	\begin{itemize}
		\item Exponential decay: there exist \(C', c'>0\) such that
		\[
		\sum_{\gamma:\, \norm{\displace(\gamma)}\geq l } p(\gamma) \leq C'\sfe^{-c'l},
		\]for \(p\in \{\rmp, \rmp_L, \rmp_R\}\);
		\item Finite energy: there exists \(c''>0\) such that
		\[
		p(\gamma)\geq c''\sfe^{t_0\cdot \displace(\gamma)} q(\gamma),
		\]for \(p\in \{\rmp, \rmp_L, \rmp_R\}\), \(\gamma\) irreducible, and \(t_0\) \(\nu\)-dual to \(s\).
	\end{itemize}
\end{theorem}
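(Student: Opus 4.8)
The plan is to implement an Ioffe-type coarse-graining and irreducible decomposition, with the extra work concentrated on handling unbounded steps. I would proceed in five stages. First, \emph{cone-point regeneration}: I would show that, for a suitable choice of the aperture $\delta>0$ and of the dual vector $t_0\in T_s$ furnished by NSA, the typical path $\gamma:0\to ns$ contributing to $\sfe^{n\nu(s)}G(0,ns)$ has order-$n$ many $(t_0,\delta)$-cone-points. This is precisely the place where NSA and Lemma~\ref{lem:mass_gap_long_edges} enter: the mass gap $J_x\le C\sfe^{-m\nu(x)}\sfe^{-t_0\cdot x}$ on a nice neighbourhood $U\ni t_0$ guarantees that long edges carry an exponential penalty relative to the ``cost'' $\sfe^{-\nu}$, so long-range jumps cannot be responsible for the decay rate and the path is forced to be ``stretched'' with macroscopically many regeneration points. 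Concretely I would combine the lower bound~\ref{weight_property:lower_bound} (to produce a cheap reference path with many cone-points) with the upper bound~\ref{weight_property:weight_growth} and the mass gap to get a bound of the form $\sfe^{n\nu(s)}\sum_{\gamma:\,0\to ns,\ \#\CPts\le \epsilon n} q(\gamma)\le C\sfe^{-cn}$; this is the content of ``Section~\ref{subsec:not_many_edges}'' referenced in the text.

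Second, \emph{irreducible decomposition}: on the event that $\gamma$ has cone-points, write $\gamma=\gamma_L\concatenate\gamma_1\concatenate\cdots\concatenate\gamma_M\concatenate\gamma_R$ where $\gamma_L\in\SetRootMarkBackCont^{\irreducible}$, $\gamma_R\in\SetRootMarkForwCont^{\irreducible}$ and the $\gamma_i\in\SetRootDiaCont^{\irreducible}$ are the irreducible pieces between consecutive cone-points. The weight $q$ does not factorize, so I would not get a product over the pieces directly; instead I would use Property~\ref{weight_property:mixing} (exponential ratio mixing) to replace $q(\bar\gamma)$ by $q(\gamma_L\,|\,\cdot)\,q(\gamma_R\,|\,\cdot)\prod_i q(\gamma_i\,|\,\gamma_{i-1}\concatenate\cdots)$ and then, again via~\ref{weight_property:mixing} together with~\ref{weight_property:monotonicity}, by a genuine product $\prod$ of conditional weights that only depend on the \emph{type} (left/middle/right) of the piece, at the cost of a multiplicative error $\exp(\pm C\sum_{i\in\gamma_k,\,j\in\gamma_{<k}}\sfe^{-c_{\mix}\norm{i-j}})$; the exponential tails of the irreducible pieces (proved in the same breath from~\ref{weight_property:weight_growth} and the mass gap: an irreducible piece of diameter $\ell$ costs $\sfe^{-c\ell}$ relative to its un-penalized weight) make these errors summable and of total size $\sfO(\sfe^{-cn})$. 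This identifies the (sub-probability) kernels $\bar\rmp,\bar\rmp_L,\bar\rmp_R$ on irreducible paths.

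Third, \emph{renewal normalization}: introduce the tilted weights $\hat q(\gamma)=\sfe^{t_0\cdot\displace(\gamma)}q(\gamma)$ so that the generating function $\sum_\gamma\hat q(\gamma)$ over irreducible diamond-contained paths equals $1$ at the ``critical'' tilt — this is exactly the statement that $t_0$ is $\nu$-dual to $s$, i.e.\ $\nu(s)=t_0\cdot s$, which makes $\sfe^{n\nu(s)}=\sfe^{t_0\cdot(ns)}$ absorb the tilt along the terminal displacement constraint $\{\displace(\bar\gamma)=ns\}$. One checks $\sum\hat q<\infty$ for the irreducible class using the mass-gap bound (this is where $U$ being a \emph{nice} neighbourhood, not just $\{t_0\}$, is used, so that the bound is uniform and survives the small perturbation of $t_0$ needed to make the series converge and to later differentiate in $s$). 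Define $\rmp,\rmp_L,\rmp_R$ by normalizing $\hat{\bar\rmp}$ etc.; the leftover constant is $C_{LR}$. The ``finite energy'' lower bound $p(\gamma)\ge c''\sfe^{t_0\cdot\displace(\gamma)}q(\gamma)$ then follows because the normalization constants are finite and positive — positivity coming from~\ref{weight_property:lower_bound}, which guarantees a nonempty supply of admissible irreducible pieces. The ``exponential decay'' tail bound for $\rmp,\rmp_L,\rmp_R$ is the mass-gap estimate on irreducible pieces carried through the normalization.

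Fourth, \emph{re-summing the non-factorized remainder}: the error terms collected in stages one through three — paths with too few cone-points, and the mixing/telescoping errors — must be shown to sum to at most $C\normsup{f}\sfe^{-cn}$ uniformly in $f$; here one pulls $\normsup{f}$ out of every error sum and bounds the remaining combinatorial sums by the (already established) exponential decay of the kernels and of $G$ itself (recall $G(0,x)=\sfe^{-\nu(x)(1+\sfo(1))}$ from~\ref{weight_property:ICL}). Finally, I would invoke Property~\ref{weight_property:finite_energy} to glue the irreducible $\gamma_L,\gamma_R$ endpoints correctly near $0$ and near $ns$ (the finite-energy modifications of a path in a half-space or a cone near the origin), which is what lets the left/right pieces range over the \emph{full} sets $\SetRootMarkBackCont,\SetRootMarkForwCont$ rather than just their irreducible subsets, matching the statement. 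I expect the main obstacle to be stage one — quantitatively forcing order-$n$ cone-points in the presence of unbounded steps — since this is exactly the step that fails without NSA, and it requires the mass gap of Lemma~\ref{lem:mass_gap_long_edges} to be combined with a careful surgery (using~\ref{weight_property:lower_bound} and~\ref{weight_property:finite_energy}) showing that any long excursion or cone-point-poor stretch can be replaced by a stretched, cone-point-rich detour at a strictly cheaper exponential cost.
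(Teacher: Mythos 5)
Your stages 1, 3 and 5 are broadly aligned with the paper (coarse-graining/surgery to force order-\(n\) cone-points, tilting by \(t_0\) and identifying the critical normalization with \(\nu\)-duality via a renewal/radius-of-convergence argument, and treating the boundary pieces separately), but stage 2 — the factorization — contains a genuine gap, and it is precisely the step where the real difficulty of the theorem lies. You propose to replace \(q(\bar\gamma)\) by a product of single-piece weights using the ratio-mixing property~\ref{weight_property:mixing}, claiming the multiplicative errors are ``summable and of total size \(\sfO(\sfe^{-cn})\)''. This cannot work as stated: the mixing bound controls the influence of a piece at distance \(\ell\) by \(\sfe^{-c_{\mix}\ell}\), but consecutive irreducible pieces are at distance of order one, so decoupling each \(\gamma_k\) from its recent past costs a multiplicative factor bounded away from \(1\) (of order a constant, not \(1+\sfo(1)\)) per piece. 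With \(M\asymp n\) pieces the total error is \(\sfe^{\Theta(n)}\), which destroys the claimed \(C\normsup{f}\sfe^{-cn}\) bound. This is exactly the obstruction that left~\cite{Campanino+Ioffe+Velenik-2003} with a process with infinite memory (a Ruelle operator) instead of a random walk, and it is why the theorem requires the extra machinery of~\cite{Ott+Velenik-2018,Ioffe+Ott+Shlosman+Velenik-2021}.

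The paper's resolution (Section~\ref{sec:facto_meas}) is not an approximate decoupling but an \emph{exact} resummation: each conditional weight \(q(\gamma_k\given\gamma_L\concatenate\cdots\concatenate\gamma_{k-1})\) is written as a telescoping sum over ``memory lengths'' \(l\), with the terms \(p_l\) nonnegative by monotonicity~\ref{weight_property:monotonicity} and exponentially small in \(l\) by mixing~\ref{weight_property:mixing}. Expanding produces a one-dimensional memory-percolation picture in which the weight factorizes \emph{exactly} over connected blocks, each block being a concatenation of several irreducible pieces — this is why \(\rmp,\rmp_L,\rmp_R\) live on the full sets \(\SetRootDiaCont,\SetRootMarkBackCont,\SetRootMarkForwCont\) and not on their irreducible restrictions (your appeal to~\ref{weight_property:finite_energy} to ``enlarge'' the left/right classes is not the actual mechanism). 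The only non-factorized remainder is the fully-connected cluster, which is shown to be exponentially small in the number of pieces by a stick-percolation domination argument (Lemma~\ref{lem:exp_dec_connected_weights}); combined with the cone-point density this yields the \(\sfO(\sfe^{-cn})\) error, and the normalization \(\sum_{\gamma\in\SetRootDiaCont}\tilde q_{t_0}(\gamma)=1\) (Lemma~\ref{lem:normalization_renorm_connected_weights}) then makes the tilted block weights exact probability measures, with the finite-energy lower bound following because \(\tilde q_{t_0}(\gamma)=\sfe^{t_0\cdot\displace(\gamma)}q(\gamma)\) when \(\gamma\) is itself irreducible. Without this (or an equivalent exact decoupling, e.g.\ via perfect-simulation ideas), your outline does not yield the stated coupling.
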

Theorem~\ref{thm:main_coupling_with_RW} is proved in Section~\ref{subsec:coupling_with_RW}.

\medskip
Let \((S_k)_{k\geq 0}\) be the (directed) random walk on \(\Zd\) with transition probabilities given by the push-forward \(\rmp(y) = \sum_{\gamma\in\SetRootDiaCont:\,\displace(\gamma)=y} \rmp(\gamma)\). Denote by \(\RWP{u}\) the distribution of \((S_k)\) conditionally on \(\{S_0=u\}\) and by \(\RWG(u,v) = \sum_{k\geq 0} \RWP{u}(S_k=v)\) the corresponding Green function. 
\begin{theorem}[OZ asymptotics]
	\label{thm:main_OZ_asymp}
	Let \(q\in\calQ\). Let \(s\in\bbS^{d-1}\) be such that NSA holds. Then, there exist probability measures \(\rmp_L, \rmp_R\) on \(\Zd\) with exponential tails, a normalization constant \(C_{LR}\) and \(c > 0\) such that, for any \(n\geq 0\),
	\[
	\sfe^{n\nu(s)}\, G(0,ns) = (1+\sfO(\sfe^{-cn}))\, C_{LR} \sum_{u,v\in\Zd} \rmp_L(u) \rmp_R(v) \RWG(u,ns-v).
	\]
	In particular, there exists \(c_s>0\) such that, for every \(\varepsilon>0\),
	\[
	G(0,ns) = \frac{c_s}{n^{(d-1)/2}}\, \sfe^{-n\nu(s)}(1+\sfo(n^{\varepsilon-1/2})).
	\]
\end{theorem}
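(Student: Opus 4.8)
The plan is to apply Theorem~\ref{thm:main_coupling_with_RW} with $f\equiv 1$ and then read off the asymptotics from those of the random walk Green function $\RWG$.

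\textbf{Step 1: reduction to the random walk.} Take $f\equiv 1$ in Theorem~\ref{thm:main_coupling_with_RW}. Since the displacement is additive under concatenation, $\displace(\gamma_L\concatenate\gamma_1\concatenate\cdots\concatenate\gamma_M\concatenate\gamma_R) = \displace(\gamma_L) + \sum_{i=1}^M \displace(\gamma_i) + \displace(\gamma_R)$, one may push the measures $\rmp,\rmp_L,\rmp_R$ forward to $\Zd$ along $\displace$; writing $u=\displace(\gamma_L)$, $v=\displace(\gamma_R)$ and recognizing $\sum_{M\ge0}\RWP{u}(S_M = ns-v) = \RWG(u,ns-v)$, the coupling identity becomes
\[
	\sfe^{n\nu(s)} G(0,ns) = C_{LR}\sum_{u,v\in\Zd}\rmp_L(u)\rmp_R(v)\RWG(u,ns-v) + \sfO(\sfe^{-cn}).
\]
The pushed-forward measures $\rmp_L,\rmp_R$ on $\Zd$ inherit exponential tails from Theorem~\ref{thm:main_coupling_with_RW}. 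It remains to show that the sum on the right decays only polynomially and to identify its leading term.

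\textbf{Step 2: asymptotics of the walk Green function.} By translation invariance $\RWG(u,ns-v) = \RWG(0,ns-u-v)$, and $\RWG(0,m) = \sum_{k\ge0}\RWP{0}(S_k=m)$. The crucial structural input is that the drift $\mu := \e_\rmp[\displace]$ of $(S_k)$ is parallel to $s$: this is built into the construction of Theorem~\ref{thm:main_coupling_with_RW} through the tilt $t_0$ being $\nu$-dual to $s$ (equivalently, the inverse correlation length of $(S_k)$ is $\nu$, which is forced by the identity of Step 1 together with Property~\ref{weight_property:ICL}). Hence, for $m$ in a direction close to $s$, the event $\{S_k=m\}$ is typical precisely for $k$ near $k_*(m):=\|m\|/\|\mu\|\asymp n$, and we split $\sum_{k\ge0}$ accordingly. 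On the bulk $|k-k_*(m)|\le C\sqrt{n\log n}$ we use a refined local limit theorem for $(S_k)$ (the steps have exponential tails, hence moments of all orders): it yields $\RWP{0}(S_k=m) = (2\pi k)^{-d/2}(\det\Sigma)^{-1/2}\,\sfe^{-\|m-k\mu\|^2_{\Sigma^{-1}}/(2k)}\,(1+\sfo(n^{\varepsilon-1/2}))$ uniformly there, and summing this Gaussian over $k$ produces a term of the form $\text{(positive factor)}\cdot k_*(m)^{-(d-1)/2}(1+\sfo(n^{\varepsilon-1/2}))$. On the tails $|k-k_*(m)|> C\sqrt{n\log n}$ standard large/moderate deviation (Chernoff) bounds suffice: for $k$ much larger than $k_*$ the forward drift makes returning to $m$ exponentially costly in $k$, while for $k$ much smaller the exponential decay of the step distribution prevents reaching $m$; the tail contribution is thus $\sfo(n^{\varepsilon-1/2})\cdot n^{-(d-1)/2}$. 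Summing against $\rmp_L(u)\rmp_R(v)$ and using their exponential tails to justify the interchange (the $O(1)$ shifts $u+v$ perturb only the leading constant and lower-order terms, the local limit theorem being robust to $O(1)$ shifts of the endpoints), we obtain
\[
	\sum_{u,v\in\Zd}\rmp_L(u)\rmp_R(v)\RWG(u,ns-v) = \frac{c_s'}{n^{(d-1)/2}}\bigl(1+\sfo(n^{\varepsilon-1/2})\bigr), \qquad c_s' > 0.
\]

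\textbf{Step 3: conclusion and main obstacle.} Inserting this into the identity of Step 1, the polynomially decaying main term dominates the $\sfO(\sfe^{-cn})$ error, so $\sfe^{n\nu(s)}G(0,ns) = (1+\sfO(\sfe^{-cn}))\,C_{LR}\sum_{u,v}\rmp_L(u)\rmp_R(v)\RWG(u,ns-v)$ (the first display) and $\sfe^{n\nu(s)}G(0,ns) = C_{LR}c_s'\,n^{-(d-1)/2}(1+\sfo(n^{\varepsilon-1/2}))$ (the second display, with $c_s := C_{LR}c_s' > 0$). The main obstacle is Step 2: proving a local limit theorem for the directed, non-centered walk $(S_k)$ with relative error of the required order $\sfo(n^{\varepsilon-1/2})$, together with the large-deviation estimates controlling the range of $k$ far from $k_*$, where $\{S_k=m\}$ is a moderate-deviation event. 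Establishing that the drift is exactly parallel to $s$ — so that no spurious exponential factor appears in $\RWG(0,ns)$ — and extracting positivity of the constant are the other delicate points; the interchange with the sum over $u,v$ is routine given the exponential tails.
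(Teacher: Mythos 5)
Your proposal is correct and follows essentially the same route as the paper: apply Theorem~\ref{thm:main_coupling_with_RW} with \(f\equiv 1\), use that the drift of the induced walk is \(\zeta s\) (this is exactly Lemma~\ref{lem:centering_steps}, proved by the same ``no spurious exponential factor'' argument you sketch), and evaluate \(\RWG\) by splitting the sum over \(k\) into a bulk window treated with an Edgeworth-corrected local limit theorem and tails treated by large-deviation bounds, before summing against \(\rmp_L,\rmp_R\) via their exponential tails. The only cosmetic difference is that the paper states the refined LLT with an additive error \(\sfo(k^{-(d+2)/2})\) (citing Bhattacharya--Rao) rather than your uniform relative-error form, which is the technically safer formulation at the edge of the bulk window but leads to the same \(n^{-(d-1)/2}(1+\sfo(n^{\varepsilon-1/2}))\) conclusion.
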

Theorem~\ref{thm:main_OZ_asymp} is proved in Section~\ref{section:OZ_asymptotics}.
\begin{corollary}[Typical number of steps in \(\gamma\)]
	\label{cor:NumberSteps}
	Let \(q\in\calQ\). Let \(s\in\bbS^{d-1}\) be such that NSA holds. Then, there exists \(A>0\) such that, for any \(\epsilon>0\), there exists \(c> 0\) such that
	\begin{equation*}
		G(0,ns)=(1+\sfO(\sfe^{-cn}))\sum_{\substack{\gamma: 0\rightarrow ns \\ (1-\epsilon) An\leq\abs{\gamma}\leq (1+\epsilon) An}}q(\gamma).
	\end{equation*}
	Moreover, there exists \(\epsilon > 0\) and a rate function \(I_s\) on \((A-\epsilon, A+\epsilon)\) with a quadratic minimum at \(A\) such that, for all \(\alpha \in (A-\epsilon, A+\epsilon)\),
	\[
	\sum_{\substack{\gamma: 0\rightarrow ns \\ \abs{\gamma} = \lfloor\alpha n\rfloor}}q(\gamma) = (1+\sfo_n(1))\, \frac{C_s(\alpha)}{\sqrt{n}} \sfe^{-nI_s(\alpha)}\, G(0,ns),
	\]
	where \(C_s(\cdot)\) is a positive real analytic function on \((A-\epsilon, A+\epsilon)\).
\end{corollary}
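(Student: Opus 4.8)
The plan is to reduce the statement to a saddle‑point analysis of the \emph{length‑generating function}
\[
\Phi_n(z)=\sum_{\gamma:\,0\to ns}q(\gamma)\,z^{|\gamma|},
\]
whose coefficient on $z^{k}$ is the sum $\sum_{\gamma:0\to ns,\,|\gamma|=k}q(\gamma)$ appearing in the corollary, and which equals $G(0,ns)$ at $z=1$. First I would apply Theorem~\ref{thm:main_coupling_with_RW} with $f(\gamma)=z^{|\gamma|}$. For $|z|\le1$ this is a bounded weight and the theorem applies verbatim; for $|z|>1$ one feeds in the truncation $f(\gamma)=z^{|\gamma|}\IF{|\gamma|\le\kappa n}$ (so that the error term $C\normsup{f}\sfe^{-cn}$ stays negligible once $\kappa$ is chosen large and $z$ close enough to $1$) and disposes of the discarded tail $\sum_{|\gamma|>\kappa n}q(\gamma)|z|^{|\gamma|}$ by a Chernoff bound. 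The crucial structural input here, read off the construction behind Theorem~\ref{thm:main_coupling_with_RW}, is that $\rmp,\rmp_L,\rmp_R$ have \emph{exponential tails in the number of edges}, not merely in the displacement: an irreducible piece of $k$ edges corresponds, under the exponentially tilted step law (which has strictly positive drift thanks to the mass gap of Lemma~\ref{lem:mass_gap_long_edges}), to a walk segment containing no cone‑point, an event of probability $\sfO(\rho^{k})$ with $\rho<1$ since the tilted walk has a positive density of cone‑points. Consequently $\lambda(z)=\sum_{\gamma}z^{|\gamma|}\rmp(\gamma)$ and the analogous $\lambda_L,\lambda_R$ are finite and analytic on a fixed complex neighbourhood $\mathcal D\ni 1$, and the coupling identity becomes
\[
\sfe^{n\nu(s)}\Phi_n(z)=C_{LR}\sum_{u,v\in\Zd}\rmp_L^{z}(u)\,\rmp_R^{z}(v)\,\RWG_{z}(u,ns-v)+\sfO(\sfe^{-c(z)n}),\qquad z\in\mathcal D,
\]
where $\RWG_{z}$ is the Green function of the directed walk with subprobability step kernel $\rmp_z(y)=\sum_{\gamma:\,\displace(\gamma)=y}z^{|\gamma|}\rmp(\gamma)$ (total mass $\lambda(z)$) and $\rmp_L^{z},\rmp_R^{z}$ the tilted boundary weights.

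Next I would establish uniform asymptotics of the right‑hand side. The analysis of $\RWG_{z}(u,ns-v)$ along $s$ is exactly the one of Section~\ref{section:OZ_asymptotics} used to prove Theorem~\ref{thm:main_OZ_asymp} (large‑deviation upper bounds plus a refined local limit theorem for the tilted walk), now carried through with the extra parameter $z$; it yields
\[
\Phi_n(z)=\frac{c_s(z)}{n^{(d-1)/2}}\,\sfe^{-n(\nu(s)+\xi_s(z))}\,(1+\sfo(1)),
\]
uniformly for $z$ near $1$, where $\xi_s$ and $c_s>0$ are real‑analytic there; analyticity is obtained, as in the proof of Theorem~\ref{thm:main_Wulff_analytic}, by an analytic implicit‑function argument applied to the equation pinning down the exponential decay rate of the tilted walk (equivalently $\xi_s(z)$ is minus the logarithm of the relevant Perron root). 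One has $\xi_s(1)=0$, and the cumulant expansion $g(\theta):=-\xi_s(\sfe^{\theta})=A\theta+\tfrac12 v\theta^{2}+\sfO(\theta^{3})$ identifies $A=-\xi_s'(1)>0$ as the mean number of edges per unit displacement and $v=A-\xi_s''(1)>0$ as the asymptotic variance; equivalently, under $q(\cdot)/G(0,ns)$ the quantity $|\gamma|$ has mean $An+\sfo(n)$ and variance $vn+\sfo(n)$.

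The two assertions then follow by routine extraction. For the concentration, for $z\in(1,1+\delta)\subset\mathcal D$ one has $\sum_{\gamma:\,0\to ns,\ |\gamma|\ge(1+\epsilon)An}q(\gamma)\le z^{-(1+\epsilon)An}\Phi_n(z)$, hence
\[
\sum_{\gamma:\,0\to ns,\ |\gamma|\ge(1+\epsilon)An}q(\gamma)=\sfe^{-n[(1+\epsilon)A\log z+\xi_s(z)]+\sfo(n)}\,G(0,ns),
\]
and the bracket, which vanishes with $\log z$‑derivative $(1+\epsilon)A+\xi_s'(1)=\epsilon A>0$ at $z=1$, is $\ge c'>0$ for $z$ slightly above $1$, giving the bound $\sfe^{-c'n}G(0,ns)$; the mirror estimate with $z$ slightly below $1$ controls $|\gamma|\le(1-\epsilon)An$, whence $G(0,ns)=(1+\sfO(\sfe^{-cn}))\sum_{(1-\epsilon)An\le|\gamma|\le(1+\epsilon)An}q(\gamma)$. (Alternatively, this part can be read off the coupling directly: conditionally on $\displace=ns$ the number $M$ of irreducible pieces concentrates around a constant multiple of $n$ by a renewal large‑deviation bound, and then $\sum_{i\le M}|\gamma_i|$ concentrates by Cramér's theorem for i.i.d.\ sums with exponential tails.) For the local limit theorem, apply the saddle‑point method to $\sum_{\gamma:0\to ns,\,|\gamma|=\lfloor\alpha n\rfloor}q(\gamma)=\frac1{2\pi i}\oint\Phi_n(z)\,z^{-\lfloor\alpha n\rfloor-1}\,dz$: moving the contour to the circle $|z|=z^{*}(\alpha)$, where $z^{*}(\alpha)>0$ solves $\xi_s'(z^{*})=-\alpha/z^{*}$ (so $z^{*}(A)=1$), the uniform estimate above concentrates the integral at the saddle and gives
\[
\sum_{\gamma:\,0\to ns,\ |\gamma|=\lfloor\alpha n\rfloor}q(\gamma)=(1+\sfo_n(1))\,\frac{C_s(\alpha)}{\sqrt n}\,\sfe^{-nI_s(\alpha)}\,G(0,ns),
\]
with $I_s(\alpha)=\sup_{\theta\in\bbR}\bigl(\alpha\theta-g(\theta)\bigr)$ the Legendre transform of the convex function $g$ (so $I_s(A)=I_s'(A)=0$ and $I_s''(A)=1/v>0$ by the envelope theorem, and $I_s$ is real‑analytic where $g$ is strictly convex, i.e.\ near $A$), and $C_s(\alpha)$ a positive real‑analytic function built from $c_s(z^{*}(\alpha))$, $z^{*}(\alpha)$ and the second derivative of the phase along the contour.

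The main obstacle is the uniformity required in the second step: Theorem~\ref{thm:main_OZ_asymp} only delivers the walk Green function asymptotics at $z=1$ — and its accompanying remark already signals that this requires a delicate refinement of the local limit theorem — whereas here one needs those asymptotics, together with the analytic dependence of $\xi_s$ and $c_s$, \emph{uniformly for complex $z$ in a neighbourhood of $1$}, which is what legitimises both the envelope‑theorem identification of $I_s$ and the real‑analyticity of $C_s$. One must also deal with a possible periodicity of $|\gamma|$ (all admissible paths from $0$ to $ns$ having length in a fixed residue class modulo some $h>1$, as for nearest‑neighbour interactions on the bipartite lattice) by restricting the saddle‑point contour to the relevant arithmetic progression; and establishing the exponential tail of $\rmp,\rmp_L,\rmp_R$ in the number of edges — which makes $\lambda(z)$ finite for $z$ slightly larger than $1$, hence available to the Chernoff bound and to the saddle point for $\alpha>A$ — is the other technical ingredient.
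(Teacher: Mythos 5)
The paper itself gives no argument for Corollary~\ref{cor:NumberSteps} beyond pointing to the proof of Theorem~5.2 in~\cite{Ioffe+Velenik-2008}, and your plan is exactly that route: introduce a fugacity \(z\) conjugate to \(\abs{\gamma}\), feed it through the renewal representation of Theorem~\ref{thm:main_coupling_with_RW}/Lemma~\ref{lem:approx_by_factorized_weights}, get a two-parameter analytic family of tilted irreducible weights, and conclude by a Chernoff bound (for the concentration) and a saddle-point/local CLT (for the fixed-length statement), with the periodicity caveat correctly flagged. So in architecture you are doing what the authors intend.

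There is, however, one concrete gap, and it sits at the step you yourself identify as "the other technical ingredient": the claim that \(\rmp,\rmp_L,\rmp_R\) have exponential tails in the \emph{number of edges}, which is what makes \(\lambda(z)\), \(\bbQ(t,z)\) finite for some \(z>1\) and hence legitimises both the Chernoff bound for \(\abs{\gamma}\ge(1+\epsilon)An\) and the saddle point at \(\alpha>A\). Your justification — an irreducible piece of \(k\) edges is a tilted-walk segment with no cone-point, an event of probability \(\sfO(\rho^k)\) because the tilted walk has "a positive density of cone-points" — does not follow from anything in the paper: Theorem~\ref{thm:path_cone_points} and Lemmas~\ref{lem:irred_decomp:unif_exp_dec}, \ref{lem:exp_dec_renorm_connected_weights} give a density of cone-points and exponential tails \emph{per unit of displacement} \(\norm{\displace(\gamma)}\), not per step, and a path can accumulate arbitrarily many steps with bounded displacement. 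Nor does a crude per-step bound rescue this: using \ref{weight_property:weight_growth} and the mass gap of Lemma~\ref{lem:mass_gap_long_edges} one only gets that the tilted weight of all \(k\)-step paths is at most \(\bigl(\sum_y CJ_y\sfe^{t_0\cdot y}\bigr)^k\), and this base has no reason to be \(<1\) in the whole regime \(\beta<\betac\) (already \(\sum_y\tanh(\beta J_y)\) exceeds \(1\) well below \(\betac\) for nearest-neighbour models), so per-step exponential decay genuinely requires exploiting the self-repulsion, as is done in the setting of~\cite{Ioffe+Velenik-2008}. This is not a cosmetic point: without exponential (rather than, say, stretched-exponential) tails for the number of steps of an irreducible piece, the upper tail \(\sum_{\abs{\gamma}\ge(1+\epsilon)An}q(\gamma)\) would only be subexponentially small and the first display of the corollary would fail, so this estimate is precisely what carries the content of the statement and must be proved rather than read off the spatial cone-point results. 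The remaining steps (truncation to apply Theorem~\ref{thm:main_coupling_with_RW} for \(\abs{z}>1\), uniform-in-\(z\) local limit asymptotics for the tilted walk, Legendre-transform identification of \(I_s\) and analyticity of \(C_s\)) are fine once that tail estimate is in place.
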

Corollary~\ref{cor:NumberSteps} is proved along the lines of Theorem~5.2 in~\cite{Ioffe+Velenik-2008}.
\begin{theorem}[Local Analyticity of \(\nu\)]
	\label{thm:main_Wulff_analytic}
	Let \(s\in\bbS^{d-1}\) be such that NSA holds. Then,
	\begin{itemize}
		\item \(s\mapsto \nu(s)\) is analytic in a neighborhood of \(s\);
		\item there exists a unique \(t\in\partial \calW_{\nu}\) dual to \(s\);
		\item \(\partial\calW_{\nu}\) (resp.\ \(\partial \calU_{\nu}\)) is analytic in a neighborhood of \(t\) (resp.\ \(s/\nu(s)\)).
	\end{itemize}
\end{theorem}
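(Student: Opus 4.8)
The plan is to derive all three assertions from the Random Walk Coupling (Theorem~\ref{thm:main_coupling_with_RW}), exactly in the spirit of the classical Ornstein--Zernike literature (e.g.\ \cite{Ioffe-1998, Campanino+Ioffe-2002, Campanino+Ioffe+Velenik-2008}), where analyticity of the inverse correlation length follows from analyticity of a suitable generating function together with the implicit function theorem. The first step is to set up a tilted generating function for the irreducible pieces. For $\gamma\in\SetRootDiaCont^{\irreducible}$ let $\mathbb{G}_t(\gamma) = \sfe^{t\cdot\displace(\gamma)}q(\gamma)$ and define $\Lambda(t) = \sum_{\gamma\in\SetRootDiaCont^{\irreducible}}\mathbb{G}_t(\gamma)$, and similarly the half-space versions built from $\SetRootMarkBackCont^{\irreducible},\SetRootMarkForwCont^{\irreducible}$. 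The coupling construction (and the mass-gap Lemma~\ref{lem:mass_gap_long_edges}, which is where NSA enters) shows that, for $t$ in a neighborhood $\mathcal{N}$ in $\bbC^d$ of the $\nu$-dual vector $t_0$ to $s$, these sums converge absolutely and uniformly: the irreducible weight of a path of diameter $\ell$ decays like $\sfe^{-c\ell}$ uniformly over a nice set $U\subset\calT$ containing $t_0$, so adding a small complex perturbation to $t$ keeps the series summable. Being a locally uniform limit of a series of entire functions (each $\mathbb{G}_t(\gamma)$ is entire in $t$), $\Lambda$ is \emph{analytic} on $\mathcal{N}$.

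The second step is to identify $\nu$ with a level set of $\Lambda$. From the renewal structure underlying Theorem~\ref{thm:main_coupling_with_RW}, the full tilted two-point generating function $\sum_x \sfe^{t\cdot x} G(0,x)$ is, up to the analytic and non-vanishing boundary factors coming from $\rmp_L,\rmp_R$, a geometric series $\frac{(\text{boundary terms})}{1-\Lambda(t)}$ in the irreducible block $\Lambda(t)$. Hence the boundary of the convergence domain — which by a standard Legendre-duality argument coincides with $\partial\calW_\nu$ — is exactly the real-analytic-in-disguise hypersurface $\{t : \Lambda(t)=1\}$, at least locally near $t_0$. To make this precise one checks: (i) $\Lambda(t_0)=1$ (the normalization built into $\rmp$ being a probability measure, combined with $C_{LR}$; this is where the construction was rigged so that $\sfe^{n\nu(s)}$ is the right exponential correction); (ii) $\Lambda$ is real-valued and strictly increasing along rays through the origin on real $t$, so $\{\Lambda=1\}$ is a genuine hypersurface; (iii) $\nabla\Lambda(t_0)\neq 0$ — indeed $\nabla\Lambda(t_0) = \sum_\gamma \displace(\gamma)\sfe^{t_0\cdot\displace(\gamma)}q(\gamma)$ is the (nonzero, pointing roughly in direction $s$) drift of the irreducible renewal measure; positivity of this drift, hence $\nabla\Lambda(t_0)\cdot s>0$, follows from the finite-energy lower bound~\ref{weight_property:lower_bound} together with irreducibility.

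With these in hand the third step is routine. By the analytic implicit function theorem applied to $\Lambda(t)=1$ with $\nabla\Lambda(t_0)\neq 0$, the surface $\partial\calW_\nu = \{\Lambda=1\}$ is a real-analytic hypersurface in a neighborhood of $t_0$, and it is strictly convex there because $\Lambda$ is strictly convex near $t_0$ (its Hessian is the covariance matrix of $\displace$ under the tilted irreducible measure, which is positive definite by the irreducibility/finite-energy assumptions, as this matrix being degenerate would force $\displace$ to be supported on a hyperplane, contradicting~\ref{weight_property:lower_bound}). Strict convexity of $\partial\calW_\nu$ at $t_0$ immediately gives uniqueness of the $\nu$-dual vector to $s$ (the supporting hyperplane with normal $s$ touches the strictly convex body in a single point). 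Analyticity of $s\mapsto\nu(s) = \sup_{t\in\calW_\nu} t\cdot s = t(s)\cdot s$ near $s$, and of $\partial\calU_\nu$ near $s/\nu(s)$, then follow from the standard analytic duality between a strictly convex real-analytic body and its polar, using once more the implicit function theorem to solve $\nabla\Lambda(t)\parallel s$ for $t=t(s)$.

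The main obstacle is the \emph{analyticity of $\Lambda$}, i.e.\ rigorously controlling the complex-tilted sum over irreducible paths: one must show that the exponential decay in diameter provided by the coupling construction survives complexification of the tilt, uniformly, so that the interchange of limit and summation legitimizing analyticity is justified. This is precisely where Lemma~\ref{lem:mass_gap_long_edges} and the uniformity of the mass gap over a nice neighborhood $U\subset\calT$ of $t_0$ (guaranteed by NSA, via compactness of $\bbS^{d-1}$ and of $T_s$) do the essential work; without a genuine mass gap the irreducible generating function would have radius of convergence touching $\partial\calW_\nu$ with no room to complexify, and analyticity would fail — which is consistent with the breakdown of OZ behavior in the saturated regime.
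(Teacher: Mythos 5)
Your plan founders at the point where you write the tilted Green function as a geometric series in \(\Lambda(t)=\sum_{\gamma\in\SetRootDiaCont^{\irreducible}}\sfe^{t\cdot\displace(\gamma)}q(\gamma)\) and assert \(\Lambda(t_0)=1\). This presupposes that \(q\) factorizes over the irreducible decomposition, \(q(\gamma_L\concatenate\gamma_1\concatenate\cdots\concatenate\gamma_M\concatenate\gamma_R)=q(\gamma_L)q(\gamma_1)\cdots q(\gamma_R)\), which is exactly what fails for the models in \(\calQ\) (for Ising the weight carries an unbounded-range self-interaction); overcoming this is the whole point of Section~\ref{sec:facto_meas}. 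The true renewal structure is in terms of the \emph{connected} (factorized) weights: by \eqref{eq:weight_to_connected_weight} and \eqref{eq:def:facto_weight}, the blocks of the renewal are all of \(\SetRootDiaCont\) (strings of one or more irreducible pieces) weighted by \(\tilde q_t\), and the correct generating function is \(\bbQ(t)=\sum_{\gamma\in\SetRootDiaCont}\tilde q(\gamma)\sfe^{t\cdot\displace(\gamma)}\). On irreducible paths \(\tilde q_t\) does coincide with \(\sfe^{t\cdot\displace(\gamma)}q(\gamma)\), but \(\bbQ\) also contains the nonnegative (and generically positive) connected weights \(\qco\) of multi-piece blocks; since Lemma~\ref{lem:normalization_renorm_connected_weights} gives \(\bbQ(t_0)=1\), one has in general \(\Lambda(t_0)<1\) strictly, so \(\{\Lambda=1\}\) is \emph{not} locally \(\partial\calW_{\nu}\) and your normalization step (i), hence the whole implicit-function-theorem setup, collapses as written. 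Your appeal to ``the normalization built into \(\rmp\)'' conflates \(\rmp=\tilde q_{t_0}\) on \(\SetRootDiaCont\) with the raw weights on \(\SetRootDiaCont^{\irreducible}\).

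The remainder of your outline is sound once this substitution is made, and it then essentially reproduces the paper's argument: analyticity of the block generating function in a complex neighborhood of \(t_0\) follows from the uniform exponential tails (Lemma~\ref{lem:exp_dec_renorm_connected_weights}, which rests on Lemma~\ref{lem:exp_dec_connected_weights} and Theorem~\ref{thm:path_cone_points}, with NSA entering through the mass gap); the level-set characterization \(t'\in\partial\calW_{\nu}\Leftrightarrow\bbQ(t')=1\) near \(t_0\) is Lemma~\ref{lem:normalization_renorm_connected_weights}; the nonvanishing directional derivative is the positive drift of Lemma~\ref{lem:centering_steps}; and the analytic implicit function theorem plus convex duality (Lemma~\ref{lem:analyticity_W} and Section~\ref{sec:ProofAnalyticity}) give analyticity of \(\partial\calW_{\nu}\), of \(\nu\) and \(\partial\calU_{\nu}\), and uniqueness of the dual vector. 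Your Hessian/strict-convexity route to uniqueness is an acceptable variant of that last step (the paper instead rules out affine pieces via analytic continuation and the nonconstancy of the Gauss map), but it must be run for \(\bbQ\), and one must check it at every point of \(T_s\subset W_0\), not only at a single preselected \(t_0\).
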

Theorem~\ref{thm:main_Wulff_analytic} is proved in Section~\ref{sec:ProofAnalyticity}.
\begin{remark}
	Whenever our assumption NSA holds in every direction simultaneously, one can deduce the uniform strict convexity of \(\calU_\nu\) and \(\calW_\nu\), exactly as was done in earlier works (for instance, \cite{Campanino+Ioffe-2002}).
\end{remark}

\subsection{Sketch of the proof and organization of the paper}

The proof follows the same key steps as the procedure developed in~\cite{Campanino+Ioffe-2002,Campanino+Ioffe+Velenik-2003} with the technical and conceptual refinements of~\cite{Campanino+Ioffe+Velenik-2008,Ott+Velenik-2018,Ott-2020}.
The output of the construction is a coupling of the graphs with a directed random walk. The main novelty of the present paper is the presence of infinite-range interactions (arbitrarily long edges in the graphical representation) which complicates both the coarse-graining procedure and the study of the coarse-grained object. These complications are not only of a technical nature: the infinite-range of interactions can lead to failure of OZ decay, as was shown in~\cite{Aoun+Ioffe+Ott+Velenik-CMP-2021,Aoun+Ioffe+Ott+Velenik-PRE-2021,Aoun+Ott+Velenik-2021-inPreparation}.

We also tried to make this paper as self-contained and pedagogical as possible, gathering pieces scattered in many earlier papers. In particular, we aim at presenting the arguments in a way which makes the different steps of the proof as independent as possible. Our hope is that this paper might offer a reasonable introduction to this topic for interested people.

\smallskip
Let us now describe briefly the 5 main steps of the proof.

\smallskip	
\paragraph{\textbf{Step~1: Coarse-graining of paths}}

The reason the nonperturbative analysis of the paths contributing to the Green function is difficult is that, once \(\beta\) is close to \(\betac\), typical paths exhibit a very complicated structure at the lattice scale. However, once observed at a scale that is a large multiple of the correlation length, the geometry of such paths becomes once more very simple.

In order to make this precise, we make use of a coarse graining of the microscopic path \(\gamma\), approximating the latter by a polygonal line defined on scale given by a (fixed) multiple \(K\) of the correlation length, which we call its \textit{skeleton}.
This construction is described in detail in Section~\ref{sec:Coarse-graining}.

\smallskip
\paragraph{\textbf{Step~2: Geometry of typical skeletons}}

The relevance of this construction can be seen from simple energy/entropy considerations. Namely, the weight of a given skeleton (that is, the total weight of the paths associated to this particular skeleton) decreases exponentially with \(K\) times the number of vertices of the skeleton, while the number of skeletons with a given number of vertices only grows exponentially with \(\log K\) times the number of vertices. This means that, once \(K\) is chosen large enough, energy dominates entropy and we are essentially back in an effective perturbative regime, with \(K\) playing the role of the perturbation parameter.

This enables one to use energy/entropy arguments in order to describe the geometry of typical skeletons.  Typicality of a skeleton is conveniently quantified by its \textit{surcharge}, a quantity of purely geometric nature.

This construction is detailed in Section~\ref{sec:skeleton_analysis} and the output is a precise geometric description of typical skeletons, showing that the latter are ballistic in a strong sense and contain only few long edges. It is this last requirement that makes the construction in the present paper substantially more complicated than in earlier works.

\smallskip
\paragraph{\textbf{Step~3: Geometry of typical paths}}

In the next step of the analysis, we deduce from the description of typical skeletons (defined at the scale of the correlation length) the description of typical paths (defined at the scale of the lattice).
Namely, using the strong ballisticity of typical skeletons, the fact that paths remain close to their skeleton and probabilistic arguments, we show that the strong ballisticity statement extends to typical paths.
This is described in Section~\ref{sec:path_analysis}. The output is a decomposition of typical paths into strings of irreducible subpaths, \(\gamma=\gamma_{L}\concatenate\gamma_{1}\concatenate\cdots\concatenate\gamma_{M}\concatenate\gamma_{R}\), each subpath being confined in ``diamonds'' (intersection of two cones), as depicted in Figure~\ref{fig:IrreducibleDecomposition}.

\smallskip
\paragraph{\textbf{Step~4: Factorization of the weight and coupling to a directed random walk}}

Unfortunately, since the path-weight \(q\) does not factorize in general, that is,
\[
	q(\gamma) \neq q(\gamma_{L})q(\gamma_{1})\cdots q(\gamma_{M})q(\gamma_{R}),
\]
the decomposition into irreducible pieces does not induce a renewal structure on which to build the coupling to a directed random walk.
In Section~\ref{sec:facto_meas}, we explain how the introduction of additional artificial degrees of freedom and the disintegration of the weight \(q(\gamma)\) with respect to the latter yield joint weights that enjoy good factorization properties. Using this, we can finally construct the desired coupling between typical paths \(\gamma\) contributing to the Green function and a directed random walk on \(\Zd\) with nice properties (increments with exponential tails, etc.).

\smallskip
\paragraph{\textbf{Step~5: Proof of the main results}}

Equipped with this powerful coupling, we complete the proofs of its various by-products in Section~\ref{sec:proof_main_thm}.

\section{Coarse-graining of paths}\label{sec:Coarse-graining}

\subsection{Coarse-graining algorithm}

We now present the coarse-graining procedure which we will apply to the paths. The coarse graining depends on a scale parameter \(K\).

\begin{algorithm}[H]
	\label{algo:skeleton}
	\caption{Skeleton algorithm; see Fig.~\ref{fig:ExtractionSkeleton}}
	\KwInput{a path \(\gamma\) with \(\gamma_0=v_0\) and \(|\gamma| = n\)}
	Set \(V = (v_0)\), \(a_0=0\), \(k=0\);\\
	\While{\(\gamma_{a_k}^{n} \not\subset \Delta_K(v_k)\)}{
		Let \(a_{k+1} = \min\setof{a > a_k}{\gamma_{a} \notin \Delta_K(v_k) }\);\\
		Set \(v_{k+1} = \gamma_{a_{k+1}}\);\\
		Update \(V = (v_0, \dots, v_{k+1})\), \(k=k+1\);\\
	}
	\KwOutput{\(\Skeleton(\gamma)=V\)}	
\end{algorithm}
\begin{figure}[ht]
	\includegraphics{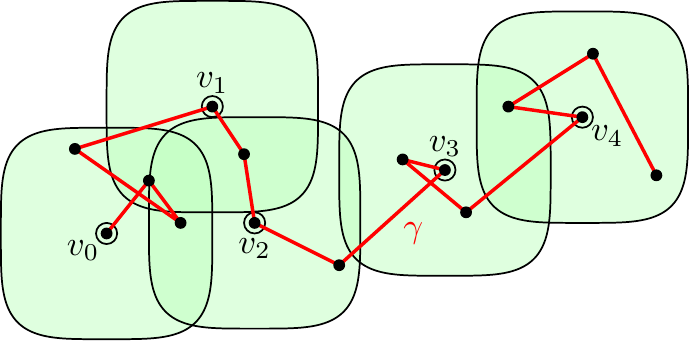}
	\caption{The skeleton \((v_0, \dots, v_4)\) (circled vertices) associated to the path \(\gamma\). The translates of the set \(\Delta_K\) used in the construction are indicated.}
	\label{fig:ExtractionSkeleton}	
\end{figure}

\pagebreak 

\subsection{Energy extraction}

The key tool to study the typicality of skeletons is the following energy estimate.
\begin{lemma}[Energy bound]
	\label{lem:skeleton_energy_bound}
	Let \(q\in \calQ\). Then, for any \(K\geq 0\) and \(R\geq 0\), any \(x\in\Zd\) with \(\norm{x}\) large, any admissible path \(\tilde{\gamma}\) with \(\tilde{\gamma}_{|\tilde{\gamma}|} = 0\), and any \(K\)-skeleton \(V=(0=v_0, v_1, \dots, v_M)\) with \(x\in\Delta_K(v_M)\),
	\[
		\sum_{\gamma:\, 0\to x} q(\gamma\given \tilde{\gamma})\mathds{1}_{\{\Skeleton(\gamma) = V\}}
		\leq
		\prod_{(u,v)\in\shortEdge(V)} \sfe^{-\nu(v-u)(1+\sfo_K(1))} \prod_{(u,v)\in\longEdge(V)}F(v-u),
	\]
	where
	\[
		F(y) = \min\Bigl( \sum_{z\in\Delta_K} \sfe^{-\nu(z)(1+\sfo_{\norm{z}}(1))} C J_{y-z}, \sfe^{-\nu(y)(1+\sfo_{\norm{y}}(1))}\Bigr).
	\]
\end{lemma}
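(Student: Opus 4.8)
The plan is to peel off the contributions of the skeleton steps one at a time, starting from the last step $(v_{M-1},v_M)$ and working backwards towards $v_0=0$, using at each stage the upper bound~\ref{weight_property:weight_growth} on the conditional weights $q(\gamma\given\tilde\gamma)$ together with the exponential-decay estimate $G(0,x)=\sfe^{-\nu(x)(1+\sfo_{\norm{x}}(1))}$ that follows from~\ref{weight_property:ICL}. Concretely, given a path $\gamma$ with $\Skeleton(\gamma)=V$, the skeleton algorithm provides cut times $0=a_0<a_1<\dots<a_M$ with $\gamma_{a_k}=v_k$ and, crucially, $\gamma_{a_{k-1}}^{a_k-1}\subset\Delta_K(v_{k-1})$ while $\gamma_{a_k}\notin\Delta_K(v_{k-1})$; I would write $\gamma=\gamma^{(1)}\concatenate\cdots\concatenate\gamma^{(M)}\concatenate\gamma^{\mathrm{end}}$ where $\gamma^{(k)}$ runs from $v_{k-1}$ to $v_k$ and $\gamma^{\mathrm{end}}\subset\Delta_K(v_M)$ joins $v_M$ to $x$. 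Using~\ref{weight_property:weight_growth} we have $q(\gamma\given\tilde\gamma)\leq\prod_k q(\gamma^{(k)})$ up to the constants $C$ (absorbed, step by step, into the $\sfo_K(1)$ or kept explicitly in $F$), since each conditional factor $q(\gamma^{(k)}\given\,\cdot\,)$ is bounded by $\prod CJ$, i.e.\ by $\sfe^{\sum\log(CJ)}$, and the sum over the subpaths $\gamma^{(k)}$ with prescribed endpoints and the confinement constraint on the intermediate vertices can then be performed independently.

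For a \emph{short} step, $v_k\in\Delta_{K+R}(v_{k-1})$, so $\norm{v_k-v_{k-1}}$ is of order $K$, and summing $\prod CJ_{\gamma^{(k)}_j-\gamma^{(k)}_{j-1}}$ over all paths from $v_{k-1}$ to $v_k$ (constrained to leave $\Delta_K(v_{k-1})$ only at the last step, but one drops that constraint for an upper bound) gives at most $C\cdot G(v_{k-1},v_k)\leq\sfe^{-\nu(v_k-v_{k-1})(1+\sfo_{\norm{v_k-v_{k-1}}}(1))}$; since $\norm{v_k-v_{k-1}}\geq K$ on a short step, the error term is uniformly $\sfo_K(1)$, yielding the factor $\sfe^{-\nu(v-u)(1+\sfo_K(1))}$, and the multiplicative constant $C$ is swallowed because $\nu(v-u)\geq cK\to\infty$. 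For a \emph{long} step one argues in two ways and takes the better: either crudely bound the subpath weight by $C\,G(v_{k-1},v_k)\leq \sfe^{-\nu(v_k-v_{k-1})(1+\sfo_{\norm{v_k-v_{k-1}}}(1))}$, which gives the second term in the minimum defining $F$; or exploit that the intermediate vertices of $\gamma^{(k)}$ all lie in $\Delta_K(v_{k-1})$ until the final edge jumps out — so $\gamma^{(k)}=(\text{path inside }\Delta_K(v_{k-1})\text{ from }v_{k-1}\text{ to some }z'=v_{k-1}+z,\ z\in\Delta_K)$ followed by one edge $\{z',v_k\}$ — and the weight factorizes (via~\ref{weight_property:weight_growth} again) into $G(v_{k-1},v_{k-1}+z)\cdot CJ_{v_k-v_{k-1}-z}$, summed over $z\in\Delta_K$; bounding $G$ by $\sfe^{-\nu(z)(1+\sfo_{\norm z}(1))}$ produces exactly $\sum_{z\in\Delta_K}\sfe^{-\nu(z)(1+\sfo_{\norm z}(1))}CJ_{y-z}$ with $y=v_k-v_{k-1}$, the first term in the minimum. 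Multiplying the $M$ estimates (and absorbing the $\gamma^{\mathrm{end}}$-sum, which is a bounded-diameter sum of $CJ$-products and contributes only a $\sfO(1)$ factor inside the large $x$, hence inside the $\sfo_K(1)$) gives the claimed product formula.

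The one genuine subtlety — and the step I expect to be the main obstacle — is bookkeeping the accumulation of the multiplicative constants $C$ from~\ref{weight_property:weight_growth}: there are $|V|=M$ of them (plus one from $\gamma^{\mathrm{end}}$), and $M$ grows linearly with $\norm{x}$, so one cannot simply write $C^{M}=\sfO(1)$. The resolution, which must be stated carefully, is that each constant $C$ is attached to a skeleton step and is dominated by $\sfe^{\epsilon\nu(v-u)}$ for any fixed $\epsilon>0$ once $K$ is large (since $\nu(v-u)\geq cK$ on every step, short or long, because consecutive skeleton vertices are at distance $\geq K$ by construction); choosing $\epsilon=\epsilon(K)\to0$ appropriately, this $\epsilon\nu(v-u)$ is absorbed into the $(1+\sfo_K(1))$ correction on short steps and into the $(1+\sfo_{\norm{y}}(1))$ correction inside $F$ on long steps. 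One should also check that the $\sfo_{\norm{z}}(1)$ appearing inside $F$ (coming from $G(0,z)=\sfe^{-\nu(z)(1+\sfo_{\norm z}(1))}$ with $z\in\Delta_K$, hence $\norm z\leq r_*K$) is uniform in the relevant range, which it is once $K$ is fixed large. The contribution of $\tilde\gamma$ never enters quantitatively: it is discarded at the first application of~\ref{weight_property:weight_growth}, which is exactly why the hypothesis is phrased for conditional weights.
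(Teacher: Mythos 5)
Your overall architecture (peel off the skeleton steps one at a time, handle the terminal piece from $v_M$ to $x$ separately, and for a long step split off the final edge $\{z,v_k\}$ with $z\in\Delta_K(v_{k-1})$, bounding it by $CJ_{v_k-z}$) is exactly the paper's, but the mechanism you use to sum over each sub-piece has a genuine gap. You bound $q(\gamma^{(k)}\given\cdot)$ edge-by-edge via~\ref{weight_property:weight_growth} and then claim that summing $\prod_j CJ_{\gamma^{(k)}_j-\gamma^{(k)}_{j-1}}$ over all paths from $v_{k-1}$ to $v_k$ gives at most $C\,G(v_{k-1},v_k)\leq \sfe^{-\nu(v_k-v_{k-1})(1+\sfo_K(1))}$. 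This is not justified and is in general false: the constant $C$ in~\ref{weight_property:weight_growth} is attached to every \emph{microscopic} edge, and the number of edges a path uses inside one skeleton step is unbounded (it may wander in $\Delta_K(v_{k-1})$ arbitrarily long), so the loss is not ``one $C$ per skeleton step''. Concretely, with the normalization $\sum_x J_x=1$, the quantity $\sum_{\gamma:\,u\to v}\prod_j CJ_{\gamma_j-\gamma_{j-1}}=\sum_n C^n\,\mathrm{P}(\mathrm{RW}_n=v-u)$ diverges as soon as $C>1$ (the $J$-walk is centered, so $\mathrm{P}(\mathrm{RW}_n=v-u)$ decays only polynomially), and even when it converges its decay rate is a property of $CJ$, not of $\nu$; nothing in~\ref{weight_property:ICL}--\ref{weight_property:mixing} compares it to $G$. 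Your ``main subtlety'' paragraph (absorbing $C^{M}$ with one constant per skeleton step into $\sfe^{\epsilon\nu(v-u)}$) therefore addresses the wrong bookkeeping problem and rests on the invalid intermediate bound.

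The missing idea is that the conditioning should be removed at the level of \emph{sums over paths}, using the self-repulsion property~\ref{weight_property:repulsion}, not at the level of individual path weights via~\ref{weight_property:weight_growth}. Writing the sum over $\gamma:0\to v_k$ with $\Skeleton(\gamma)=V_k$ as a sum over $\gamma:0\to v_{k-1}$ and $\gamma':v_{k-1}\to v_k$, one bounds $\sum_{\gamma'}q(\gamma'\given\tilde\gamma\concatenate\gamma)\leq\sum_{\gamma'}q(\gamma')=G(v_{k-1},v_k)=\sfe^{-\nu(v_k-v_{k-1})(1+\sfo_{\norm{v_k-v_{k-1}}}(1))}$ with \emph{no} constants at all; since $v_k\notin\Delta_K(v_{k-1})$ forces $\nu(v_k-v_{k-1})\geq K$, the error is $\sfo_K(1)$ uniformly, which is what makes the product over up to order $\norm{x}/K$ short steps harmless. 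For a long step, one keeps only the information that the last edge of $\gamma'$ jumps from some $z\in\Delta_K(v_{k-1})$ to $v_k$, splits off that \emph{single} edge with~\ref{weight_property:weight_growth} (this is the only place a constant $C$ enters, and it sits inside the definition of $F$), and applies~\ref{weight_property:repulsion} to the remaining sum over $\gamma':v_{k-1}\to z$. With this replacement your argument goes through; as written, the short-step estimate and the ``interior part'' of your long-step estimate do not follow from the stated properties of $q\in\calQ$.
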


The distinction between short and long edges enables a control of the frequency of abnormally long edges and is the main additional difficulty compared to~\cite{Campanino+Ioffe+Velenik-2003}.
\begin{proof}
	First,
	\begin{align*}
		\sum_{\gamma:\, 0\to x} q(\gamma\given \tilde{\gamma})\mathds{1}_{\{\Skeleton(\gamma) = V\}}
		&\leq \sum_{\gamma:\, 0\to v_M} \sum_{\gamma':\, v_M\to x} q(\gamma'\given \tilde{\gamma}\concatenate\gamma) q(\gamma\given \tilde{\gamma}) \mathds{1}_{\{\Skeleton(\gamma\concatenate\gamma') = V\}}\\
		&\leq \sfe^{-\nu(x-v_M) (1+\sfo_{\norm{x-v_M}}(1))} \sum_{\gamma:\, 0\to v_M} q(\gamma) \mathds{1}_{\{\Skeleton(\gamma) = V\}},
	\end{align*}
	where we used properties~\ref{weight_property:repulsion} and~\ref{weight_property:ICL}.
	Let \(V_k=(v_0, \dots, v_k)\). We can now proceed inductively:
	\[
		\sum_{\gamma:\, 0\to v_{k}} q(\gamma\given \tilde{\gamma}) \mathds{1}_{\{\Skeleton(\gamma) = V_k\}}
		\leq
		\sum_{\gamma:\, 0\to v_{k-1}} q(\gamma\given \tilde{\gamma}) \sum_{\gamma':\, v_{k-1}\to v_k} q(\gamma' \given \tilde{\gamma}\concatenate \gamma) \mathds{1}_{\{\Skeleton(\gamma\concatenate \gamma') = V_k\}}.
	\]
	
	To get the small steps part of the bound, we can simply use properties~\ref{weight_property:repulsion} and~\ref{weight_property:ICL}, as well as \(\mathds{1}_{\{\Skeleton(\gamma\concatenate \gamma') = V_k\}}\leq \mathds{1}_{\{\Skeleton(\gamma) = V_{k-1}\}}\) to bound the above by
	\begin{equation}\label{eq:intermediate_estimate_skeleton}
		\sfe^{-\nu(v_{k}-v_{k-1}) (1+\sfo_{\norm{v_{k}-v_{k-1}}}(1) )} \sum_{\gamma:\, 0\to v_{k-1}} q(\gamma\given \tilde{\gamma}) \mathds{1}_{\{\Skeleton(\gamma) = V_{k-1}\}},
	\end{equation}
	which implies the desired result as \(v_{k}\notin \Delta_K(v_{k-1})\) implies \(\nu(v_{k}-v_{k-1})\geq K\).
	
	To improve the bound in the case of long steps, we use~\eqref{eq:intermediate_estimate_skeleton} and the observation that, whenever \((v_{k-1},v_k)\in\longEdge(V)\), the last step of \(\gamma'\) must be of the form \(\{z, v_{k}\}\) with \(z\in \Delta_K(v_{k-1})\) and \(v_{k}\notin \Delta_{K+R}(v_{k-1})\). Summing over the possible values of \(x\), we obtain
	\begin{align*}
		\sum_{\gamma':\, v_{k-1}\to v_k} q(\gamma' \given \tilde{\gamma}\concatenate\gamma)  \mathds{1}_{\{\Skeleton(\gamma\concatenate \gamma') = V_k\}}
			&\leq\mathds{1}_{\{\Skeleton(\gamma) = V_{k-1}\}} \sum_{z\in\Delta_K(v_{k-1})} \sum_{\gamma':\,v_{k-1}\to z} q(\gamma'\given \tilde{\gamma}\concatenate\gamma) q\bigl((z,v_k) \given \tilde{\gamma}\concatenate\gamma\concatenate \gamma'\bigr) \\
			&\leq \mathds{1}_{\{\Skeleton(\gamma) = V_{k-1}\}} \sum_{z\in\Delta_K(v_{k-1})} \sfe^{-\nu(z-v_{k-1})(1+\sfo_{\norm{z-v_{k-1}}}(1))} C J_{v_k-z},
	\end{align*}
	where we used properties~\ref{weight_property:repulsion}, \ref{weight_property:ICL} and~\ref{weight_property:weight_growth}.
\end{proof}

\section{Skeleton analysis}\label{sec:skeleton_analysis}

We now study ``typical'' skeletons produced by the coarse-graining procedure. The idea is to compare the weight of certain sets of skeletons to the a priori energy cost \(\sfe^{-\nu(ns)}\) of having a path from \(0\) to \(ns\), in order to show that with high probability these skeletons do not occur as output of the coarse-graining procedure. In order to make this section functionally independent from the rest, we shall introduce the notion of skeleton measure.
\begin{definition}[Skeleton measures]
	A \(K\)-\emph{skeleton measure} is a weight function 
	\[
		Q_{K}: \bigcup_{x\in\Zd} \SkeletonSet(x) \to \R_{\geq 0}.
	\]
	We will say that the family \((Q_{K})_{K\geq 0}\) satisfies a \emph{\(K\)-energy bound} if there exists \(C<\infty\) (independent of \(K\)) such that, for any \(R\geq 0\) and any \(K\)-skeleton \(V=(v_0, v_1, \dots, v_M)\),
	\begin{equation}
		\label{eq:skeleton_energy_bound}
		Q_{K}(V) \leq \prod_{(u,v)\in \shortEdge(V)} \sfe^{-\nu(v-u)(1+\sfo_K(1))} \prod_{(u,v)\in \longEdge(V)} F(v-u),
	\end{equation}
	where
	\[
		F(y) = \min \Bigl(\sfe^{-\nu(y)(1+\sfo_{\norm{y}}(1))}, \sum_{x\in\Delta_{K}} \sfe^{-\nu(x)(1+\sfo_{\norm{x}}(1))}CJ_{y-x}\Bigr).
	\]
	The \(R\)-dependence is in the definition of \(\shortEdge(V)\) and \(\longEdge(V)\).
\end{definition}
The parameter \(R\) serves as a cut-off for the interaction: for a given \(R\), small steps behave in the same way as for models with range \(R\).

In the sequel, we are going to show that various families of skeletons are atypical, by establishing bounds of the type
\[
	\sfe^{t\cdot x} \sum_{V\in \SkeletonSet(0)} Q_K(V) \mathds{1}_{\{V\in\mathcal{V}\}} \leq \sfe^{-a \nu(x)},
\]
where \(\mathcal{V}\) is a collection of skeletons associated to paths \(\gamma:\, 0\to x\) (in particular, \(v_M\in \Delta_K(x)\)), \(t\) belongs to a nice subset of \(\partial\calW_\nu\) and \(a>0\). Such a bound indeed proves that paths associated to skeletons in \(\mathcal{V}\) yield an exponentially negligible contribution to \(G(0,x)\), since, taking \(t\) dual to \(x\) and using the a priori bound \(G(0,x) \geq \sfe^{-(1-\sfo_{\norm{x}}(1))\nu(x)}\), it implies that
\[
	\sum_{\substack{\gamma:\,0\to x:\\\Skeleton(\gamma) \in \mathcal{V}}} q(\gamma)
	\leq
	\sfe^{-\nu(x)} \sfe^{t\cdot x} \sum_{V\in \SkeletonSet(0)} Q_K(V) \mathds{1}_{\{V\in\mathcal{V}\}}
	\leq
	\sfe^{-a\nu(x) - \nu(x)} \leq \sfe^{-\frac a2\nu(x)} G(0,x), 
\]
when \(\norm{x}\) is large enough.

\medskip
The goal of this section is the proof of the following result.
\begin{theorem}[Typical skeletons have many cone-points]
	\label{thm:skeleton_CP}
	Let \(Q_K\) be a family of skeleton measures satisfying the energy bound~\eqref{eq:skeleton_energy_bound}. Let \(U\subset \calT\) be nice. Then, for any \(\delta>0, \epsilon>0, p\in(0,1)\), there exist \(K_0\geq 0\) and \(a>0\) such that, for any \(t\in U\), \(K\geq K_0\), \(
	x\in \Zd\) with \(\norm{x}\) large enough,
	\[
		\sfe^{t\cdot x} \sum_{V\in \SkeletonSet(0)} Q_K(V) \mathds{1}_{\{v_M\in \Delta_K(x)\}} \mathds{1}_{\{|\CPts_{t,\delta}(V)\cap \slab_{\norm{x}}(\hat{x})|\leq p \nu(x)/K\}}
		\leq \sfe^{-a \nu(x)},
	\]
	where \(V=(v_0=0, v_1, \dots, v_M)\).
\end{theorem}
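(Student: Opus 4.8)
The plan is to combine the energy bound \eqref{eq:skeleton_energy_bound} with an entropy estimate on the number of relevant skeletons, in the spirit of the classical energy--entropy arguments of \cite{Campanino+Ioffe-2002,Campanino+Ioffe+Velenik-2003}, but with the additional bookkeeping required by the presence of long edges. The key geometric quantity is the \emph{surcharge} of a skeleton: given $t\in U$ (a unit dual vector, say $t\cdot s = \nu(s)$ for the $s$ dual to $t$), one sets, for a step $(u,v)$, $\surcharge_t(u,v) = \nu(v-u) - t\cdot(v-u) \geq 0$, and for a whole skeleton $V$, $\surcharge_t(V) = \sum_k \surcharge_t(v_k,v_{k-1})$. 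The point of multiplying by $\sfe^{t\cdot x}$ in the statement is exactly that $\sfe^{t\cdot x}\,\sfe^{-\nu(v-u)} = \sfe^{-t\cdot(v-u)}\sfe^{-\nu(v-u)}\cdots$ telescopes against $\prod \sfe^{-t\cdot(v_k-v_{k-1})} = \sfe^{-t\cdot(v_M-v_0)} = \sfe^{-t\cdot x}\sfe^{\sfo_K(1)}$ (up to the slack $v_M\in\Delta_K(x)$), so that after multiplication the weight of a skeleton is controlled by $\sfe^{-\surcharge_t(V)(1+\sfo_K(1))}$ times a product of long-edge factors which we will see also carry a built-in surcharge gain via Lemma~\ref{lem:mass_gap_long_edges}.

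\textbf{Step 1: a cone-point deficit forces a surcharge or a long edge.} The heart of the argument is the geometric claim that a skeleton $V$ from $0$ to $\Delta_K(x)$ that has \emph{few} $(t,\delta)$-cone-points inside the slab $\slab_{\norm x}(\hat x)$ must pay for it. Intuitively, at a vertex $v_k$ which is \emph{not} a cone-point, either some earlier vertex lies outside $\bcone_{t,\delta}(v_k)$ or some later vertex lies outside $\fcone_{t,\delta}(v_k)$; tracking this, one can associate to each non-cone-point step a quantum of surcharge of order $\delta K$ coming from the short steps in its vicinity, \emph{unless} that part of the skeleton is traversed by a long edge. Making this precise is the technical core: one shows that if $|\CPts_{t,\delta}(V)\cap\slab_{\norm x}(\hat x)| \le p\,\nu(x)/K$, then either $\surcharge_t(V) \ge c_1\delta(1-p)\nu(x)$, or $V$ contains at least $c_2(1-p)\nu(x)/(K+R)$ long steps whose displacements have $\nu$-length summing to at least a fixed fraction of $\nu(x)$. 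In the first alternative we are done directly by the surcharge bound above. In the second, we use Lemma~\ref{lem:mass_gap_long_edges}: each long step contributes, after the $\sfe^{t\cdot x}$ normalization, a factor bounded by $\sum_{z\in\Delta_K}\sfe^{-\nu(z)(1+\sfo(1))}\, C J_{(v-u)-z}\,\sfe^{t\cdot(v-u)} \le C'\sfe^{-m\nu(v-u)/2}$ for $t\in U$ (splitting the convolution and absorbing the $\sfe^{t\cdot z}$ cost into the mass gap), so a macroscopic $\nu$-mass carried by long edges yields an exponential gain $\sfe^{-c_3 m\,\nu(x)}$.

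\textbf{Step 2: entropy of skeletons with prescribed long-edge data.} We sum over skeletons by first fixing $M$ (the number of steps), the subset $I\subset\{1,\dots,M\}$ of long steps and the "radii" $(r_i)_{i\in I}$ as in the definition of $\SkeletonSet_{\rho,M,I,(r_i)}(x)$. Short steps stay within $\Delta_{K+R}(v_{k-1})$, so there are at most $(C(K+R))^{d}$ choices per short step; the number of long steps is controlled (it is $\le$ the second-alternative bound, or far fewer in the good case), and the sum over their radii is absorbed by the exponential decay $\sfe^{-\nu(v-u)}$ already extracted, since $\nu(v-u)\gtrsim K+R+r_i$ for a long step with radius $r_i$. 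Thus the total entropy factor is at most $\sfe^{(\text{number of short steps})\cdot C\log(K+R)} \le \sfe^{C(\log K)\,\nu(x)/K}$ (using that a skeleton from $0$ to $\Delta_K(x)$ has $\le C\nu(x)/K$ steps of either type after the trivial bound $\nu(v_k-v_{k-1})\ge K$). Choosing $K\ge K_0$ large enough makes this entropy negligible against the surcharge gain $\sfe^{-c_1\delta(1-p)\nu(x)}$ of the first alternative and against $\sfe^{-c_3 m\nu(x)}$ of the second; the $\sfo_K(1)$ errors in the exponents are likewise swallowed once $K$ is large. Putting the two alternatives together and taking $a>0$ smaller than both gains yields the stated bound, uniformly over $t\in U$ by the uniformity of $m$ in Lemma~\ref{lem:mass_gap_long_edges} and uniformly over $\norm x$ large.

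\textbf{Main obstacle.} The delicate point is Step~1 — quantifying precisely how a shortfall of cone-points is converted into either surcharge or long-edge mass, uniformly over $t\in U$ and with constants independent of $K$. One must set up a covering of the slab $\slab_{\norm x}(\hat x)$ by intervals along $\hat x$ of width $\asymp K$, argue that each interval not "shadowed" by a long edge either contains a cone-point of $V$ or else forces the skeleton to deviate from the cone direction by an amount that costs $\asymp\delta K$ in surcharge, and control the interaction between nearby long edges so as not to double-count. This is exactly the combinatorial step that is more involved here than in the finite-range setting of \cite{Campanino+Ioffe+Velenik-2003}, because long edges can skip over many intervals at once; the separate long/short accounting in \eqref{eq:skeleton_energy_bound} and the mass gap of Lemma~\ref{lem:mass_gap_long_edges} are precisely what make it go through.
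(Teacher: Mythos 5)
Your overall architecture is the same as the paper's: multiply by \(\sfe^{t\cdot x}\) to turn the energy bound into surcharge factors, then run a dichotomy (cone-point deficiency forces either a macroscopic surcharge or an anomaly attributable to long edges), kill the surcharge alternative by energy--entropy with \(K\) large (Lemma~\ref{lem:skeleton_small_surcharge}), and kill the long-edge alternative with the mass-gap Lemma~\ref{lem:mass_gap_long_edges}, uniformly over the nice set \(U\) (this is Lemma~\ref{lem:long_edges_have_small_contribution}, which via Lemma~\ref{lem:skeleton_size_lower_bound} is exactly your ``macroscopic long-edge \(\nu\)-mass'' alternative, phrased in the paper as ``\(|V|\) too small''). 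So there is no divergence of method to report.

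The genuine gap is your Step~1, which you yourself flag as ``the technical core'' but do not prove: the deterministic statement that a skeleton reaching \(\Delta_K(x)\) with few \((t,\delta)\)-cone-points and not-too-small \(|V|\) must have surcharge of order \(\delta\nu(x)\). This is precisely Lemma~\ref{lem:skeleton_CP_controlled_by_surcharge} and Claim~\ref{claim1} in the paper, and it is not a routine localization argument: your sketch (``associate to each non-cone-point a quantum of surcharge of order \(\delta K\) coming from the short steps in its vicinity'') does not work as stated, because a vertex can fail to be a cone-point on account of a single far-away vertex, and the surcharge paying for that failure need not sit anywhere near it — indeed the excursion violating the cone condition can have increments whose individual surcharges are tiny. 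The paper handles this by extracting maximal forward- and backward-bad excursions (Algorithms~\ref{algo:skeleton_forward_bad_extraction}--\ref{algo:skeleton_backward_bad_extraction}) and proving a per-marked-point surcharge bound via a case analysis that includes a reordering of the increments when \(t\cdot(v_{j}-v_{i})\) is small; without some version of this argument your dichotomy is an assertion, not a proof. Two further (fixable) inaccuracies: the second alternative cannot guarantee ``at least \(c_2(1-p)\nu(x)/(K+R)\) long steps'' — a single giant edge can produce the whole cone-point deficit, and only the total excess \(\rmr(V)\) (equivalently the long-edge \(\nu\)-mass) is macroscopic, which is fortunately all you use; and the ``trivial bound'' that a skeleton from \(0\) to \(\Delta_K(x)\) has at most \(C\nu(x)/K\) steps is false deterministically (skeletons can backtrack arbitrarily) — controlling large \(|V|\) requires the probabilistic estimate of Lemma~\ref{lem:skeleton_size_upper_bound}, or summing the per-step series \(\sum_{y\notin\Delta_K}\sfe^{-(a+\sfo_K(1))\nu(y)}\) as in the proof of Lemma~\ref{lem:skeleton_small_surcharge}.
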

We shall also need a corollary of the analysis (which is a corollary of Theorem~\ref{thm:skeleton_CP} in the finite-range case, but requires additional care in the present setup).
\begin{theorem}[Cone-points are well distributed]
	\label{thm:skeleton_CP_well_distributed}
	Let \(Q_K\) be a family of skeleton measures satisfying the energy bound~\eqref{eq:skeleton_energy_bound}. Let \(U\subset \calT\) be nice. Then, for any \(\delta>0, \epsilon>0, l>1\), there exist \(K_0\geq 0\) and \(a>0\) such that, for any \(t\in U\), \(K\geq K_0\), \(
	x\in \Zd\) with \(\norm{x}\) large enough,
	\[
		\sfe^{t\cdot x} \sum_{V\in \SkeletonSet(0)} Q_K(V) \mathds{1}_{\{v_M\in \Delta_K(x)\}} \mathds{1}_{\left\{\sum_{k=1}^{\nu(x)/lr_*K} \mathds{1}_{\{\CPts_{t,\delta}(V)\cap H_k \neq \emptyset\}}	 \leq (1-\epsilon)\nu(x)/lK\right\}}
		\leq \sfe^{-a \nu(x)},
	\]
	where \(V=(v_0=0, \dots, v_M)\), \(H_k = \slab_{(k-1)lr_*K, klr_*K}(\hat{x})\) and \(r_*=r_*(\nu)\) is the extremal radius of \(\nu\).
\end{theorem}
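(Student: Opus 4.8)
The plan is to reduce Theorem~\ref{thm:skeleton_CP_well_distributed} to Theorem~\ref{thm:skeleton_CP} by a chaining/union-bound argument over a suitable partition of the slab \(\slab_{\norm{x}}(\hat{x})\) into sub-slabs. First I would set up the geometry: fix \(l>1\) and slice \(\slab_{\norm{x}}(\hat x)\) into the slabs \(H_k = \slab_{(k-1)lr_*K,\, klr_*K}(\hat x)\), \(k=1,\dots,N\) with \(N = \nu(x)/(lr_*K)\). The key observation is that each slab \(H_k\) has \(\hat x\)-width \(l r_* K\), and by the definition of the extremal radius \(r_*=r_*(\nu)\) any cell \(\Delta_K\) has \(\nu\)-diameter at most \(2K\) and hence Euclidean diameter at most \(2 r_* K\); in particular a single skeleton step \((v_{k-1},v_k)\) that is \emph{short} (i.e.\ \(v_k\in\Delta_{K+R}(v_{k-1})\)) cannot by itself span more than \(O((K+R))\) in the \(\hat x\)-direction. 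Thus if many consecutive slabs \(H_k\) contain \emph{no} cone-point, the skeleton must traverse a long stretch (length \(\gtrsim \epsilon N \cdot l r_* K \sim \epsilon\, r_*\,\nu(x)\) in the \(\hat x\)-direction) without acquiring a cone-point, and one shows this forces either many long edges or a genuine deficit of cone-points on that stretch — which is exactly what Theorem~\ref{thm:skeleton_CP} (applied to the sub-skeleton spanning that stretch, or to a union of stretches) penalizes by \(\sfe^{-a\nu(\cdot)}\).

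Concretely, I would proceed as follows. (1) Note that the event in the indicator, \(\sum_{k=1}^{N}\mathds{1}_{\{\CPts_{t,\delta}(V)\cap H_k\neq\emptyset\}}\le (1-\epsilon)N\cdot(r_*)\) — here I am being slightly loose; the exact normalization is \((1-\epsilon)\nu(x)/(lK) = (1-\epsilon) r_* N\) — means that at least \(\epsilon r_* N\) of the \(N\) slabs are "empty" of cone-points. By pigeonhole, among these empty slabs there is a run of consecutively-empty slabs, OR the empty slabs are spread out; in either case one extracts a contiguous block of slabs \(H_{j+1},\dots,H_{j+m}\) with \(m\ge c_0\) (a constant depending on \(\epsilon\) only, via a Ramsey-type/averaging argument) that is entirely free of cone-points, and such blocks can be taken disjoint, so there are at least \(\sim \epsilon r_* N / (\text{something})\) of them. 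Actually the cleaner route: the number of blocks of the form \([k\, b,(k+1)b)\) (fixed block size \(b\)) that are entirely cone-point-free is at least \(\sim \epsilon r_* N / b - \) (number of non-empty slabs hitting such a block) \(\gtrsim \epsilon r_* N/b\) once \(b\) is fixed. (2) For each such cone-point-free block of \(\hat x\)-width \(b\, l r_* K\), apply Theorem~\ref{thm:skeleton_CP} with the choice \(p\) small enough and \(\delta\) the given one: the sub-skeleton of \(V\) contained in (a slight enlargement of) that block has no cone-points at all in its own \(\slab\), which is far fewer than \(p\,\nu(\cdot)/K\) many, so each block contributes a multiplicative gain \(\sfe^{-a' \cdot (b l r_* K)/r_*}\) to the weighted sum — here one uses the \(K\)-energy bound~\eqref{eq:skeleton_energy_bound} to \emph{split} the weight \(Q_K(V)\) across blocks (the product structure over short/long edges is precisely what makes this multiplicativity work), together with a sub-additivity/triangle-inequality argument for \(t\cdot x\). (3) Multiply the gains over the \(\gtrsim \epsilon r_* N/b\) disjoint blocks; each gain is a constant factor (independent of \(x\) but a fixed \(\sfe^{-c}<1\)), so the product is \(\sfe^{-c'\, N} = \sfe^{-c''\,\nu(x)/K}\), which for \(K\) fixed is \(\le \sfe^{-a\nu(x)}\) for an appropriate \(a>0\). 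The uniformity over \(t\in U\) comes directly from the uniformity already provided by Theorem~\ref{thm:skeleton_CP} and Lemma~\ref{lem:mass_gap_long_edges}.

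There is a subtlety I would need to handle carefully: splitting \(Q_K(V)\) along a block requires that the block boundary falls "between" skeleton vertices in a clean way, and that a long edge straddling a block boundary is not double-counted or under-counted. The standard fix is to assign each long edge to the block containing, say, its left endpoint \(v_{k-1}\), and to use the bound \(F(v-u)\le \sfe^{-\nu(v-u)(1+\sfo(1))}\) together with the triangle inequality for \(\nu\) to distribute its weight; the factor-of-\(2\) losses from splitting are absorbed into the \(\sfo_K(1)\) and into choosing \(K_0\) large. One also has to ensure that the enlargement of each block (needed so that cone-points just outside the block don't spoil the "cone-point-free" property of the contained sub-skeleton) is by at most \(O(r_* K)\), which is negligible at scale \(b l r_* K\) once \(b\) is large — this is why the block size \(b\) must be chosen as a large constant depending on \(\epsilon\) and on the cone aperture \(\delta\). \textbf{The main obstacle} I anticipate is exactly this bookkeeping: making the "disjoint blocks each give an independent multiplicative gain" argument rigorous when long edges can be arbitrarily long and may skip over several blocks at once. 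A long edge that jumps across many blocks "uses up" the cost of all the blocks it skips, but it also means those blocks are trivially cone-point-free in a way that should still be penalized; reconciling these two effects — i.e.\ showing a long edge of \(\hat x\)-extent \(\ell\) really does contribute a gain \(\sfe^{-\Theta(\ell)}\) comparable to what \(\Theta(\ell/(lr_*K))\) empty short-scale blocks would have given — is where Lemma~\ref{lem:mass_gap_long_edges} (the mass gap \(\sfe^{t\cdot x}J_x\le C\sfe^{-m\nu(x)}\)) is essential and where the argument departs genuinely from the finite-range case. Once that is in place, the counting in steps (1)–(3) is routine.
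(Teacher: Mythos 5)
There is a genuine gap in your step (2). You want each cone-point-free block to yield an exponential gain by applying Theorem~\ref{thm:skeleton_CP} to ``the sub-skeleton of \(V\) contained in (a slight enlargement of) that block''. But cone-points are a \emph{global} notion: \(v_k\) is a cone-point of \(V\) only if \emph{all} vertices \(v_{k'}\), \(k'>k\) (resp.\ \(k'<k\)), lie in \(\fcone_{t,\delta}(v_k)\) (resp.\ \(\bcone_{t,\delta}(v_k)\)). Restricting to a sub-skeleton only removes constraints, so every cone-point of \(V\) inside the block is a cone-point of the sub-skeleton, but not conversely: a slab can be completely free of cone-points of \(V\) while the sub-skeleton living there has an abundance of its own cone-points, the spoiling vertices lying many blocks away. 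A bounded enlargement by \(O(r_*K)\) cannot repair this, since the obstruction can come from arbitrarily distant vertices. Hence the event you condition on (``block free of cone-points of \(V\)'') imposes no constraint on the block-local skeleton to which you apply Theorem~\ref{thm:skeleton_CP}, and no per-block gain follows. In addition, the factorization of \(Q_K(V)\) across blocks and, as you yourself flag, the treatment of long edges straddling several blocks are left unresolved; these are not routine in this setting.

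The paper's proof is much shorter and avoids block surgery altogether. Writing \(A_1\) (resp.\ \(A_2\)) for the set of slabs \(H_k\) meeting \(V\) (resp.\ meeting \(\CPts_{t,\delta}(V)\)), it observes the deterministic dichotomy: if \(|A_2|\) is below the threshold, then either (i) \(|A_1|\) is small, or (ii) \(|V|\geq(1+\frac{\epsilon}{4lr_*})\nu(x)/K\), or (iii) \(|\CPts_{t,\delta}(V)|\leq(1-\frac{\epsilon}{4lr_*})\nu(x)/K\), because \(|\CPts_{t,\delta}(V)|\leq|V|-|A_1\setminus A_2|\). Case (ii) is controlled by Lemma~\ref{lem:skeleton_size_upper_bound} and case (iii) by Theorem~\ref{thm:skeleton_CP} applied to the \emph{whole} skeleton (no localization needed). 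For case (i), every slab not visited by \(V\) must be jumped over by a skeleton edge, and each such slab forces a contribution of at least \((l-1)r_*K\) to \(\rmr(V)\); hence \(|A_1|\) small implies \(\rmr(V)\geq\epsilon\frac{l-1}{2l}\nu(x)\), which is exponentially penalized by Lemma~\ref{lem:long_edges_have_small_contribution} (this is exactly where the mass gap enters, resolving in one stroke the long-edge bookkeeping you identified as your main obstacle). If you wish to salvage a block-wise scheme, you would need to work with the surcharge-based notion of bad points as in Lemma~\ref{lem:skeleton_CP_controlled_by_surcharge}, which is precisely how the paper converts the global cone-point condition into locally summable costs.
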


\begin{remark}
	By translation invariance, the same results hold when \(v_0\) is taken arbitrary and \(x\) is replaced with \(x-v_0\).
\end{remark}

\subsection{Entropy of skeletons}

The main feature of skeletons satisfying an energy bound is that their entropy is almost trivial and their large-scale geometry is therefore governed by their energy.
\begin{lemma}[Entropy bound]
	\label{lem:skeleton_entropy_bound}
	Let \(K, R\geq 0\). Let \(M\geq 1\), \(I\subset \{1, \dots, M\}\) and \(r_i\in\Z_{>0}\), \(i\in I\). Then,
	\begin{equation}\label{eq:EntropyBound}
		|\SkeletonSet_{M,I,(r_i)_{i\in I}}| \leq (c R(K+R)^{d-1})^{M-|I|} \prod_{i\in I} c (K+R+r_i)^{d-1},
	\end{equation}
	where \(c\) depends on \(d\) and \(\nu\) only.
\end{lemma}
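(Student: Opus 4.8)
The plan is to count skeletons by counting their increment sequences, and to reduce the whole estimate to one elementary lattice-point count for annuli of the norm $\nu$.

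First I would observe that, by translation invariance we may take $v_0=0$, and then a skeleton $V=(v_0,\dots,v_M)\in\SkeletonSet_{M,I,(r_i)_{i\in I}}$ is completely determined by its increments $u_k=v_k-v_{k-1}$, $k=1,\dots,M$. Hence the left-hand side of~\eqref{eq:EntropyBound} is at most the product over $k$ of the number of admissible values of $u_k$, and the admissibility constraints unwind directly from the definitions of skeletons and of $\SkeletonSet_{M,I,(r_i)_{i\in I}}$: for a short index $k\notin I$, the skeleton condition $v_k\notin\Delta_K(v_{k-1})$ together with shortness gives $K<\nu(u_k)\le K+R$; for a long index $i\in I$, the defining condition $v_i\in\Delta_{K+R+r_i}(v_{i-1})\setminus\Delta_{K+R+r_i-1}(v_{i-1})$ gives $K+R+r_i-1<\nu(u_i)\le K+R+r_i$ (with $r_i\ge1$ and $R\ge0$ the bare skeleton condition is automatic here). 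So it suffices to estimate $|\{z\in\Zd:\,a<\nu(z)\le b\}|$ uniformly in $a,b$.

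The key lemma I would prove is: there is $c=c(d,\nu)$ such that $|\{z\in\Zd:\,a<\nu(z)\le b\}|\le c\,(b-a+1)\,b^{d-1}$ for all reals $0\le a\le b$ with $b\ge1$. The argument is the standard cube-packing comparison with Lebesgue volume: attach to each such $z$ the unit cube $z+[-\tfrac12,\tfrac12)^d$; these are pairwise disjoint, and since $\nu$ is a norm there is $c_0=c_0(\nu)$ with $\nu(w-z)\le c_0$ on the cube, so each cube lies inside $\{w:\,a-c_0<\nu(w)<b+c_0\}$. The volume of that set is $\bigl((b+c_0)^d-\max(a-c_0,0)^d\bigr)\mathrm{vol}(\calU_\nu)$, which is at most $c\,(b-a+1)\,b^{d-1}$ by the mean value theorem (using $b\ge1$) when $a\ge c_0$, and by a crude bound when $a<c_0$; counting the cubes then gives the lemma.

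Finally I would plug in. Applying the lemma with $(a,b)=(K,K+R)$ bounds the number of admissible short increments by $c(R+1)(K+R)^{d-1}\le cR(K+R)^{d-1}$, after enlarging $c$; here one may assume $R\ge1$, which is harmless since $R$ is a cut-off one is free to choose $\ge 1$, and for $R<1$ either there are no short steps or $K+R$ stays bounded and the count is a constant. Applying the lemma with $(a,b)=(K+R+r_i-1,K+R+r_i)$, and using $K+R+r_i\ge r_i\ge1$, bounds the number of admissible long increments with parameter $r_i$ by $c(K+R+r_i)^{d-1}$. Multiplying these bounds over the $M-|I|$ short indices and the $|I|$ long indices yields exactly~\eqref{eq:EntropyBound}. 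The only point needing genuine care is the uniformity of the annulus count in $K$, $R$ and the $r_i$ — which is precisely what the cube-packing lemma provides; everything else is bookkeeping.
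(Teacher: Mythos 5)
Your proof is correct and is essentially the paper's own argument: the paper's two-line proof simply asserts the two cardinality bounds (short increments lie in \(\Delta_{K+R}(v_k)\setminus\Delta_K(v_k)\), long increments with parameter \(r_i\) lie in \(\Delta_{K+R+r_i}(v_{i-1})\setminus\Delta_{K+R+r_i-1}(v_{i-1})\)) and multiplies them over the steps, which is exactly your increment-by-increment count, with your cube-packing lemma supplying the standard volume comparison that the paper leaves implicit. The small-\(R\) caveat you flag (the factor \(R\) versus \(R+1\)) is equally present in the paper's asserted bound and is immaterial, since the lemma is only applied with \(R=0\) (no short steps) or \(R\) of order \(K\).
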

\begin{proof}
	The bound is an easy consequence of the following observations. First, if \((v_k, v_{k+1})\) is a short step, then \(v_{k+1}\in \Delta_{K+R}(v_k)\setminus \Delta_{K}(v_k)\) and the cardinality of the latter set is bounded above by \(cR(R+K)^{d-1}\). Second, if \((v_{k}, v_{k+1})\) is a long step with \(v_{k+1}\in \Delta_{K+R+r}(v_k)\setminus \Delta_{K+R+r-1}(v_k)\), the number of  possibilities for \(v_{k+1}\) given \(v_k\) is bounded by \(c(K+R+r)^{d-1}\).
\end{proof}

\subsection{Surcharge function}

A key object in the study of skeletons is the surcharge function. Informally, it quantifies how far a path is from being a minimal path for \(\nu\).

\begin{definition}[Surcharge function]
	Let \(\rho\) be a norm on \(\Rd\). Let \(t\in\partial\calW_{\rho}\). The \emph{surcharge function} is the function on \(\Rd\) defined by
	\begin{equation}
		\label{eq:def:surcharge}
		\surcharge_{\rho,t}(z) = \rho(z) - t\cdot z.
	\end{equation}
	When omitted from the notation, \(\rho\) is set to be \(\nu\). Note that \(\surcharge_{\rho,t}\) inherits the triangular inequality from \(\rho\).
\end{definition}
From the definition of \(\calW_{\rho}\), one immediately deduces that \(\surcharge_{\rho,t}\geq 0\). We will regularly use the observation that, by definition of the cones, \(\surcharge_t(z) \geq \delta \nu(z)\) for any \(z\notin\fcone_{t,\delta}\).
\begin{definition}[Surcharge of a skeleton]
	The \emph{surcharge of a skeleton} \(V  =(v_0, \dots, v_M)\) is defined by
	\[
		\surcharge_t(V) = \sum_{k=1}^{M} \surcharge_t(v_{k}-v_{k-1}).
	\]
\end{definition}

A simple observation is that, for any \(t\in\partial\calW_{\nu}\) and any \(y_1, \dots, y_m\in\Zd\),
\begin{equation}
	\label{eq:sum_of_surcharges_LB}
	\sum_{k=1}^m \surcharge_t(y_k) \geq \sum_{k=1}^{m} \nu(y_k) -\nu\Bigl(\sum_{k=1}^m y_k\Bigr).
\end{equation}
Another observation we will often use is that 
\begin{equation}\label{eq:SurchargeIncrements}
\sfe^{t\cdot \sum_{k=1}^m y_k}\prod_{i=1}^{m}\sfe^{-\nu(y_{i}-y_{i-1})}
=
\prod_{i=1}^{m}\sfe^{-\surcharge_{t}(y_{i}-y_{i-1})}
\end{equation}
by the linearity of the dot product.

\subsection{Skeletons do not contain too many vertices}

Before turning to the finer skeleton analysis, let us make a first simple observation (which will be used throughout the remainder of the proof).

\begin{lemma}
	\label{lem:skeleton_size_upper_bound}
	There exists \(K_0\geq 0\) such that, for any \(a\geq 0, K\geq K_0\) and \(n\) large enough,
	\[
		\sum_{V\in\SkeletonSet} Q_K(V) \mathds{1}_{\{|V|\geq a n/K\}} \leq \sfe^{-a(1+\sfo_K(1))n}.
	\]
\end{lemma}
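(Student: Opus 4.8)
The plan is to exploit that the energy bound~\eqref{eq:skeleton_energy_bound} factorizes over the edges of a skeleton, that every skeleton edge has $\nu$-length at least $K$, and that the number of admissible displacements for an edge grows only polynomially in $K$; so once $K$ is large the energy of each of the $|V|$ edges beats its entropy. Concretely, by translation invariance of $Q_K$ we may fix $v_0=0$. A length-$M$ skeleton $V=(0,v_1,\dots,v_M)$ is then in bijection with the sequence of increments $y_k=v_k-v_{k-1}$, each constrained only by $\nu(y_k)>K$ (which is exactly the defining condition $v_k\notin\Delta_K(v_{k-1})$). Writing $w(y)$ for the per-edge weight appearing in~\eqref{eq:skeleton_energy_bound} — namely $w(y)=\sfe^{-\nu(y)(1+\sfo_K(1))}$ when $K<\nu(y)\le K+R$ (short edge) and $w(y)=F(y)\le\sfe^{-\nu(y)(1+\sfo_{\norm{y}}(1))}$ when $\nu(y)>K+R$ (long edge) — the energy bound gives $Q_K(V)\le\prod_{k=1}^{M}w(y_k)$, hence
\[
	\sum_{V:\,v_0=0,\,|V|=M}Q_K(V)\le\Bigl(\sum_{y:\,\nu(y)>K}w(y)\Bigr)^{M}=:\lambda_K^{M}.
\]

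Next I would estimate the one-step sum $\lambda_K$. On $\{\nu(y)>K\}$ the equivalence of $\nu$ and $\norm{\cdot}$ gives $\norm{y}\ge cK$, so the correction $\sfo_{\norm{y}}(1)$ in $F$ is uniformly a $\sfo_K(1)$ there and $w(y)\le\sfe^{-\nu(y)(1+\sfo_K(1))}$ for all such $y$. Grouping according to $j=\lceil\nu(y)\rceil-K\ge1$ and using that the shell $\{y:\nu(y)\in(K+j-1,K+j]\}$ contains at most $c(K+j)^{d-1}$ lattice points,
\[
	\lambda_K\le\sfe^{-K(1+\sfo_K(1))}\sum_{j\ge1}c\,(K+j)^{d-1}\sfe^{-(j-1)(1+\sfo_K(1))}\le C\,K^{d-1}\,\sfe^{-K(1+\sfo_K(1))},
\]
the last series being bounded by a constant multiple of $K^{d-1}$ once $K$ is large. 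Since $R$ is fixed and $C\,K^{d-1}=\sfe^{\sfo_K(1)\cdot K}$, the polynomial and the constant are swallowed into a (new) $\sfo_K(1)$, so $\lambda_K\le\sfe^{-K(1+\sfo_K(1))}$.

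Then I would sum over the length. Fix $K_0$ so large that, for all $K\ge K_0$, $\lambda_K\le\tfrac12$ and $1+\sfo_K(1)\ge\tfrac12$. Using the geometric series, $K\lceil an/K\rceil\ge an$, and positivity of $1+\sfo_K(1)$,
\[
	\sum_{V:\,v_0=0,\,|V|\ge an/K}Q_K(V)\le\sum_{M\ge\lceil an/K\rceil}\lambda_K^{M}=\frac{\lambda_K^{\lceil an/K\rceil}}{1-\lambda_K}\le 2\,\sfe^{-K\lceil an/K\rceil(1+\sfo_K(1))}\le 2\,\sfe^{-an(1+\sfo_K(1))}.
\]
For $n$ large enough the remaining factor $2$ is absorbed into the exponential rate, which gives the announced bound; for $a=0$ one simply notes that skeletons have length at least $1$, so the left-hand side is at most $2\lambda_K\le1=\sfe^{0}$.

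I do not expect a genuine obstacle here, only one place needing care: a long edge has unbounded length, so in the one-step sum one must verify that the growing entropy factor $(K+j)^{d-1}$ of a displacement with $\nu(y)\approx K+j$ is beaten by the extra exponential decay $\sfe^{-j}$ supplied by $F$ — this is what keeps $\lambda_K$ both summable and of size $\sfe^{-K(1+\sfo_K(1))}$. One could equally run the computation through the entropy bound of Lemma~\ref{lem:skeleton_entropy_bound}, summing over the overshoot parameters $r_i$ of the long edges and over the subset of long-edge positions, the $2^{M}$ choices of the latter contributing only a $\sfe^{O(M)}$ factor that is again absorbed for $K$ large.
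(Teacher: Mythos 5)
Your proposal is correct and follows essentially the same route as the paper: bound each skeleton step by the energy bound, control the entropy of a step with $\nu$-length in $(K+j-1,K+j]$ by a polynomial factor $c(K+j)^{d-1}$, conclude that the one-step sum is at most $\sfe^{-K(1+\sfo_K(1))}$, and sum the geometric series over lengths $M\geq an/K$. The paper simply takes $R=0$ so that every step is long, which you note as an equivalent variant; your remark about absorbing the constant and the $a=0$ case are fine given the order of quantifiers in the statement.
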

\begin{proof}
	We will use the energy bound~\eqref{eq:skeleton_energy_bound} and the entropy bound~\eqref{eq:EntropyBound} with \(R=0\) (that is, every step is now a long step). The latter imply
	\begin{align*}
		\sum_{V\in\SkeletonSet} Q_K(V) \mathds{1}_{\{|V|\geq a n/K\}}
			&\leq \sum_{M\geq a n/K} \sum_{r_i\geq 1, i\in \{1, \dots, M\}} |\SkeletonSet_{M,\{1, \dots, M\}, (r_i)_{i=1}^M}|\, \sfe^{-\sum_{i=1}^M (K+r_i)(1+\sfo_{K+r_i}(1))}\\
			&\leq \sum_{M\geq a n/K} \Bigl(\sum_{r=1}^{\infty} c(K +r)^{d-1} \sfe^{-(K+r)(1+\sfo_{K+r}(1))}\Bigr)^{\!\!M}\\
			&\leq \sum_{M\geq a n/K} \sfe^{-MK(1+\sfo_K(1))}
			\leq \sfe^{-a(1+\sfo_K(1)) n}. 
	\end{align*}
\end{proof}

\subsection{Skeletons have small surcharge}

The first key property of skeleton measures satisfying an energy bound is that their surcharge is controlled. In the \emph{finite-range case}, the combination of this control with a \emph{deterministic} result on skeletons which says that skeletons \emph{containing only small steps} with few cone-points have large surcharge is at the heart of the analysis in~\cite{Campanino+Ioffe+Velenik-2003}. In the present study, we have to add to this control of the surcharge a control over the global contribution of long edges.
\begin{lemma}
	\label{lem:skeleton_small_surcharge}
	For any \(\epsilon>0\), there exists \(K_0\geq 0\) such that, for any \(K\geq K_0\), \(t\in \partial \calW_{\nu}\) and \(x\in\Zd\) with \(\norm{x}\) large enough,
	\[
		\sfe^{t\cdot x} \sum_{V\in\SkeletonSet} Q_K(V) \mathds{1}_{\{\surcharge_t(V)\geq \epsilon\nu(x)\}} \mathds{1}_{\{v_M\in \Delta_K(x)\}} \leq \sfe^{-\epsilon\nu(x)/2}.
	\]
\end{lemma}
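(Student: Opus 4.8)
The plan is to combine the energy bound \eqref{eq:skeleton_energy_bound} with the entropy bound \eqref{eq:EntropyBound}, using the identity \eqref{eq:SurchargeIncrements} to convert the prefactor $\sfe^{t\cdot x}$ into a product of surcharge factors along the steps of the skeleton. First I would note that, since $v_M\in\Delta_K(x)$, we have $t\cdot x = t\cdot v_M + t\cdot(x-v_M)$ with $|t\cdot(x-v_M)|\leq \|t\|\cdot r_*K = \sfo_K(\nu(x))$ (in fact $\sfO_K(1)$), so it is harmless to replace $t\cdot x$ by $t\cdot v_M = \sum_{k=1}^M t\cdot(v_k-v_{k-1})$. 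Applying \eqref{eq:skeleton_energy_bound} and using that $F(y)\leq \sfe^{-\nu(y)(1+\sfo_{\|y\|}(1))}$ for long steps as well, we get
\[
	\sfe^{t\cdot v_M}\,Q_K(V) \leq \prod_{k=1}^M \sfe^{t\cdot(v_k-v_{k-1}) - \nu(v_k-v_{k-1})(1+\sfo(1))} = \prod_{k=1}^M \sfe^{-\surcharge_t(v_k-v_{k-1})(1+\sfo(1))},
\]
up to absorbing the $\sfo_K(1)$ corrections on the $\nu$-terms into a factor $\sfe^{\sfo_K(1)\nu(v_k-v_{k-1})}$, which for $K$ large is dominated by $\sfe^{\frac12\surcharge_t(v_k-v_{k-1})}$ on the indices where the surcharge is not tiny and, on the indices where $v_k-v_{k-1}$ lies deep in the forward cone, is compensated by the mass-gap-free decay coming from the step length $\nu(v_k-v_{k-1})\geq K$.

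Next I would sum over skeletons. Decompose the sum over $V\in\SkeletonSet$ according to $M=|V|$, the set $I$ of long steps, and the radii $(r_i)_{i\in I}$; use Lemma~\ref{lem:skeleton_entropy_bound} to bound the number of skeletons with a prescribed step-type profile, and bound each factor $\sfe^{-\surcharge_t(v_k-v_{k-1})}$ trivially by $1$ except that we keep enough of the pure length decay $\sfe^{-\nu(v_k-v_{k-1})}\leq \sfe^{-K}$ (for short steps) or $\sfe^{-(K+R+r_i)}$ (for long steps) to beat the polynomial entropy factors $cR(K+R)^{d-1}$ and $c(K+R+r_i)^{d-1}$, exactly as in the proof of Lemma~\ref{lem:skeleton_size_upper_bound}. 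This shows that, for $K\geq K_0$, the per-step contribution is summably small, so the total sum over all skeletons of $\sfe^{t\cdot v_M}Q_K(V)\sfe^{\surcharge_t(V)/2}$ (say) is bounded by a constant, in fact by $2$ once $K_0$ is large. Actually the cleanest packaging is: for $K$ large, $\sum_{V\in\SkeletonSet,\,v_0=0} \sfe^{t\cdot v_M}Q_K(V)\,\sfe^{\frac12\surcharge_t(V)}\,\mathds{1}_{\{v_M\in\Delta_K(x)\}}\leq C$, uniformly in $t\in\partial\calW_\nu$ and in $x$. Given this, on the event $\{\surcharge_t(V)\geq \epsilon\nu(x)\}$ we extract a factor $\sfe^{-\frac12\surcharge_t(V)}\leq \sfe^{-\frac\epsilon2\nu(x)}$ and conclude
\[
	\sfe^{t\cdot x}\sum_{V}Q_K(V)\mathds{1}_{\{\surcharge_t(V)\geq\epsilon\nu(x)\}}\mathds{1}_{\{v_M\in\Delta_K(x)\}} \leq C\,\sfe^{\sfO_K(1)}\,\sfe^{-\frac\epsilon2\nu(x)} \leq \sfe^{-\frac\epsilon2\nu(x)}
\]
for $\|x\|$ large, which is the claim (after adjusting the $\epsilon$'s — one can run the argument with $\epsilon/2$ replaced by a slightly larger fraction to absorb the constant, or simply note $C\sfe^{-\frac\epsilon2\nu(x)}\leq \sfe^{-\frac\epsilon2\nu(x)}\cdot$something, tightening to $\sfe^{-\epsilon\nu(x)/2}$ with a cosmetic change of constants).

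The main obstacle I anticipate is the bookkeeping of the $\sfo_K(1)$ and $\sfo_{\|y\|}(1)$ error terms sitting in the exponents of \eqref{eq:skeleton_energy_bound}: one has to be careful that the $\sfo_{\|y\|}(1)$ correction on an individual long step of length $\sim r$ does not spoil the geometric summability of $\sum_r c(K+R+r)^{d-1}\sfe^{-(K+R+r)(1+\sfo_{K+R+r}(1))}$, and that the $\sfo_K(1)\nu(z)$ term, summed over all short steps, does not accumulate to something comparable to $\epsilon\nu(x)$. The first is handled by the fact that $\sfo_{r}(1)\to 0$, so for $K_0$ large the exponent $(K+R+r)(1+\sfo_{K+R+r}(1))$ is at least $(K+R+r)/2$, keeping the series convergent with room to spare against the polynomial. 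The second is handled by choosing $K_0$ large enough (depending on $\epsilon$) that the total short-step $\sfo_K(1)$ drift is at most, say, $\epsilon/10$ times the total step length $\sum_k\nu(v_k-v_{k-1})$, which in turn is controlled because short steps each have length in $[K,K+R)$ and long-step lengths come with their own exponential decay — exactly the mechanism already exploited in Lemma~\ref{lem:skeleton_size_upper_bound}. None of this is deep; it is the kind of uniform-in-$K$ estimate that the statement was set up to allow, but it is where the care must go.
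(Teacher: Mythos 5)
Your overall skeleton (tilt by \(\sfe^{t\cdot x}\), convert to surcharge factors via \eqref{eq:SurchargeIncrements}, extract \(\sfe^{-\epsilon\nu(x)/2}\) from the constraint, and control the remaining tilted sum) is the right shape, but the pivotal intermediate claim is unjustified and the argument you give for it fails. You claim that for \(K\) large
\[
\sum_{V\in\SkeletonSet(0)}\sfe^{t\cdot v_M}\,Q_K(V)\,\sfe^{\frac12\surcharge_t(V)}\,\mathds{1}_{\{v_M\in\Delta_K(x)\}}\leq C
\]
uniformly in \(x\), and you propose to prove it by summing step by step, keeping ``enough of the pure length decay \(\sfe^{-\nu(v_k-v_{k-1})}\leq\sfe^{-K}\)'' to beat the entropy, ``exactly as in Lemma~\ref{lem:skeleton_size_upper_bound}''. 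This double-counts the energy: the identity \(\sfe^{t\cdot y}\sfe^{-\nu(y)}=\sfe^{-\surcharge_t(y)}\) means that once the tilt has been distributed over the steps, the whole length decay has been spent, and what remains per step is only \(\sfe^{-\surcharge_t(y_k)}\) (times the error \(\sfe^{\sfo_K(1)\nu(y_k)}\), whose sign is not controlled). For steps \(y\) in or near a direction \(\nu\)-dual to \(t\), the surcharge is zero (or \(\sfO(1)\)), so after tilting there is no per-step decay at all: \(\sum_{y\in\Delta_K^{\comp}}\sfe^{-\frac12\surcharge_t(y)}=\infty\), the entropy of long skeletons built from near-zero-surcharge steps (of order \(\sfe^{c\log K\,\nu(x)/K}\) even with the endpoint pinned) is not beaten by anything, and the accumulated error \(\sfe^{\sfo_K(1)\sum_k\nu(y_k)}\) on such skeletons is \(\sfe^{\sfo_K(1)\nu(x)}\), not \(\sfO(1)\) for fixed \(K\). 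The mechanism of Lemma~\ref{lem:skeleton_size_upper_bound} is unavailable precisely because there the factor \(\sfe^{t\cdot x}\) is absent and the full \(\sfe^{-\nu(y)}\leq\sfe^{-K}\) per step survives; here it does not.

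The paper's proof repairs exactly this point: it sacrifices a small fraction \(a\) of the total surcharge and uses \eqref{eq:sum_of_surcharges_LB}, \(\sum_k\surcharge_t(y_k)\geq\sum_k\nu(y_k)-\nu\bigl(\sum_k y_k\bigr)\), to regenerate a genuine per-step decay \(\sfe^{-(a+\sfo_K(1))\nu(y_k)}\) --- which makes \(\sum_{y\notin\Delta_K}\sfe^{-(a+\sfo_K(1))\nu(y)}\leq\sfe^{-c_\epsilon K}\) small and thereby controls both the entropy and the \(\sfo_K(1)\) errors --- at the price of a global factor \(\sfe^{a\nu(x)+aK}\). That price is then beaten by the factor \(\sfe^{-\epsilon(1-a)\nu(x)}\) extracted from the surcharge constraint, choosing \(a=\epsilon/(3+3\epsilon)\) so that \(\epsilon(1-a)-a\geq\epsilon/2\). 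If you want to keep your packaging, the statement you can actually prove is a bound of the form \(\sfe^{a\nu(x)+\sfO(K)}\) (with the per-step decay borrowed from the surcharge as above), not a constant \(C\); the constant version cannot be extracted from the energy bound, and your argument for it is where the proof breaks.
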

\begin{proof}
	We use the energy bound~\eqref{eq:skeleton_energy_bound} and the entropy bound~\eqref{eq:EntropyBound} with \(R=0\) to conclude that, for any \(0<a<1\),
	\begin{multline*}
		\sfe^{t\cdot x}\sum_{V\in\SkeletonSet} Q_K(V) \mathds{1}_{\{\surcharge_t(V)\geq \epsilon \nu(x)\}} \mathds{1}_{\{x\in \Delta_K(v_M)\}} \\
		\begin{split}
			&\leq \sfe^{K}\sum_{V\in\SkeletonSet} \sfe^{-\surcharge_t(V)} \prod_{k=1}^{M} \sfe^{\sfo_K(1) \nu(v_{k}-v_{k-1})} \mathds{1}_{\{\surcharge_t(V)\geq \epsilon \nu(x)\}} \mathds{1}_{\{x\in \Delta_K(v_M)\}}\\
			&\leq \sfe^{K}\sfe^{-\epsilon(1-a)\nu(x)} \sum_{M\geq 1} \sum_{\substack{y_1, \dots, y_M\in \Delta_K^c\\ \sum y_i \in \Delta_K(x)}} \prod_{k=1}^{M} \sfe^{\sfo_K(1) \nu(y_k)- a\surcharge_t(y_k)}\\
			&\leq \sfe^{K}\sfe^{-\epsilon(1-a)\nu(x)} \sum_{M\geq 1} \sum_{\substack{y_1, \dots, y_M\in \Delta_K^c\\ \sum y_i \in \Delta_K(x)}} \exp\biggl(-(a+\sfo_K(1)) \sum_{k=1}^{M} \nu(y_k) + a\nu\Bigl(\sum_{k=1}^M y_k\Bigr)\biggr)\\
			&\leq \sfe^{(a+1)K}\sfe^{-(\epsilon(1-a)-a)\nu(x)} \sum_{M\geq 1} \sum_{\substack{y_1, \dots, y_M\in \Delta_K^c\\ \sum y_i \in \Delta_K(x)}} \exp\Bigl(-(a+\sfo_K(1)) \sum_{k=1}^{M} \nu(y_k) \Bigr),
		\end{split}
	\end{multline*}
	where in the first line we used \(\sfe^{t\cdot (x-v_M)} \leq \sfe^K\) (since \(x\in \Delta_K(v_M)\)) and~\eqref{eq:SurchargeIncrements} and we used ~\eqref{eq:sum_of_surcharges_LB} in the fourth line. 
	Choosing \(a=\frac{\epsilon}{3+3\epsilon}\) and \(K_0(\epsilon)\) large enough gives the desired claim, since
	\[
		\sum_{y\notin\Delta_K}\sfe^{-(a+\sfo_K(1))\nu(y)} \leq \sum_{y\notin \Delta_K} \sfe^{-\epsilon \nu(y)/(4+4\epsilon)}\leq \sfe^{-c_{\epsilon} K},
	\]
	provided \(K\) is large enough.
\end{proof}

\pagebreak 

\subsection{Skeletons do not contain many long edges}\label{subsec:not_many_edges}

Up to now, the analysis was almost the same as in~\cite{Campanino+Ioffe+Velenik-2003}. But we only proved facts which are compatible with the ``one-jump'' scenario of the regime where the massgap assumption fails, as mentioned in the introduction. Our assumption NSA becomes crucial here to ensure that a typical skeleton does not possess a giant edge (or several long edges, see~\eqref{eq:lem_long_edge_contrib:eq1} below). Without NSA, our analysis does not exclude the possibility of a typical skeleton containing one (or several) giant edges in a direction where saturation occurs (for example, in the direction \(s'\) in Fig.~\ref{fig:noNSA}). We turn to the main novelty of the skeleton analysis.

For \(y\notin\Delta_{K}\), define
\begin{gather*}
	\rmr(y) = \min\setof{ r\in \Z_{>0}}{y\in \Delta_{K+r}\setminus\Delta_{K+r-1}},\\
	\rmr(V) = \sum_{k=1}^{M} \rmr(v_k-v_{k-1}).
\end{gather*}

\begin{lemma}
	\label{lem:long_edges_have_small_contribution}
	For any nice \(U\subset\calT\) and \(\epsilon>0\), there exist \(a>0\) and \(K_0 \geq 0\) such that, for any \(K\geq K_0, t\in U\) and \(x\in\Zd\) with \(\norm{x}\) large enough,
	\[
		\sfe^{t\cdot x} \sum_{V\in\SkeletonSet} Q_K(V) \mathds{1}_{\{\rmr(V)\geq \epsilon\nu(x)\}} \mathds{1}_{\{v_M\in \Delta_K(x)\}} \leq \sfe^{-a\nu(x)}.
	\]
\end{lemma}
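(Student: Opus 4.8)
The plan is to combine the energy bound \eqref{eq:skeleton_energy_bound} with the entropy bound \eqref{eq:EntropyBound}, now organized by the total ``radius excess'' $\rmr(V)$ rather than by the surcharge, and to exploit the mass gap for long edges provided by Lemma~\ref{lem:mass_gap_long_edges}. The crucial point — and the reason NSA is needed here — is that a long edge $(u,v)$ with $v-u=y$ contributes at most $F(y)$, and the first term in the minimum defining $F(y)$ is $\sum_{z\in\Delta_K}\sfe^{-\nu(z)(1+\sfo(1))}CJ_{y-z}$; bounding $J_{y-z}$ via Lemma~\ref{lem:mass_gap_long_edges} gives, after multiplying by the exponential tilt $\sfe^{t\cdot y}$, a factor that decays like $\sfe^{-(1+m)\nu(y)(1+\sfo(1))}$ up to a $\Delta_K$-volume correction. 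That extra mass $m>0$ per long edge is what lets us pay for both the entropy of long edges and a net exponential gain in $\nu(x)$.

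First I would fix a nice $U\subset\calT$, apply Lemma~\ref{lem:mass_gap_long_edges} to get $m>0$, $C<\infty$ uniform over $t\in U$, and record the consequence that, for $y\notin\Delta_K$ and $t\in U$,
\[
\sfe^{t\cdot y}F(y)\leq C\,\sfe^K\,(K+\rmr(y))^{d-1}\,\sfe^{-(1+m)\nu(y)(1+\sfo_{\norm{y}}(1))}
\leq C\,\sfe^{K}\,\sfe^{-(1+m/2)(K+\rmr(y))},
\]
where I used $\nu(y)\geq K+\rmr(y)-1$ (up to constants, since $y\in\Delta_{K+\rmr(y)}\setminus\Delta_{K+\rmr(y)-1}$) and absorbed the polynomial and the $\sfe^K$ into the exponential by taking $K$ large, together with $\sfe^{t\cdot z}\leq \sfe^{K}$ for $z\in\Delta_K$. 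For short edges I keep the bare bound $\sfe^{-\nu(v-u)(1+\sfo_K(1))}$ and additionally use \eqref{eq:SurchargeIncrements} to rewrite $\sfe^{t\cdot x}\prod\sfe^{-\nu(\text{step})}$ as a product of $\sfe^{-\surcharge_t(\text{step})}$ factors; since surcharges are nonnegative this only helps. Then I split the sum over skeletons according to the set $I$ of long steps and the radii $(r_i)_{i\in I}$, apply \eqref{eq:EntropyBound}, and on the event $\{\rmr(V)\geq\epsilon\nu(x)\}$ peel off a factor $\sfe^{-(m/4)\epsilon\nu(x)}$ from the product of long-edge contributions, leaving behind a convergent geometric-type series.

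Concretely, after these substitutions the left-hand side is bounded by
\[
\sfe^{K}\,\sfe^{-(m\epsilon/4)\nu(x)}\sum_{M\geq1}\Bigl(\underbrace{cR(K+R)^{d-1}\sfe^{-K(1+\sfo_K(1))}}_{\text{short step}}+\underbrace{\sum_{r\geq1}c(K+R+r)^{d-1}C\sfe^{K}\sfe^{-(1+m/4)(K+r)}}_{\text{long step}}\Bigr)^{\!M},
\]
and both the short-step term and the long-step series are $\leq\sfe^{-K(1+\sfo_K(1))}<1$ once $K\geq K_0$, so the sum over $M$ converges to something of order $1$; choosing $a=m\epsilon/8$ and $K_0$ large enough (and $\norm{x}$ large so the $\sfo$-corrections are harmless) yields the claim. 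One technical wrinkle is the constraint $\sum(v_k-v_{k-1})\in\Delta_K(x)$, which I would simply drop after extracting the surcharge/radius gain, as in the proof of Lemma~\ref{lem:skeleton_small_surcharge}; another is uniformity of $m$ and $C$ over $t\in U$, which is exactly what Lemma~\ref{lem:mass_gap_long_edges} guarantees. The main obstacle is getting the bookkeeping of the two types of steps to produce a single summable geometric series with ratio $<1$ uniformly in $K\geq K_0$; once the per-long-edge mass gap $m$ is in hand this is essentially forced, but it does require being careful that the $\sfe^{K}$ factors coming from the $\Delta_K$-tilts and the $\Delta_K$-sum inside $F$ are beaten by the $\sfe^{-(1+m/4)K}$ decay, which is why we need $m>0$ and not merely $\nu<\eta$ in a single direction.
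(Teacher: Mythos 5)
There is a genuine gap, and it sits exactly where the difficulty of this lemma lies. Your key per-long-edge estimate is not correct: Lemma~\ref{lem:mass_gap_long_edges} gives \(\sfe^{t\cdot(y-z)}J_{y-z}\leq C\sfe^{-m\nu(y-z)}\), so after writing \(\sfe^{t\cdot y}F(y)\leq\sum_{z\in\Delta_K}\sfe^{-\surcharge_t(z)+\sfo(\nu(z))}\,C\,\sfe^{t\cdot(y-z)}J_{y-z}\) all you get is \(\sfe^{t\cdot y}F(y)\lesssim \sfe^{\sfo(K)}\,\sfe^{-m(\nu(y)-K)}\), i.e.\ roughly \(\sfe^{\sfo(K)}\sfe^{-m\,\rmr(y)}\) with \emph{no} net decay in \(K\). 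Your claimed bound \(\sfe^{t\cdot y}F(y)\leq C\sfe^{K}(K+\rmr(y))^{d-1}\sfe^{-(1+m)\nu(y)(1+\sfo(1))}\) would require \(J_y\lesssim\sfe^{-(1+m)\nu(y)-t\cdot y}\); the \((1+m)\)-rate only holds in directions \(\nu\)-dual to \(t\) (that is the heuristic remark after Lemma~\ref{lem:mass_gap_long_edges}), not for general \(y\), and after re-tilting by \(\sfe^{t\cdot y}\) even in those directions the rate drops back to \(m\nu(y)\). Consequently the long-step series in your final display is not \(\sum_r c(K+R+r)^{d-1}C\sfe^K\sfe^{-(1+m/4)(K+r)}\) but of order \(\sfe^{\sfo(K)}\), and the geometric ratio \(<1\) you need never appears. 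The paper recovers a per-long-edge gain in \(K\) only by taking the short/long cutoff \(R=\epsilon K/6\), so that every long edge has \(\rmr(y)>\epsilon K/6\) and half of the factor \(\sfe^{-m\rmr(y)/2}\) can be converted into \(\sfe^{-m\epsilon K/18}\) (inequality~\eqref{eq:lem_long_edge_contrib:eq1}); you never impose such a relation between \(R\) and \(K\).

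Two further bookkeeping problems compound this. First, the event \(\{\rmr(V)\geq\epsilon\nu(x)\}\) can be realized entirely by \emph{short} steps (each step with \(0<\rmr\leq R\) contributes to \(\rmr(V)\)), in which case there is no long-edge product to peel \(\sfe^{-(m/4)\epsilon\nu(x)}\) from; the paper handles this by splitting \(\rmr=\rmr_-+\rmr_+\), noting \(\rmr_-(V)\leq\frac{\epsilon}{6}K|V|\), and invoking Lemma~\ref{lem:skeleton_size_upper_bound}. Second, your accounting of the tilt is inconsistent: if you convert \(\sfe^{t\cdot x}\prod_k\sfe^{-\nu(v_k-v_{k-1})}\) into \(\prod_k\sfe^{-\surcharge_t(v_k-v_{k-1})}\) via~\eqref{eq:SurchargeIncrements}, then a short step only carries \(\sfe^{-\surcharge_t}\leq1\) (times entropy \(\sfe^{\sfo(K)}\)), not \(\sfe^{-K(1+\sfo_K(1))}\); you cannot simultaneously spend the tilt and keep the bare \(\sfe^{-K}\) decay per short step. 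This is why the paper does not attempt a convergent geometric series at all: it first restricts to \(|V|\leq 2\nu(x)/K\) (Lemma~\ref{lem:skeleton_size_upper_bound}), accepts an uncontrolled factor \(\sfe^{\sfo(K)}\) per step (hence \(\sfe^{\sfo(1)\nu(x)}\) in total), and wins because the globally extracted factor \(\sfe^{-m\epsilon\nu(x)/8}\) dominates. With these three ingredients (cutoff \(R\propto K\), the \(\rmr_-/\rmr_+\) split plus the size bounds, and the global rather than per-step extraction) your outline can be repaired, but as written the central estimate and the summability claim both fail.
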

\begin{proof}
	Fix \(\epsilon>0\). We use \(R= \epsilon K/6\). We will treat separately the contribution to \(\rmr(V)\) of small steps and of large steps. Let us introduce
	\[
		\rmr_-(V) = \sum_{(u,v)\in\shortEdge(V)} \rmr(v-u),\qquad \rmr_+(V) = \sum_{(u,v)\in\longEdge(V)} \rmr(v-u).
	\]
	As \(\rmr(V) = \rmr_-(V)+\rmr_+(V)\), \(\rmr(V)\geq \epsilon \nu(x)\) implies that \(\rmr_*(V)\geq \epsilon \nu(x)/2\) for at least one of \(*\in\{-,+\}\). Moreover, we have
	\[
		\rmr_-(V)\leq \frac{\epsilon}{6} K |\shortEdge(V)| \leq \frac{\epsilon}{6} K |V|.
	\]
	Therefore, from our control on \(|V|\) (Lemma~\ref{lem:skeleton_size_upper_bound}), for any \(t\in\partial\calW_{\nu}\) and \(x\in \Zd\) with \(\norm{x}\) large,
	\[
		\sfe^{t\cdot x} \sum_{V\in\SkeletonSet} Q_K(V) \mathds{1}_{\{\rmr_-(V)\geq \epsilon \nu(x)/2\}}
		\leq
		\sfe^{\nu(x)} \sum_{V\in\SkeletonSet} Q_K(V) \mathds{1}_{\{|V|\geq 3 \nu(x)/K\}}
		\leq
		\sfe^{-\nu(x)},
	\]
	once \(K\geq K_0\) for some \(K_0\) large enough.
	
	We now turn to the case where the contribution comes from \(\rmr_+\). Again from our control on \(|V|\), we can suppose \(|V|\leq 2\nu(x)/K\) up to an error \(\sfe^{-\nu(x)/2}\). We will use an improved bound on \(F\): we claim that, for any \(t\in U\), \(K\geq K_0(\epsilon,U)\) and \(y\notin \Delta_{(1+\epsilon/6)K}\), one has the bound,
	\begin{equation}
		\label{eq:lem_long_edge_contrib:eq1}
		\sfe^{t\cdot y} F(y) \leq \sfe^{-m \rmr(y)/2 } \sfe^{-m\epsilon K/18},
	\end{equation}
	where \(m=m(U)\) is the mass gap given by Lemma~\ref{lem:mass_gap_long_edges}. Indeed, using the definition of \(F\) (see below~\eqref{eq:skeleton_energy_bound}),
	\begin{align*}
		\sfe^{t\cdot y} F(y)
		&\leq \sum_{x\in\Delta_K} \sfe^{-\surcharge_t(x) + \sfo(\nu(x))} CJ_{y-x}\sfe^{t\cdot (y-x)}\\
		&\leq C\sum_{x\in\Delta_K} \sfe^{-\surcharge_t(x) + \sfo(\nu(x))} \sfe^{-m\nu(y-x)}\\
		&\leq C \sfe^{\sfo(K)} \sfe^{-m\epsilon K/12} \sfe^{-m\rmr(y)/2}
		\leq \sfe^{-m\epsilon K/18} \sfe^{-m\rmr(y)/2},
	\end{align*}
	where we used Lemma~\ref{lem:mass_gap_long_edges} in the second line and \(\surcharge_t\geq 0\), \(y\notin \Delta_{(1+\epsilon/6)K}\) (recall \(R=\epsilon K/6\)) and \(K\geq K_0\) in the third. Equipped with the bound on \(F\) and using the energy bound~\eqref{eq:skeleton_energy_bound}, one obtains
\pagebreak 
	\begin{multline*}
		\sfe^{t\cdot x}\sum_{V\in\SkeletonSet} Q_K(V) \mathds{1}_{\{|V|\leq 2\nu(x)/K\}} \mathds{1}_{\{\rmr_+(V)\geq \epsilon\nu(x)/2\}} \mathds{1}_{\{v_M\in \Delta_K(x)\}} \\
		\begin{split}
			&\leq \sum_{M=1}^{2 \nu(x)/K} \sum_{I\subset \{1, \dots, M\}} \sum_{y_i, i\notin I} \sum_{z_i, i\in I} \prod_{i\notin I} \sfe^{-\surcharge_t(y_i)+ \sfo(K)} \prod_{i\in I} \sfe^{t\cdot z_i}F(z_i) \mathds{1}_{\left\{\sum_{i\in I} \rmr(z_i) \geq \epsilon\nu(x)/2\right\}} \\
			&\leq \sum_{M=1}^{2 \nu(x)/K} \sum_{I\subset \{1, \dots, M\}} \sum_{y_i, i\notin I} \sum_{z_i, i\in I} \prod_{i\notin I} \sfe^{-\surcharge_t(y_i)+ \sfo(K)} \prod_{i\in I} \sfe^{-m\varepsilon K/18 - m\rmr(z_{i})/2} \mathds{1}_{\left\{\sum_{i\in I} \rmr(z_i) \geq \epsilon\nu(x)/2\right\}} \\
			&\leq \sfe^{-m \epsilon\nu(x)/8 } \sum_{M=1}^{2 \nu(x)/K} \sum_{I\subset \{1, \dots, M\}} \sum_{z_i, i\in I} \sfe^{\sfo(K)(M-|I|)} \sfe^{-m\epsilon K|I|/18} \prod_{i\in I} \sfe^{-m\rmr(z_i)/4}\\
			&\leq \sfe^{-m \epsilon\nu(x)/8 } \sum_{M=1}^{2 \nu(x)/K} \sum_{I\subset \{1, \dots, M\}} \sfe^{\sfo(K)M} \sfe^{-m\epsilon K|I|/18},
		\end{split}
	\end{multline*}
	where the sums are over \(y_i\in \Delta_{K+R}\setminus \Delta_K\) and \(z_i\in\Zd\setminus\Delta_{K+R}\), we used~\eqref{eq:SurchargeIncrements} in the second line and we used~\eqref{eq:lem_long_edge_contrib:eq1} in the third line and \(\sum_{i\in I} \rmr(z_i) \geq \epsilon\nu(x)/2\) in the fourth line. Taking \(K_0(\epsilon,m)\) large enough implies the desired claim with \(a=\min(1/2,m\epsilon/9)\).
\end{proof}
From Lemma~\ref{lem:long_edges_have_small_contribution}, one can deduce a converse to Lemma~\ref{lem:skeleton_size_upper_bound}: a lower bound on the ``typical'' value of \(|V|\).
\begin{lemma}
	\label{lem:skeleton_size_lower_bound}
	For any nice \(U\subset \calT\) and \(1>\epsilon>0\), there exist \(a>0\) and \(K_0 \geq 0\) such that, for any \(K\geq K_0, t\in U\) and \(x\in\Zd\) with \(\norm{x}\) large enough,
	\[
		\sfe^{t\cdot x} \sum_{V\in\SkeletonSet} Q_K(V) \mathds{1}_{\{|V|\leq (1- \epsilon)\nu(x)/K\}} \mathds{1}_{\{v_M\in \Delta_K(x)\}} \leq \sfe^{-a\nu(x)}.
	\]
\end{lemma}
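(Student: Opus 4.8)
The plan is to reduce this to Lemma~\ref{lem:long_edges_have_small_contribution} via a purely deterministic geometric inequality: a skeleton joining $0$ to a $\Delta_K$-neighborhood of $x$ that has few vertices is forced to have a large value of $\rmr(V)$. First I would write, for a skeleton $V=(v_0=0,v_1,\dots,v_M)$ with $v_M\in\Delta_K(x)$, the increments $y_k=v_k-v_{k-1}$, so that $\sum_{k=1}^M y_k=v_M$. Using the triangle inequality for the norm $\nu$ together with $\nu(x-v_M)\le K$ (which is what $v_M\in\Delta_K(x)$ means), this gives the lower bound
\[
	\sum_{k=1}^M \nu(y_k) \ge \nu(v_M) \ge \nu(x)-K .
\]

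Next I would bound each step from above. Every skeleton step satisfies $y_k\notin\Delta_K$, so $\rmr(y_k)$ is well defined, and by its very definition $y_k\in\Delta_{K+\rmr(y_k)}$, i.e.\ $\nu(y_k)\le K+\rmr(y_k)$. Summing over $k$ yields $\sum_{k=1}^M \nu(y_k)\le MK+\rmr(V)$. Combining with the previous display gives the deterministic estimate $\rmr(V)\ge \nu(x)-(M+1)K$. In particular, on the event $\{|V|\le (1-\epsilon)\nu(x)/K\}$ one obtains $\rmr(V)\ge \epsilon\nu(x)-K$, which is at least $\epsilon\nu(x)/2$ as soon as $\nu(x)\ge 2K/\epsilon$, that is, for $\norm{x}$ large enough (the threshold being allowed to depend on $K$ and $\epsilon$, as in the statement).

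Consequently, for $\norm{x}$ large,
\[
	\mathds{1}_{\{|V|\le (1-\epsilon)\nu(x)/K\}}\,\mathds{1}_{\{v_M\in\Delta_K(x)\}}
	\le
	\mathds{1}_{\{\rmr(V)\ge \epsilon\nu(x)/2\}}\,\mathds{1}_{\{v_M\in\Delta_K(x)\}},
\]
and I would conclude by invoking Lemma~\ref{lem:long_edges_have_small_contribution} with $\epsilon/2$ in place of $\epsilon$, which supplies the constants $a>0$ and $K_0\ge 0$ uniformly over $t\in U$.

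I do not expect a serious obstacle here: the argument is elementary once one has Lemma~\ref{lem:long_edges_have_small_contribution}. The only points requiring a little care are the lattice discretizations hidden in the $\Delta_K$ notation — harmless, since $\nu$ is a genuine norm on $\Rd$ and the triangle inequality is applied only at lattice points — and the order of quantifiers. The conceptual point worth stressing is that this lemma is exactly a converse to Lemma~\ref{lem:skeleton_size_upper_bound}, and it relies crucially on Lemma~\ref{lem:long_edges_have_small_contribution} — hence, ultimately, on NSA — to exclude the competing ``few vertices but one giant edge'' scenario.
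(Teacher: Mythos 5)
Your proposal is correct and is essentially the paper's own argument: the paper's proof is exactly the deterministic inequality $MK+\rmr(V)\geq\nu(x)-K$ (from $\nu(y_k)\leq K+\rmr(y_k)$ and the triangle inequality), which on $\{|V|\leq(1-\epsilon)\nu(x)/K\}$ forces $\rmr(V)\geq\epsilon\nu(x)-K$, followed by an appeal to Lemma~\ref{lem:long_edges_have_small_contribution}. Your handling of the quantifiers and of the absorption of the $-K$ term for $\norm{x}$ large is also the intended reading.
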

\begin{proof}
	For any skeleton \(V = (0, \dots, v_M)\) with \(v_M\in\Delta_K(x)\), one has
	\[
		MK + \sum_{k=1}^{M} \rmr(v_k-v_{k-1}) \geq \nu(x)-K.
	\]
	So, if \(M < (1-\epsilon)\nu(x)/K\), \(\rmr(V)\geq \epsilon \nu(x)-K\).
\end{proof}

\subsection{Skeletons contain many cone-points}

We can now turn to the proof of Theorem~\ref{thm:skeleton_CP}. As in~\cite{Campanino+Ioffe+Velenik-2003}, the idea is to prove a \emph{deterministic} statement about skeletons. Namely, that if a skeleton \(V\) has few cone-points, then it must have either a large surcharge or small size \(|V|\).

Let \(\delta>0\) and \(t\in\partial\calW_{\nu}\). We say that \(v_k\in V\) is \emph{\((t,\delta)\)-forward-bad} if there is \(v_{l}\in V\) with \(l>k\) and \(v_l\notin\fcone_{t,\delta}(v_k)\). Similarly, \(v_k\) is \emph{\((t,\delta)\)-backward-bad} if there is \(v_{l}\in V\) with \(l<k\) and \(v_l\notin \bcone_{t,\delta}(v_k)\). Define
\begin{gather*}
	\Bad_{t,\delta}^{\blacktriangleleft}(V) = \bsetof{k\in\{0, 1, \dots, M}{v_k \text{ is }(t,\delta)\text{-forward-bad}},\\
	\Bad_{t,\delta}^{\blacktriangleright}(V) = \bsetof{k\in\{0, 1, \dots, M}{v_k \text{ is }(t,\delta)\text{-backward-bad}}.
\end{gather*}
For \(v_k\in V\), one has \(v_k\notin\CPts_{t,\delta}(V)\) if and only if \(k\in\Bad_{t,\delta}^{\blacktriangleleft}(V) \cup \Bad_{t,\delta}^{\blacktriangleright}(V) = \Bad_{t,\delta}(V)\).
\begin{lemma}
	\label{lem:skeleton_CP_controlled_by_surcharge}
	Let \(1/2>\delta>0\), \(\epsilon\in(0,1)\) and \(t\in\partial\calW_{\nu}\). Let \(K\geq 0\). Let \(x\in \fcone_{t,\delta}\) with \(\norm{x}\) large. Let \(V=(v_0=0, v_1, \dots, v_M)\) be a \(K\)-skeleton with \(v_M\in\Delta_K(x)\). Then, if \(|\CPts_{t,\delta}(V)|\leq (1-\epsilon)\nu(x)/K\), at least one of the following holds:
	\begin{itemize}
		\item \(|V|< (1-\frac{\epsilon}{2}) \nu(x)/K\),
		\item \(\surcharge_{t}(V)\geq \frac{\delta \epsilon}{32} \nu(x)\).
	\end{itemize}
\end{lemma}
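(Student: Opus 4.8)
\emph{Proof plan.} The plan is to prove this deterministic statement by contraposition: assuming both alternatives fail, decompose $V$ along its cone‑points into irreducible pieces and charge a definite amount of surcharge to each ``bad'' vertex.

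First I would set up the contrapositive. Assume $|V|=M\ge(1-\tfrac\epsilon2)\nu(x)/K$ and $\surcharge_t(V)<\tfrac{\delta\epsilon}{32}\nu(x)=:\eta$, and aim to show $|\CPts_{t,\delta}(V)|>(1-\epsilon)\nu(x)/K$. Since, as recalled just before the lemma, $v_k\notin\CPts_{t,\delta}(V)$ exactly when $k\in\Bad_{t,\delta}(V)$, we have $|\CPts_{t,\delta}(V)|=M+1-|\Bad_{t,\delta}(V)|$, so it suffices to prove $|\Bad_{t,\delta}(V)|<\tfrac\epsilon2\nu(x)/K$.

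Next comes the decomposition. Let $0\le k_0<k_1<\dots<k_N\le M$ list the indices of the cone‑points of $V$ (with $N=-1$ if there are none). Cut $V$ at these indices into the pieces $W_i=(v_{k_i},\dots,v_{k_{i+1}})$, $0\le i<N$, together with an initial piece $W_-=(v_0,\dots,v_{k_0})$ and a final piece $W_+=(v_{k_N},\dots,v_M)$ (if $N=-1$ take $W_-=W_+=V$). These pieces partition the steps of $V$, so $\surcharge_t(V)=\surcharge_t(W_-)+\sum_{i=0}^{N-1}\surcharge_t(W_i)+\surcharge_t(W_+)$, and the bad indices are precisely the $|W_i|-1$ interior vertices of the $W_i$, together with all of $W_-$ but its last vertex and all of $W_+$ but its first, so that $|\Bad_{t,\delta}(V)|=|W_-|+|W_+|+\sum_{i=0}^{N-1}(|W_i|-1)$. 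The key structural remark is that a bad vertex $v_j$ of a piece $W_i$ has all of its forward/backward witnesses inside $W_i$: a forward witness $v_l$ with $l>k_{i+1}$ would satisfy $v_l-v_{k_{i+1}}\in\fcone_{t,\delta}$ (cone‑point property of $v_{k_{i+1}}$), while $v_j\in\bcone_{t,\delta}(v_{k_{i+1}})$ gives $v_{k_{i+1}}-v_j\in\fcone_{t,\delta}$, and since $\fcone_{t,\delta}$ is a convex cone we would get $v_l-v_j\in\fcone_{t,\delta}$, i.e.\ $v_l$ is not a valid witness; symmetrically for backward witnesses and for the cone‑point $v_{k_i}$. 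Hence each $W_i$ is a $(t,\delta)$-irreducible skeleton, and the same argument shows the $W_\pm$ are ``half‑irreducible'' (only the shared endpoint is a cone‑point).

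The crux is then a per‑piece surcharge bound: there is a universal $c>0$ with $c>\tfrac1{16}$ such that any $(t,\delta)$-irreducible skeleton $W=(u_0,\dots,u_m)$ satisfies $\surcharge_t(W)\ge c\,\delta K\,(m-1)$, and any half‑irreducible $W_\pm$ satisfies $\surcharge_t(W_\pm)\ge c\,\delta K\,|W_\pm|$. The basic mechanism: if $u_j$ is forward‑bad, let $l>j$ be minimal with $u_l\notin\fcone_{t,\delta}(u_j)$; then $\surcharge_t(u_l-u_j)\ge\delta\nu(u_l-u_j)\ge\delta\,t\cdot(u_l-u_j)$, while from $\nu(u_k-u_{k-1})>K$ for every skeleton step and the triangle inequality for $\surcharge_t$ one gets $t\cdot(u_l-u_j)>(l-j)K-S$ and $\surcharge_t(u_l-u_j)\le S$, where $S:=\sum_{k=j+1}^{l}\surcharge_t(u_k-u_{k-1})$; combining these yields $S>\tfrac{\delta}{1+\delta}(l-j)K$, i.e.\ the window of steps $\{j+1,\dots,l\}$ attached to $u_j$ carries surcharge at least a fixed multiple of $\delta K$ times its length, and in particular $\ge\tfrac{\delta K}{2}$. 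One then upgrades this to the stated proportionality in $m-1$, e.g.\ by induction on $m$ peeling off the window of the first interior bad vertex, or by greedily extracting a subfamily of such windows with controlled overlap, keeping track of constants; a direct check handles $m\le2$. Granting this, summing over pieces gives
\[
\surcharge_t(V)\;\ge\;c\,\delta K\Bigl(|W_-|+|W_+|+\sum_{i=0}^{N-1}(|W_i|-1)\Bigr)\;=\;c\,\delta K\,|\Bad_{t,\delta}(V)|,
\]
so $|\Bad_{t,\delta}(V)|\le\surcharge_t(V)/(c\,\delta K)<\eta/(c\,\delta K)=\epsilon\,\nu(x)/(32cK)<\tfrac\epsilon2\,\nu(x)/K$, contradicting $|\CPts_{t,\delta}(V)|\le(1-\epsilon)\nu(x)/K$ and finishing the proof.

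I expect the main obstacle to be precisely the upgrade step above: turning ``one excursion out of a cone costs surcharge $\gtrsim\delta K$'' into ``surcharge $\gtrsim\delta K\times(\#\text{bad vertices})$''. In the finite‑range case this is essentially automatic because steps have bounded length; here the new input is that an excursion making many consecutive skeleton vertices bad must itself cost surcharge proportional to the number of vertices it spoils, which is forced by the facts that consecutive skeleton vertices are at $\nu$-distance $>K$ and that an irreducible piece is confined to the diamond spanned by its endpoints. Getting the bookkeeping of overlapping windows right, with an explicit enough constant, is where the care lies.
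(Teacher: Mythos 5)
Your strategy is sound and, at its core, it is the same mechanism as the paper's, packaged differently. The paper never cuts \(V\) at its cone-points: it runs greedy forward and backward ``bad excursion'' extractions (Algorithms 4 and 5) on the whole skeleton, obtaining intervals \([i_k^{\blacktriangleleft},j_k^{\blacktriangleleft}]\) (resp.\ \([j_k^{\blacktriangleright},i_k^{\blacktriangleright}]\)) whose step-windows are disjoint and whose union \(\calM\supset\Bad_{t,\delta}(V)\), and proves the per-interval bound \(\sum_{i=i_k^{\blacktriangleleft}}^{j_k^{\blacktriangleleft}-1}\surcharge_t(v_{i+1}-v_i)\ge\frac{\delta}{8}K(j_k^{\blacktriangleleft}-i_k^{\blacktriangleleft})\) by a two-case analysis in which the increments are reordered into in-cone and out-of-cone families. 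Your self-referential inequality \(S\ge\surcharge_t(u_l-u_j)\ge\delta\,t\cdot(u_l-u_j)\ge\delta\bigl((l-j)K-S\bigr)\), hence \(S\ge\frac{\delta}{1+\delta}(l-j)K\), is a cleaner substitute for that case analysis (and gives a better constant); this is the genuinely different ingredient of your write-up. On the other hand, your preliminary decomposition into irreducible and half-irreducible pieces is superfluous: the ``upgrade'' you flag as the main obstacle is exactly the greedy extraction of disjoint windows, which can be run on all of \(V\) at once, separately forward and backward, giving \(\surcharge_t(V)\ge\frac{\delta}{2(1+\delta)}K\,|\Bad^{\blacktriangleleft}_{t,\delta}(V)|\) (each window of length \(\ell\) covers at most \(\ell+1\le 2\ell\) bad indices), the same backward, and hence \(\surcharge_t(V)\ge\frac{\delta}{6}K\,|\Bad_{t,\delta}(V)|\); this closes your contrapositive exactly as you intend. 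The one unproved assertion in your sketch is the universal constant \(c>1/16\): as stated it is a claim, not a proof, but the bookkeeping above delivers \(c=1/6\), so it is a matter of writing it out rather than a gap in the idea (and note that the precise constant is not critical downstream — the paper's own proof in fact yields \(\frac{\delta\epsilon}{64}\nu(x)\) in the second alternative).
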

\begin{proof}
	We first extract the ``bad excursions'' from a skeleton \(V\) containing bad points. For a skeleton \(V=(v_0, \dots, v_M)\), the sequence \(\bigl((i_1^{\blacktriangleleft}, j_1^{\blacktriangleleft}), \dots, (i_m^{\blacktriangleleft}, j_m^{\blacktriangleleft})\bigr)\) is constructed using the following algorithm.
	
	\begin{algorithm}[H]
		\label{algo:skeleton_forward_bad_extraction}
		\caption{Forward bad extraction algorithm; see Fig.~\ref{fig:ExtractionForwardBad}}
		\KwInput{a skeleton \(V=(v_0, \dots, v_M)\)}
		Set \(k = 1\);\\
		Define \(i_1^{\blacktriangleleft} = \min\setof{i\geq 0}{i\in \Bad_{t,\delta}^{\blacktriangleleft}(V)}\);\\
		Define \(j_1^{\blacktriangleleft} = \min\setof{j>i_1^{\blacktriangleleft}}{v_j \notin \fcone_{t,\delta}(v_{i_1^{\blacktriangleleft}})}\);\\
		\While{\(\exists i\geq j_{k}^{\blacktriangleleft}\) such that \(i\in \Bad_{t,\delta}^{\blacktriangleleft}(V)\)}{
			Update \(k=k+1\);\\
			Define \(i_k^{\blacktriangleleft} = \min\setof{i\geq j_{k-1}^{\blacktriangleleft}}{i\in \Bad_{t,\delta}^{\blacktriangleleft}(V)}\);\\
			Define \(j_k^{\blacktriangleleft} = \min\setof{j>i_k^{\blacktriangleleft}}{v_j \notin \fcone_{t,\delta}(v_{i_k^{\blacktriangleleft}})}\);\\
		}
		\KwOutput{the sequence \(\bigl((i_1^{\blacktriangleleft}, j_1^{\blacktriangleleft}), \dots, (i_m^{\blacktriangleleft}, j_m^{\blacktriangleleft})\bigr)\)}
	\end{algorithm}
	\begin{figure}[ht]
		\includegraphics{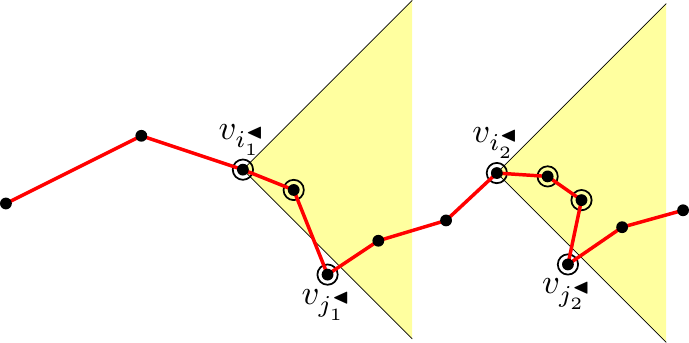}
		\caption{The forward-bad vertices of the skeleton are indicated by circled vertices. Two particularly relevant translates of the forward cone \(\fcone_{t,\delta}\) are also shown.}
		\label{fig:ExtractionForwardBad}
	\end{figure}

	Similarly, the sequence \(\bigl((i_1^{\blacktriangleright}, j_1^{\blacktriangleright}), \dots, (i_{m'}^{\blacktriangleright}, j_{m'}^{\blacktriangleright})\bigr)\) is constructed using the following algorithm.

	\begin{algorithm}[H]
		\label{algo:skeleton_backward_bad_extraction}
		\caption{Backward bad extraction algorithm}
		\KwInput{a skeleton \(V=(v_0, \dots, v_M)\)}
		Set \(k = 1\);\\
		Define \(i_1^{\blacktriangleright} = \max\setof{i\geq 0}{i\in \Bad_{t,\delta}^{\blacktriangleright}(V)}\);\\
		Define \(j_1^{\blacktriangleright} = \max\setof{0\leq j<i_1^{\blacktriangleright}}{v_j \notin \bcone_{t,\delta}(v_{i_1^{\blacktriangleright}})}\);\\
		\While{\(\exists 0\leq i\leq j_{k}^{\blacktriangleright}\) such that \(i\in \Bad_{t,\delta}^{\blacktriangleright}(V)\)}{
			Update \(k=k+1\);\\
			Define \(i_k^{\blacktriangleright} = \max\setof{i\leq j_{k-1}^{\blacktriangleright}}{i\in \Bad_{t,\delta}^{\blacktriangleright}(V)}\);\\
			Define \(j_k^{\blacktriangleright} = \max\setof{j<i_k^{\blacktriangleright}}{v_j \notin \bcone_{t,\delta}(v_{i_k^{\blacktriangleright}})}\);\\
		}
		\KwOutput{the sequence \(\bigl((i_1^{\blacktriangleright}, j_1^{\blacktriangleright}), \dots, (i_{m'}^{\blacktriangleright}, j_{m'}^{\blacktriangleright})\bigr)\)}
	\end{algorithm}

	Note that the sequence \((i_1^{\blacktriangleleft}, j_1^{\blacktriangleleft}, \dots, i_m^{\blacktriangleleft},j_m^{\blacktriangleleft})\) is non-decreasing while the sequence \((i_1^{\blacktriangleright}, j_1^{\blacktriangleright},\dots, i_{m'}^{\blacktriangleright},j_{m'}^{\blacktriangleright})\) is non-increasing.
	We will write \(\calM^{\blacktriangleleft} = \bigcup_{k=1}^{m} \{i_k^{\blacktriangleleft}, i_k^{\blacktriangleleft}+1, \dots, j_k^{\blacktriangleleft}\}\) and \(\calM^{\blacktriangleright} = \bigcup_{k=1}^{m'} \{i_k^{\blacktriangleright}, i_k^{\blacktriangleright}-1, \dots, j_k^{\blacktriangleright}\}\) and call \(\calM = \calM^{\blacktriangleleft}\cup \calM^{\blacktriangleright}\) the set of \emph{marked points}.
	By construction, one has the inclusion \(\Bad_{t,\delta}\subset \calM\).
	\begin{claim}\label{claim1}
		Let \(a\geq 0\). Suppose \(|\calM|\geq a \nu(x)/K\). Then, \(\surcharge_{t}(V)\geq \frac{\delta a}{32} \nu(x)\).
	\end{claim}
	Let us first use Claim~\ref{claim1} to conclude the proof of Lemma~\ref{lem:skeleton_CP_controlled_by_surcharge}.
	On the one hand, if \(|V|\geq (1-\frac{\epsilon}{2})\nu(x)/K\) and \(|\calM|\leq \frac{\epsilon}{2} \nu(x)/K\), then
	\[
		|\CPts_{t,\delta}(V)| = |V|-|\Bad_{t,\delta}(V)| \geq |V|-|\calM| \geq (1-\epsilon)\nu(x)/K.
	\]
	On the other hand, if \(|\calM| > \frac{\epsilon}{2} \nu(x)/K\), then \(\surcharge_{t}(V)\geq \frac{\delta \epsilon}{64} \nu(x)\) by Claim~\ref{claim1}.
	
	\begin{proof}[Proof of Claim~\ref{claim1}]
		We conduct the proof with \(\calM^{\blacktriangleleft}\) replacing \(\calM\). The same proof applies for \(\calM^{\blacktriangleright}\) and the claim for \(\calM\) follows from the combination of the two. Let \(a\geq 0\) and suppose \(|\calM^{\blacktriangleleft}|\geq a \nu(x)/2K\). We then claim that, for every \(k = 1, \dots, m\),
		\begin{equation}
			\label{eq:lem:skeleton_CP_controlled_by_surcharge:eq1}
			\sum_{i=i_k^{\blacktriangleleft}}^{j_k^{\blacktriangleleft}-1} \surcharge_t(v_{i+1}-v_i) \geq \frac{\delta}{8}K(j_k^{\blacktriangleleft} - i_k^{\blacktriangleleft}).
		\end{equation}
		Let us first observe that~\eqref{eq:lem:skeleton_CP_controlled_by_surcharge:eq1} implies the claim:
		\[
			\surcharge_{t}(V)\geq \sum_{k=1}^{m} \sum_{i=i_{k}^{\blacktriangleleft}}^{j_{k}^{\blacktriangleleft}-1}\surcharge_{t}(v_{i+1}-v_i) \geq \frac{\delta}{8}K\sum_{k=1}^m (j_k^{\blacktriangleleft}- i_k^{\blacktriangleleft}) \geq \frac{\delta}{16}K|\calM^{\blacktriangleleft}| \geq \frac{\delta a}{32} \nu(x).
		\]

		\begin{figure}[t]
			\includegraphics{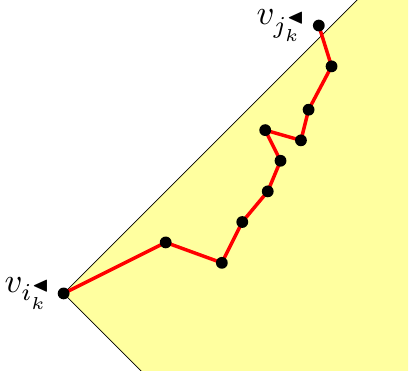}
			\hspace{1cm}
			\includegraphics{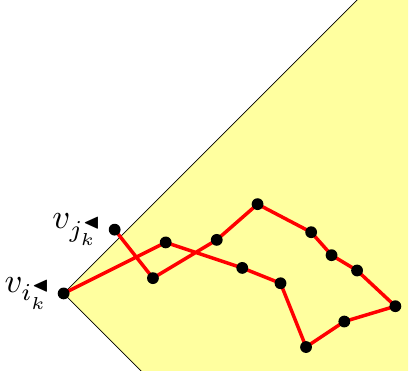}
			\hspace{1cm}
			\includegraphics{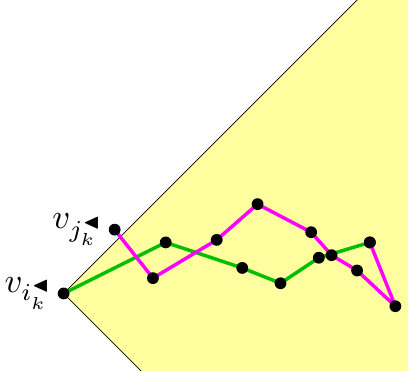}
			\caption{Proof of~\eqref{eq:lem:skeleton_CP_controlled_by_surcharge:eq1}. Left: \(t\cdot (v_{j_k^{\blacktriangleleft}}-v_{i_k^{\blacktriangleleft}}) \geq \frac{K}{8}(j_k^{\blacktriangleleft}- i_k^{\blacktriangleleft})\). Middle: \(t\cdot (v_{j_k^{\blacktriangleleft}}-v_{i_k^{\blacktriangleleft}}) < \frac{K}{8}(j_k^{\blacktriangleleft}- i_k^{\blacktriangleleft})\). Right: same as before, but with the increments reordered.}
			\label{fig:MarkedCase}
		\end{figure}
	
		Let us now prove~\eqref{eq:lem:skeleton_CP_controlled_by_surcharge:eq1} (see Fig.~\ref{fig:MarkedCase}). Consider first the case \(t\cdot (v_{j_k^{\blacktriangleleft}}-v_{i_k^{\blacktriangleleft}}) \geq \frac{K}{8}(j_k^{\blacktriangleleft}- i_k^{\blacktriangleleft})\). Then,
		\[
			\sum_{i=i_k^{\blacktriangleleft}}^{j_k^{\blacktriangleleft}-1} \surcharge_t(v_{i+1}-v_i)
			\geq \surcharge_t(v_{j_k^{\blacktriangleleft}} - v_{i_k^{\blacktriangleleft}}) 
			\geq \delta\nu(v_{j_k^{\blacktriangleleft}}-v_{i_k^{\blacktriangleleft}})
			\geq \delta t\cdot (v_{j_k^{\blacktriangleleft}} - v_{i_k^{\blacktriangleleft}})
			\geq \frac{\delta}{8}K(j_k^{\blacktriangleleft}- i_k^{\blacktriangleleft}),
		\]
		where the first inequality follows from the triangular inequality, the second one from \(v_{j_k^{\blacktriangleleft}}\notin \fcone_{t,\delta}(v_{i_k^{\blacktriangleleft}})\), the third one from the definition of \(\calW_{\nu}\), and the last one from our initial assumption.
		
		Let us now turn to the case \(t\cdot (v_{j_k^{\blacktriangleleft}}-v_{i_k^{\blacktriangleleft}}) < \frac{K}{8}(j_k^{\blacktriangleleft}- i_k^{\blacktriangleleft})\).
		Let us write \(y_i = v_{i+1}-v_i\). Clearly,
		\[
			\sum_{i= i_k^{\blacktriangleleft} }^{j_k^{\blacktriangleleft}-1} \surcharge_t(v_{i+1}-v_i) = \sum_{i= i_k^{\blacktriangleleft} }^{j_k^{\blacktriangleleft}-1} \surcharge_t(y_i)
		\]
		is invariant under permutations of the \(y_i\)s. Up to a re-ordering and a re-labeling, we can therefore assume that the sequence is \((y_1, \dots, y_{l_1}, y_{l_1+1}, \dots, y_{l_1+l_2})\) with \(l_1+l_2 = j_k^{\blacktriangleleft} -  i_k^{\blacktriangleleft}\), \(y_{1}, \dots, y_{l_1}\in \fcone_{t,\delta}\) and \(y_{l_1+1}, \dots, y_{l_1+l_2}\notin \fcone_{t,\delta} \).
		Consider first the case \(l_1 < l_2 \). In that case, there are \( l_2 \geq \frac{j_k^{\blacktriangleleft} -  i_k^{\blacktriangleleft}}{2}\) increments not in \(\fcone_{t,\delta}\). Their contribution to the sum is then at least \(\sum_{i=1}^{l_2} \delta \nu(y_{l_1+i}) \geq \frac{\delta K}{2} (j_k^{\blacktriangleleft} -  i_k^{\blacktriangleleft})\).
		Finally, consider \(l_1\geq l_2\). Set \(z_1 = \sum_{i=1}^{l_1} y_i\) and \(z_2 = \sum_{i=l_1+1}^{l_1+l_2} y_i\). Then, by definition of \(\fcone_{t,\delta}\) and the condition \(\delta < \frac12\),
		\[
			t\cdot z_1 \geq \sum_{i=1}^{l_1} (1-\delta) \nu(y_i) \geq \frac{1}{2} K l_1.
		\]
		In particular,
		\[
			- t\cdot z_2 = t\cdot z_1 - t\cdot (z_1+z_2) \geq \frac{1}{2}Kl_1 - \frac{K}{8}(l_1+l_2) \geq \frac{K}{8}(j_k^{\blacktriangleleft}- i_k^{\blacktriangleleft}),
		\]
		where we have used our initial assumption \(t\cdot (z_1+z_2) = t\cdot (v_{j_k^{\blacktriangleleft}} - v_{i_k^{\blacktriangleleft}}) < \frac{K}{8}(j_k^{\blacktriangleleft} - i_k^{\blacktriangleleft})\) and \(l_1\geq l_2\). So, using the triangular inequality and \(\surcharge_t\geq 0\),
		\[
			\sum_{i = i_k^{\blacktriangleleft}}^{j_k^{\blacktriangleleft}-1} \surcharge_t(y_i) \geq \surcharge_t(z_1) + \surcharge_t(z_2) \geq \surcharge_t(z_2) \geq - t\cdot z_2 \geq \frac{K}{8}(j_k^{\blacktriangleleft}- i_k^{\blacktriangleleft}). \qedhere
		\]
	\end{proof}
\end{proof}

\subsection{Proof of the main Theorems}

We conclude by deducing Theorems~\ref{thm:skeleton_CP} and~\ref{thm:skeleton_CP_well_distributed}.
\begin{proof}[Proof of Theorem~\ref{thm:skeleton_CP}]
	Note that \(\CPts_{t,\delta}(V)\) is non-decreasing in \(\delta\). Lemma~\ref{lem:skeleton_CP_controlled_by_surcharge} together with the estimates on \(|V|\) (Lemma~\ref{lem:skeleton_size_lower_bound}) and on the surcharge (Lemma~\ref{lem:skeleton_small_surcharge}) thus yield the desired claim.
\end{proof}

\begin{proof}[Proof of Theorem~\ref{thm:skeleton_CP_well_distributed}]
	Fix \(U,\delta,\epsilon,l\) as in the statement of Theorem~\ref{thm:skeleton_CP_well_distributed}. Let \(K\geq 0\) and
	\[
		H_k = \slab_{(k-1)lr_*K,klr_*K}(\hat{x}).
	\]
	Define
	\begin{gather*}
		A_1 = \setof{1\leq k \leq \nu(x)/lr_*K}{V\cap H_k \neq\emptyset},\\
		A_2 = \setof{1\leq k \leq \nu(x)/lr_*K}{\CPts_{t,\delta}(V)\cap H_k \neq \emptyset}.
	\end{gather*}
	Now, we claim that the occurrence of \(|A_2|\leq (1-\epsilon)\nu(x)/lr_*K\) implies that at least one of the following statements is true:
	\begin{itemize}
		\item \(|A_1|\leq (1-\frac{\epsilon}{2})\nu(x)/lr_*K\),
		\item \(|V|\geq (1+\frac{\epsilon}{4lr_*})\nu(x)/K\),
		\item \(|\CPts_{t,\delta}(V)|\leq (1-\frac{\epsilon}{4lr_*})\nu(x)/K\).
	\end{itemize}
	Indeed, suppose that the first two points do not hold. Then
	\begin{align*}
	|\CPts_{t,\delta}(V)|\leq |V|-|A_1 \setminus A_2|
			&\leq (1+\frac{\epsilon}{4lr_*})\nu(x)/K - (1-\frac{\epsilon}{2})\nu(x)/lr_*K + (1-\epsilon)\nu(x)/lr_*K\\
			&= (1-\frac{\epsilon}{4lr_*})\nu(x)/K.
	\end{align*}
	To conclude the proof, we bound the probability of the statement in the second bullet using Lemma~\ref{lem:skeleton_size_upper_bound} and the probability of the statement in the third bullet using Theorem~\ref{thm:skeleton_CP}.
	
	To bound the probability of the first statement, observe that all slabs \(H_k\) containing no vertices of \(V\) must be crossed by at least one edge of the skeleton. Of course, several consecutive such slabs can be crossed simultaneously by a single sufficiently long edge, but in any case, one must have a contribution to \(\rmr(V)\) of at least \((l-1) r_*K\) for each \(k\) such that \(V\cap H_k = \emptyset\). So, \(|A_1|\leq (1-\frac{\epsilon}{2})\nu(x)/lr_*K\) implies \(\rmr(V)\geq \epsilon \frac{(l-1)}{2l}\nu(x)\). 
	
	An application of Lemma~\ref{lem:long_edges_have_small_contribution} thus finishes the proof.
\end{proof}

\section{Path analysis}
\label{sec:path_analysis}

\subsection{Density of cone-points}

Our first goal is to use our control on typical skeletons to deduce a similar one on paths. Namely, this subsection is devoted to the proof of the following result.
\begin{theorem}[Typical paths contain many cone-points]
	\label{thm:path_cone_points}
	Let \(s_0\in\bbS^{d-1}\) be such that NSA holds. Let \(t_0\) be \(\nu\)-dual to \(s_0\). Then, there exists \(\epsilon_0>0\) such that
	\begin{itemize}
		\item \(W_0=\bigcup_{s\in U_0} T_s\) is a nice set, where \(U_0= \bbS^{d-1}\cap \bbB_{\epsilon_0}(s_0)\);
		\item there exist \(a>0, \delta_0 < 1, n_0\geq 0, p>0\) such that, for any \(t\in W_0\), \(x\in\fcone_{t_0,\delta_{0}}\) with \(\norm{x}\geq n_0\) and \(\tilde{\gamma}\) with \(\tilde{\gamma}_{|\tilde{\gamma}|} = 0\),
		\[
			\sfe^{t\cdot x} \sum_{\gamma:\, 0\to x} q(\gamma\given \tilde{\gamma}) \mathds{1}_{\{|\CPts_{t_0,\delta_0}(\gamma)|\leq p\norm{x}\}}
			\leq \sfe^{-a \nu(x)}.
		\]
	\end{itemize}
	Moreover, \(U_0\subset \fcone_{t_0,\delta_0}\).
\end{theorem}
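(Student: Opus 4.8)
The plan is to lift the skeleton-level control of Section~\ref{sec:skeleton_analysis} to the level of microscopic paths. \emph{Step 1 (geometry).} Since $\calT$ is open and $T_{s_0}$ is compact, NSA at $s_0$ persists on a Euclidean neighbourhood of $s_0$: there is $\epsilon_0>0$ such that $T_s\subset\calT$ for every $s\in U_0:=\bbS^{d-1}\cap\bbB_{\epsilon_0}(s_0)$, so $W_0=\bigcup_{s\in U_0}T_s$ is a nice subset of $\calT$ containing $t_0$. Shrinking $\epsilon_0$ for the value of $\delta_0$ fixed at the end, continuity of $\nu$ and of the scalar product gives $U_0\subset\fcone_{t_0,\delta_0}$, since $t_0\cdot s_0=\nu(s_0)$ places $s_0$ strictly inside $\fcone_{t_0,\delta_0}$.

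\emph{Step 2 (reduction to a typical skeleton).} Fix $x$ and $\tilde\gamma$ as in the statement and set $Q_K(V)=\sum_{\gamma:\,0\to x}q(\gamma\given\tilde\gamma)\,\mathds{1}_{\{\Skeleton(\gamma)=V\}}$. By Lemma~\ref{lem:skeleton_energy_bound} the family $(Q_K)_K$ satisfies the energy bound~\eqref{eq:skeleton_energy_bound}, so Theorems~\ref{thm:skeleton_CP} and~\ref{thm:skeleton_CP_well_distributed}, together with Lemmas~\ref{lem:long_edges_have_small_contribution} and~\ref{lem:skeleton_small_surcharge}, all apply to it with $U=W_0$ and the prefactor $\sfe^{t\cdot x}$ (the given $t\in W_0$). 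Choosing suitable parameters $\delta,\epsilon,l,p_{\mathrm{skel}}$ there produces $a_0>0$ and $K_0$, and I would fix once and for all a scale $K\ge K_0$. As $\sum_V Q_K(V)\mathds{1}_{\{V\text{ atypical}\}}=\sum_{\gamma:0\to x}q(\gamma\given\tilde\gamma)\mathds{1}_{\{\Skeleton(\gamma)\text{ atypical}\}}$, these results show that the contribution to $\sfe^{t\cdot x}\sum_{\gamma:0\to x}q(\gamma\given\tilde\gamma)$ of paths whose skeleton is \emph{not} typical---meaning it has few $(t,\delta)$-cone-points in $\slab_{\norm{x}}(\hat x)$, or badly distributed ones, or $\rmr(V)\ge\epsilon\nu(x)$, or $\surcharge_t(V)\ge\epsilon\nu(x)$---is at most $\sfe^{-a_0\nu(x)}$.

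\emph{Step 3 (from skeleton cone-points to path cone-points).} It remains to bound, for paths with a typical skeleton $V$, the event $\{|\CPts_{t_0,\delta_0}(\gamma)|\le p\norm{x}\}$. Two geometric facts are used. First, between two consecutive $(t,\delta)$-cone-points $w,w'$ of $V$ all intervening skeleton vertices lie in $\diam_{t,\delta}(w,w')$, hence---by the construction of $\Skeleton$, and crucially \emph{even when long edges are present}, since the far endpoint of any long edge is itself one of those skeleton vertices---the portion of $\gamma$ joining $w$ to $w'$ stays within $\nu$-distance $K$ of $\diam_{t,\delta}(w,w')$. Second, $\surcharge_t(V)\le\epsilon\nu(x)$ confines the skeleton, hence $\gamma$, to an $O(\epsilon)$-aperture cone around the direction $s_0$, so that (for $\delta_0$ chosen $\gg\epsilon,\delta$) a $(t,\delta_1)$-cone-point of $\gamma$, $\delta_1$ a small fixed angle, is automatically a $(t_0,\delta_0)$-cone-point. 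Using the first fact, $\rmr(V)\le\epsilon\nu(x)$ (to forbid large gaps between occupied slabs) and the good distribution of cone-points, one extracts $\ge c\,\nu(x)/K$ ``good'' diamonds, each of $\nu$-diameter in a fixed window $[\underline LK,\overline LK]$. Splitting a good diamond into finitely many sub-diamonds of a fixed smaller scale and applying finite energy (Property~\ref{weight_property:finite_energy}) together with monotonicity and exponential ratio mixing (Properties~\ref{weight_property:monotonicity} and~\ref{weight_property:mixing}), one shows that, conditionally on the skeleton and on the path outside a given good diamond, the $q(\cdot\given\tilde\gamma)$-weight of the configurations \emph{failing} to produce a $(t,\delta_1)$-cone-point of $\gamma$ inside that diamond is at most $\sfe^{-cK}$ times the total---the surgery being to clear a bounded-size neighbourhood and impose an essentially axial crossing through a marked vertex, which by the second fact becomes a genuine $(t_0,\delta_0)$-cone-point. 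An exponential Chebyshev estimate over the $\ge c\nu(x)/K$ good diamonds, processed from left to right (the mixing errors being summable), then shows that, once $p$ is small enough that $p\norm x<\tfrac12 c\nu(x)/K$ and $K$ (being fixed) is large enough to absorb the combinatorial factors against $\sfe^{-cK}$, the $q(\cdot\given\tilde\gamma)$-weight of $\{\text{typical skeleton}\}\cap\{|\CPts_{t_0,\delta_0}(\gamma)|\le p\norm x\}$ is at most $\sfe^{-a_1\nu(x)}\sum_{\gamma:0\to x}q(\gamma\given\tilde\gamma)\le\sfe^{-a_1\nu(x)}G(0,x)=\sfe^{-(a_1+1+\sfo(1))\nu(x)}$ for some $a_1>0$ (Properties~\ref{weight_property:repulsion} and~\ref{weight_property:ICL}). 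Multiplying by $\sfe^{t\cdot x}\le\sfe^{\nu(x)}$ and adding the contribution of Step 2 gives the claimed bound with, say, $a=\min(a_0,a_1)/2>0$ for $\norm x$ large, after finally choosing $\delta_0$ (hence $\epsilon_0$) small. The remaining assertion $U_0\subset\fcone_{t_0,\delta_0}$ was secured in Step 1.

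\emph{Main obstacle.} The genuinely delicate point---and what separates this from the finite-range analysis of~\cite{Campanino+Ioffe+Velenik-2003}---is the presence of arbitrarily long edges. It already forced the refined skeleton analysis (Theorem~\ref{thm:skeleton_CP}, Lemma~\ref{lem:long_edges_have_small_contribution}), available only under NSA through the mass gap of Lemma~\ref{lem:mass_gap_long_edges}, to rule out ``one-jump'' skeletons; and it reappears here in the verification that long edges cannot let $\gamma$ leave the fattened diamonds, and in the bookkeeping needed to keep the constants $\underline L,\overline L,c$, the per-diamond rate, and all the cone comparisons uniform over $t\in W_0$ and over admissible $\tilde\gamma$.
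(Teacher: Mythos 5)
Your Steps 1 and 2 coincide with the paper's proof (NSA persists on a neighbourhood, \(W_0\) is nice, the push-forward of \(q(\cdot\given\tilde\gamma)\) under \(\Skeleton\) satisfies the energy bound, and the skeleton theorems are applied with \(U=W_0\)), and the architecture of Step 3 — slabs whose skeleton cone-points force a geometric ``pre-good'' configuration, then a surgery replacing the path piece inside such a slab by a canonical path through a marked vertex \(z\), using Properties~\ref{weight_property:lower_bound}, \ref{weight_property:finite_energy} and~\ref{weight_property:monotonicity} — is exactly the paper's pre-good/good construction. The gap is in the central quantitative claim of your Step 3: you assert that, conditionally on the skeleton and on the path outside a good diamond, the weight of configurations \emph{failing} to produce a cone-point there is at most \(\sfe^{-cK}\) times the total. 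The finite-energy/monotonicity surgery does not give this: it gives the \emph{reverse}-type bound, namely that the single canonical configuration carries at least a uniformly positive (but \(K\)-dependent, roughly \(\sfe^{-cK}\)-small, since the canonical path has \(O(K)\) edges) fraction of the conditional weight. A statement of the form ``failure inside a diamond of scale \(K\) has weight \(\sfe^{-cK}\) of the total'' is essentially the exponential tail of irreducible pieces (Lemma~\ref{lem:irred_decomp:unif_exp_dec}), which in the paper is \emph{deduced from} this very theorem; invoking it here would be circular, and neither Property~\ref{weight_property:mixing} nor the typicality of the skeleton supplies it.

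The paper avoids this by only requiring a small fraction of the slabs to be good: with \(|I'|\geq\epsilon\norm{x}/K\) pre-good slabs, each being good with conditional probability at least \(\sfe^{-a'}\) (uniformly over everything outside the slab), the probability that fewer than \(\epsilon'\epsilon\norm{x}\) of them are good decays like \(\sfe^{-a\norm{x}}\) — exponential in \(\norm{x}\) through the \emph{number} of slabs, not in \(K\) through a per-slab rate — and \(p\) is then chosen small relative to \(\epsilon'\epsilon\). Your plan survives this weakening (since \(K\) is fixed you may simply take \(p\) of order \(\sfe^{-cK}\)), but as written the ``exponential Chebyshev over diamonds, absorbing combinatorial factors against \(\sfe^{-cK}\)'' step rests on the unproved strong per-diamond estimate, so the key step of Step 3 is unjustified in its stated form.
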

\begin{proof}
	Let \(s_0, t_0\) be as in the statement. Let \(R_0, R_1\) be given by Lemma~\ref{app:irr_interaction:lem:connections_local}. Choose \(K>0\) large enough to be able to apply Theorems~\ref{thm:skeleton_CP} and~\ref{thm:skeleton_CP_well_distributed}. The coarse-graining algorithm \(\Skeleton\) is performed using that scale \(K\).
	By Lemma~\ref{lem:skeleton_energy_bound}, the push-forward of \(q(\cdot \given \tilde{\gamma})\) by the coarse-graining procedure \(\Skeleton\) is a skeleton measure satisfying~\eqref{eq:skeleton_energy_bound} (to simplify notations, we present the proof with \(\tilde{\gamma} = (0)\)).

	Let \(1>\delta_1>0\) be such that there exists \(z\in\fcone_{t_0,\delta_1}\) with \(J_z^{R_0}>0\) (such \(\delta_1\) exists by Lemma~\ref{app:irr_interaction:lem:angular_opening_connected}). Let \(\delta_2 = (1-\delta_1)/4\). The first part of the statement follows from the discussion at the end of Section~\ref{sec:SaturationConditions}.%
	We shall assume that \(\epsilon_0\) is small enough and that \(\delta_1\) is close enough to \(1\) to have \(U_0\subset \fcone_{t_0,\delta_1}\).
		
	
	We first introduce the notion of a \emph{canonical path} (see Fig.~\ref{fig:CanonicalPath}). Let \(u\in \fcone_{t_0,\delta_1}\) be such that \(\normsup{u}<R_0\) and \(J_u>0\). Let \(l_u\) be the smallest integer such that, for any \(z,z'\in \fcone_{t_0,\delta_1}\) with \(\norm{z}, \norm{z'}\geq l_u\norm{u}\), one has the inclusion (see Fig.~\ref{fig:Bzzprime})
	\[
		B_{z,z'} = \bigcup_{\lambda\in [0, 1]} \Bigl(\lambda z + (1-\lambda)z' + [-1-R_1, 1+R_1]^d \Bigr) \subset \fcone_{t_0,\delta_1+\delta_2}.
	\]
	\begin{figure}[ht]
		\includegraphics{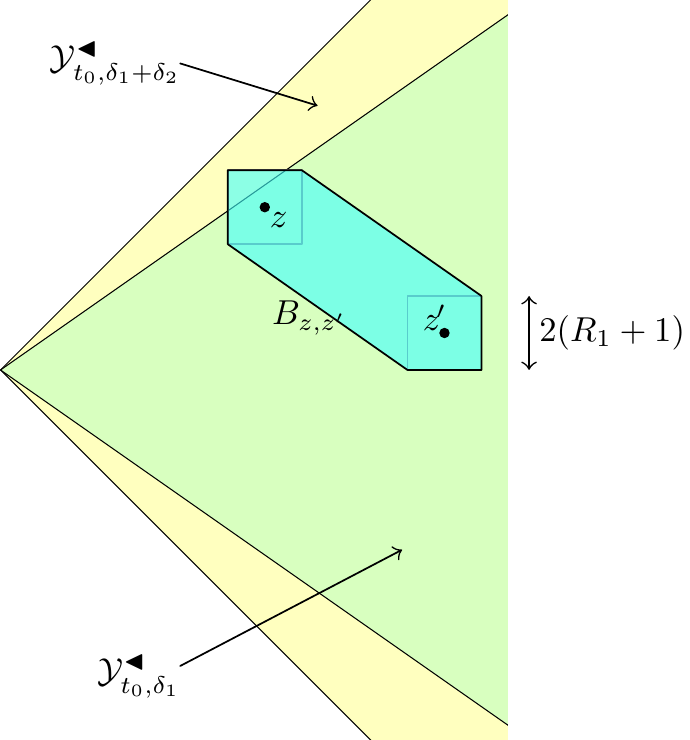}
		\caption{The condition on the points \(z\) and \(z'\).}
		\label{fig:Bzzprime}
	\end{figure}
	By Lemma~\ref{app:irr_interaction:lem:connections_local}, for any \(z,z'\) as above, there exists a path \(\eta'_{z,z'}\in\pathSet(z,z')\) using only edges \(\{i, j\}\) with \(\normsup{i-j} < R_0\) and \(i, j\in B_{z,z'}\cap\Zd\). We fix such a path for each \(z, z'\) (denoted \(\eta'_{z,z'}\)).
	\begin{figure}[ht]
		\includegraphics{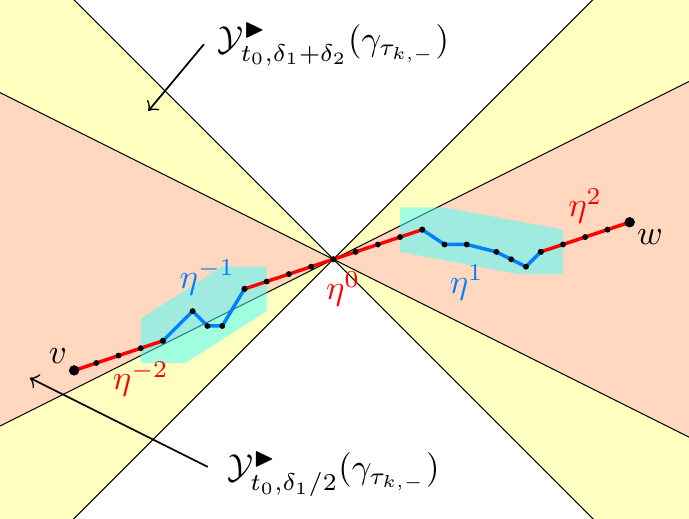}
		\caption{Construction of the canonical path between \(v\) and \(w\).}
		\label{fig:CanonicalPath}
	\end{figure}
	
	For \(v\in\bcone_{t_0,\delta_1}\), \(w\in\fcone_{t_0,\delta_1}\) satisfying \(\norm{v}, \norm{w} \geq 2 l_u \norm{u}\), define the \emph{canonical path from \(v\) to \(w\)}, \(\eta(v,w)\) as follows (See Fig.~\ref{fig:CanonicalPath}):
	\begin{itemize}
		\item Let \(\eta^{-2} = (v, v+u, \dots, v + l_u u)\).
		\item Let \(\eta^{-1}: v + l_u u \to -l_u u\) be the central reflection (and time reversal) of \(\eta'_{l_u u, -v - l_u u}\).
		\item Let \(\eta^0 = (-l_u u, (-l_u+1)u, \dots, (l_u-1)u, l_u u)\).
		\item Let \(\eta^1: l_u u\to w - l_u u\) be equal to \(\eta'_{l_u u, w - l_u u}\).
		\item Let \(\eta^{2} = (w - l_u u, w - (l_u-1)u, \dots, w)\).
		\item Define \(\eta(v,w)\) as the loop erasure of \(\eta^{-2}\concatenate\eta^{-1}\concatenate\eta^{0}\concatenate\eta^1\concatenate\eta^2\).
	\end{itemize}
\pagebreak 
	By construction, the canonical path \(\eta_{v,w}\) enjoys the following properties:
	\begin{itemize}
		\item \(\eta(v,w)\) is self-avoiding and included in \(\diam_{t_0,\delta_1+\delta_2}(v,w)\);
		\item \(\eta(v,w)\) uses only edges with sup-norm \(\leq R_0\);
		\item \(0\) is a \((t_0,\delta_1+\delta_2)\) cone-point of \(\eta(v,w)\);
		\item \(|\eta(v,w)|\leq 4l_u + (2R_1+1)^d(\norm{v}+\norm{w})\leq C\norm{v-w}\).
	\end{itemize}
	
	Let \(C_1<C_2\) be large integers (independent of \(K\); the precise requirement will appear in a few lines). Let us introduce the slabs (see Fig.~\ref{fig:preGood})
	\begin{gather*}
		H_k = \slab_{(k-1) C_3, k C_3 }(\hat{t}_0/r_*K),\\
		H_{k,1} = \slab_{(k-1) C_3 + C_2, (k-1) C_3 + C_2 + 2}(\hat{t}_0/r_*K),\\
		H_{k,2} = \slab_{(k-1) C_3 + 2C_2+2, (k-1) C_3 + 2C_2 + 4}(\hat{t}_0/r_*K),\\
		H_{k,3} = \slab_{(k-1) C_3 + 3C_2+4, (k-1) C_3 + 3C_2 + 6}(\hat{t}_0/r_*K),
	\end{gather*}
	where \(C_3 = 6+4C_2\).
	
	For a path \(\gamma:0\to x\), introduce
	\[
		\tau_{k,-} = \min\setof{r\geq 0}{\gamma_r\in H_{k,1}},\quad \tau_{k,+} = \max\setof{r\geq 0}{\gamma_r\in H_{k,3}}.
	\]
	Both are equal to \(\infty\) if there is no point in \(\gamma\) meeting the requirement. We say that \(k\) is \emph{pre-good} if all the following conditions occur (see Fig.~\ref{fig:preGood})
	\begin{enumerate}[label=(\roman*)]
		\item\label{pregood-i} \(\tau_{k,-} < \tau_{k,+} < \infty\).
		\item\label{pregood-ii} \(\gamma_0^{\tau_{k,-}}\subset \slab_{-\infty, t_0\cdot\gamma_{\tau_{k,-}} }(t_0)\) and \(\gamma_{\tau_{k,+}}^{|\gamma|}\subset \slab_{t_0\cdot\gamma_{\tau_{k,+}}, \infty }(t_0)\).
		\item\label{pregood-iii} \(\displaystyle\bigcup_{l<k}\;\bigcup_{\substack{\mathclap{r<\tau_{k,-}:}\\\gamma_r\in H_l}} \bbB_{C_1K}(\gamma_r)\subset \bcone_{t_0,\delta_1+\delta_2}(\gamma_{\tau_{k,-}})\) \,and\; \(\displaystyle\bigcup_{l>k}\;\bigcup_{\substack{\mathclap{r>\tau_{k,+}:}\\ \gamma_r\in H_l}} \bbB_{C_1K}(\gamma_r)\subset \fcone_{t_0,\delta_1+\delta_2}(\gamma_{\tau_{k,+}})\).
		\item\label{pregood-iv} \(\gamma_0^{\tau_{k,-}}\subset \bcone_{t_0,\delta_1+\delta_2}(\gamma_{\tau_{k,-}})\cup \bbB_{C_1K}(\gamma_{\tau_{k,-}}) \), and \(\gamma_{\tau_{k,+}}^{|\gamma|}\subset \fcone_{t_0,\delta_1+\delta_2}(\gamma_{\tau_{k,+}})\cup \bbB_{C_1K}(\gamma_{\tau_{k,+}})\).
		\item\label{pregood-v}\label{PreGood_ExistZ} There exists \(z\in H_{k,2}\) such that \(\bbB_{C_1K}(\gamma_{\tau_{k,+}})\subset \fcone_{t_0,\delta_1}(z)\), \(\bbB_{C_1K}(\gamma_{\tau_{k,-}})\subset \bcone_{t_0,\delta_1}(z)\).
		\item\label{pregood-vi} \(\gamma_{\tau_{k,-}}^{\tau_{k,+}}\subset \diam_{t_0,\delta_1+\delta_2}(\gamma_{\tau_{k,-}}, \gamma_{\tau_{k,+}})\cup \bbB_{C_1K}(\gamma_{\tau_{k,-}})\cup \bbB_{C_1K}(\gamma_{\tau_{k,+}})\).
	\end{enumerate}
	Note that, for \(C_1\) large enough and \(C_2 r_* > C_1\) large enough (depending on \(C_1\)), \(k\) being pre-good is implied by \(H_{k,1}, H_{k,2}, H_{k,3}\) all containing a \((t_0,\delta_1/2)\) cone-point of \(\Skeleton(\gamma)\) (see Fig.~\ref{fig:preGoodFromConePt}).
	
	We say that \(k\) is \emph{good} if \(k\) is pre-good and \(\gamma_{\tau_{k,-}}^{\tau_{k,+}} = z+\eta(\gamma_{\tau_{k,-}}-z,\gamma_{\tau_{k,+}}-z)\) where \(z\) is the smallest (for the alphabetical order) point in \(H_{k,2}\) satisfying Condition~\ref{PreGood_ExistZ} above. Note that \(k\) being good implies that \(z\in H_k\) is a \((t_0,\delta_1+\delta_2)\)-cone-point of \(\gamma\).
	
	\begin{figure}[ht]
		\includegraphics{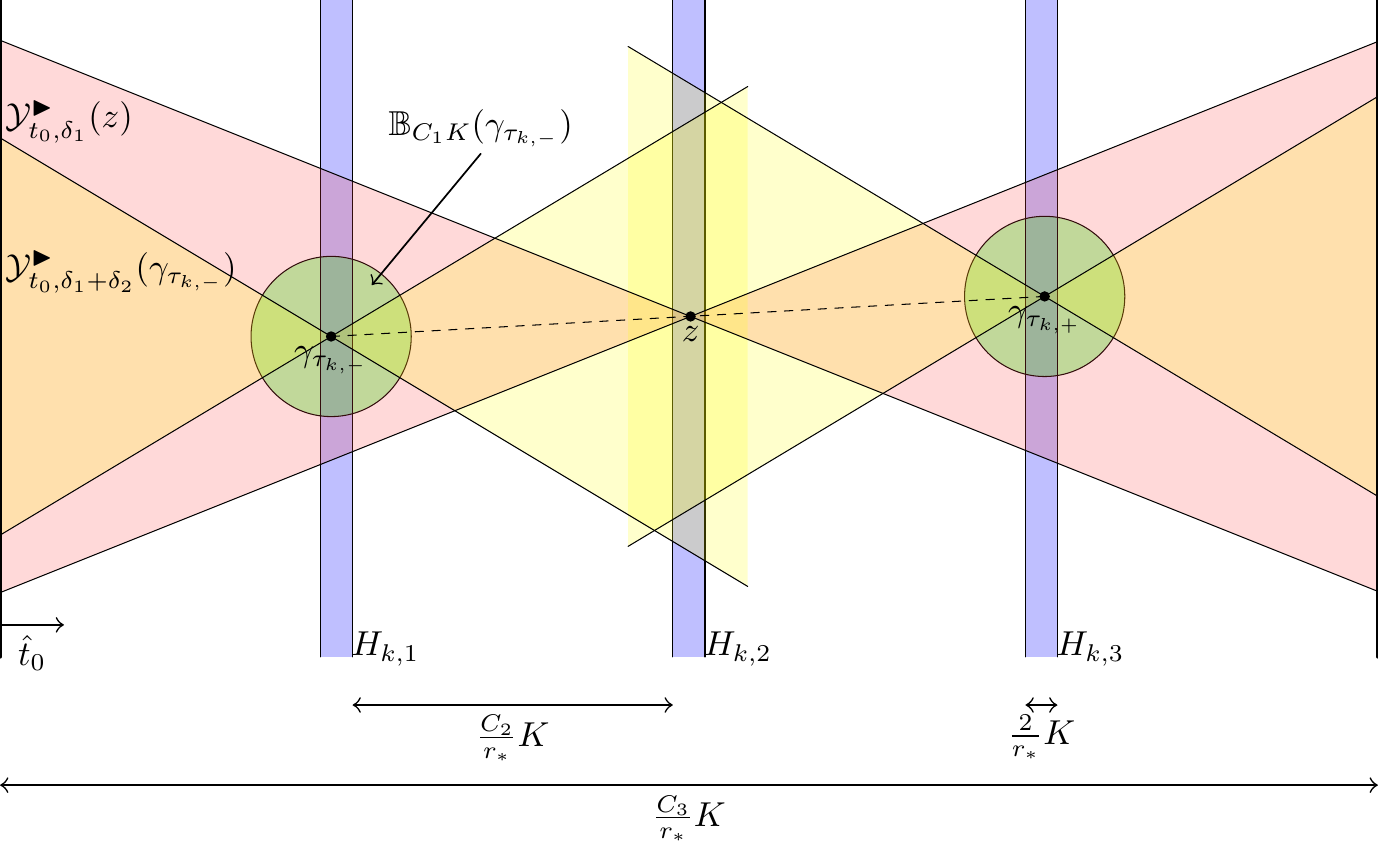}	
		\caption{Setting in the definition of pre-good slabs.}
		\label{fig:preGood}
	\end{figure}
	
	We then define
	\begin{gather*}
		I'(\gamma)=\setof{k}{k \text{ is pre-good}},\\
		I(\gamma)=\setof{k}{k \text{ is good}}.
	\end{gather*}
	Note that for \(C_2\) large enough (depending on \(C_1,\delta_1,\delta_2\)), for \(k\in I'(\gamma)\), \(I'(\gamma)\) does not depend on the particular realization of \(\gamma_{\tau_{k,-}}^{\tau_{k,+}}\) meeting the last requirement in the definition of pre-good \(k\)s. Theorem~\ref{thm:path_cone_points} with \(\delta_0=\delta_1+\delta_2\) follows from the fact that good slabs contain cone-points and the following two claims.
	\begin{claim}
		\label{claim:thm_path_cp_claim1}
		There exist \(\epsilon>0, K_0\geq 0, a>0\) such that, for any \(K\geq K_0\), \(t\in W_0\) and \(\norm{x}\) large enough,
		\[
			\sfe^{t\cdot x} \sum_{\gamma:\, 0\to x} q(\gamma) \mathds{1}_{\{|I'(\gamma)|\leq \epsilon\norm{x}/K\}}
			\leq \sfe^{-a \nu(x)}.
		\]
	\end{claim}
	\begin{claim}
		\label{claim:thm_path_cp_claim2}
		For any \(K\) fixed, there exist \(\epsilon'>0, a>0\) such that, for any \(\epsilon>0\),
		\[
			\sum_{\gamma:\, 0\to x} q(\gamma) \mathds{1}_{\{|I'(\gamma)|\geq \epsilon\norm{x}\}} \mathds{1}_{\{|I|\leq \epsilon'\epsilon\norm{x}\}}
			\leq \sfe^{-a \norm{x}} \sum_{\gamma:\, 0\to x} q(\gamma) \mathds{1}_{\{|I'(\gamma)|\geq \epsilon\norm{x}\}}.
		\]
	\end{claim}

	\begin{proof}[Proof of Claim~\ref{claim:thm_path_cp_claim1}]
		As being pre-good is implied by \(H_{k,1}, H_{k,2}, H_{k,3}\) all containing a \((t_0,\delta_1/2)\)-cone-point of \(\Skeleton(\gamma)\), the claim follows from Theorem~\ref{thm:skeleton_CP_well_distributed} (see Fig.~\ref{fig:preGoodFromConePt}).
		(For~\ref{pregood-ii}, observe that \(\gamma_0^{\tau_k ^-}\) cannot jump over  \(H_{k,1}\) before entering the latter, since this would create a vertex of the skeleton preventing the existence of a cone-point in \(H_{k,1}\).)		
	\end{proof}
	\begin{figure}[ht]
		\includegraphics{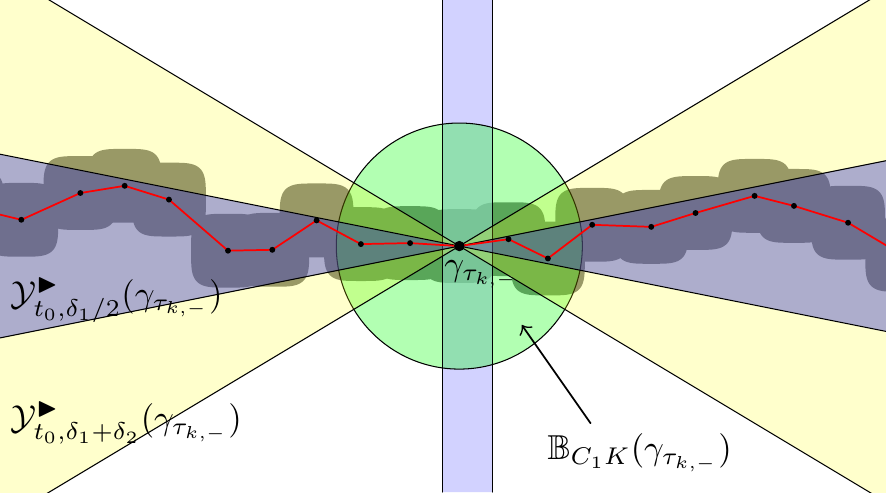}
		\caption{The existence of a (\(t_0, \delta_1/2\))-cone-point of the skeleton in the slab 	\(H_{k,i}\), \(i=1,2,3\), ensures the validity of properties \ref{pregood-iii}, \ref{pregood-iv} and~\ref{pregood-vi}. Note that the vertices of any path \(\gamma\) associated to the skeleton drawn in red are contained inside the shaded area and thus contained inside \(\fcone_{t_0,\delta_1+\delta_2}(\gamma_{\tau_{k,-}}) \cup \mathbb{B}_{C_1K}(\gamma_{\tau_{k,-}})\) once \(C_1\) is chosen large enough, which yields~\ref{pregood-iv}. A similar argument applies for~\ref{pregood-v} and~\ref{pregood-vi}.}
		\label{fig:preGoodFromConePt}
	\end{figure}
	\begin{proof}[Proof of Claim~\ref{claim:thm_path_cp_claim2}]
		For \(A = \{i_1, \dots, i_n\}\) a fixed realization of \(I'\) (with \(i_k\leq i_{k+1}\) for all \(1\leq k < n\)), let us write \(\gamma = \gamma^{0}\concatenate \eta^1 \concatenate \gamma^1\concatenate \eta^2 \concatenate\cdots \concatenate \gamma^n\), where
		\begin{gather*}
			\gamma^0 = \gamma_0^{\tau_{i_1, -}},\quad \gamma^{n} = \gamma_{\tau_{i_n, +}}^{|\gamma|} \\
			\gamma^{k} = \gamma_{\tau_{i_k,+}}^{\tau_{i_{k+1}, -}},\  1\leq k <n,\quad 
			\eta^k = \gamma_{\tau_{i_k,-}}^{\tau_{i_k, +}},\  1\leq k \leq n .
		\end{gather*}
		Then, still for \(A = \{i_1, \dots, i_n\}\) a fixed realization of \(I'\) and for \(A' = \{j_1, \dots, j_l\} \subset \{i_1, \dots, i_n\}\),
		\[
			\sum_{\gamma:\, 0\to x} q(\gamma) \mathds{1}_{\{I(\gamma)= A\setminus A'\}} \mathds{1}_{\{I'(\gamma)= A\}}  = \!\!\sum_{\gamma^0, \dots, \gamma^n} \sum_{\eta^1, \dots, \eta^n} q(\gamma^0\concatenate\eta^1\concatenate \cdots\concatenate \gamma^n) \mathds{1}_{\{I(\gamma^0\concatenate\eta^1\concatenate \cdots\concatenate \gamma^n) =  A\setminus A'\}}, 
		\]
		where the sums in the right-hand side are over the decomposition of paths \(\gamma\) as above such that \(I'(\gamma^0\concatenate\eta^1\concatenate \cdots\concatenate \gamma^n) = \{i_1, \dots, i_n\}\).
		Now, for fixed \(\gamma^0, \gamma^1, \dots, \gamma^n\),
		\begin{equation*}
			\sum_{\eta^1, \dots, \eta^n} q(\gamma^0\concatenate\eta^1\concatenate \cdots\concatenate \gamma^n) \mathds{1}_{\{I(\gamma^0\concatenate\eta^1\concatenate \cdots\concatenate \gamma^n) = A\setminus A'\}}\\
			\leq
			\sum_{\eta^1, \dots, \eta^n} q(\gamma^0\concatenate\eta^1\concatenate \cdots\concatenate \gamma^n) \prod_{k=1}^{l} \mathds{1}_{\{\eta^k \text{ is not canonical}\}}.
		\end{equation*}
	
		We will then have proved our claim if we can show that there is \(a'>0\), uniform over \(n\), \(\gamma^1, \dots, \gamma^n\), \(k \leq n\) and \(\eta^1, \dots,\eta^{k-1}, \eta^{k+1}, \dots, \eta^{n}\), such that
		\[
			\sum_{\eta^k} q(\gamma^0\concatenate\eta^1\concatenate \cdots\concatenate \gamma^n) \mathds{1}_{\{\eta^k \text{ is not canonical}\}}
			\leq
			\sfe^{-a'} \sum_{\eta^k} q(\gamma^0\concatenate\eta^1\concatenate \cdots\concatenate \gamma^n) \mathds{1}_{\left\{I'(\gamma^0\concatenate\eta^1\concatenate \cdots\concatenate \gamma^n) = \{i_1,\cdots, i_n\}\right\}}.
		\]
		The existence of such an \(a'\) is a consequence of the following observations: on the one hand, if we let \(\gamma'=\gamma^0\concatenate\cdots\concatenate\gamma^{k}\) and \(\gamma''=\gamma^{k+1}\concatenate\cdots\concatenate\gamma^{n}\), then~\ref{weight_property:monotonicity} implies that
		\[
			q(\gamma'\concatenate\eta^k_c\concatenate\gamma'') \geq q(\gamma') q(\eta^k_c) q(\gamma'').
		\]
		On the other hand, using~\ref{weight_property:finite_energy} twice implies the existence of \(C=C(\beta)\) such that
		\[
			\sum_{\eta^k} q(\gamma'\concatenate\eta^k\concatenate\gamma'') \leq C_{0} q(\gamma') q(\gamma'') \sum_{\eta^k} q(\eta^k) \leq C q(\gamma') q(\gamma'').
		\]
		Combining these inequalities with~\ref{weight_property:lower_bound} yields
		\begin{equation*}
		q(\gamma'\concatenate\eta^k_c\concatenate\gamma'')\geq C^{-1}
		\biggl(\prod_{i=1}^{|\eta_{c}^{k}|}\tilde{C}J_{\eta^{k}_{c,i-1},\eta^{k}_{c,i}}\biggr)\sum_{\eta^k} q(\gamma'\concatenate\eta^k_c\concatenate\gamma'')>0,
		\end{equation*}
		where the last inequality follows from the definition of the canonical path. This means that the weight associated to the canonical path provides a positive fraction of the sum over all possible \(\eta^{k}\), which leads to the desired claim.	
	\end{proof}
\end{proof}

\subsection{Irreducible decomposition}
\label{subsec:irreducible_decomposition}

Let \(s_0\in\bbS^{d-1}\) be fixed such that NSA holds. We also fix \(t_0\) \(\nu\)-dual to \(s_0\) and let \(\epsilon_0, U_0, W_0, \delta_0, n_0\) be given by Theorem~\ref{thm:path_cone_points}. We shall use the shorthand notation
\[
	\fcone \equiv \fcone_{t_0,\delta_0},\quad \CPts(\gamma) \equiv \CPts_{t_0,\delta_0}(\gamma).
\]
In the same line of ideas, we will say cone-point instead of \((t_0,\delta_0)\)-cone-point.

Given a path \(\gamma: 0\to x\) with \(x\in \fcone\), and containing at least one cone-point, we can uniquely decompose it as
\[
	\gamma= \gamma_L'\concatenate \gamma_1'\concatenate \cdots \concatenate \gamma_M'\concatenate \gamma_R',
\]
with \(\CPts(\gamma) = \{x_1, \dots, x_{M+1}\}\) the cone-points of \(\gamma\), \(\gamma_L': 0\to x_1\), \(\gamma_R': x_{M+1}\to x\), and \(\gamma_k': x_k\to x_{k+1}\). We allow empty paths for \(\gamma_L', \gamma_R'\) and allow \(M=0\). By construction,
\begin{itemize}
	\item \(\gamma_L = \gamma_L'\in\SetRootMarkBackCont^{\irreducible}\),
	\item \(\gamma_R = \gamma_R'-x_{M+1}\in\SetRootMarkForwCont^{\irreducible}\),
	\item \(\gamma_k = \gamma_k'-x_k\in \SetRootDiaCont^{\irreducible}\).
\end{itemize}
We can therefore write
\begin{equation}
	\label{eq:irred_decomposition}
	\gamma = \gamma_L\concatenate \gamma_1\concatenate \cdots \concatenate \gamma_M\concatenate \gamma_R.
\end{equation}
This decomposition is called the \emph{irreducible decomposition} of \(\gamma\) (see Fig.~\ref{fig:IrreducibleDecomposition}).
\begin{figure}[ht]
	\includegraphics{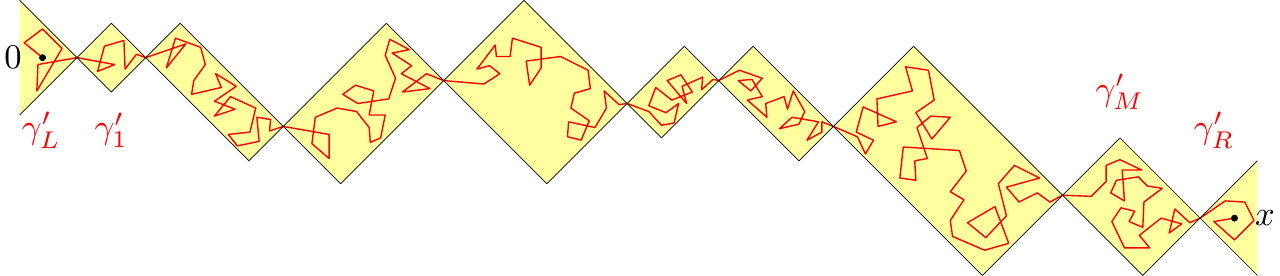}
	\caption{The irreducible decomposition \(\gamma = \gamma'_L\concatenate\gamma'_1\concatenate\cdots\concatenate\gamma'_M\concatenate\gamma'_R\) of a path \(\gamma:\,0\to x\).}
	\label{fig:IrreducibleDecomposition}
\end{figure}

From the previous section we deduce the following decay bound:
\begin{lemma}
	\label{lem:irred_decomp:unif_exp_dec}
	There exist \(C, a' > 0\) such that, for any \(t\in W_0\) and any path \(\gamma'\),
	\begin{gather*}
		\sum_{\gamma\in\SetRootDiaCont^{\irreducible}} \sfe^{t\cdot \displace(\gamma)} q(\gamma\given \gamma') \mathds{1}_{\{\norm{\displace(\gamma)} \geq l\}}
		\leq C\sfe^{-a'l},\\
		\sum_{\gamma_L\in\SetRootMarkBackCont^{\irreducible}} \sfe^{t\cdot \displace(\gamma_L)} q(\gamma_L\given \gamma') \mathds{1}_{\{\norm{\displace(\gamma_L)} \geq l\}}
		\leq C\sfe^{-a'l},\\
		\sum_{\gamma_R\in\SetRootMarkForwCont^{\irreducible}} \sfe^{t\cdot \displace(\gamma_R)} q(\gamma_R\given \gamma') \mathds{1}_{\{\norm{\displace(\gamma_R)} \geq l\}}
		\leq C\sfe^{-a'l}.
	\end{gather*}
\end{lemma}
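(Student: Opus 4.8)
The plan is to obtain all three estimates as essentially immediate consequences of Theorem~\ref{thm:path_cone_points}, the key point being that an irreducible path carries at most two cone-points, so that the indicator appearing in that theorem is automatically present once the displacement is large. First I would record the following geometric observation: if $\gamma$ is rooted at $0$ and belongs to $\SetRootDiaCont^{\irreducible}$, $\SetRootMarkBackCont^{\irreducible}$ or $\SetRootMarkForwCont^{\irreducible}$ (relative to $(t_0,\delta_0)$), then $\displace(\gamma)\in\fcone_{t_0,\delta_0}$ and $|\CPts_{t_0,\delta_0}(\gamma)|\leq 2$. The inclusion $\displace(\gamma)=\bend(\gamma)\in\fcone_{t_0,\delta_0}$ follows from containment: for a forward-contained $\gamma$ one has $\gamma\subset\fcone_{t_0,\delta_0}(\fend(\gamma))=\fcone_{t_0,\delta_0}$, and for a backward-contained $\gamma$ one has $0=\fend(\gamma)\in\bcone_{t_0,\delta_0}(\bend(\gamma))$, i.e.\ $\bend(\gamma)\in\fcone_{t_0,\delta_0}$. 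For the cone-point count, any cone-point $v$ of $\gamma$ lies in $\bcone_{t_0,\delta_0}(\bend(\gamma))$ (all vertices after $v$, in particular $\bend(\gamma)$, lie in $\fcone_{t_0,\delta_0}(v)$) and in $\fcone_{t_0,\delta_0}$ (either because $\gamma$ is forward-contained with $\fend(\gamma)=0$, or because all vertices before $v$, in particular $\fend(\gamma)=0$, lie in $\bcone_{t_0,\delta_0}(v)$); hence $v\in\fcone_{t_0,\delta_0}\cap\bcone_{t_0,\delta_0}(\bend(\gamma))=\diam_{t_0,\delta_0}(\fend(\gamma),\bend(\gamma))$, and inside this diamond the definition of (forward/backward) irreducibility allows only the two endpoints.

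Next I would set $l_1=\max(n_0,2/p)$, with $n_0,p$ as in Theorem~\ref{thm:path_cone_points}, fix $t\in W_0$ and a path $\gamma'$ (necessarily admissible and ending at $0$, so that the conditional weights are defined and Theorem~\ref{thm:path_cone_points} applies with $\tilde\gamma=\gamma'$). For $x\in\fcone_{t_0,\delta_0}$ with $\norm{x}\geq l_1$ and any $\gamma\in\SetRootDiaCont^{\irreducible}$ with $\displace(\gamma)=x$, the observation above gives $|\CPts_{t_0,\delta_0}(\gamma)|\leq 2\leq p\norm{x}$, hence
\[
	\sfe^{t\cdot x}\sum_{\gamma\in\SetRootDiaCont^{\irreducible},\,\displace(\gamma)=x}q(\gamma\given\gamma')\;\leq\;\sfe^{t\cdot x}\sum_{\gamma:\,0\to x}q(\gamma\given\gamma')\,\mathds{1}_{\{|\CPts_{t_0,\delta_0}(\gamma)|\leq p\norm{x}\}}\;\leq\;\sfe^{-a\nu(x)},
\]
and identically with $\SetRootMarkBackCont^{\irreducible}$, $\SetRootMarkForwCont^{\irreducible}$ in place of $\SetRootDiaCont^{\irreducible}$. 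Summing over $x\in\fcone_{t_0,\delta_0}$ with $\norm{x}\geq l$, using that $\nu$ is a norm on $\Rd$ (so $\nu(x)\geq c_1\norm{x}$ for some $c_1>0$) together with the elementary bound $\sum_{x\in\Zd:\,\norm{x}\geq l}\sfe^{-ac_1\norm{x}}\leq C_1\sfe^{-ac_1 l/2}$, this yields the lemma with $a'=ac_1/2$ in the range $l\geq l_1$.

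Finally, for $0\leq l<l_1$ it only remains to bound the contribution of paths with $\norm{\displace(\gamma)}<l_1$. Here Property~\ref{weight_property:repulsion} gives $\sum_{\gamma:\,0\to x}q(\gamma\given\gamma')\leq\sum_{\gamma:\,0\to x}q(\gamma)=G(0,x)<\infty$, and $t\cdot x\leq\nu(x)$ on $\partial\calW_\nu$; since $W_0$ is compact (being a nice subset of $\partial\calW_\nu$), the quantity $\sum_{x\in\Zd:\,\norm{x}<l_1}\sfe^{t\cdot x}G(0,x)$ is bounded by a finite constant $C_2$, uniformly in $t\in W_0$ and $\gamma'$. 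Combining, each sum in the lemma is at most $C_1\sfe^{-a'l}$ for $l\geq l_1$ and at most $C_2+C_1$ for $l<l_1$, so the statement holds with, e.g., $C=\max\{C_1,(C_2+C_1)\sfe^{a'l_1}\}$. The only real content is the geometric observation of the first paragraph; the remaining steps are routine, the one point to watch being that all constants are uniform in $t\in W_0$ and in $\gamma'$, which they are because Theorem~\ref{thm:path_cone_points} is uniform in $t\in W_0$ and in $\tilde\gamma$ and $W_0$ is compact.
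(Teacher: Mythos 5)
Your proof is correct and follows essentially the same route as the paper: the paper's own (two-line) argument is precisely that irreducibility forces all cone-points to be endpoints, so the indicator in Theorem~\ref{thm:path_cone_points} is automatically satisfied once \(\norm{\displace(\gamma)}\) is large, and exponential decay follows. Your write-up merely makes explicit the summation over the displacement, the uniformity in \(t\in W_0\) and \(\gamma'\), and the trivial adjustment of constants for small \(l\), all of which the paper leaves implicit.
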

\begin{proof}
	Being irreducible implies not having cone-points, except possibly at the endpoints.
	The exponential decay thus follows from Theorem~\ref{thm:path_cone_points}.
\end{proof}

\section{Factorization of measure}
\label{sec:facto_meas}

We now start from the output of Section~\ref{sec:path_analysis}. We therefore use the same quantities as in Subsection~\ref{subsec:irreducible_decomposition}. The goal of the present section is to pass from a pre-renewal structure (the irreducible decomposition of paths) to a \emph{true renewal structure} (that is, with factorized weights). A similar argument can be found in~\cite{Ioffe+Ott+Shlosman+Velenik-2021}, and a more general (and more technical) argument in~\cite{Ott+Velenik-2018}.

We will work with paths \(\gamma: 0\to y\), other starting points are treated using translation invariance.

\subsection{Memory-percolation picture}\label{subsection:memory-percolation}

Let \(\gamma: 0\to y\). From the irreducible decomposition~\eqref{eq:irred_decomposition}
and the definition of conditional weights, one can write
\begin{equation}
	\label{eq:weight_facto_eq1}
	q(\gamma) = q(\gamma_L) \prod_{k=1}^{M+1} q(\gamma_k\given \gamma_L\concatenate \cdots \concatenate \gamma_{k-1}),
\end{equation}
where we use the convention \(\gamma_0 = \gamma_L\) and \(\gamma_{M+1} = \gamma_R\).
For \(n+1\geq k\geq 2\), we can then define
\begin{gather*}
	p_k(\gamma \given \gamma_0, \dots, \gamma_{n}) = q(\gamma\given \gamma_{n+1-k}\concatenate \cdots\concatenate \gamma_{n}) - q(\gamma\given \gamma_{n+2-k}\concatenate \cdots\concatenate \gamma_{n}),\\
	p_1(\gamma \given \gamma_0, \dots, \gamma_{n}) = q(\gamma\given \gamma_{n}) - q(\gamma),\quad p_0(\gamma) = q(\gamma).
\end{gather*}
By the monotonicity Property~\ref{weight_property:monotonicity}, the weights \(p_k\), \(0\leq k\leq n+1\), are non-negative. Note that \(p_k(\cdot\given\gamma_0, \dots, \gamma_{n})\) does not depend on the full sequence \(\gamma_0, \dots, \gamma_{n}\) of irreducible pieces, but only on the \(k\) pieces \(\gamma_{n+1-k}, \dots, \gamma_{n}\). We then further express \(q(\gamma_k\given \gamma_L\concatenate \cdots \concatenate \gamma_{k-1})\) as a telescoping sum,
\begin{equation}
	\label{eq:weight_facto_eq2}
	q(\gamma_k\given \gamma_L\concatenate \cdots \concatenate \gamma_{k-1}) = \sum_{l=0}^{k}p_l(\gamma_k\given \gamma_L,\cdots , \gamma_{k-1}).
\end{equation}
We define the \emph{connected weights} by \(\qco(\gamma) = q(\gamma)\) and, for \(k>0\),
\[
	\qco(\gamma_0, \dots, \gamma_k) = q(\gamma_0) \sum_{l_1, \dots, l_k \geq 0 \text{ connected}} \prod_{i=1}^{k} p_{l_i}(\gamma_i\given \gamma_0, \dots, \gamma_{i-1}),
\]
where the sum is over sequences of integers \(l_i\geq 0\), \(1\leq i\leq k\), such that \(l_i\leq i\) and \(\bigcup_{i=1}^{k} \{i-l_i, i-l_i+1, \dots, i-1\} = \{0, \dots, k-1\}\) (see Fig.~\ref{Figure:memory_percolation_picture}).

\begin{figure}[ht]
	\centering
	\begin{tikzpicture}[scale=0.8]
		\draw[thick, red] (10,0) arc (0:180:2);
		\draw[thick, red] (9,0) arc (0:180:0.5);
		\draw[thick, red] (7,0) arc (0:180:1);
		\draw[thick, red] (4,0) arc (0:180:1);
		\draw[thick, red] (3,0) arc (0:180:0.5);
		\draw[thick, red] (2,0) arc (0:180:1);
		\draw[thick, red] (1,0) arc (0:180:0.5);
		\draw[ultra thick, blue](-0.3,-0.3)--(-0.3,-0.6)--(4.3,-0.6)--(4.3,-0.3);
		\draw[ultra thick, blue](4.7,-0.3)--(4.7,-0.6)--(10.3,-0.6)--(10.3,-0.3);
		\foreach \x in {0,...,10}{
			\draw (\x,0) node{\(\large\bullet\)} node[below]{\(\x\)};
		}
	\end{tikzpicture}
	\caption{Each index \(i\) receives a ``memory arc'' of range \(l_i\) (depicted in red). Connected components are depicted by thick blue lines.}
	\label{Figure:memory_percolation_picture}
\end{figure}
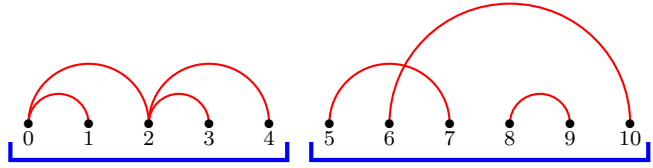

Plugging~\eqref{eq:weight_facto_eq2} into~\eqref{eq:weight_facto_eq1}, expanding and regrouping terms, one obtains
\begin{equation}
	\label{eq:weight_to_connected_weight}
	q(\gamma_0\concatenate \gamma_1\concatenate\cdots\concatenate \gamma_{M+1})
	= \sum_{n=1}^{M+2}\sum_{\substack{\ell_1, \dots, \ell_n \geq 1\\ \sum \ell_i = M+2}} \prod_{k=1}^{n} \qco(\gamma_{L_{k}}, \dots, \gamma_{L_k+\ell_k-1}),
\end{equation}
where \(L_1=0\) and, for \(k\geq 2\), \(L_{k} = \sum_{i=1}^{k-1} \ell_i\). We are now ready to define the factorized weights. For a path \(\gamma\) having irreducible decomposition \(\gamma_0\concatenate\gamma_1\concatenate\cdots\concatenate\gamma_{n+1}\), we define
\begin{gather}
	\label{eq:def:facto_weight}
	\tilde{q}(\gamma) = \qco(\gamma_0, \gamma_1, \dots, \gamma_{n+1})
	\quad\text{ and }\quad
	\tilde{q}_t(\gamma) = \sfe^{t\cdot \displace(\gamma)} \tilde{q}(\gamma).
\end{gather}
Note that, by the uniqueness of the irreducible decomposition, \(\tilde{q}\) is a well-defined weight over paths having an irreducible decomposition. Straightforward algebra yields, for any \(f:\pathSet\to\bbC\),
\begin{multline}
	\label{eq:expectation_weights_to_connected_weights}
	\sum_{\gamma:\, 0\to y} q(\gamma) f(\gamma) \mathds{1}_{\{\CPts(\gamma)\setminus\{0,y\} \neq \emptyset\}}\\
	= \sum_{\gamma_L\in \SetRootMarkBackCont^{\irreducible}, \gamma_R\in\SetRootMarkForwCont^{\irreducible}} \sum_{n\geq 0} \sum_{\gamma_1, \dots, \gamma_n\in \SetRootDiaCont^{\irreducible}} f(\bar{\gamma}) \mathds{1}_{\{\displace(\bar{\gamma}) = y\}} \qco(\gamma_L, \gamma_1, \dots, \gamma_R) \\
	+ \sum_{\gamma_L\in \SetRootMarkBackCont, \gamma_R\in\SetRootMarkForwCont} \sum_{n\geq 0} \sum_{\gamma_1, \dots, \gamma_n\in\SetRootDiaCont} f(\bar{\gamma}) \mathds{1}_{\{\displace(\bar{\gamma}) = y\}} \tilde{q}(\gamma_L) \tilde{q}(\gamma_R) \prod_{k=1}^{n} \tilde{q}(\gamma_k),
\end{multline}
where \(\bar{\gamma} = \gamma_L\concatenate\gamma_1\concatenate\cdots\concatenate\gamma_R\).

\begin{lemma}
	\label{lem:exp_dec_connected_weights}
	There exist \(C, c > 0\) such that, for any \(\gamma_0\in \SetRootMarkBackCont^{\irreducible}\), \(n\geq 0\), \(\gamma_1, \dots, \gamma_n\in \SetRootDiaCont^{\irreducible}\) and \(\gamma_{n+1}\in \SetRootMarkForwCont^{\irreducible}\),
	\[
		\frac{\qco(\gamma_0, \gamma_1, \dots, \gamma_n, \gamma_{n+1})}{q(\gamma_0\concatenate \gamma_1\concatenate\cdots\concatenate \gamma_n\concatenate\gamma_{n+1})} \leq C\sfe^{-cn}.
	\]
\end{lemma}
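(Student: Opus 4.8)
The plan is to recast the ratio as the probability of a ``connectedness'' event for a sequence of independent integer-valued random variables — the memory ranges of the factorised expansion — and then to estimate that probability by combining the exponential ratio mixing~\ref{weight_property:mixing} with the cone geometry.

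First I would reduce the claim to a probabilistic statement. Using $q(\gamma_0\concatenate\cdots\concatenate\gamma_{n+1})=q(\gamma_0)\prod_{i=1}^{n+1}q(\gamma_i\given\gamma_0\concatenate\cdots\concatenate\gamma_{i-1})$ and unfolding the definition of $\qco$, the ratio becomes
\[
\frac{\qco(\gamma_0,\dots,\gamma_{n+1})}{q(\gamma_0\concatenate\cdots\concatenate\gamma_{n+1})}
=\sum_{l_1,\dots,l_{n+1}\ \mathrm{connected}}\ \prod_{i=1}^{n+1}\pi_i(l_i),
\qquad
\pi_i(l):=\frac{p_l(\gamma_i\given\gamma_0,\dots,\gamma_{i-1})}{q(\gamma_i\given\gamma_0\concatenate\cdots\concatenate\gamma_{i-1})}.
\]
The monotonicity Property~\ref{weight_property:monotonicity} makes the $p_l$ nonnegative, and the telescoping identity~\eqref{eq:weight_facto_eq2} gives $\sum_{l=0}^{i}\pi_i(l)=1$; hence $(\pi_i(l))_{l\ge0}$ is a probability law on $\{0,\dots,i\}$ and the right-hand side is exactly $\p\bigl[(L_1,\dots,L_{n+1})\ \mathrm{connected}\bigr]$ with $L_1,\dots,L_{n+1}$ independent, $L_i\sim\pi_i$, where ``connected'' means $\bigcup_i\{i-L_i,\dots,i-1\}=\{0,\dots,n\}$; note this already forces $\sum_{i=1}^{n+1}L_i\ge n+1$.

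Next I would establish uniform exponential tails for the $\pi_i$. For $l\ge2$, applying~\ref{weight_property:mixing} with $\gamma=\gamma_i$, ``new piece'' $\gamma'=\gamma_{i-l}$, and $\gamma''=\gamma_{i-l+1}\concatenate\cdots\concatenate\gamma_{i-1}$, together with $q(\gamma_i\given\gamma'')\le q(\gamma_i\given\gamma_0\concatenate\cdots\concatenate\gamma_{i-1})$, yields
\[
\pi_i(l)\ \le\ C\!\!\sum_{u\in\gamma_i,\ v\in\gamma_{i-l}}\!\!\sfe^{-c_{\mix}\norm{u-v}}.
\]
Since the intermediate pieces $\gamma_{i-l+1},\dots,\gamma_{i-1}$ are irreducible and diamond-contained, each has a nonzero displacement $X_j$ with $t_0\cdot X_j\ge(1-\delta_0)\nu(X_j)\ge c_\star>0$; because $\gamma_{i-l}\subset\bcone(\bend(\gamma_{i-l}))$ and $\gamma_i\subset\fcone(\fend(\gamma_i))$, any such $u,v$ satisfy $t_0\cdot(u-v)\ge\sum_{j=i-l+1}^{i-1}t_0\cdot X_j\ge c_\star(l-1)$, while the transverse sums over the two diamonds converge (polynomially many lattice points per level). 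This gives $\pi_i(l)\le C'\sfe^{-c'(l-1)}$ with $c'>0$ uniform in $i$ and in the sequence, and the same estimates with $\gamma''=\emptyset$ show $\pi_i(0)=q(\gamma_i)/q(\gamma_i\given\gamma_0\concatenate\cdots\concatenate\gamma_{i-1})$ is bounded below by a positive constant (the adjacent piece $\gamma_{i-1}$ contributes only a bounded interaction). Thus the laws $\pi_i$ are uniformly light-tailed and spread out.

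Finally I would bound $\p[(L_i)\ \mathrm{connected}]$ by $C\sfe^{-cn}$. This is the delicate step: a crude union bound over positions only yields a subexponential rate, so one must use the connectedness constraint quantitatively together with the tails of the previous step — for instance by controlling the ``workload'' recursion $S_{j-1}=\max(S_j,L_j)-1$ governing which positions remain uncovered, or by showing that the light tails and the uniform lower bound on $\pi_i(0)$ force the mean memory length to be $<1$, so that the necessary condition $\sum_i L_i\ge n+1$ is itself a Cramér-type large deviation for a sum of independent light-tailed variables. This combinatorial estimate is carried out along the lines of~\cite{Ott+Velenik-2018,Ioffe+Ott+Shlosman+Velenik-2021}; the uniform exponential decay of irreducible pieces (Lemma~\ref{lem:irred_decomp:unif_exp_dec}) is used along the way to keep all constants independent of the chosen sequence.
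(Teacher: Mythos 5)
Your proposal is correct and takes essentially the same route as the paper: rewrite \(\qco/q\) as the probability that independent memory ranges \(L_i\) (laws \(\pi_i\), normalized via the telescoping identity and nonnegative by~\ref{weight_property:monotonicity}) connect \(\{0,\dots,n\}\), get uniform exponential tails from~\ref{weight_property:mixing} combined with the cone geometry and a uniform lower bound on \(\pi_i(0)\), and then invoke the standard one-dimensional stick-percolation estimate from the same references the paper cites. One caveat on your last step: the alternative you float — that the tail constants force the mean memory length below one, making \(\sum_i L_i\ge n+1\) a Cramér event — is unfounded, since nothing in~\ref{weight_property:mixing} controls the mean below \(1\); the working argument is your other suggestion, a regeneration/workload argument exploiting the uniform mass at zero together with the exponential tails (equivalently, stochastic domination by i.i.d.\ geometric variables), which is exactly what \cite[Claim C.3]{Ott+Velenik-2018} and the paper's proof do.
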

\begin{proof}
	First, observe that the (backward) cone containment property of \(\gamma_0\concatenate\gamma_1\concatenate\cdots\concatenate\gamma_{i-1}\), the fact that irreducible paths have a strictly positive projection on \(t_0\) and the mixing upper bound of Property~\ref{weight_property:mixing} imply the existence of \(C_{\rmm}, c_{\rmm} > 0\) such that, for any \(l\geq 0\) \emph{and any sequence} \(\gamma_0, \gamma_1, \dots, \gamma_n, \gamma_{n+1}\) as in the statement,
	\begin{equation}
		\label{eq:lem:exp_dec_connected_weights_eq1}
		\nonumber p_{l}(\gamma_i\given \gamma_{0}, \dots, \gamma_{i-1})
		\leq C_{\rmm} \sfe^{-c_{\rmm}l} q(\gamma_i\given \gamma_{i+1-l}\concatenate \cdots\concatenate \gamma_{i-1})
		\leq C_{\rmm} \sfe^{-c_{\rmm}l} q(\gamma_i\given \gamma_{0}\concatenate \cdots\concatenate \gamma_{i-1}),
	\end{equation}
	where we use monotonicity (Property~\ref{weight_property:monotonicity}) in the last line. Then, for \(\gamma_0, \gamma_1, \dots, \gamma_n, \gamma_{n+1}\) fixed as in the statement, we can define a probability measure \(P\) on sequences \((l_0, l_1, \dots,\penalty0 l_{n+1})\) satisfying \(l_i\leq i\) by
	\[
		P(l_0, \dots, l_{n+1}) = \frac{\prod_{i=0}^{n+1} p_{l_i}(\gamma_i\given \gamma_0, \dots, \gamma_{i-1})}{q(\gamma_0\concatenate \cdots\concatenate \gamma_{n+1})}.
	\]
	From the sequence \(\bar{l} = (l_0, l_1, \dots, l_{n+1})\), one obtains a stick percolation model by looking at \(I = \bigcup_{i=0}^{n+1} [i-l_i, i]\). We can then write
	\begin{equation}
		\label{eq:lem:exp_dec_connected_weights_eq2}
		\frac{\qco(\gamma_0, \dots, \gamma_{n+1})}{q(\gamma_0\concatenate\cdots\concatenate\gamma_{n+1})}
		= P(I\text{ is connected}).
	\end{equation}
	Moreover, by construction, \(\bar l\) is a family of independent random variables with distribution \(P(l_i=l) = \frac{p_l(\gamma_i\given \gamma_0, \dots, \gamma_{i-1})}{q(\gamma_i\given \gamma_0\concatenate\cdots\concatenate\gamma_{i-1})}\). Moreover, Property~\ref{weight_property:finite_energy} and~\eqref{eq:lem:exp_dec_connected_weights_eq1} imply that
	\begin{equation}\label{ineq:geometric_domination}
		P(l_i=0) \geq C_{\rmc}^{-1} > 0,\quad P(l_i=l) \leq C_{\rmm} \sfe^{-c_{\rmm} l}.
	\end{equation}
	Therefore, there exists \(c>0\) (depending on \(c_{\rmm}, C_{\rmm}\) and \(C_{\rmc}\)) such that \(P(l_i\geq l)\leq \sfe^{-cl}\). The sequence \(\bar{l}\) is thus stochastically dominated by an i.i.d.\ sequence of geometric random variables of parameter \(1-\sfe^{-c}\), \emph{uniformly over the sequence} \(\gamma_0, \gamma_1, \dots, \gamma_n, \gamma_{n+1}\).

	The claimed exponential decay follows from this stochastic domination,~\eqref{eq:lem:exp_dec_connected_weights_eq2} and standard one-dimensional percolation arguments. The details of the latter percolation estimates can be found, for example, in~\cite[Claim C.3]{Ott+Velenik-2018}.
\end{proof}

\begin{corollary}
	\label{cor:renewal_density}
	For any \(\epsilon>0\), there exist \(C, c > 0\) such that, for any \(t\in W_0\),
	\[
		\sfe^{t\cdot y}\sum_{\substack{\gamma_L\in \SetRootMarkBackCont^{\irreducible}\\ \gamma_R\in\SetRootMarkForwCont^{\irreducible}}} \sum_{n\geq \epsilon \norm{y}} \sum_{\gamma_1, \dots, \gamma_n\in \SetRootDiaCont^{\irreducible}} f(\bar{\gamma}) \mathds{1}_{\{\displace(\bar{\gamma}) = y\}} \qco(\gamma_L, \gamma_1, \dots, \gamma_R) \leq C\normsup{f} \sfe^{-c\norm{y}}.
	\]
\end{corollary}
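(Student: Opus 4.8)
The plan is to obtain Corollary~\ref{cor:renewal_density} directly from the ratio estimate of Lemma~\ref{lem:exp_dec_connected_weights}, combined with the trivial bound $|f|\le\normsup{f}$ and the a priori upper bound on $G$. First I would replace $f(\bar\gamma)$ by $\normsup{f}$ and apply Lemma~\ref{lem:exp_dec_connected_weights} (with $\gamma_0=\gamma_L$, $\gamma_{n+1}=\gamma_R$), which gives
\[
	\qco(\gamma_L,\gamma_1,\dots,\gamma_n,\gamma_R)\le C\,\sfe^{-cn}\,q(\bar\gamma),
	\qquad \bar\gamma=\gamma_L\concatenate\gamma_1\concatenate\cdots\concatenate\gamma_n\concatenate\gamma_R .
\]
Since only $n\ge\epsilon\norm{y}$ are summed, $\sfe^{-cn}\le\sfe^{-c\epsilon\norm{y}}$, so the left-hand side of the asserted inequality is bounded by
\[
	C\normsup{f}\,\sfe^{-c\epsilon\norm{y}}\ \sfe^{t\cdot y}\sum_{\substack{\gamma_L\in\SetRootMarkBackCont^{\irreducible}\\\gamma_R\in\SetRootMarkForwCont^{\irreducible}}}\ \sum_{n\ge\epsilon\norm{y}}\ \sum_{\gamma_1,\dots,\gamma_n\in\SetRootDiaCont^{\irreducible}} q(\bar\gamma)\,\mathds{1}_{\{\displace(\bar\gamma)=y\}} .
\]

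Next I would invoke the uniqueness of the irreducible decomposition: because irreducible pieces have no interior cone-points while, by nesting of the cones, each concatenation point of $\bar\gamma$ is a $(t_0,\delta_0)$-cone-point of $\bar\gamma$, the map $(\gamma_L,\gamma_1,\dots,\gamma_n,\gamma_R)\mapsto\bar\gamma$ is injective with image contained in $\pathSet(0,y)$. As $q\ge 0$, the triple sum above is therefore at most $\sum_{\gamma:\,0\to y}q(\gamma)=G(0,y)$.

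To finish, I would use that $t\in W_0\subset\partial\calW_\nu\subset\calW_\nu$, so $t\cdot y\le\nu(y)$, together with the a priori bound $G(0,y)\le\sfe^{-\nu(y)(1-\sfo_{\norm{y}}(1))}$ coming from Property~\ref{weight_property:ICL}; hence $\sfe^{t\cdot y}G(0,y)\le\sfe^{\sfo_{\norm{y}}(1)\nu(y)}=\sfe^{\sfo(\norm{y})}$ since $\nu(y)=\sfO(\norm{y})$. Absorbing this subexponential factor into $\sfe^{-c\epsilon\norm{y}}$ for $\norm{y}$ large (and enlarging the constant to cover the bounded values of $\norm{y}$) yields the claim with $c\epsilon/2$ in place of $c$. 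I do not expect any genuine obstacle here: the corollary is essentially a repackaging of Lemma~\ref{lem:exp_dec_connected_weights}. The only points deserving a sentence of justification are the injectivity of the concatenation map — needed to replace the sum over decompositions by $G(0,y)$ without over-counting — and the elementary fact that $\sfe^{t\cdot y}G(0,y)$ is subexponential in $\norm{y}$, which is exactly where the assumption $t\in\calW_\nu$ enters.
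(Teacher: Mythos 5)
Your argument is correct and is exactly the paper's route: the paper's proof of Corollary~\ref{cor:renewal_density} simply invokes Lemma~\ref{lem:exp_dec_connected_weights} together with the definition of \(\nu\) and the fact that \(t\in\partial\calW_\nu\), which is precisely what you spell out (the factor \(\sfe^{-cn}\le\sfe^{-c\epsilon\norm{y}}\), injectivity of the concatenation map so the remaining sum is at most \(G(0,y)\), and \(\sfe^{t\cdot y}G(0,y)=\sfe^{\sfo(\norm{y})}\)). No gaps; your two flagged points are the right ones to justify.
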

\begin{proof}
	The result follows directly from the existence (and the definition) of \(\nu\), \(t\in\partial \calW_{\nu}\) and Lemma~\ref{lem:exp_dec_connected_weights}.
\end{proof}

\subsection{Gathering the pieces}

We now combine the discussion of the previous subsection and the output of Section~\ref{sec:path_analysis}. The following result follows from~\eqref{eq:expectation_weights_to_connected_weights} and a direct combination of Theorem~\ref{thm:path_cone_points} and Corollary~\ref{cor:renewal_density}.
\begin{lemma}
	\label{lem:approx_by_factorized_weights}
	There exist \(C, c > 0\) such that, for any \(f:\pathSet\to \bbC\) with \(\normsup{f}\leq 1\), any \(y\in \fcone\) and any \(t\in W_0\),
	\[
		\Bigl|\sfe^{t\cdot y} \sum_{\gamma:\,0\to y} \!q(\gamma) f(\gamma) - \sum_{\substack{\gamma_L\in\SetRootMarkBackCont\\\gamma_R\in\SetRootMarkForwCont}} \sum_{n\geq 0} \sum_{\gamma_1, \dots, \gamma_n\in\SetRootDiaCont}\!\!\! f(\bar{\gamma}) \mathds{1}_{\{\displace(\bar{\gamma}) = y\}} \tilde{q}_t(\gamma_L) \tilde{q}_t(\gamma_R) \prod_{i=1}^n \tilde{q}_t(\gamma_i) \Bigr|
		\leq C\sfe^{-c\norm{y}},
	\]
where \(\bar{\gamma} = \gamma_L \concatenate \gamma_1 \concatenate \cdots \concatenate \gamma_n \concatenate \gamma_R\).
\end{lemma}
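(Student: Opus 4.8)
The plan is to feed the exact identity~\eqref{eq:expectation_weights_to_connected_weights} into the two exponential estimates already at our disposal, namely Theorem~\ref{thm:path_cone_points} and Corollary~\ref{cor:renewal_density}. The first step is to split the left-hand side according to whether $\gamma$ possesses a cone-point strictly between its endpoints:
\[
	\sfe^{t\cdot y}\sum_{\gamma:\,0\to y}q(\gamma)f(\gamma)
	=\sfe^{t\cdot y}\sum_{\gamma:\,0\to y}q(\gamma)f(\gamma)\,\mathds{1}_{\{\CPts(\gamma)\setminus\{0,y\}=\emptyset\}}
	+\sfe^{t\cdot y}\sum_{\gamma:\,0\to y}q(\gamma)f(\gamma)\,\mathds{1}_{\{\CPts(\gamma)\setminus\{0,y\}\neq\emptyset\}}.
\]
On the first term I would apply Theorem~\ref{thm:path_cone_points} directly: the event $\{\CPts(\gamma)\setminus\{0,y\}=\emptyset\}$ forces $|\CPts_{t_0,\delta_0}(\gamma)|\le 2\le p\norm{y}$ as soon as $\norm{y}\ge 2/p$, so (using $\normsup{f}\le1$, $y\in\fcone$, and the empty-conditioning convention of Definition~\ref{def:conditional_weight} to take $\tilde\gamma$ trivial) this term is at most $\sfe^{-a\nu(y)}\le C\sfe^{-c\norm{y}}$, the last bound holding because $\nu$ is a norm. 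For the second term I would substitute~\eqref{eq:expectation_weights_to_connected_weights}, multiply by $\sfe^{t\cdot y}$, and use that every surviving configuration satisfies $\displace(\bar\gamma)=y$ together with the additivity of $\displace$ under concatenation; the exponential factor then distributes over the pieces and, by~\eqref{eq:def:facto_weight}, the ``renewal'' part of~\eqref{eq:expectation_weights_to_connected_weights} becomes \emph{exactly} the sum displayed in the statement. Hence the whole problem reduces to proving that the ``connected'' part
\[
	\calE:=\sfe^{t\cdot y}\sum_{\substack{\gamma_L\in\SetRootMarkBackCont^{\irreducible}\\\gamma_R\in\SetRootMarkForwCont^{\irreducible}}}\;\sum_{n\ge0}\;\sum_{\gamma_1,\dots,\gamma_n\in\SetRootDiaCont^{\irreducible}}f(\bar\gamma)\,\mathds{1}_{\{\displace(\bar\gamma)=y\}}\,\qco(\gamma_L,\gamma_1,\dots,\gamma_R)
\]
is $\sfO(\sfe^{-c\norm{y}})$.

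To bound $\calE$ I would fix $\epsilon\in(0,p)$ and split according to the value of $n$. The tail $n\ge\epsilon\norm{y}$ is covered verbatim by Corollary~\ref{cor:renewal_density}, which bounds it by $C\normsup{f}\,\sfe^{-c\norm{y}}\le C\sfe^{-c\norm{y}}$. For $n<\epsilon\norm{y}$ I would combine two facts: (i) Lemma~\ref{lem:exp_dec_connected_weights}, which gives $\qco(\gamma_L,\gamma_1,\dots,\gamma_R)\le C\,q(\bar\gamma)$; and (ii) the fact that, the pieces being irreducible, each $\bar\gamma$ arises from only a bounded number of such tuples — the $n+1$ junctions are cone-points of $\bar\gamma$, while no interior point of an irreducible piece can be a cone-point of $\bar\gamma$, so the tuple is (up to the harmless ambiguity of whether the empty end-pieces are present) the irreducible decomposition of $\bar\gamma$, whence in particular $|\CPts(\bar\gamma)|\le n+3$. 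Therefore the $n<\epsilon\norm{y}$ part of $\calE$ is at most
\[
	C\,\sfe^{t\cdot y}\sum_{\substack{\gamma:\,0\to y\\ |\CPts_{t_0,\delta_0}(\gamma)|\le\epsilon\norm{y}+3}}q(\gamma)
	\ \le\ C\,\sfe^{t\cdot y}\sum_{\substack{\gamma:\,0\to y\\ |\CPts_{t_0,\delta_0}(\gamma)|\le p\norm{y}}}q(\gamma)
	\ \le\ C\sfe^{-a\nu(y)}\ \le\ C\sfe^{-c\norm{y}}
\]
for $\norm{y}$ large, where the middle inequality uses $\epsilon<p$ and the last two steps use Theorem~\ref{thm:path_cone_points} once more. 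Collecting the three estimates, and absorbing the finitely many $y\in\fcone\cap\Zd$ of small norm into the constant $C$ (on these $\sfe^{t\cdot y}$ is bounded uniformly in $t\in W_0$), gives the claim, with all constants uniform in $t\in W_0$ and $y\in\fcone$ thanks to the corresponding uniformity in Theorem~\ref{thm:path_cone_points} and Corollary~\ref{cor:renewal_density}.

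The step I expect to require the only real care is the treatment of the middle range $n<\epsilon\norm{y}$ in $\calE$: one has to recognise that configurations with few irreducible pieces correspond boundedly-to-one to genuine paths $\gamma:0\to y$ with only $\sfO(\epsilon\norm{y})$ cone-points, which is precisely what lets Theorem~\ref{thm:path_cone_points} dispose of this regime — Corollary~\ref{cor:renewal_density} on its own only controls $n\ge\epsilon\norm{y}$. Everything else is bookkeeping: matching the terms after multiplying~\eqref{eq:expectation_weights_to_connected_weights} by $\sfe^{t\cdot y}$, and invoking $\normsup{f}\le1$ together with $\nu(y)\ge c\norm{y}$.
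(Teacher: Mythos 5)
Your proposal is correct and is essentially the paper's own argument: the paper proves this lemma by exactly the combination you describe, namely the identity~\eqref{eq:expectation_weights_to_connected_weights}, Theorem~\ref{thm:path_cone_points} (to kill both the paths without interior cone-points and the connected part with fewer than \(\epsilon\norm{y}\) irreducible pieces, via \(\qco\leq Cq\) from Lemma~\ref{lem:exp_dec_connected_weights} and the essentially one-to-one correspondence between tuples and paths with few cone-points), and Corollary~\ref{cor:renewal_density} for the connected part with many pieces. Your write-up merely makes explicit the details the paper leaves to the reader, and they check out.
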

Let us now prove the required properties of the weights \(\tilde{q}_t\).
\begin{lemma}
	\label{lem:exp_dec_renorm_connected_weights}
	There exist \(C, c > 0\) such that, for any \(t\in W_0\),
	\[
		\sum_{y\in\fcone} \sum_{\gamma:\,0\to y} \tilde{q}_t(\gamma) \mathds{1}_{\{\norm{y}\geq l\}} \leq C\sfe^{-cl},
	\]
	where the sum is over \(\gamma\) admitting an irreducible decomposition.
\end{lemma}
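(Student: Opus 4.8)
The plan is to bound $\tilde q_t(\gamma)$ through the number of irreducible pieces of $\gamma$ and then trade that number against the a priori exponential cost $\sfe^{-\nu(y)}$ of a path reaching $y$. Fix $t\in W_0$ and $y\in\fcone$. If $\gamma\colon 0\to y$ admits an irreducible decomposition, write it as $\gamma=\gamma_0\concatenate\gamma_1\concatenate\cdots\concatenate\gamma_{M+1}$, so that $M=\abs{\CPts(\gamma)}-1$; Lemma~\ref{lem:exp_dec_connected_weights} (whose proof applies unchanged whether the extreme pieces $\gamma_0,\gamma_{M+1}$ are diamond-, backward- or forward-irreducible) gives $\tilde q(\gamma)=\qco(\gamma_0,\dots,\gamma_{M+1})\le C\sfe^{-cM}q(\gamma)$, hence, after adjusting the constant, $\tilde q(\gamma)\le C\sfe^{-c\abs{\CPts(\gamma)}}q(\gamma)$. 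Since for $y\in\fcone$ every path $\gamma\colon0\to y$ with at least one cone-point admits an irreducible decomposition, summing over such $\gamma$ grouped by $k=\abs{\CPts(\gamma)}$ yields, for $\norm{y}$ large,
\[
	\sum_{\substack{\gamma\colon 0\to y\\ \gamma\text{ has an irreducible decomposition}}}\tilde q_t(\gamma)
	\;\le\; C\,\sfe^{t\cdot y}\sum_{k\ge 1}\sfe^{-ck}\sum_{\substack{\gamma\colon 0\to y\\ \abs{\CPts(\gamma)}=k}} q(\gamma),
\]
and I would split the $k$-sum at the threshold $k=p\norm{y}$, where $p>0$ is the cone-point density produced by Theorem~\ref{thm:path_cone_points}.

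In the regime $k\le p\norm{y}$ I use $\sfe^{-ck}\le1$ and Theorem~\ref{thm:path_cone_points} (with $\tilde\gamma$ empty, applicable because $y\in\fcone=\fcone_{t_0,\delta_0}$, $t\in W_0$ and $\norm{y}\ge n_0$), which bounds $\sfe^{t\cdot y}\sum_{\gamma\colon 0\to y,\,\abs{\CPts(\gamma)}\le p\norm{y}}q(\gamma)$ by $\sfe^{-a\nu(y)}$. In the regime $k>p\norm{y}$ I split $\sfe^{-ck}\le\sfe^{-cp\norm{y}/2}\sfe^{-ck/2}$, so that this contribution is at most $\sfe^{-cp\norm{y}/2}\,\sfe^{t\cdot y}\sum_{\gamma\colon 0\to y}q(\gamma)=\sfe^{-cp\norm{y}/2}\,\sfe^{t\cdot y}G(0,y)$; because every $t\in W_0\subset\partial\calW_\nu$ satisfies $t\cdot y\le\nu(y)$ while $G(0,y)=\sfe^{-\nu(y)(1+\sfo_{\norm{y}}(1))}$ by Property~\ref{weight_property:ICL} and $\nu$ is a norm, the quantity $\sfe^{t\cdot y}G(0,y)$ is subexponential in $\norm{y}$ uniformly in $t$, so this contribution is $\le\sfe^{-cp\norm{y}/4}$ once $\norm{y}$ is large. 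Adding the two regimes gives $\sum_{\gamma\colon 0\to y,\ \text{irr.\ decomp.}}\tilde q_t(\gamma)\le C\bigl(\sfe^{-a\nu(y)}+\sfe^{-cp\norm{y}/4}\bigr)\le C\sfe^{-c_1\norm{y}}$ for all $y\in\fcone$ with $\norm{y}$ large, uniformly in $t\in W_0$ (using $\nu(y)\ge c_2\norm{y}$); the finitely many remaining $y\in\fcone$ contribute a bounded total, each individual sum being finite since dominated by $C\,G(0,y)$. Summing $C\sfe^{-c_1\norm{y}}$ over $y\in\fcone$ with $\norm{y}\ge l$ (absorbing the polynomial surface factor) then gives the asserted $C\sfe^{-cl}$. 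The statements for $\gamma\in\SetRootMarkBackCont$ and $\gamma\in\SetRootMarkForwCont$ follow identically from the corresponding variants of Theorem~\ref{thm:path_cone_points} and Lemma~\ref{lem:irred_decomp:unif_exp_dec}.

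The main obstacle is to obtain genuine exponential, rather than merely stretched-exponential, decay in $l$. Estimating $\tilde q_t(\gamma)$ by a product of the per-piece tails of Lemma~\ref{lem:irred_decomp:unif_exp_dec} together with only the observation ``some irreducible piece has a large displacement'' leads to a bound of the form $\sum_N(C_1\sfe^{-c})^N\sfe^{-\lambda l/N}$, which is stretched-exponential unless one can argue separately that $C_1\sfe^{-c}<1$; the per-piece mass $C_1$ is not controlled a priori against the connected-weight decay rate $c$, so this route stalls. Routing the estimate through Theorem~\ref{thm:path_cone_points} circumvents this: ``few cone-points'' is turned \emph{directly} into the exponential cost $\sfe^{-a\nu(y)}$, and the number of pieces then enters only through the innocuous bound $\sfe^{-ck}\le\sfe^{-cp\norm{y}/2}$ in the complementary regime. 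A secondary subtlety is that the inner sum must carry the weight $\sfe^{t\cdot y}$ \emph{before} the summation over $y$: neither $\sum_{y\in\fcone}G(0,y)$ nor $\sum_{y\in\fcone}\sfe^{t\cdot y}G(0,y)$ converges when $y$ is not forced to be transverse to the direction dual to $t$ (both pick up a non-decaying contribution from each shell $\norm{y}\asymp R$), so the gains $\sfe^{-a\nu(y)}$ and $\sfe^{-cp\norm{y}}$ from the two regimes are genuinely needed.
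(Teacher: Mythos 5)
Your proof is correct and follows essentially the same route as the paper's (two-line) argument: split according to the number of cone-points, use Lemma~\ref{lem:exp_dec_connected_weights} together with \(t\in\calW_\nu\) and the definition of \(\nu\) when there are many, and Theorem~\ref{thm:path_cone_points} when there are few. Your only deviation is to place the threshold at \(p\norm{y}\) rather than \(\epsilon l\), which is a sensible way of making the summation over \(y\) explicit and matches the intended argument.
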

\begin{proof}
	Recall the definition of \(\tilde{q}_t\) (equation~\eqref{eq:def:facto_weight}). We first partition on whether \(\gamma\) has more that \(\epsilon l\) cone-points (for some \(\epsilon>0\) small) or not. In the first case, the claim follows from an application of Lemma~\ref{lem:exp_dec_connected_weights}, the definition of \(\nu\) and \(t\in\calW_{\nu}\). In the second case, it follows from Theorem~\ref{thm:path_cone_points}. 
\end{proof}

\begin{lemma}
	\label{lem:normalization_renorm_connected_weights}
	For any \(t\in W_0\),
	\begin{equation}
		\label{eq:normalization_renorm_connected_weights}
		\sum_{\gamma\in\SetRootDiaCont} \tilde{q}_t(\gamma) = 1.
	\end{equation}
\end{lemma}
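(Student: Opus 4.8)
The plan is to reduce the normalization to a study of the total mass $m(t):=\sum_{\gamma\in\SetRootDiaCont}\tilde q_t(\gamma)$ via the generating-function identity
\[
\sum_{\gamma\in\SetRootDiaCont}\sfe^{t\cdot\displace(\gamma)}q(\gamma)=\sum_{n\geq 1}m(t)^n ,
\]
which I claim holds for \emph{every} tilt $t\in\Rd$ as an equality in $[0,\infty]$. To obtain it, recall from Subsection~\ref{subsec:irreducible_decomposition} that every nontrivial $\gamma\in\SetRootDiaCont$ has $0$ and $\displace(\gamma)$ among its cone-points and hence a unique irreducible decomposition $\gamma=\eta_1\concatenate\cdots\concatenate\eta_M$ into diamond-contained irreducible pieces. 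By~\eqref{eq:weight_to_connected_weight}, $q(\eta_1\concatenate\cdots\concatenate\eta_M)$ is the sum over partitions of $(\eta_1,\dots,\eta_M)$ into consecutive nonempty blocks $B_1,\dots,B_n$ of $\prod_{j=1}^n\qco(B_j)$, and by~\eqref{eq:def:facto_weight} each $\qco(B_j)$ equals $\tilde q$ of the path obtained by concatenating the pieces in $B_j$. Multiplying by $\sfe^{t\cdot\displace(\gamma)}=\prod_j\sfe^{t\cdot\displace(B_j)}$ and reorganizing this termwise-nonnegative sum as a sum over $n\geq1$ of $n$ independent blocks gives the identity, since the sum over one block is exactly $\sum_{\gamma\in\SetRootDiaCont}\tilde q_t(\gamma)=m(t)$. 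In particular $m(t)<1$ forces the left-hand side to be finite, and $m(t)\geq1$ makes it $+\infty$.

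For the upper bound I would use that $m$ is lower semicontinuous on $\Rd$, being a supremum of finite partial sums, each continuous in $t$. Fix $t\in W_0$ and $\lambda\in(0,1)$. Since $0$ is interior to the bounded convex set $\calW_\nu$ and $t\in\partial\calW_\nu$, the point $\lambda t$ is interior to $\calW_\nu$, so by polar duality there is $\rho(\lambda)>0$ with $\lambda t\cdot y\leq\nu(y)-\rho(\lambda)\norm{y}$ for all $y$; combined with $G(0,y)=\sfe^{-\nu(y)(1+\sfo(1))}$ (Property~\ref{weight_property:ICL}) this yields $\sum_{y\in\Zd}\sfe^{\lambda t\cdot y}G(0,y)<\infty$. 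Since $\sum_{\gamma\in\SetRootDiaCont}\sfe^{\lambda t\cdot\displace(\gamma)}q(\gamma)\leq\sum_{y}\sfe^{\lambda t\cdot y}G(0,y)<\infty$, the identity above forces $\sum_{n\geq1}m(\lambda t)^n<\infty$, hence $m(\lambda t)<1$. Letting $\lambda\uparrow1$ and using lower semicontinuity gives $m(t)\leq1$.

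For the lower bound, suppose towards a contradiction that $m(t)<1$ for some $t\in W_0$, $\nu$-dual to $s\in U_0$. By Lemma~\ref{lem:exp_dec_renorm_connected_weights} the weights $\tilde q_t$ have exponential tails, so $m$, $m_L(t')=\sum_{\gamma_L}\tilde q_{t'}(\gamma_L)$ and $m_R(t')=\sum_{\gamma_R}\tilde q_{t'}(\gamma_R)$ are finite and continuous near $t$; hence $m(t+\epsilon\hat{t})<1$ for small $\epsilon>0$. Writing $R_{t'}(y)$ for the renewal sum appearing with $f\equiv1$ in Lemma~\ref{lem:approx_by_factorized_weights}, one has $R_{t'}(y)=\sfe^{t'\cdot y}W(y)$ with $W$ independent of $t'$, and the same reorganization as in the first paragraph gives $\sum_y\sfe^{t'\cdot y}W(y)=m_L(t')m_R(t')\sum_{n\geq0}m(t')^n$; applied to $t'=t+\epsilon\hat{t}$ this is finite, so $W(ns)\leq C\sfe^{-(t+\epsilon\hat{t})\cdot ns}$ and therefore $R_t(ns)=\sfe^{t\cdot ns}W(ns)\leq C\sfe^{-\epsilon\,\hat{t}\cdot ns}\leq C\sfe^{-c'n}$ with $c'=\epsilon\nu(s)/\norm{t}>0$. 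On the other hand Lemma~\ref{lem:approx_by_factorized_weights} gives $\sfe^{t\cdot ns}G(0,ns)=R_t(ns)+\sfO(\sfe^{-cn})$, while $\tfrac1n\log\bigl(\sfe^{t\cdot ns}G(0,ns)\bigr)=\nu(s)+\tfrac1n\log G(0,ns)\to0$ by Property~\ref{weight_property:ICL} and $t\cdot s=\nu(s)$, so $\sfe^{t\cdot ns}G(0,ns)=\sfe^{\sfo(n)}$ is not exponentially small. This contradiction shows $m(t)\geq1$, and with the previous paragraph $m(t)=1$ for all $t\in W_0$.

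The main obstacle is the lower bound: it is exactly here that one must use that $\nu$ is the \emph{precise} exponential rate and $t$ a \emph{precise} dual vector, so that the tilted two-point function along the dual ray neither grows nor decays exponentially; and the step turning $m(t)<1$ into honest exponential decay of the renewal Green function requires the quantitative exponential-tail estimate of Lemma~\ref{lem:exp_dec_renorm_connected_weights}, which is what permits comparing $m$ at the nearby tilt $t+\epsilon\hat{t}$ lying outside $\partial\calW_\nu$, where the renewal representation of Lemma~\ref{lem:approx_by_factorized_weights} is not directly available.
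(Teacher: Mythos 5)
Your proof is correct and takes essentially the same route as the paper's: the upper bound compares the renewal identity with the tilted two-point function at tilts strictly inside \(\calW_{\nu}\), and the lower bound combines Lemma~\ref{lem:approx_by_factorized_weights} along the dual direction with the exponential tails of Lemma~\ref{lem:exp_dec_renorm_connected_weights} to exploit tilts slightly beyond \(\partial\calW_{\nu}\). The only difference is packaging: the paper phrases the second step as a radius-of-convergence statement for \(\bbG(\lambda\hat{t})\) together with the renewal equation \(\bbG=\bbQ/(1-\bbQ)\), whereas you run a direct contradiction starting from \(m(t)<1\).
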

\begin{proof}
	Consider the generating function 
	\[
		\bbG(h) = \sum_{y\in \fcone} \sfe^{y\cdot h} \sum_{n\geq 1} \sum_{\gamma_1, \dots, \gamma_n\in\SetRootDiaCont} \mathds{1}_{\{\displace(\bar{\gamma}) = y\}} \prod_{i=1}^{n} \tilde{q}(\gamma_i).
	\]
	We claim that, for any \(t\in W_0\), \(\bbG(\lambda\hat{t}) < \infty \) for any \(\lambda<\norm{t}\) and \(\bbG(\lambda\hat{t})=\infty \) for any \(\lambda>\norm{t}\). Since \(t\in\calW_{\nu}\), the case \(\lambda<\norm{t}\) follows from
	\[
		\bbG(h) \leq \sum_{y\in\Zd} \sfe^{h\cdot y} \sum_{\gamma:\,0\to y} q(\gamma),
	\]
	the latter series converging on the interior of \(\calW_{\nu}\).
	
	The case \(\lambda>\norm{t}\) follows from \(U_0\subset \fcone\) (recall the statement of Theorem~\ref{thm:path_cone_points}), Lemma~\ref{lem:approx_by_factorized_weights} and the bound of Lemma~\ref{lem:exp_dec_renorm_connected_weights} applied to the pieces \(\gamma_L,\) and \(\gamma_R\). Indeed, restricting the summation to vertices \(y\) close to the half-line \(\setof{\alpha s}{\alpha\in\R_{\geq 0}}\), with \(s\in U_0\) dual to \(t\), is sufficient to obtain a diverging sum.
	
	Now, consider
	\[
		\bbQ(h) = \sum_{\gamma\in \SetRootDiaCont} \tilde{q}(\gamma) \sfe^{h\cdot \displace(\gamma)}.
	\]
	Observe that convergence of \(\bbG(h)\) implies convergence of \(\bbQ(h)\) and the following renewal equation holds:
	\[
		\bbG(h) = \frac{\bbQ(h)}{1-\bbQ(h)}.
	\]
	From Lemma~\ref{lem:exp_dec_renorm_connected_weights}, there exists \(\epsilon > 0\) such that \(\bbQ\big(\hat{t} (\norm{t}+ \epsilon)\big) <\infty\) for any \(t\in W_0\). In particular, for any \(t\in W_{0}\), the fact that \(\norm{t}\) is the radius of convergence of \(\lambda\mapsto \bbG(\lambda\hat{t})\) implies that \(1-\bbQ(t) = 0\), which is equivalent to~\eqref{eq:normalization_renorm_connected_weights}.
\end{proof}

\begin{lemma}
	\label{lem:centering_steps}
	For any \(t\in W_0\), there is a unique \(s\in\bbS^{d-1}\) which is \(\nu\)-dual to \(t\). Moreover, there exists \(\zeta>0\) such that
	\begin{equation}
		\sum_{\gamma\in\SetRootDiaCont} \tilde{q}_{t}(\gamma) \displace(\gamma) = \zeta s.
	\end{equation}
\end{lemma}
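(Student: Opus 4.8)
The plan is to study the moment generating function of the displacement under $\tilde q_t$, which is a probability measure by Lemma~\ref{lem:normalization_renorm_connected_weights}. Set
\[
\bbQ(h) = \sum_{\gamma\in\SetRootDiaCont}\tilde q(\gamma)\,\sfe^{h\cdot\displace(\gamma)},
\]
so that the quantity to be identified is exactly $\nabla\bbQ(t)=\sum_{\gamma\in\SetRootDiaCont}\tilde q_t(\gamma)\,\displace(\gamma)=:v$. Lemma~\ref{lem:normalization_renorm_connected_weights} gives $\bbQ(t')=1$ for all $t'\in W_0$, and the exponential-tail bound of Lemma~\ref{lem:exp_dec_renorm_connected_weights} shows $\bbQ$ is finite on a ball $\bbB_c(t)$ and extends holomorphically to a complex neighbourhood of $t$; hence $\bbQ$ is real-analytic and convex there. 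I will also use, from the proof of Lemma~\ref{lem:normalization_renorm_connected_weights}, the renewal relation $\bbG(h)=\bbQ(h)/(1-\bbQ(h))$ and the bound $\bbG(h)\le\sum_{y}\sfe^{h\cdot y}\sum_{\gamma:0\to y}q(\gamma)$, whose right-hand side converges on $\mathrm{int}\,\calW_\nu$.

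Step one is to show $v\neq 0$ and $v\in\fcone_{t_0,\delta_0}$. Every $\gamma\in\SetRootDiaCont$ is forward-contained with $\fend(\gamma)=0$, so $\displace(\gamma)=\bend(\gamma)\in\fcone_{t_0,\delta_0}$; since $\fcone_{t_0,\delta_0}$ is a closed convex cone and $\tilde q_t$ a probability measure with finite first moment, $v\in\fcone_{t_0,\delta_0}$. For $v\neq 0$: along the ray through $t$, $g(\lambda)=\bbQ(\lambda\hat t)$ is convex and analytic, and (as in the proof of Lemma~\ref{lem:normalization_renorm_connected_weights}) $\bbG(\lambda\hat t)<\infty$, hence $g(\lambda)<1$, for $\lambda<\norm t$, while $g(\norm t)=1$; convexity forces $g'(\norm t)=\hat t\cdot v>0$. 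Write $v=\zeta u$ with $\zeta=\norm v>0$ and $u\in\bbS^{d-1}\cap\fcone_{t_0,\delta_0}$. Next I show $u$ is $\nu$-dual to $t$: for $h\in\mathrm{int}\,\calW_\nu$ near $t$ the bound above gives $\bbG(h)<\infty$, hence $\bbQ(h)<1$, and by continuity $\bbQ\le 1$ on $\calW_\nu$ near $t$; the convexity inequality $\bbQ(h)\ge 1+v\cdot(h-t)$ then yields $\calW_\nu\cap\bbB_c(t)\subseteq\{h:v\cdot(h-t)\le 0\}$, and convexity of $\calW_\nu$ together with $t\in\partial\calW_\nu$ upgrades this to $\calW_\nu\subseteq\{h:v\cdot(h-t)\le 0\}$. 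So $v$, and hence $u$, is an outer normal to $\calW_\nu$ at $t$, i.e.\ $u$ is $\nu$-dual to $t$.

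For uniqueness of the dual direction I would identify $\calW_\nu$ with $\{\bbQ\le 1\}$ near $t$. One inclusion is the previous step. For the converse, $W_0$ contains a relative neighbourhood of $t$ in $\partial\calW_\nu$ (it is the union of the faces $T_{s'}$ over the open set $U_0$ of directions produced by Theorem~\ref{thm:path_cone_points}, with $t$ in its relative interior), so $\partial\calW_\nu\cap\bbB_c(t)\subseteq W_0\subseteq\{\bbQ=1\}$; combined with $\mathrm{int}\,\calW_\nu\cap\bbB_c(t)\subseteq\{\bbQ<1\}$, a segment joining a point of $\mathrm{int}\,\calW_\nu$ to a hypothetical point of $\{\bbQ<1\}\setminus\calW_\nu$ would cross $\partial\calW_\nu\subseteq\{\bbQ=1\}$, contradicting convexity of $\bbQ$ along it. Thus $\{\bbQ\le1\}$ and $\calW_\nu$ agree near $t$. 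Since $\nabla\bbQ(t)=v\neq 0$, the analytic implicit function theorem makes $\{\bbQ=1\}$ an analytic hypersurface near $t$ with unit normal $u$ at $t$; therefore $\partial\calW_\nu$ coincides with it near $t$, $\calW_\nu$ has a unique supporting hyperplane at $t$, hence a unique $\nu$-dual direction, which must be $u$. As $s_0$ is $\nu$-dual to $t$ by hypothesis, $s=s_0=u$, and then $\nabla\bbQ(t)=\zeta s$ with $\zeta=\norm v>0$ is precisely the claimed identity.

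The analyticity and convexity of $\bbQ$ and the convex-geometry manipulations are routine. The step I expect to require the most care is the identification of $\{\bbQ\le1\}$ with $\calW_\nu$ in a full neighbourhood of $t$: the inclusion $\calW_\nu\subseteq\{\bbQ\le1\}$ is immediate, but the reverse hinges on $W_0$ being a genuinely relatively open neighbourhood of $t$ in $\partial\calW_\nu$ --- transparent when $t$ lies in the relative interior of the $W_0$ produced by Theorem~\ref{thm:path_cone_points}, but needing a short extra argument, tracking the construction of $W_0$ from NSA, to handle every $t\in W_0$.
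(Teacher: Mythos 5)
Your first half is correct and genuinely different in spirit from the paper's argument: showing that \(v=\sum_{\gamma\in\SetRootDiaCont}\tilde q_t(\gamma)\,\displace(\gamma)=\nabla\bbQ(t)\) is nonzero, and that \(\bbQ\le 1\) on \(\calW_\nu\) near \(t\) together with the gradient inequality forces \(v/\norm{v}\) to be an outer normal, i.e.\ \emph{some} \(\nu\)-dual direction of \(t\), is fine. The gap is in the uniqueness half, and it is not cosmetic: without uniqueness the identity ``\(=\zeta s\)'' with \(s\) \emph{the} dual direction is not established. Your argument needs \(\partial\calW_\nu\cap\bbB_c(t)\subseteq W_0\), so that \(\bbQ=1\) on a full relative neighbourhood of \(t\) in \(\partial\calW_\nu\) (via Lemma~\ref{lem:normalization_renorm_connected_weights}), whence \(\{\bbQ\le1\}=\calW_\nu\) locally and the implicit function theorem applies. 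But \(t\in W_0=\bigcup_{s'\in U_0}T_{s'}\) only guarantees that \emph{one} dual direction of \(t\) lies in \(U_0\) (which is moreover a closed ball, not open). If \(\calW_\nu\) had a corner at \(t\) whose normal cone is not contained in \(U_0\) --- and this cannot be excluded a priori, since no smoothness or strict convexity of \(\calW_\nu\) is known beforehand (the paper stresses exactly this at the end of Section~\ref{sec:SaturationConditions}) --- then boundary points arbitrarily close to \(t\), lying on faces exposed by normals outside \(U_0\), are not in \(W_0\), and no lemma tells you \(\bbQ=1\) there (the weights \(\tilde q\) and the identity \(\bbQ=1\) are tied to the fixed cone parameters \((t_0,\delta_0)\) and to \(t\in W_0\)). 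Note that this can occur even when \(t\) is dual to the central direction \(s_0\), so it is not merely a boundary-of-\(W_0\) technicality; the ``short extra argument tracking the construction of \(W_0\)'' you defer to is essentially the regularity of \(\partial\calW_\nu\) near \(t\) that the lemma chain is designed to \emph{produce}, so as written the step is circular/missing.

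The paper closes uniqueness by a probabilistic argument that you could graft onto your setup in place of the local-coincidence step: by Lemmas~\ref{lem:normalization_renorm_connected_weights} and~\ref{lem:exp_dec_renorm_connected_weights}, the push-forward of \(\tilde q_t\) by \(\displace\) is a probability law with exponential tails and mean \(v_t\neq0\); standard large deviations then show that \(\sum_{M\ge1}\sum_{\gamma_1,\dots,\gamma_M\in\SetRootDiaCont}\mathds{1}_{\{\displace(\bar\gamma)=ns'\}}\prod_i\tilde q_t(\gamma_i)\) is exponentially small in \(n\) unless \(s'\) is a positive multiple of \(v_t\), whereas for \emph{any} \(s\) that is \(\nu\)-dual to \(t\), Lemma~\ref{lem:approx_by_factorized_weights} combined with \(G(0,ns)=\sfe^{-n\nu(s)(1+\sfo(1))}\) and \(t\cdot s=\nu(s)\) forces that same sum to be \(\sfe^{\sfo(n)}\). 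Hence every dual direction of \(t\) equals \(v_t/\norm{v_t}\), which gives uniqueness and the formula simultaneously, for every \(t\in W_0\) (interior or not). With that substitution your computation of \(\nabla\bbQ(t)\) still yields \(\zeta=\norm{v_t}>0\), and the convex-analytic identification of \(v_t\) as an outer normal becomes redundant but harmless.
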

\begin{proof}
	Let
	\[
		v_{t} =  \sum_{\gamma\in\SetRootDiaCont} \tilde{q}_{t}(\gamma) \displace(\gamma).
	\]
	Note that \(v_t\) is non-zero, since \(\gamma\in\SetRootDiaCont\) implies \(\displace(\gamma)\cdot t_0>0\). By Lemmas~\ref{lem:normalization_renorm_connected_weights} and~\ref{lem:exp_dec_renorm_connected_weights}, the push-forward of \(\tilde{q}_{t}\) by \(\displace\) is a probability distribution on \(\Zd\) with exponential tails. Therefore, by standard large deviation estimates,
	\[
		\sum_{M\geq 1} \sum_{\gamma_1, \dots, \gamma_M\in\SetRootDiaCont} \mathds{1}_{\{\displace(\bar{\gamma}) = ns'\}} \prod_{i=1}^{M} \tilde{q}_{t}(\gamma_i) \leq C_{s'} \sfe^{-c_{s'} n},
	\]
	if \(s'\neq \lambda v_{t}\) for all \(\lambda>0\).
	But, by Lemmas~\ref{lem:approx_by_factorized_weights} and~\ref{lem:exp_dec_renorm_connected_weights} and the definition of \(\nu\), when \(t\) is \(\nu\)-dual to \(s\), we necessarily have
	\[
		\sum_{M\geq 1} \sum_{\gamma_1, \dots, \gamma_M\in\SetRootDiaCont} \mathds{1}_{\{\displace(\bar{\gamma}) = ns\}} \prod_{i=1}^{M} \tilde{q}_{t}(\gamma_i) = \sfe^{\sfo(n)}.
	\]
	We thus conclude that \(s\) \(\nu\)-dual to \(t\) implies \(s= v_t/\norm{v_t}\).
\end{proof}

\begin{lemma}
	\label{lem:analyticity_W}
	Let \(W\subset W_0\) be open. Then, \(W\) is analytic.
\end{lemma}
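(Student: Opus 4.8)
The plan is to exhibit $W$ locally as a nondegenerate level set of the real-analytic function $h\mapsto\bbQ(h)$ and then to invoke the analytic implicit function theorem, identifying this level set with $\partial\calW_\nu$ along the way. First I would record that $\bbQ$ is real-analytic near $W_0$. Fixing $t\in W_0$, Lemma~\ref{lem:exp_dec_renorm_connected_weights} gives $\sum_{\gamma}\tilde{q}(\gamma)\sfe^{t\cdot\displace(\gamma)}\mathds 1_{\{\norm{\displace(\gamma)}\ge l\}}\le C\sfe^{-cl}$, so the nonnegative coefficients $\tilde{q}(\gamma)\sfe^{t\cdot\displace(\gamma)}$ are summable with exponential tails in $\norm{\displace(\gamma)}$. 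Hence for complex $h$ with $\norm{h-t}$ small the series $\sum_\gamma\tilde{q}(\gamma)\sfe^{h\cdot\displace(\gamma)}$ converges absolutely and locally uniformly, so $\bbQ$ extends holomorphically, hence real-analytically, to a neighborhood of $t$, with $\nabla_h\bbQ$ computed termwise. By Lemma~\ref{lem:normalization_renorm_connected_weights}, $\bbQ\equiv 1$ on $W_0$.

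Next I would use Lemma~\ref{lem:centering_steps}: for every $t\in W_0$, $\nabla_h\bbQ(t)=\sum_{\gamma\in\SetRootDiaCont}\tilde{q}_t(\gamma)\displace(\gamma)=\zeta\, s$ with $\zeta>0$ and $s\in\bbS^{d-1}$ the unique $\nu$-dual direction of $t$, so $\nabla_h\bbQ(t)\ne 0$. The analytic implicit function theorem then produces, around each $t\in W$, a neighborhood $U_t$ in which $\{h:\bbQ(h)=1\}$ is a real-analytic hypersurface (the graph of a real-analytic function over $s^\perp$). Since $W\subseteq W_0\subseteq\{\bbQ=1\}$, this already displays $W$ as locally contained in a real-analytic hypersurface.

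To finish — and to match the phrasing of Theorem~\ref{thm:main_Wulff_analytic} — I would identify this level set with $\partial\calW_\nu$ near $t$, which, since $W$ is relatively open in $W_0\subseteq\partial\calW_\nu$, upgrades the previous step to the assertion that $W$ is itself a real-analytic hypersurface. Here I would exploit radial monotonicity of $\bbQ$ (nonnegative coefficients): on $\mathrm{int}\,\calW_\nu$ one has $\bbG(h)\le\sum_y G(0,y)\sfe^{h\cdot y}<\infty$ — via \eqref{eq:expectation_weights_to_connected_weights} with $f\equiv1$ and $\gamma_L,\gamma_R$ empty, together with $G(0,y)=\sfe^{-\nu(y)(1+\sfo_{\norm{y}}(1))}$ — so the renewal identity $\bbG=\bbQ/(1-\bbQ)$ forces $\bbQ<1$ there; and for $h\notin\calW_\nu$ close to $t$, the direction $s'$ witnessing $h\cdot s'>\nu(s')$ lies in $\fcone$ (using $U_0\subseteq\fcone$), so restricting the sum defining $\bbG(h)$ to vertices near $\R_{\ge 0}s'$ and invoking Lemma~\ref{lem:approx_by_factorized_weights} gives $\bbG(h)=\infty$, hence $\bbQ(h)\ge 1$. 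By continuity, inside $U_t$ the regions $\{\bbQ<1\}$, $\{\bbQ=1\}$, $\{\bbQ>1\}$ are exactly $\mathrm{int}\,\calW_\nu$, $\partial\calW_\nu$, and $\Rd\setminus\calW_\nu$.

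I expect this last step — showing that $\{\bbQ=1\}$ does not protrude from $\partial\calW_\nu$ — to be the main obstacle, because the needed two-sided control on $\bbG$ (finiteness inside $\calW_\nu$, divergence outside) must be established throughout a full neighborhood of $t$ rather than only along the ray $\R\hat{t}$, where Lemma~\ref{lem:normalization_renorm_connected_weights} already provides it; the divergence half in particular leans on $U_0\subseteq\fcone$ so that Lemma~\ref{lem:approx_by_factorized_weights} remains available in the relevant directions. The remaining ingredients amount to a routine application of the analytic implicit function theorem to a convergent Laplace-type series.
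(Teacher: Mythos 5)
Your proposal is correct and follows essentially the same route as the paper: real-analyticity of \(\bbQ\) from the exponential tail bound of Lemma~\ref{lem:exp_dec_renorm_connected_weights}, the identity \(\bbQ\equiv1\) on \(W_0\) (Lemma~\ref{lem:normalization_renorm_connected_weights}), a nonvanishing derivative in the dual direction supplied by Lemma~\ref{lem:centering_steps}, and the analytic implicit function theorem. The step you single out as the main obstacle—showing \(\{\bbQ=1\}\) coincides with \(\partial\calW_\nu\) near \(t\)—is what the paper dispatches in one line via Lemma~\ref{lem:normalization_renorm_connected_weights}, and your two-sided argument for it simply reuses that lemma's proof; note also that the easy inclusion (points of \(\partial\calW_\nu\) near \(t\) lie in \(W_0\), hence in the level set) already suffices to conclude that \(W\) is analytic, since near \(t\) both the level set and the convex boundary are graphs over \(s^\perp\), so the containment forces them to coincide locally.
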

\begin{proof}
	Let \(t\in W_0\) be such that \(t\) has an open neighborhood in \(W_0\). Let \(s\) be the (unique by Lemma~\ref{lem:centering_steps}) direction \(\nu\)-dual to \(t\). It follows from~\eqref{eq:normalization_renorm_connected_weights} that, in a neighborhood of \(t\),
	\[
		t'\in \partial\calW \quad\Leftrightarrow\quad
		\bbQ(t') = \sum_{\gamma\in\SetRootDiaCont} \tilde{q}(\gamma)\sfe^{t'\cdot \displace(\gamma)} =1.
	\]
	We want to use the (analytic version of the) Implicit Function Theorem. Let \(b_1, \dots, b_{d-1}\) be such that \(\{b_1, \dots, b_{d-1}, s\}\) is an orthonormal basis of \(\Rd\).
	Introduce
	\begin{align*}
		F: \bbR^{d-1}\times \R &\to \R\\
		F(h,\lambda) &= \bbQ\Big( \sum_{i=1}^{d-1} h_ib_i + \lambda s \Big)-1.
	\end{align*}
	Define \(\tilde{h}\) by \(\tilde{h}_i = t\cdot b_i\) and let \(\tilde{\lambda} = t\cdot s = \nu(s)\). From Lemma~\ref{lem:exp_dec_renorm_connected_weights}, \(F\) is analytic in a neighborhood of \((\tilde{h},\tilde{\lambda})\). By Lemma~\ref{lem:normalization_renorm_connected_weights}, \(F(\tilde{h},\tilde{\lambda}) = 0\). Moreover, by Lemma~\ref{lem:centering_steps},
	\[
		\frac{\partial}{\partial \lambda} F\Big|_{(\tilde{h}, \tilde{\lambda})}
		=
		\sum_{\gamma\in \SetRootDiaCont} \tilde{q}_{t} \displace(\gamma)\cdot s > 0.
	\]
	Now, in a neighborhood of \(t\), \(\partial\calW\) coincides with the set
	\[
		\Bsetof{\sum_{i=1}^{d-1} h_ib_i + \lambda s}{F(h,\lambda) = 0, (h,\lambda) \text{ in a neighborhood of }(\tilde{h},\tilde{\lambda})}.
	\]
	The claim now follows from an application of the Implicit Function Theorem.
\end{proof}

\subsection{Coupling with random walk}
\label{subsec:coupling_with_RW}

We end the section by proving Theorem~\ref{thm:main_coupling_with_RW}. Using the notations and objects introduced of this section, we simply set
\begin{gather*}
	C_L = \sum_{\gamma_L\in\SetRootMarkBackCont } \tilde{q}_{t_0}(\gamma_L),\quad
	C_R = \sum_{\gamma_R\in\SetRootMarkForwCont } \tilde{q}_{t_0}(\gamma_R),\quad
	C_{LR} = C_L C_R,\\
	\rmp_L(\gamma_L) = \frac{1}{C_L} \tilde{q}_{t_0}(\gamma_L),\ \gamma_L\in\SetRootMarkBackCont,\quad
	\rmp_R(\gamma_R) = \frac{1}{C_R} \tilde{q}_{t_0}(\gamma_R),\ \gamma_R\in\SetRootMarkBackCont,\\
	\rmp(\gamma) = \tilde{q}_{t_0}(\gamma),\ \gamma\in\SetRootDiaCont.
\end{gather*}
Lemmas~\ref{lem:approx_by_factorized_weights},~\ref{lem:exp_dec_renorm_connected_weights},~\ref{lem:normalization_renorm_connected_weights} and~\ref{lem:centering_steps} provide all the required properties except for ``finite energy''. To obtain the latter, simply observe that, when the irreducible decomposition of \(\gamma\) is itself (that is, \(\gamma\) is irreducible),
\[
\tilde{q}_t(\gamma) = \sfe^{t\cdot \displace (\gamma)}q(\gamma),
\]
which implies the desired finite energy property.

\section{Proofs of the main Theorems}\label{sec:proof_main_thm}

\subsection{Local Limit Theorem and OZ asymptotics}\label{section:OZ_asymptotics}

We want to prove Theorem~\ref{thm:main_OZ_asymp}. Fix \(s\in\bbS^{d-1}\) such that NSA holds. Let \(t\) be \(\nu\)-dual to \(s\). Let \(\delta_0\) be given by Theorem~\ref{thm:path_cone_points}. Let \(\rmp\), \(\rmp_L\), \(\rmp_R\), \(C_L\) and \(C_R\) be as in Subsection~\ref{subsec:coupling_with_RW}. We shall use the same notation for their push-forwards by \(\displace\). Recall that \(\rmp\), \(\rmp_L\) and \(\rmp_R\) have exponential tails. We denote by \(\RWP{u}\) the law of the directed random walk \((S_{n})_{n\geq 0}\) on \(\Zd\), starting at \(u\in\Zd\) and with increments of law \(\rmp\), by \(E_{u}\) the corresponding expectation. By Lemma~\ref{lem:centering_steps}, \(E_{0}(S_{1})=\mu=\zeta_{0} s\). We want to study the associated Green function, i.e.
\begin{equation*}
	\RWG(0,y)=\sum_{k\geq 1}\RWP{0}(S_{k}=y).
\end{equation*}
We will derive sharp asymptotics of \(G\) for \(y\) ``close'' to \(n\mu\). Fix \(n\) large and \(y\in\Zd\) such that \(\norm{y-n\mu}<\log(n)^{2}\). Fix \(\varepsilon>0\) small. Since \(p\) has an exponential tail, it follows from standard large deviation bounds that 
\begin{equation}\label{eq:OZ_prefactor_intemediate_1}
	\sum_{k: \abs{k-n}\geq n^{1/2}\log(n)^2}\RWP{0}(S_{k}=y)\leq \sfe^{-c\log(n)^2},
\end{equation}
for some \(c>0\).

For other values of \(k\), we use the following refinement of the local limit theorem. Fix \(y_{k}=\frac{y-k\mu}{\sqrt{k}}\). There exists \(p_{1}\) a polynomial depending on the first three cumulants of \(S_{1}\) such that
\begin{equation*}
	\RWP{0}(S_{k}=y)=\frac{1}{k^{d/2}}\left(f_{D}(y_{k})+\frac{1}{\sqrt{k}}p_{1}(y_{k})f_{D}(y_{k})\right)+\sfo(k^{-(d+2)/2}).
\end{equation*}
where \(f_{D}\) the density function of multidimensional normal law with mean \(0\) and covariance matrix \(D\), where \(D\) is the covariance matrix of \(S_{1}\) (see, for instance, \cite[Corollary~22.3]{Bhattacharya-1979}).
We get 
\begin{equation}\label{eq:OZ_prefactor_intemediate_2}
\sum_{k:\, \abs{k-n}\leq n^{1/2}\log(n)^2}\!\!\!\!\!\!\!\!\!\!\!\! \RWP{0}(S_{k}=y)
=
\frac{(1+\sfO_{n}(n^{-\frac{1}{2}+\varepsilon}))}{n^{d/2}}
\sum_{k}
\left(f_{D}(y_{k})+f_{D}(y_{k})p_{1}(y_{k})n^{-1/2}\right)+\sfo(n^{-d/2}),
\end{equation}
where the condition over \(k\) is the same on the RHS.

Let us estimate the last sum over \(k\). Using the definition of \(f_{D}\), we have
\begin{multline*}
\sum_{k}
\left(f_{D}(y_{k})+f_{D}(y_{k})p_{1}(y_{k})n^{-1/2}\right)
\\
\begin{split}
&=
\frac{1}{\sqrt{(2\pi)^{d}\det(D)}}\sum_{k}\exp(-y_{k}\cdot D^{-1}\cdot y_{k})(1+p_{1}(y_{k})n^{-1/2})
\\
&=
\frac{\sqrt{n}(1+\sfO(n^{3\varepsilon-1/2}))}{
\sqrt{(2\pi)^{d}\det(D)n^{-1/2}}}
\sum_{k}n^{-1/2}\exp\bigl(- (\frac{n-k}{\sqrt{n}})^{2}  \mu\cdot D^{-1}\cdot\mu\bigr)
\bigl(1+\sfO_{n}(n^{-1/2})p_{1}(\frac{n-k}{\sqrt{n}}\mu)\bigr)
\\
&=
\frac{\sqrt{n}(1+\sfO(n^{3\varepsilon-1/2}))}
{\sqrt{(2\pi)^{d}\det(D)\mu\cdot D^{-1}\cdot\mu}}\biggl(\int_{-\infty}^{\infty}\exp(-1/2 s^{2}) \,\mathrm{d}s + \sfO(n^{-1/2})\biggr)
\\
&=
\frac{\sqrt{n}(1+\sfO(n^{3\varepsilon-1/2}))}
{\sqrt{(2\pi)^{d-1}\det(D)\mu\cdot D^{-1}\cdot\mu}}.
\end{split}
\end{multline*}
By replacing \(3\varepsilon\) by \(\varepsilon\), this last estimate combined with~\eqref{eq:OZ_prefactor_intemediate_1} and~\eqref{eq:OZ_prefactor_intemediate_2} implies
\begin{equation}\label{eq:Green_estimate}
	\RWG(0,y) = \tilde{C}n^{-(d-1)/2}(1+\sfO(n^{\varepsilon-1/2})),
\end{equation}
with \(\tilde{C}=((2\pi)^{d-1}\det(D)\mu\cdot D^{-1}\cdot\mu)^{-1/2}\) and uniformly in \(y\) satisfying \(\norm{y-n\mu}\leq \log(n)^{2}\).

To conclude, it follows directly from Lemma~\ref{lem:approx_by_factorized_weights} that 
\[
	\sfe^{\tilde{n}\nu(s)}\, G(0,\tilde{n}s) = (1+\sfO(\sfe^{-c\tilde{n}})) C_{LR} \sum_{u,v\in\Zd} \rmp_L(u) \rmp_R(v) \RWG(u,\tilde{n}s-v).
\]
On the one hand, since \(p_{L}\) and \(p_{R}\) have exponential tails, the contribution of \(u\), \(v\) such that \(\max\{\norm{u}, \norm{\tilde{n}s-v}\}\geq\log(\tilde{n}/\zeta_{0})^{2}/2\) is of order \(\sfe^{-\sfO(\log(\tilde{n})^{2})}\) (which is in particular asymptotically smaller than any polynomial). On the other hand, it follows from~\eqref{eq:Green_estimate} with \(y=v-u\) that for any \(u\), \(v\) such that \(\max\{\norm{u}, \norm{\tilde{n}s-v}\} < \log(\tilde{n}/\zeta_{0})^{2}/2\), we have
\begin{equation*}
	\RWG(u,v) = \tilde{C}\zeta_0^{(d-1)/2}\tilde{n}^{-(d-1)/2}(1+\sfO(\tilde{n}^{\varepsilon-1/2})).
\end{equation*}
In particular, we obtain the Ornstein--Zernike asymptotics
\begin{equation*}
	G(0,\tilde{n}s) = \dfrac{c_{s}}{\tilde{n}^{(d-1)/2}} \sfe^{-\tilde{n}\nu(s)} (1+\sfO(\tilde{n}^{\varepsilon-1/2})),
\end{equation*}
with \(c_{s}=\tilde{C}\zeta_0^{(d-1)/2}\sum_{u,v\in\mathbb{Z}^{d}}\rmp_{L}(u)\rmp_{R}(v)\).

\subsection{Analyticity of \(\calU_{\nu}\)}\label{sec:ProofAnalyticity}
We only sketch the argument; see~\cite[Chapter~2]{schneider_2013} for additional information.
Let \(s_0\in\bbS^{d-1}\) be such that NSA holds. Let \(\epsilon_0 > 0\) and \(U_0, W_0\) be given by Theorem~\ref{thm:path_cone_points}. Consider \(W = W_0\setminus \partial W_0\). This is an open subset of \(W_0\). In particular, by Lemma~\ref{lem:analyticity_W}, it is analytic. So, dual directions are uniquely defined for \(t\in W\). Moreover, the set of directions dual to some point in \(W\) is non-trivial, so \(W\) does not contain affine parts. In particular, still by analyticity of \(W\), the convex duality lifts to an analytic bijection between \(W\) and \(U\), an open neighborhood of \(s_0\). The local analyticity of \(\calU_{\nu}\) around \(s_0/\nu(s_0)\) follows then from the fact that \(s\mapsto 1/\nu(s)\) is analytic in a neighborhood of \(s_0\) (being the inverse of a non-zero analytic function).

\section*{Acknowledgments}
YA and YV are supported by the Swiss NSF grant 200021\_200422. SO is supported by the Swiss NSF grant 200021\_182237. All authors are members of the NCCR SwissMAP.

\appendix

\section{Properties of the weights of HT Ising paths}
\label{sec:IsingHTProperties}

In this section, we sketch the proof of Lemma~\ref{lem:IsingQ}.
\begin{lemma}
	\label{app:lem:Ising_Properties}
	Suppose \(\beta<\betac\). Then, \(q\in \calQ\).
\end{lemma}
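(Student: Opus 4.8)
\textbf{Plan for the proof of Lemma~\ref{app:lem:Ising_Properties} (equivalently, Lemma~\ref{lem:IsingQ}).} The strategy is to verify each of the seven properties \ref{weight_property:ICL}--\ref{weight_property:mixing} in turn for the infinite-volume weights \(q(\gamma) = \lim_{E\to\bbE_d} q_E(\gamma)\) coming from the high-temperature representation of the Ising two-point function, relying throughout on standard correlation inequalities (GKS, in particular) and on the exponential decay of \(G_\beta\) valid for \(\beta<\betac\). The point is that essentially all of these statements are known, in one form or another, from earlier treatments of the finite-range case (\cite{Campanino+Ioffe+Velenik-2003, Campanino+Ioffe+Velenik-2008, Ott+Velenik-2018, Ioffe+Ott+Shlosman+Velenik-2021}); the only genuine issue is to check that the arguments do not use finiteness of the range, and the infinite-range case is actually \emph{easier} to state since the edge set \(\bbE_d\) is simply \(\{\{i,j\}:J_{ij}>0\}\).

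\textbf{Key steps.} First I would set up the basic algebraic identity: for a finite box \(\Lambda\), \(q_{E_\Lambda}(\gamma)\) is a ratio of partition-function-like sums over even subgraphs, and the conditional weight \(q(\gamma'\given\gamma)\) can be rewritten, via the switching lemma / random-current-type manipulation for the high-temperature expansion, as the two-point function of a modified system in which the edges of \(\gamma\) have been ``used up''. Concretely, \(q(\gamma'\given\gamma) = q(\gamma\concatenate\gamma')/q(\gamma)\) becomes a ratio whose numerator and denominator differ only by the insertion/removal of the source constraint at the endpoints of \(\gamma'\), in an environment where the \(\tanh(\beta J)\) weights along \(\bar\gamma\) are removed. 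From this representation:
\begin{itemize}
\item \ref{weight_property:ICL}: existence and positivity of \(\nu\) follows from subadditivity (for existence) plus exponential decay of \(G_\beta\) (for positivity and the extension to a norm); this is exactly the cited adaptation of~\cite{Ott-2020b}. The finite-range hypothesis plays no role.
\item \ref{weight_property:repulsion}: self-repulsion \(\sum_{\gamma:x\to y} q(\gamma\given\gamma') \le \sum_{\gamma:x\to y}q(\gamma) = G(x,y)\) follows because removing edges (those of \(\gamma'\)) from the environment only decreases the resulting two-point function, by monotonicity of \(\langle\sigma_x\sigma_y\rangle\) in the coupling constants (GKS~II). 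One must be slightly careful that ``removing \(\gamma'\)'' is interpreted in the current expansion, but the inequality is the standard one.
\item \ref{weight_property:lower_bound} and \ref{weight_property:weight_growth}: the upper and lower bounds \(q(\gamma\given\gamma') \asymp \prod_k C J_{\gamma_k-\gamma_{k-1}}\). The upper bound comes from dropping the constraint that the path-extraction algorithm returns exactly \(\gamma\) and bounding each \(\tanh(\beta J_{ij}) \le \beta J_{ij} \le C J_{ij}\); the lower bound comes from keeping only the single even-subgraph configuration that is \(\gamma\) itself (as an edge-self-avoiding walk) and noting the algorithm returns \(\gamma\) for it, together with a uniform lower bound on the ``denominator'' partition function.
\item \ref{weight_property:finite_energy}: this is a local surgery statement; with the geometric separation hypotheses on \(\gamma,\gamma'\) one uses that the ratio \(q(\gamma\given\gamma')/q(\gamma)\) is controlled by the interaction between the two (well-separated, cone-contained) pieces, and the exponential decay of \(J\) (together with exponential decay of truncated correlations) bounds this by a constant depending only on \(R,a,\beta\).
\item \ref{weight_property:monotonicity}: \(q(\gamma\given\gamma'\concatenate\gamma'') \ge q(\gamma\given\gamma'')\) when \(\gamma\) meets \(\gamma'\concatenate\gamma''\) only at \(\bend(\gamma)\). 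Prepending \(\gamma'\) adds coupling constants far (in the current picture) from the sources of \(\gamma\); in terms of the modified environment this is again a GKS~II monotonicity statement. This is the ``least robust'' property, as the paper warns, but for the Ising HT representation it is classical.
\item \ref{weight_property:mixing}: exponential ratio mixing is the quantitative version of~\ref{weight_property:finite_energy}; it follows from exponential decay of truncated two-point functions (equivalently, of \(G_\beta\)) in the subcritical regime, giving the \(\sum_{i\in\gamma,j\in\gamma''}\sfe^{-c_{\mix}\norm{i-j}}\) error. Again a finite-range-free argument: one only needs exponential decay of \(J\) and of \(G_\beta\).
\end{itemize}
Finally I would note that the infinite-volume limit \(q(\gamma)=\lim_E q_E(\gamma)\) exists (monotone or dominated convergence of the defining ratios, using \(\beta<\betac\)) and that all the above inequalities, being preserved under the limit, hold for \(q\); hence \(q\in\calQ\).

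\textbf{Main obstacle.} The delicate points are \ref{weight_property:monotonicity} and \ref{weight_property:finite_energy}: making precise what ``removing the edges of \(\gamma\) from the environment'' means in the high-temperature/current expansion, and checking that the relevant monotonicity and surgery inequalities survive this operation with constants uniform in the box \(\Lambda\) and in the conditioning path. This is bookkeeping rather than a conceptual difficulty — it is the reason the lemma is relegated to an appendix and only ``sketched'' — and the infinite range does not add anything essential beyond replacing finite sums over neighbors by exponentially convergent ones (using \(J_x\le\sfe^{-c\norm{x}}\)). I therefore expect the proof to be a careful transcription of the finite-range arguments in~\cite{Campanino+Ioffe+Velenik-2003, Ioffe+Ott+Shlosman+Velenik-2021}, with the exponential decay of \(J\) invoked wherever a ``finite range'' hypothesis was previously used.
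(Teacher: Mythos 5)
Your plan follows the same route as the paper's appendix: work with the high-temperature representation, observe that conditioning amounts to depleting the edge environment (the paper records this as \(q(\gamma\given\gamma') = q_{\calE(\gamma')}(\gamma)\) with \(\calE(\gamma') = \bbE_d\setminus\bar\gamma'\), where \(\bar\gamma'\) is the set of edges checked by the path-extraction algorithm), then get \ref{weight_property:repulsion}, \ref{weight_property:weight_growth} and \ref{weight_property:monotonicity} from GKS~II/monotonicity in the edge set, and \ref{weight_property:finite_energy}, \ref{weight_property:mixing} from exponential decay of correlations; \ref{weight_property:ICL} is quoted from the literature (the paper cites Aizenman--Barsky--Fern\'andez). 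The ingredient you never name, and which the paper uses to prove \ref{weight_property:lower_bound}, \ref{weight_property:finite_energy} and \ref{weight_property:mixing} in one stroke, is the explicit identity of Pfister--Velenik (their Lemma~6.35, equation~\eqref{eq:ExplicitWeight} in the appendix): \(q_E(\gamma)\) equals the product of \(\tanh(\beta J)\) along \(\gamma\) times, for each checked edge \(\{u,v\}\in\bar\gamma\), a factor \(\cosh(\beta J_{uv})\exp\bigl(-\beta J_{uv}\int_0^1\lrangle{\sigma_u\sigma_v}^0_{E;\beta,J^s}\,\rmd s\bigr)\). No switching-lemma or random-current manipulation is needed; it is all done within the even-subgraph expansion.

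There are two places where your sketch, as written, does not go through. First, \ref{weight_property:lower_bound}: keeping ``only the single even-subgraph configuration that is \(\gamma\) itself'' and invoking ``a uniform lower bound on the denominator partition function'' fails, because the denominator is the full even-subgraph partition function, which grows with the volume; that bound degenerates as \(E\to\bbE_d\). The correct argument pairs each even configuration avoiding \(\bar\gamma\) with \(\gamma\), which reduces the claim to a lower bound on the ratio of partition functions with and without the checked edges; the explicit identity shows each checked edge \(e\) costs at least \(\cosh(\beta J_e)\sfe^{-\beta J_e} = (1+\sfe^{-2\beta J_e})/2\), and the normalization \(\sum_x J_x = 1\) makes the product of these factors over the checked edges attached to any vertex of \(\gamma\) bounded below by a constant, giving the per-step constant \(C_\beta\) in \ref{weight_property:lower_bound}. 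Second, for \ref{weight_property:monotonicity} your remark that the prepended \(\gamma'\) sits ``far from the sources of \(\gamma\)'' is beside the point (and would wrongly suggest a mixing-type error term): the mechanism is pure monotonicity of \(q_E(\gamma)\) in the edge set \(E\) (decreasing in \(E\), by GKS~II applied to the two-point functions inside the exponential factors), applied to \(\calE(\gamma'\concatenate\gamma'')\subset\calE(\gamma'')\); no geometric separation is involved. Both points are fixable within your scheme, but as stated the \ref{weight_property:lower_bound} step is wrong and \ref{weight_property:monotonicity} is justified by the wrong mechanism; the rest (including your alternative derivation of \ref{weight_property:weight_growth} by dropping the constraint \(\Gamma(\omega)=\gamma\)) matches the paper.
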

\begin{proof}
	Property~\ref{weight_property:ICL} follows from~\cite{Aizenman+Barsky+Fernandez-1987}.
	All the other claims are proved (or follow from results) in~\cite{Pfister+Velenik-1999} (note that, although the focus in the latter paper is on the planar model, the arguments apply to the model on general graphs).
	We briefly sketch the argument with links to the relevant statements in~\cite{Pfister+Velenik-1999}.
	
	Remember that, given a high-temperature configuration \(\omega\subset\bbE_d\) with \(\partial\omega = \{x,y\}\), Algorithm~\ref{algo:path_extract} outputs both the path \(\Gamma(\omega) = \gamma\) and a set \(\bar\gamma\).
	The latter set can be obtained from \(\gamma\) (and the chosen ordering of the vertices of \(\Zd\)) in the following way: if \(\gamma=(\gamma_0, \dots, \gamma_n)\), then
	\[
		\bar\gamma = \setof{\{\gamma_i, j\}}{1\leq i\leq n-1,\, j\leq \gamma_{i+1}}.
	\]
	The paths \(\gamma':\,x\to y\) and \(\gamma:\,y\to z\) are compatible if and only if \(\gamma\) contains no edge of \(\bar\gamma'\). In that case, it is shown in~\cite{Pfister+Velenik-1999} that
	\[
		q(\gamma'\concatenate\gamma) = q(\gamma') q_{\calE(\gamma')}(\gamma),
	\]
	where \(\calE(\gamma') = \bbE^d \setminus \bar\gamma'\).
	In particular, \(q(\gamma \given \gamma') = q_{\calE(\gamma')}(\gamma)\).
	
	Let us see how~\ref{weight_property:repulsion} follows. For any admissible \(\gamma': z\to x\),
	\[
		\sum_{\gamma:\, x\to y} q(\gamma\given \gamma') = \sum_{\gamma:\, x\to y} 	q_{\calE(\gamma')}(\gamma) = \lrangle{\sigma_x\sigma_y}^0_{\calE(\gamma');\beta} \leq \lrangle{\sigma_x\sigma_y}^0_{\beta} = \sum_{\gamma:\, x\to y} q(\gamma),
	\]
	where the inequality is GKS2.
	
	It is proved in~\cite[Lemma~6.3]{Pfister+Velenik-1999} that the following volume-monotonicity property holds: \(q_E(\gamma) \geq q_{E'}(\gamma)\) for all \(E\subset E'\subset\bbE_d\) and \(\gamma\) admissible in \(E\). In particular, for all compatible \(\gamma'\) and \(\gamma\), 
	\begin{equation}\label{eq:LowerBndFactorWeight}
		q(\gamma \given\gamma') = q_{\calE(\gamma')}(\gamma) \geq q(\gamma) .
	\end{equation}
	
	A second consequence of volume-monotonicity is~\ref{weight_property:weight_growth}:
	\[
		q(\gamma\given\gamma') = q_{\calE(\gamma')}(\gamma) \leq \prod_{i=1}^{|\gamma|} \tanh(\beta J_{\gamma_i-\gamma_{i-1}}),
	\]
	since the right-most expression coincides with the weight of \(\gamma\) in a graph composed only of the edges of \(\gamma\).
	A third consequence of the volume-monotonicity property is that
	\[
		q(\gamma \given \gamma'\concatenate\gamma'')
		= q_{\calE(\gamma'\concatenate\gamma'')}(\gamma)
		\geq q_{\calE(\gamma'')}(\gamma)
		= q(\gamma \given \gamma''),
	\]
	which is~\ref{weight_property:monotonicity}.
	
	\medskip
	The proofs of the other claims rely on the following more explicit expression for the weight \(q(\cdot)\) derived in~\cite[Lemma~6.35]{Pfister+Velenik-1999}:
	\begin{equation}\label{eq:ExplicitWeight}
		q_E(\gamma) = \Bigl( \prod_{i=1}^{\gamma} \tanh(\beta J_{\gamma_i - \gamma_{i-1}}) \Bigr)
		\prod_{\{u,v\}\in\bar\gamma} \cosh(\beta J_{v-u}) \exp\Bigl( -\beta J_{v-u} \int_0^1 \lrangle{\sigma_u\sigma_v}^0_{E;\beta, J^s} \dd s\Bigr),
	\end{equation}
	where \(\lrangle{\cdot}^0_{E;\beta, J^s}\) denotes expectation with respect to the Gibbs measure with coupling constants \(J^s\) given by
	\[
		J^s_{u,v} =
		\begin{cases}
			J_{u,v} 	& \text{if }\{u, v\}\notin\bar\gamma,\\
			sJ_{u,v}	& \text{if }\{u, v\}\in\bar\gamma.
		\end{cases}
	\]
	Let \(\gamma', \gamma''\) and \(\gamma\) be contours such that \(\bend(\gamma') = \bend(\gamma'') = \fend(\gamma) = 0\) and such that \(\gamma\) contains no edge of either \(\bar\gamma'\) or \(\bar\gamma''\). Then, it follows from the previous expression that
	\[
		\frac{q(\gamma\given\gamma')}{q(\gamma\given\gamma'')} = \prod_{\{u,v\}\in\bar\gamma} \exp\Bigl( -\beta J_{v-u} \int_0^1 \bigl( \lrangle{\sigma_u\sigma_v}^0_{\calE(\gamma');\beta, J^s} - \lrangle{\sigma_u\sigma_v}^0_{\calE(\gamma'');\beta, J^s} \bigr) \dd s\Bigr).
	\]
	Exponential decay of correlations then immediately implies the mixing property~\ref{weight_property:mixing} (see~\cite[Lemma~6.8]{Pfister+Velenik-1999} or~\cite[Lemma~3.1]{Campanino+Ioffe+Velenik-2003}).
	
	Another direct consequence of~\eqref{eq:ExplicitWeight} is the lower bound
	\[
		q(\gamma)
		\geq 
		\Bigl( \prod_{i=1}^{\gamma} \tanh(\beta J_{\gamma_i - \gamma_{i-1}}) \Bigr)
		\Bigl(\prod_{i=0}^{|\gamma|-1} \frac{1+\sfe^{-2\beta J_{\gamma_i - \gamma_{i-1}}}}{2}\Bigr) 
		\geq
		\prod_{i=1}^{\gamma} C_\beta\tanh(\beta J_{\gamma_i - \gamma_{i-1}}),
	\]
	which yields~\ref{weight_property:lower_bound}.
	\ref{weight_property:finite_energy} also follows easily from the above identity, once one takes into account the geometrical constraints imposed upon the paths. Indeed,
	\begin{align}
		\frac{q(\gamma\given\gamma')}{q(\gamma)}
		&=
		\prod_{e=\{u,v\}\in\bar\gamma} \exp\Bigl( -\beta J_{v-u} \int_0^1 \bigl( \lrangle{\sigma_u\sigma_v}^0_{\calE(\gamma');\beta, J^s} - \lrangle{\sigma_u\sigma_v}^0_{\beta, J^s} \bigr) \dd s\Bigr) \notag\\
		&\leq 
		\exp\Bigl(\beta \sum_{\{u,v\}\in\bar\gamma} J_{v-u} \sum_{w\in\gamma'} \sfe^{-c(1+\norm{w-v})} \Bigr).
		\label{eq:BoundOnInteraction}
	\end{align}
	For the last inequality, see~\cite[proof of Lemma~3.1]{Campanino+Ioffe+Velenik-2003} for instance.
	Our assumptions on the paths imply that the sums over \(\{u,v\}\) and \(w\) in the right-hand side are uniformly bounded (see Fig.~\ref{fig:ConstrainedPaths}).
	\begin{figure}[ht]
		\includegraphics{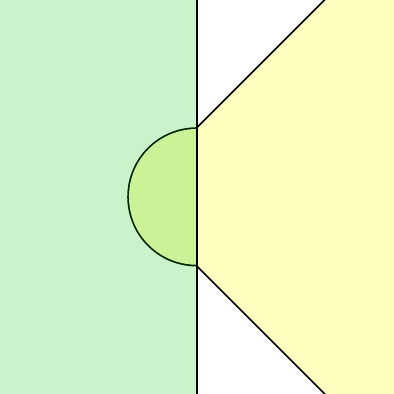}
		\caption{The path \(\gamma\) is constrained to lie inside the yellow region, while the path \(\gamma'\) is constrained to lie inside the green region. This implies that the sums over \(\{u,v\}\) and \(w\) in~\eqref{eq:BoundOnInteraction} are bounded above, uniformly in such paths \(\gamma\) and \(\gamma'\), by the corresponding sums over pairs of vertices lying one in the yellow region and the other one in the green region.}
		\label{fig:ConstrainedPaths}
	\end{figure}

\section{Irreducible interactions}

Let \(J\) be as in Section~\ref{subsec:interaction}. By the translation invariance and central symmetry of \(J\), it is equivalently defined as a function \(J:\Zd\to\R_{\geq 0}\).

In this appendix, we establish the properties about connectivities in the graph \((\Zd, \bbE_d)\) that are used in the local surgery of Section~\ref{sec:path_analysis}. Let \(\Lambda_N = \{-N, \dots, N\}^d\). First, for any \(0\leq R < \infty\), let us introduce
\[
	J^R_{ij} =
	\begin{cases}
		J_{ij} 	&	\text{ if } \normsup{i-j} \leq R,\\
		0 		&	\text{ otherwise}.
	\end{cases}
\]
Let us also define \(E^R_A = \bsetof{\{i,j\}\subset A}{J_{ij}^R > 0}\).

\begin{lemma}
	\label{app:irr_interaction:lem:connections_local}
	There exist \(R_0, R_1 < \infty\) such that, for any \(N\) and any \(x\in \Lambda_{N}\), \(x\) is connected to \(0\) using only edges in \(E_{\Lambda_{N+R_1}}^{R_0}\).
\end{lemma}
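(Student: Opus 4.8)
The plan is to reduce the statement to the elementary fact that the support of $J$ generates $\Zd$, together with a confinement bookkeeping. Since $J$ is translation invariant, write $S=\setof{y\in\Zd}{J_y>0}$; by central symmetry $S=-S$, and the irreducibility assumption says exactly that every $x\in\Zd$ is the endpoint of a path issued from $0$ all of whose increments lie in $S$. I would first apply this to $x=e_i$ for each coordinate direction $i\in\{1,\dots,d\}$, fixing one such path $\pi_i=(0=w^i_0,w^i_1,\dots,w^i_{m_i}=e_i)$ with $w^i_k-w^i_{k-1}\in S$ for all $k$, and then set
\[
	S_0=\bsetof{\pm(w^i_k-w^i_{k-1})}{1\le i\le d,\ 1\le k\le m_i},\qquad
	R_0=\max_{y\in S_0}\normsup{y},\qquad
	R_1=\max_{1\le i\le d}\ \max_{0\le k\le m_i}\normsup{w^i_k},
\]
so $S_0\subset S$ is finite and symmetric, $R_0<\infty$, and $R_1\ge 1$. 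Since $J_{-y}=J_y$, reversing $\pi_i$ yields a path from $e_i$ to $0$ with increments again in $S_0$; hence for any $p\in\Zd$ one has a path $p\to p+e_i$ and a path $p\to p-e_i$, both with increments in $S_0$ and with all vertices at $\normsup{\cdot}$-distance at most $R_1$ from one of the two endpoints of that elementary move. These are the building blocks.

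Next, given $N\ge 0$ and $x=(x_1,\dots,x_d)\in\Lambda_N$, I would construct a path from $x$ to $0$ by zeroing out the coordinates successively, in the order $j=1,2,\dots,d$. Entering stage $j$ the current endpoint has the form $z=(0,\dots,0,x_j,x_{j+1},\dots,x_d)$, and I drive its $j$-th coordinate to $0$ by $|x_j|$ unit moves, each replacing the current endpoint $z'$ (which always agrees with $z$ outside the $j$-th coordinate) by $z'\mp e_j$, the sign chosen so that the modulus of the $j$-th coordinate strictly decreases; each such move is realized by the appropriate translate of $\pi_j$ or of its reversal. The key observation is that for every unit move occurring in the whole procedure, both of its endpoints lie in $\Lambda_N$: the $j$-th coordinate is monotonically driven to $0$ from $x_j\in\{-N,\dots,N\}$ and stays in that range, while the remaining coordinates are either $0$ or untouched entries of $x$. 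Consequently every inserted sub-path lies in $\Lambda_{N+R_1}$.

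Concatenating all these sub-paths gives a path $\gamma:x\to 0$ whose increments all lie in $S_0$ — hence have $\normsup{\cdot}\le R_0$ and strictly positive $J$-weight — and whose vertices all lie in $\Lambda_{N+R_1}$. By translation invariance of $J$, each edge $\{u,v\}$ of $\gamma$ satisfies $J^{R_0}_{u-v}>0$ and $u,v\in\Lambda_{N+R_1}$, i.e.\ $\{u,v\}\in E^{R_0}_{\Lambda_{N+R_1}}$, which is the claim; $R_0$ and $R_1$ depend only on $J$, through the finitely many fixed paths $\pi_1,\dots,\pi_d$, and not on $N$ or $x$. There is no genuinely hard step here: the only thing needing care is the confinement argument of the second paragraph, and it works precisely because reducing a coordinate towards $0$ never increases its modulus and never disturbs the other coordinates, so the only enlargement of the cube is the fixed amount $R_1$ coming from the bounded excursions of the building-block paths.
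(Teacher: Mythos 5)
Your proof is correct and follows essentially the same route as the paper's: use irreducibility to fix finite paths from $0$ to each unit vector $\rme_i$, take $R_0$ as the largest increment and $R_1$ as the largest excursion of these paths, and connect $x$ to $0$ coordinate by coordinate via translated copies (and reflections) of these paths, with the confinement in $\Lambda_{N+R_1}$ following because the anchor points of the unit moves never leave $\Lambda_N$. You merely spell out the confinement bookkeeping in more detail than the paper does.
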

\begin{proof}
	By irreducibility, for each \(i\in\{1, \dots, d\}\) there exists a finite path \(\gamma^i: 0\to \rme_i\) (where \((\rme_i)_k = \mathds{1}_{\{i=k\}}, k=1,\cdots, d\)) with \(\{\gamma_k^i, \gamma_{k+1}^i\}\in\bbE_d\). Denote by \(-\gamma^i\) the image of \(\gamma^i\) under the central symmetry. Let \(R_1 = \max_{i\in\{1, \dots, d\}} \max_{x\in\gamma^i} \normsup{x}\). Let now \(N\geq 0\) and \(x=(x_1, \dots, x_d)\in\Lambda_N\). We can connect \(0\) and \(x\) via any path following \(|x_i|\) times the sequence of edges in \(\frac{x_i}{|x_i|}\gamma^i\) for \(i=1, \dots, d\). Any such path does not exit \(\Lambda_{N+R_1}\) by definition of \(R_1\). We therefore proved the claim with \(R_0 = \max_{i\in\{1, \dots, d\}} \max_{k\in\{1, \dots, |\gamma^i|\}} \normsup{\gamma^i_k-\gamma^i_{k-1}}\).
\end{proof}

\begin{lemma}
	\label{app:irr_interaction:lem:angular_opening_connected}
	There exists \(\epsilon>0\) such that, for any \(s\in \bbS^{d-1}\),
	\[
		\setof{x\in\Zd}{x\cdot s \geq \epsilon\norm{x}} \cap \setof{x\in\Zd}{J_x^{R_0}>0} \neq \emptyset,
	\]
	where \(R_0\) is given by Lemma~\ref{app:irr_interaction:lem:connections_local}.
\end{lemma}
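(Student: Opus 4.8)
There exists $\epsilon>0$ such that, for any $s\in\bbS^{d-1}$,
\[
\setof{x\in\Zd}{x\cdot s\geq\epsilon\norm{x}}\cap\setof{x\in\Zd}{J_x^{R_0}>0}\neq\emptyset.
\]

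The plan is to use the irreducibility of $J$ together with a compactness argument over the sphere. First I would observe that the set $S = \setof{x\in\Zd}{J_x^{R_0}>0}$ is a \emph{finite}, nonempty, centrally symmetric subset of $\Zd$: it is finite because $J^{R_0}$ is supported in $\{\normsup{z}\leq R_0\}$, nonempty because by Lemma~\ref{app:irr_interaction:lem:connections_local} the vertices $0$ and $\rme_1$ are connected using edges in $E^{R_0}_{\Lambda_{N+R_1}}$, so at least one edge $\{i,j\}$ with $J_{ij}^{R_0}>0$ exists and hence $i-j\in S$, and centrally symmetric by the central symmetry of $J$. Moreover $0\notin S$ since $J_0=0$.

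Next I would argue that $S$ \emph{positively spans} $\Rd$, i.e.\ $0$ lies in the interior of the convex hull of $S$ (equivalently, the convex cone generated by $S$ is all of $\Rd$). Indeed, by irreducibility every $x\in\Zd$ is reachable from $0$ by a path of edges of positive $J$-weight; in particular, by Lemma~\ref{app:irr_interaction:lem:connections_local} applied with $R=R_0$, each coordinate vector $\pm\rme_i$ is a sum of finitely many steps drawn from $S$ (the steps of the paths $\pm\gamma^i$ all lie in $S$ by the definition of $R_0$ in that lemma). Hence each $\pm\rme_i$ is a nonnegative integer combination of elements of $S$, so the positive cone generated by $S$ contains $\pm\rme_1,\dots,\pm\rme_d$ and is therefore all of $\Rd$; equivalently, $0\in\mathrm{int}\,\mathrm{conv}(S)$.

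Now fix any $s\in\bbS^{d-1}$. Since $0$ is in the interior of the convex hull of the finite set $S$, the linear functional $x\mapsto x\cdot s$ cannot be $\leq 0$ on all of $S$ (otherwise the whole hull, and in particular a neighborhood of $0$, would lie in the closed half-space $\{x\cdot s\leq 0\}$, a contradiction). Hence there exists $x\in S$ with $x\cdot s>0$, and then $x\cdot s\geq \|x\|^{-1}(x\cdot s)\,\|x\| \geq c(s)\norm{x}$ for some $c(s)>0$ depending on $x$ and $s$. To get a \emph{uniform} $\epsilon$, define
\[
\epsilon = \min_{s\in\bbS^{d-1}}\ \max_{x\in S}\ \frac{x\cdot s}{\norm{x}}.
\]
For each fixed $s$ the inner maximum is a maximum over the finite set $S$ of continuous functions of $s$, hence continuous in $s$; it is strictly positive by the argument just given; and $\bbS^{d-1}$ is compact, so the outer minimum is attained and strictly positive. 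With this choice of $\epsilon$, for every $s$ the maximizing $x\in S$ satisfies $x\cdot s\geq\epsilon\norm{x}$ and $J_x^{R_0}>0$, which is exactly the claim.

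The only mild subtlety — the ``hard part,'' such as it is — is the positive-spanning step: one must make sure that irreducibility delivers not merely connectivity but the stronger fact that the steps available in $S$ generate $\Rd$ as a positive cone. This is handled precisely by the construction in Lemma~\ref{app:irr_interaction:lem:connections_local}, where the paths $\gamma^i$ realizing $0\to\rme_i$ use only edges whose increments lie in $S$; central symmetry of $J$ then supplies the $-\rme_i$ directions. Everything else is a routine compactness-and-finiteness argument.
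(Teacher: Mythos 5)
Your proof is correct and follows essentially the same route as the paper: reduce to showing that \(s\mapsto\max_{x:\,J^{R_0}_x>0} \hat x\cdot s\) is positive, using finiteness, central symmetry and the fact that the support of \(J^{R_0}\) spans \(\Rd\) (which the paper asserts via "generates \(\Zd\)" and you justify via the paths of Lemma~\ref{app:irr_interaction:lem:connections_local}), then conclude by continuity and compactness of \(\bbS^{d-1}\). The extra detail you give on the positive-spanning step is a harmless elaboration of the same argument.
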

\begin{proof}
	Let \(R=R_0\). Let \(A = \setof{\hat{x}}{x\in\Zd,\, J_x^{R}>0}\). It is a finite set. For \(s\in\bbS^{d-1}\), define
	\[
		f(s) = \max_{s'\in A} s'\cdot s.
	\]
	The claim is equivalent to the fact that \(f(s)\) is uniformly bounded away from \(0\). First, observe that, for any \(s\in \bbS^{d-1}\), \(f(s) > 0\), since \(A\) is a centrally symmetric set that spans \(\Rd\) (since \(\setof{x\in\Zd}{J_x^{R}>0}\) is centrally symmetric and generates \(\Zd\)). Since \(f\) is continuous over \(\bbS^{d-1}\), the conclusion follows from compactness of \(\bbS^{d-1}\).
\end{proof}

\bibliographystyle{plain}
\bibliography{BIGbib}

\end{document}